\newtheorem{theorem}{Theorem}
\newtheorem{assumption}{Assumption}
\newtheorem{proposition}{Proposition}
\newtheorem{corollary}{Corollary}
\newtheorem{remark}{Remark}
\DeclareAcronym{ksd}{short=KSD, long=kernel Stein discrepancy}
\DeclareAcronym{pdf}{short=PDF, long=probability density function}
\DeclareAcronym{rkhs}{short=RKHS, long=reproducing kernel Hilbert space}
\DeclareAcronym{mala}{short=\textsc{MALA}, long=Metropolis-adjusted Langevin algorithm}
\DeclareAcronym{mcmc}{short=MCMC, long=Markov chain Monte Carlo}
\DeclareAcronym{snis}{short=SNIS, long=self-normalised importance sampling}
\DeclareAcronym{gfksd}{short=GFKSD, long=gradient-free KSD}
\DeclareAcronym{garch}{short=GARCH, long=generalised auto-regressive moving average}
\DeclareMathOperator*{\argmin}{arg\,min}
\DeclareMathOperator*{\argmax}{arg\,max}
\definecolor{myblue}{rgb}{0.54, 0.81, 0.94}
\definecolor{mypurple}{rgb}{0.44, 0.16, 0.39}
\newcommand{\dottedblackline}{\raisebox{2pt}{\tikz{\draw[-,black,dotted,line width = 1.5pt](0,0) -- (10mm,0);}}}
\newcommand{\dashedblueline}{\raisebox{2pt}{\tikz{\draw[-,myblue,dashed,line width = 1.5pt](0,0) -- (10mm,0);}}}
\newcommand{\dottedblueline}{\raisebox{2pt}{\tikz{\draw[-,myblue,dotted,line width = 1.5pt](0,0) -- (10mm,0);}}}
\newcommand{\solidblueline}{\raisebox{2pt}{\tikz{\draw[-,myblue,line width = 1.5pt](0,0) -- (10mm,0);}}}
\newcommand{\dashedpurpleline}{\raisebox{2pt}{\tikz{\draw[-,mypurple,dashed,line width = 1.5pt](0,0) -- (10mm,0);}}}
\newcommand{\dottedpurpleline}{\raisebox{2pt}{\tikz{\draw[-,mypurple,dotted,line width = 1.5pt](0,0) -- (10mm,0);}}}
\newcommand{\solidpurpleline}{\raisebox{2pt}{\tikz{\draw[-,mypurple,line width = 1.5pt](0,0) -- (10mm,0);}}}
\begin{document}

\title{Stein $\Pi$-Importance Sampling}
\author{Congye Wang$^1$, Wilson Chen$^2$, Heishiro Kanagawa$^1$, Chris. J. Oates$^1$ \\
$^1$ Newcastle University, UK \\
$^2$ University of Sydney, Australia }
\maketitle

\begin{abstract}
Stein discrepancies have emerged as a powerful tool for retrospective improvement of Markov chain Monte Carlo output.
However, the question of how to \emph{design} Markov chains that are well-suited to such post-processing has yet to be addressed.
This paper studies Stein importance sampling, in which weights are assigned to the states visited by a $\Pi$-invariant Markov chain to obtain a consistent approximation of $P$, the intended target.
Surprisingly, the optimal choice of $\Pi$ is not identical to the target $P$; we therefore propose an explicit construction for $\Pi$ based on a novel variational argument.
Explicit conditions for convergence of \emph{Stein $\Pi$-Importance Sampling} are established.
For $\approx 70\%$ of tasks in the \texttt{PosteriorDB} benchmark, a significant improvement over the analogous post-processing of $P$-invariant Markov chains is reported.
\end{abstract}

\section{Introduction}

Stein discrepancies are a class of statistical divergences that can be computed without access to a normalisation constant.
Originally conceived as a tool to measure the performance of sampling methods \citep{gorham2015measuring}, these discrepancies have since found wide-ranging statistical applications \citep[see the review of][]{anastasiou2021stein}.
Our focus here is the use of Stein discrepancies for retrospective improvement of \ac{mcmc}, and here two main techniques have been proposed:
(i) \emph{Stein importance sampling} \citep{liu2017black,hodgkinson2020reproducing}, and (ii) \emph{Stein thinning} \citep{riabiz2020optimal}.
In Stein importance sampling (also called \emph{black box importance sampling}), the samples %provided by the Markov chain 
are assigned weights such that a Stein discrepancy between the weighted empirical measure and the target $P$ is minimised.
Stein thinning constructs a sparse approximation to this optimally weighted measure at a lower computational and storage cost.
Together, these techniques provide a powerful set of post-processing tools for \ac{mcmc}, with subsequent authors proposing a range of generalisations and extensions \citep{teymur2021optimal,chopin2021fast,hawkins2022online,fisher2022gradient,benard2023kernel}.

The consistency of these algorithms has been established in the setting of approximate, $\Pi$-invariant \ac{mcmc}, motivated by challenging inference problems where only approximate sampling can be performed.
In these settings, $\Pi$ is implicitly an approximation to $P$ that is as accurate as possible subject to computational budget.
However, the critical question of how to \emph{design} Markov chains that are well-suited to such post-processing has yet to be addressed.
This paper provides a solution, in the form of a specific construction for $\Pi$ derived from a novel variational argument.
Surprisingly, we are able to demonstrate a substantial improvement using the proposed $\Pi$, compared to the case where $\Pi$ and $P$ are equal.
The paper proceeds as follows:
\Cref{sec: background} presents an abstract formulation of the task and existing results for optimally-weighted empirical measures are reviewed.
\Cref{sec: methods} derives our proposed choice of $\Pi$ and establishes that Stein post-processing of samples from a $\Pi$-invariant \ac{mala} provides a consistent approximation of $P$.
The approach is stress-tested using the recently released \texttt{PosteriorDB} suite of benchmark tasks in \Cref{sec: empirical}, before concluding with a discussion in \Cref{sec: discuss}.

\section{Background}
\label{sec: background}

To properly contextualise our discussion we start with an abstract mathematical description of the task.
Let $P$ be a probability measure on a measurable space $\mathcal{X}$.
Let $\mathcal{P}(\mathcal{X})$ be the set of all probability measures on $\mathcal{X}$.
Let $D_P : \mathcal{P}(\mathcal{X}) \rightarrow [0,\infty]$ be a \emph{statistical divergence} for measuring the quality of an approximation $Q$ to $P$, meaning that $D_P(Q) = 0$ if and only if $Q=P$.
In this work we consider approximations whose support is contained in a finite set $\{ x_1,\dots,x_n \} \subset \mathcal{X}$, and in particular we consider \emph{optimal} approximations of the form
\begin{align*}
P_n^\star = \sum_{i=1}^n w_i^\star \delta(x_i), \qquad w^\star \in \argmin_{w \geq 0, \; 1^\top w = 1} D_P \left( \sum_{i=1}^n w_i \delta(x_i)  \right) .
\end{align*}
In what follows we restrict attention to statistical divergences for which such approximations can be shown to exist and be well-defined.
The question that we then ask is \emph{which states $\{x_1,\dots,x_n\}$ minimise the approximation error $D_P(P_n^\star)$}?
Before specialising to Stein discrepancies, it is helpful to review existing results for some standard statistical divergences $D_P$.

\subsection{Wasserstein Divergence}
\label{subsec: classical quant}

\emph{Optimal quantisation} focuses on the $r$-Wasserstein ($r \geq 1$) family of statistical divergences $D_P(Q) = \inf_{\gamma \in \Gamma(P,Q)} \int \|x - y\|^r \mathrm{d}\gamma(x,y)$, where $\Gamma(P,Q)$ denotes the set of all \emph{couplings}\footnote{A coupling $\gamma \in \Gamma(P,Q)$ is a distribution $\gamma \in \mathcal{P}(\mathbb{R}^d \times \mathbb{R}^d)$ whose marginal distributions are $P$ and $Q$. } of $P,Q \in \mathcal{P}(\mathbb{R}^d)$, and the divergence is finite whenever $P$ and $Q$ have finite $r$-th moment.
Assuming the states $\{x_1,\dots,x_n\}$ are distinct, the corresponding optimal weights are $w_i^\star = P(A_i)$ where $A_i$ is the \emph{Voronoi neighbourhood}\footnote{The Voronoi neighbourhood of $x_i$ is the set $A_i = \{ x \in \mathbb{R}^d : \|x-x_i\| = \min_{j=1,\dots,n} \|x-x_j\| \}$.} of $x_i$ in $\mathbb{R}^d$.
Optimal states achieve the \emph{minimal quantisation error} for $P$;
\begin{align*}
e_{n,r}(P) & = \inf_{x_1,\dots,x_n \in \mathbb{R}^d } D_P \left( \sum_{i=1}^n w_i^\star \delta(x_i) \right) ,
\end{align*}
the smallest value of the divergence among optimally-weighted distributions supported on at most $n$ states.
Though the dependence of optimal states on $n$ and $P$ can be complicated, we can broaden our perspective to consider \emph{asymptotically} optimal states, whose asymptotic properties can be precisely characterised.
To this end, for $A \subset \mathbb{R}^d$, let $\mathcal{U}(A)$ denote the uniform distribution on $A$, and define the universal constant $\smash{C_r([0,1]^d) = \inf_{n \geq 1} n^{r/d} \;  e_{n,r}(\mathcal{U}([0,1]^d))}$.
Suppose that $P$ admits a density $p$ on $\mathbb{R}^d$.
Then the $r$th \emph{quantisation coefficient} of $P$ on $\mathbb{R}^d$, defined as
$$
C_r(P) = C_r([0,1]^d)  \left( \int p(x)^{d/(d+r)} \; \mathrm{d}x \right)^{(d+r)/d} ,
$$
plays a central role in the classical theory of quantisation, being the rate constant in the asymptotic convergence of the minimal quantisation error; $\smash{\lim_{n \rightarrow \infty} n^{r/d} e_{n,r}(P) = C_r(P)}$; see Theorem 6.2 of \citet{graf2007foundations}.
This suggests a natural definition; a collection $\{x_1,\dots,x_n\}$ is called \emph{asymptotically optimal} if 
$$
\lim_{n \rightarrow \infty} n^{r/d}  D_P \left( \sum_{i=1}^n w_i^\star \delta(x_i) \right) = C_r(P) ,
$$
which amounts to $P_n^\star$ asymptotically attaining the minimal quantisation error $e_{n,r}(P)$.
The main result here is that, if $\{x_1,\dots,x_n\}$ are asymptotically optimal, then $\smash{\frac{1}{n} \sum_{i=1}^n \delta(x_i) \rightarrow \Pi_r}$, where convergence is in distribution and $\Pi_r$ is the distribution whose density is $\smash{\pi_r(x) \propto p(x)^{d/(d+r)}}$; see Theorem 7.5 of \citet{graf2007foundations}.
This provides us with a key insight; optimal states are \emph{over-dispersed} with respect to the intended distributional target.
The extent of the over-dispersion here depends both on $r$, a parameter of the statistical divergence, and the dimension $d$ of the space on which distributions are defined.

The $r$-Wasserstein divergence is, unfortunately, not well-suited for use in the motivating Bayesian context.
In particular, computing the optimal weights $w_i = P(A_i)$ requires knowledge of $P$, which is typically not available when $P$ is implicitly defined via an intractable normalisation constant.
On the other hand, the optimal sampling distribution $\Pi$ is explicit and can be sampled (for example using \ac{mcmc}); for discussion of random quantisers in this context see \citet[][Chapter 9]{graf2007foundations}, \citet[][p126]{cohort2004limit} and \citet[][Section 4.5]{sonnleitner2022power}.
The simple form of $\Pi$ is a feature of the classical approach to quantisation that we will attempt to mimic in the sequel.

\subsection{Kernel Discrepancies}
\label{subsec: kernel discrep}

The theory of quantisation using kernels is less well-developed.
A \emph{kernel} is a measurable, symmetric, positive-definite function $k : \mathcal{X} \times \mathcal{X} \rightarrow \mathbb{R}$.
From the Moore--Aronszajn theorem, there is a unique Hilbert space $\mathcal{H}(k)$ for which $k$ is a reproducing kernel, meaning that $k(\cdot,x) \in \mathcal{H}(k)$ for all $x \in \mathcal{X}$ and $\langle f , k(\cdot,x) \rangle_{\mathcal{H}(k)} = f(x)$ for all $f \in \mathcal{H}(k)$ and all $x \in \mathcal{X}$.
Assuming that $\mathcal{H}(k) \subset L^1(P)$, we can define the weak (or \emph{Pettis}) integral
\begin{align}
\mu_P(\cdot) = \int k(\cdot,x) \; \mathrm{d}P(x) ,
\label{eq: pettis}
\end{align}
called the \emph{kernel mean embedding} of $P$ in $\mathcal{H}(k)$.
The \emph{kernel discrepancy} is then defined as the norm of the difference between kernel mean embeddings
\begin{align}
D_P(Q) = \|\mu_Q - \mu_P\|_{\mathcal{H}(k)} = \sqrt{\iint k(x,y) \; \mathrm{d}(Q-P)(x) \mathrm{d}(Q-P)(y)} \label{eq: MMD}
\end{align}
where, to be consistent with our earlier notation, we adopt the convention that $D_P(Q)$ is infinite whenever $\mathcal{H}(k) \not\subset L^1(Q)$.
The second equality in \eqref{eq: MMD} follows immediately from the stated properties of a reproducing kernel.
To satisfy the requirement of a statistical divergence, we assume that the kernel $k$ is \emph{characteristic}, meaning that $\mu_P = \mu_Q$ if and only if $P=Q$.
In this setting, the properties of optimal states are necessarily dependent on the choice of kernel $k$, and are in general not well-understood.
Indeed, given distinct states $\{x_1,\dots,x_n\}$, the corresponding optimal weights $w^\star = (w_1^\star,\dots,w_n^\star)^\top$ are the solution to the linearly-constrained quadratic program
\begin{align}
\argmin_{w \in \mathbb{R}^d} \; w^\top K w - 2 z^\top w \qquad \text{s.t.} \qquad w \geq 0 , \; 1^\top w = 1  \label{eq: constrained program}
\end{align} 
where $K_{i,j} = k(x_i,x_j)$ and $z_i = \mu_P(x_i)$.
This program does not admit a closed-form solution, but can be numerically solved.
To the best of our knowledge, the only theoretical analysis of approximations based on \eqref{eq: constrained program} is due to 
\citet{hayakawa2021positively}, who established rates for the convergence of $P_n^\star$ to $P$ in the case where states are independently sampled from $P$.
The question of an optimal sampling distribution was not considered in that work.

Although few results are available concerning \eqref{eq: constrained program}, relaxations of this program have been well-studied.
The simplest relaxation of \eqref{eq: constrained program} is to remove both the positivity ($w \geq 0$) and normalisation ($1^\top w = 1$) constraints, in which case the optimal weights have the explicit representation $w^* = K^{-1} z$.
The analysis of optimal states in this context has developed under the dual strands of \emph{kernel cubature} and \emph{Bayesian cubature}, where it has been theoretically or empirically demonstrated that (i) if states are randomly sampled, the optimal sampling distribution will be $n$-dependent \citep{bach2017equivalence} and over-dispersed with respect to the distributional target \citep{briol2017sampling}, and (ii)  \emph{space-filling} designs are asymptotically optimal for typical stationary kernels on bounded domains $\mathcal{X} \subset \mathbb{R}^d$ \citep{briol2019probabilistic}.
Analysis of optimal states on unbounded domains appears to be more difficult; see e.g. \citet{karvonen2021integration}.
Relaxation of either the positivity or normalisation constraints results in approximations that behave similarly to kernel cubature \citep[see, respectively,][]{ehler2019optimal,karvonen2018bayes}.
However, relaxation of either constraint can result in the failure of $P_n^\star$ to be an element of $\mathcal{P}(\mathcal{X})$, limiting the relevance of these results to the posterior approximation task. 

Despite relatively little being known about the character of optimal states in this context, kernel discrepancy is widely used.
The application of kernel discrepancies to an implicitly defined distributional target, such as a posterior distribution in a Bayesian analysis, is made possible by the use of a \emph{Stein kernel}; a $P$-dependent kernel $k = k_P$ for which $\mu_P(x) = 0$ for all $x \in \mathcal{X}$  \citep{oates2017control}.
The associated kernel discrepancy 
\begin{align}
D_P(Q) = \|\mu_Q\|_{\mathcal{H}(k_P)} = \sqrt{\iint k_P(x,y) \; \mathrm{d}Q(x) \mathrm{d}Q(y)} \label{eq: KSD}
\end{align}
is called a \emph{\ac{ksd}} \citep{chwialkowski2016kernel,liu2016kernelized,gorham2017measuring}, and this will be a key tool in our methodological development.
The corresponding optimally weighted approximation $P_n^\star$ is the Stein importance sampling method of \citet{liu2017black}.
To retain clarity of presentation in the main text, we defer all details on the construction of Stein kernels to \Cref{app: stein kernel}.

\subsection{Sparse Approximation}
\label{subsec: sparse approx}

If the number $n$ of states is large, computation of optimal weights can become impractical.
This has motivated a range of sparse approximation techniques, which aim to iteratively construct an approximation of the form $P_{n,m} = \frac{1}{m} \sum_{i=1}^m  \delta(y_i)$, where each $y_i$ is an element from $\{x_1,\dots,x_n\}$.
The canonical example is the greedy algorithm which, at iteration $j$, selects a state
\begin{align}
y_j \in \argmin_{y \in \{x_1,\dots,x_n\} } D_P\left( \frac{1}{j} \delta(y) + \frac{1}{j} \sum_{i=1}^{j-1} \delta(y_i) \right)  \label{eq: greedy alg}
\end{align}
for which the statistical divergence is minimised.
In the context of kernel discrepancy, the greedy algorithm \eqref{eq: greedy alg} has computational cost $O(m^2 n)$, which compares favourably\footnote{It is difficult to quantify the complexity of numerically solving \eqref{eq: constrained program}, since this will depend on details of the solver and the tolerance that are used.
On the other hand, if we ignore the non-negativity and normalisation constraints, then we can see that the computational cost of solving the $n$-dimensional linear system of equations is $O(n^3)$.} with the cost of solving \eqref{eq: constrained program} when $m \ll n$.
Furthermore, under appropriate assumptions, the sparse approximation converges to the optimally weighted approximation; $\smash{D_P(P_{n,m}) \rightarrow D_P(P_n^\star)}$ as $m \rightarrow \infty$ with $n$ fixed.
See \citet{teymur2021optimal} for full details, where non-myopic and mini-batch extensions of the greedy algorithm are also considered.
The greedy algorithm can be viewed as a regularised version of the \emph{Frank--Wolfe} algorithm (also called \emph{herding}, or the \emph{conditional gradient} method), for which a similar asymptotic result can be shown to hold \citep{chen2010super,bach2012equivalence,chen2018stein}.
Related work includes \citet{dwivedi2021kernel,dwivedi2022generalized,shettydistribution}.
Since in what follows we aim to retrospectively improve \ac{mcmc} output, where it is not unusual to encounter $n \approx 10^4$--$10^6$, sparse approximation will be important.

This completes our overview of background material.
In what follows we seek to mimic classical quantisation by deriving a choice for $\Pi$ that is straight-forward to sample using \ac{mcmc} and is appropriately over-dispersed relative to $P$.
This should be achieved while remaining in the framework of kernel discrepancies, so that optimal weights can be explicitly computed, and coupled with a sparse approximation that has low computational and storage cost.

\section{Methodology}
\label{sec: methods}

The methods that we consider first sample states $\{x_1,\dots,x_n\}$ using $\Pi$-invariant \ac{mcmc}, then post-process these states using kernel discrepancies (\Cref{subsec: kernel discrep}) and sparse approximation (\Cref{subsec: sparse approx}), to obtain an approximation to the target $P$.
A variational argument, which we present in \Cref{subsec: select Q}, provides a suitable $n$-independent choice for $\Pi$ (which agrees with our intuition from \Cref{subsec: classical quant} that $\Pi$ should be in some appropriate sense over-dispersed with respect to $P$).
Sufficient conditions for strong consistency of the approximation are established in \Cref{subsec: convergence}.

\subsection{Selecting $\Pi$} \label{subsec: select Q}

Here we present a heuristic argument for a particular choice of $\Pi$; rigorous theoretical support for \emph{Stein $\Pi$-Importance Sampling} is then provided in \Cref{subsec: convergence}.
Our setting is that of \Cref{subsec: kernel discrep}, and the following will additionally be assumed:

\begin{assumption}
\label{ass: embed}
It is assumed that
\begin{enumerate}[label={\normalfont (A\arabic*)},leftmargin=1cm]
\item $C_1^2 := \inf_{x \in \mathcal{X}} k(x,x) > 0$
\item $C_2 := \int \sqrt{k(x,x)} \; \mathrm{d}P(x) < \infty$.
\end{enumerate}
\end{assumption}

\noindent Note that (A2) implies that $\mathcal{H}(k) \subset L^1(P)$, and thus \eqref{eq: pettis} is in fact a strong (or \emph{Bochner}) integral.

A direct analysis of the optimal states associated to the optimal weights $w^\star$ appears to be challenging due to the fact that the components of $w^\star$ are strongly inter-dependent.
Our solution here is to instead consider optimal states associated with weights that are \emph{near-optimal} and whose components are only weakly dependent.
Specifically, we will be assuming that $P$ is absolutely continuous with respect to $\Pi$ (denoted $P \ll \Pi$), and study convergence of \ac{snis}, i.e. the approximation $\smash{P_n = \sum_{i=1}^n w_i \delta(x_i)}$, $w_i \propto (\mathrm{d}P/\mathrm{d}\Pi)(x_i)$, where $x_1,\dots,x_n \sim \Pi$ are independent.
Since $w \geq 0$ and $\smash{1^\top w = 1}$, from the optimality of $w^\star$ under these constraints we have that $D_P(P_n^\star) \leq D_P(P_n)$.
It is emphasised that the \ac{snis} weights are a theoretical device only, and will not be used for computation; indeed, we can demonstrate that the \ac{snis} weights $w$ perform substantially worse than $w^\star$ in general.

The analysis of \ac{snis} weights $w$ is tractable when viewed as approximation of the kernel mean embedding $\mu_P$ in the Hilbert space $\mathcal{H}(k)$.
Indeed, recall that $D_P(P_n) = \|\xi_n / \sqrt{n}\|_{\mathcal{H}(k)}$ where $\smash{\xi_n = \sqrt{n}(\mu_{P_n} - \mu_P)}$. 
Then, following Section 2.3.1 of \citet{agapiou2017importance}, we observe that
\begin{align}
\xi_n 
= \sqrt{n} \left( \sum_{i=1}^n w_i k(\cdot,x_i) - \mu_P \right) = \frac{\frac{1}{\sqrt{n}} \sum_{i=1}^n \frac{\mathrm{d}P}{\mathrm{d}\Pi}(x_i) \left[ k(\cdot,x_i) - \mu_P \right] }{ \frac{1}{n} \sum_{i=1}^n \frac{\mathrm{d}P}{\mathrm{d}\Pi}(x_i) } . \label{eq: Sergios frac}
\end{align}
The idea is to seek $\Pi$ for which the asymptotic variance of $\xi_n$ is small.
Supposing that
\begin{align*}
\int \frac{\mathrm{d}P}{\mathrm{d}\Pi}(x)^2 \; \mathrm{d}\Pi(x) < \infty , \tag{S1}
\end{align*}
from the weak law of large numbers the denominator in \eqref{eq: Sergios frac} converges in probability to 1.
Further supposing that
\begin{align}
\int \left\| \frac{\mathrm{d}P}{\mathrm{d}\Pi}(x) [k(\cdot,x) - \mu_P ] \right\|_{\mathcal{H}(k)}^2 \; \mathrm{d}\Pi(x) < \infty , 
\tag{S2}
\end{align}
from the Hilbert space central limit theorem the numerator in \eqref{eq: Sergios frac} converges in distribution to a Gaussian 
$\smash{\frac{1}{\sqrt{n}} \sum_{i=1}^n (\mathrm{d}P / \mathrm{d}\Pi)(x_i) \left[ k(\cdot,x_i) - \mu_P \right] \stackrel{\text{d}}{\rightarrow} \mathcal{N}(0,\mathcal{C})}$ where $\mathcal{C} : \mathcal{H}(k) \rightarrow \mathcal{H}(k)$ is the covariance operator defined via
$$
\langle f , \mathcal{C} g \rangle_{\mathcal{H}(k)} = \int \left\langle f , \frac{\mathrm{d}P}{\mathrm{d}\Pi}(x) [ k(\cdot,x) - \mu_P ] \right\rangle_{\mathcal{H}(k)} \left\langle g , \frac{\mathrm{d}P}{\mathrm{d}\Pi}(x) [ k(\cdot,x) - \mu_P ] \right\rangle_{\mathcal{H}(k)} \; \mathrm{d}\Pi(x) ,
$$
see Section 10.1 of \citet{ledoux1991probability}.
Thus, from Slutsky's lemma applied to \eqref{eq: Sergios frac}, we conclude that $\smash{\xi_n \stackrel{\text{d}}{\rightarrow} \mathcal{N}(0,\mathcal{C})}$.
Recalling that $\smash{n D_P(P_n)^2 = \|\xi_n\|_{\mathcal{H}(k)}^2}$, and noting that the mean square of the limiting Gaussian random variable is $\text{tr}(\mathcal{C})$, a natural idea is to select the sampling distribution $\Pi$ such that $\text{tr}(\mathcal{C})$ is minimised.

Fortunately, the trace of $\mathcal{C}$ can be explicitly computed.
It simplifies presentation to restrict attention to a Stein kernel $k = k_P$, for which $\mu_P = 0$, giving
$\text{tr}(\mathcal{C}) = \int (\mathrm{d}P / \mathrm{d}\Pi)(x)^2 k_P(x) \; \mathrm{d}\Pi(x)$, where for convenience we have let $k_P(x) := k_P(x,x)$.
Assuming that $P$ and $\Pi$ admit densities $p$ and $\pi$ on $\mathcal{X} = \mathbb{R}^d$, the variational problem we wish to solve is
\begin{align}
    \argmin_{\pi \in \mathcal{Q}} \int \frac{p(x)^2}{\pi(x)} k_P(x) \; \mathrm{d}x \quad \text{s.t.} \quad \int \pi(x) \; \mathrm{d}x = 1,
    \label{eq: variational}
\end{align}
where $\mathcal{Q}$ be the set of positive measures on $\mathbb{R}^d$ for which (S1-2) are satisfied.
To solve this problem, we first relax the constraints (S1-2) and solve the relaxed problem using the Euler--Lagrange equations, which yield
\begin{align}
    \pi(x) \propto p(x) \sqrt{k_P(x)} . \label{eq: optimal Q}
\end{align}
Note that the normalisation constant of $\pi$ is $C_2$ from \eqref{ass: embed}, whose existence we assumed.
Then we verify that (S1-2) in fact hold for this choice of $\Pi$.
Indeed,
\begin{align*}
    \text{(S1)} & = \int \frac{\mathrm{d}P}{\mathrm{d}\Pi}(x)^2 \; \mathrm{d}\Pi(x) = C_2 \int \frac{1}{k_P(x)} \; \mathrm{d}\Pi(x) \leq \frac{C_2}{C_1^2} < \infty \\
    \text{(S2)} & = \int \frac{\mathrm{d}P}{\mathrm{d}\Pi}(x)^2 k_P(x) \; \mathrm{d}\Pi(x) = C_2 \int \sqrt{k_P(x)} \; \mathrm{d}P(x) = C_2^2 < \infty ,
\end{align*}
which shows that we have indeed solved \eqref{eq: variational}.
The sampling distribution $\Pi$ we have obtained is characterised up to a normalisation constant in \eqref{eq: optimal Q}, so just like $P$ we can sample from $\Pi$ using techniques such as \ac{mcmc}.
The Stein kernel $k_P$ determines the extent to which $\Pi$ differs from $P$, as we illustrate next.

\subsection{Illustration}
\label{subsec: illustration}

\begin{figure}[t!]
    \centering
    \begin{subfigure}{0.49\textwidth}
    \includegraphics[clip,trim = 0cm 0.1cm 0.3cm 0.3cm,width = \textwidth]{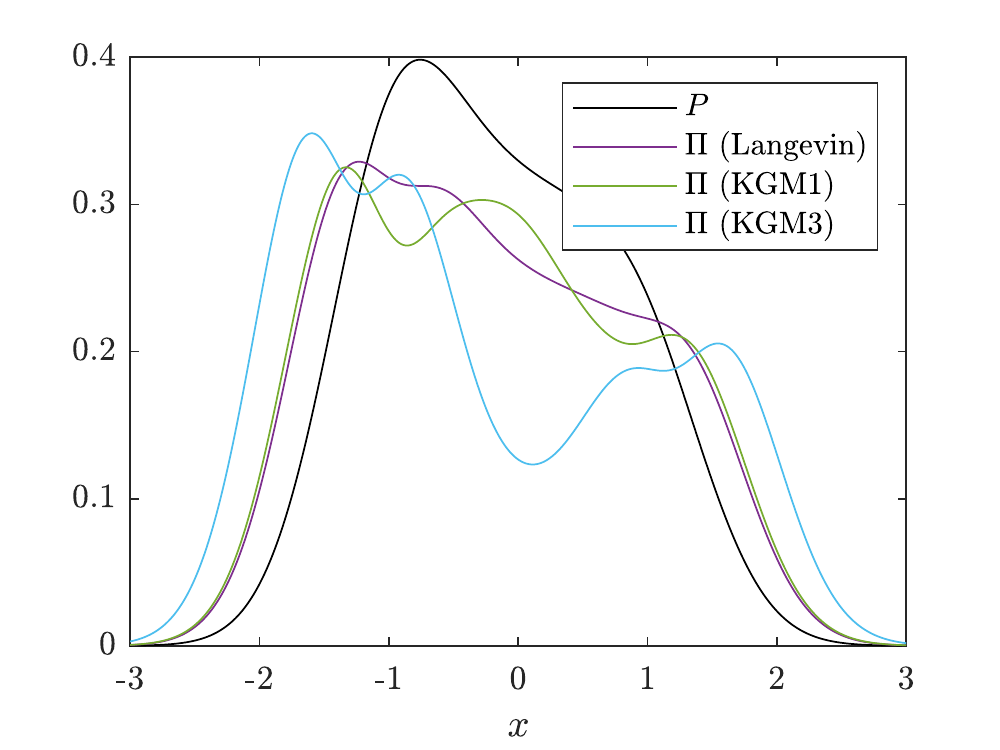}
    \caption{}
    \label{fig: 1D PQ}
    \end{subfigure}
    \begin{subfigure}{0.49\textwidth}
    \includegraphics[clip,trim = 0cm 0.1cm 0.3cm 0.3cm,width = \textwidth]{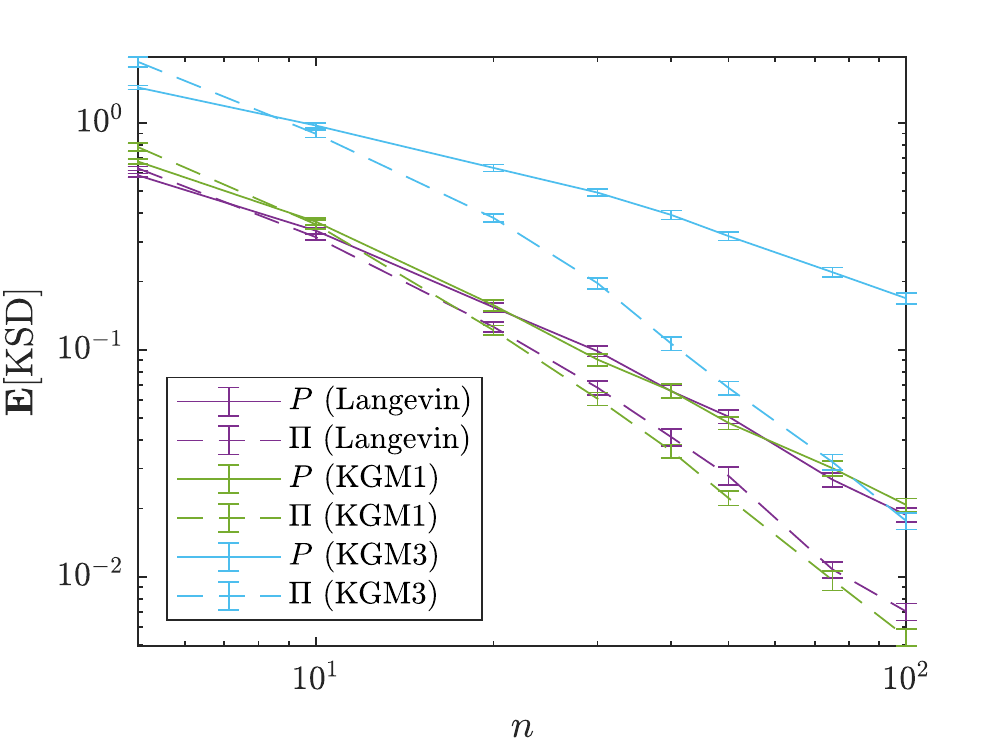}
    \caption{}
    \label{fig: 1D KSD}
    \end{subfigure}
    \caption{Illustrating our choice of $\Pi$ in 1D.  (a)  The univariate target $P$ (black), and our choice of $\Pi$ based on the Langevin--Stein kernel (purple), the KGM1--Stein kernel (green), and the KGM3--Stein kernel (blue).  (b)  The mean kernel Stein discrepancy (KSD) for Stein $\Pi$-Importance Sampling using the Stein kernels from (a); in each case, KSD was computed using the same Stein kernel used to construct $\Pi$.  
    Solid lines indicate the baseline case of sampling from $P$, while dashed lines indicate sampling from $\Pi$.
    (The experiment was repeated 100 times and standard error bars are plotted.)
    }
\end{figure}

For illustration, consider the univariate target $P$ (black curve) in \Cref{fig: 1D PQ}, a 3-component Gaussian mixture model.
Our recommended choice of $\Pi$ in \eqref{eq: optimal Q} is shown for both the \emph{Langevin}--Stein kernel (purple curve) and the \emph{KGM}$s$--Stein kernels with $s \in \{1,3\}$ (green and blue curves).
The Stein discrepancy corresponding to the Langevin--Stein kernel provides control over weak convergence (i.e. convergence of integrals of functions that are continuous and bounded), while the KGM$s$--Stein kernel provides additional control over the convergence of polynomial moments up to order $s$; full details about the construction of Stein kernels are contained in \Cref{app: stein kernel}.
The Langevin and KGM1--Stein kernels have $k_P(x) \asymp x^2$, while the KGM3--Stein kernel has $k_P(x) \asymp x^6$, in each case as $|x| \rightarrow \infty$, and thus greater over-dispersion results from use of the KGM3--Stein kernel.
This over-dispersion is less pronounced\footnote{The same holds for classical quantisation; c.f.  \Cref{subsec: classical quant}.} in higher dimensions; see \Cref{app: dimension}.

To illustrate the performance of Stein $\Pi$-Importance Sampling, we generated a sequence $(x_n)_{n \in \mathbb{N}}$ of independent samples from $\Pi$.
For each $n \in \{1,\dots,100\}$, the samples $\{x_1,\dots,x_n\}$ were assigned optimal weights $w^\star$ by solving \eqref{eq: constrained program}, and the associated \ac{ksd} was computed.
As a baseline, we performed the same calculation using independent samples from $P$.
\Cref{fig: 1D KSD} indicates that, for both Stein kernels, substantial improvement results from the use of samples from $\Pi$ compared to the use of samples from $P$.
Interestingly, the KGM3--Stein kernel demonstrated a larger improvement compared to the Langevin--Stein kernel, suggesting that the choice of $\Pi$ may be more critical in settings where \ac{ksd} enjoys a stronger form of convergence control.

To illustrate a posterior approximation task, 
consider a simple regression model $y_i = f_i(x) + \epsilon_i$ with $f_i(x) = x_1(1 + t_i x_2)$, $t_i = i - 5$, $i = 1,\dots,10$, with $\epsilon_i$ independent $\mathcal{N}(0,1)$.
The parameter $x = (x_1,x_2)$ was assigned a prior $\mathcal{N}(0,I)$.
Data were simulated using $x = (0,0)$.
The posterior distribution $P$ is depicted in the leftmost panel of \Cref{fig: 2D densities}, while our choice of $\Pi$ corresponding to the Langevin (centre left), KGM3 (centre right) and \emph{Riemann}--Stein kernels (right) are also displayed. 
For the Langevin and KGM3 kernels, the associated $\Pi$ target their mass toward regions where $P$ varies the most.
The reason for this behaviour is clearly seen for the Langevin--Stein kernel since $k_P(x) = c_1 + c_2 \|\nabla \log p(x)\|^2$ for some $c_1, c_2 > 0$; see \Cref{app: derivatives} for detail.
The Riemann--Stein kernel can be viewed as a preconditioned form of the Langevin--Stein kernel which takes into account the geometric structure of $P$; see \Cref{app: stein kernel} for full detail\footnote{The use of geometric information may be beneficial, in the sense that the associated diffusion process may mix more rapidly, and rapid mixing leads to sharper bounds from the perspective of convergence control \citep{gorham2019measuring}.
However, the Riemann--Stein kernel is associated with a prohibitive computational cost; it is included here only for academic interest.}.
Results in \Cref{fig: 2D KSD} demonstrate that Stein $\Pi$-Importance Sampling improves upon the default Stein importance sampling method (i.e. with $\Pi$ and $P$ equal) for all choices of kernel.

An additional illustration involving a GARCH model with $d=4$ parameters is presented in \Cref{app: GARCH}, where the effect of varying the order $s$ of the KGM--Stein kernel is explored.

\begin{figure}[t!]
    \centering
    \includegraphics[clip,trim = 0.3cm 0.2cm 0.3cm 0cm,width = 0.24\textwidth]{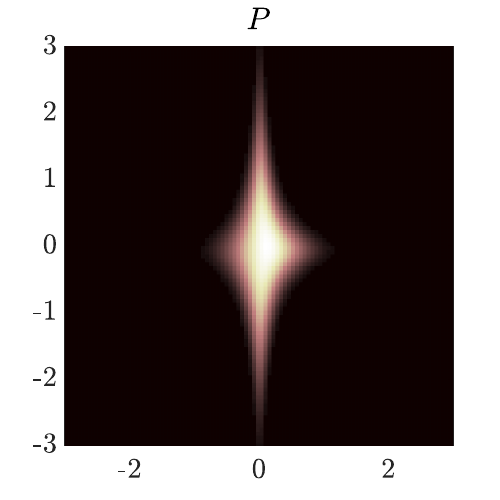}
    \includegraphics[clip,trim = 0.3cm 0.2cm 0.3cm 0cm,width = 0.24\textwidth]{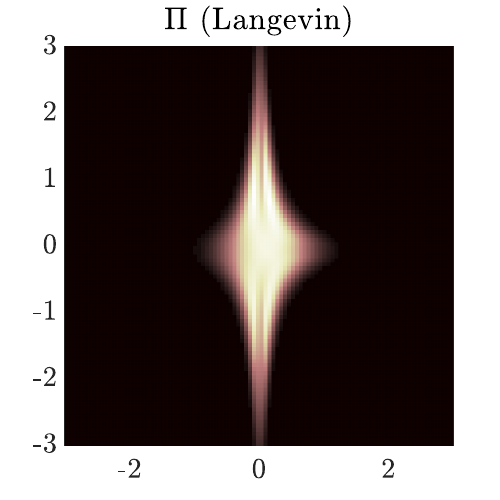}
    \includegraphics[clip,trim = 0.3cm 0.2cm 0.3cm 0cm,width = 0.24\textwidth]{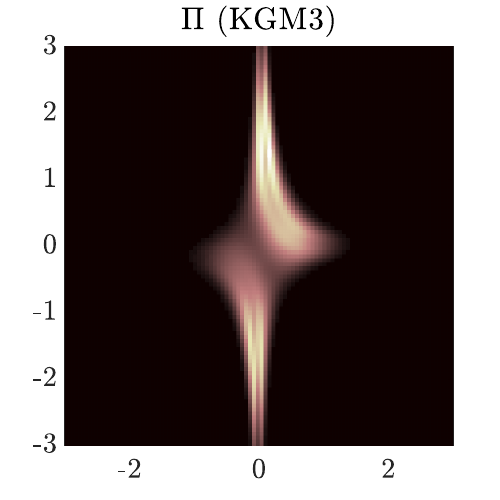}
    \includegraphics[clip,trim = 0.3cm 0.2cm 0.3cm 0cm,width = 0.24\textwidth]{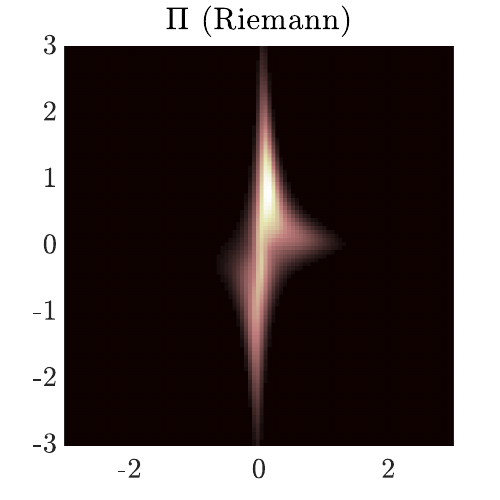}
    \caption{Illustrating our choice of $\Pi$ in 2D.  The bivariate target $P$ (left), together with our choice of $\Pi$ based on the Langevin--Stein kernel (centre left), the KGM3--Stein kernel (centre right), and the Riemann--Stein kernel (right).
    }
    \label{fig: 2D densities}
\end{figure}

\vspace{5pt}

\subsection{Theoretical Guarantees}
\label{subsec: convergence}

The aim of this section is to establish when post-processing of $\Pi$-invariant \ac{mcmc} produces a strongly consistent approximation of $P$, for our recommended choice of $\Pi$ in \eqref{eq: optimal Q}.
Our analysis focuses on \ac{mala} \citep{roberts2002langevin}, leveraging the recent work of \cite{durmus2022geometric} to present explicit and verifiable conditions on $P$ for our results to hold.
In fact, we consider the more general \emph{preconditioned} form of \ac{mala}, where the symmetric positive definite preconditioner matrix $M$ is to be specified.
Our results also allow for (optional) sparse approximation, to circumvent direct solution of \eqref{eq: constrained program} (c.f. \Cref{subsec: sparse approx}).
The resulting algorithms, which we call \emph{Stein $\Pi$-Importance Sampling} (\textsc{S$\Pi$IS-MALA}) and \emph{Stein $\Pi$-Thinning} (\textsc{S$\Pi$T-MALA}), are quite straight-forward and contained, respectively, in \Cref{alg: S Pi IS,alg: S Pi T}.

\makeatletter
\algrenewcommand\ALG@beginalgorithmic{\footnotesize}
\makeatother

\begin{algorithm}[t!]
\begin{algorithmic}[1]
\Require $x_0$ (initial state), $\epsilon$ (step size), $M$ (preconditioner matrix), $n$ (chain length), $k_P$ (Stein kernel)
\For{$i=1,\dots,n$}  
    \State $\textstyle x' \gets \underbrace{ \textstyle x_{i-1} + \epsilon M^{-1} \nabla \log p(x_{i-1}) + \frac{\epsilon}{2} M^{-1} \nabla \log k_P(x_{i-1}) }_{=: \nu(x_{i-1})} + \sqrt{2 \epsilon} M^{-1/2} Z_i$   \Comment{$Z_i \stackrel{\text{IID}}{\sim} \mathcal{N}(0,I)$}
    \State $\textstyle L \gets \log\left(\frac{p(x')}{p(x_{i-1})}\right) + \frac{1}{2} \log\left(\frac{k_P(x')}{k_P(x_{i-1})}\right) - \frac{1}{4 \epsilon} \|x_{i-1} - \nu(x')\|_{M^{-1}}^2 + \frac{1}{4 \epsilon} \| x' - \nu(x_{i-1}) \|_{M^{-1}}^2 $

    \State \textbf{if} $\log(U_i) < L$ \textbf{then} $x_i \gets x'$; \textbf{else} $x_i \gets x_{i-1}$; \textbf{end if}
    \Comment{$U_i \stackrel{\text{IID}}{\sim} \mathcal{U}([0,1])$}
\EndFor 
\end{algorithmic}
\caption{$\Pi$-Invariant Metropolis-Adjusted Langevin Algorithm (\textsc{MALA})}
\label{alg: pi mala}
\end{algorithm}

\begin{algorithm}[t!]
\begin{algorithmic}[1]
\Require $\{x_1,\dots,x_n\}$ from \Cref{alg: pi mala}, $k_P$ (Stein kernel)
\State $w^\star \in \argmin_{w \in \mathbb{R}^d} \{ \langle w , K_P w \rangle \; : \; w \geq 0, \; 1^\top w = 1 \} $  \Comment{$[K_P]_{i,j} = k_P(x_i,x_j)$}
\end{algorithmic}
\caption{Stein $\Pi$-Importance Sampling (\textsc{S$\Pi$IS-MALA})}
\label{alg: S Pi IS}
\end{algorithm}

\begin{algorithm}[t!]
\begin{algorithmic}[1]
\Require $\{x_1,\dots,x_n\}$ from \Cref{alg: pi mala}, $m$ (number of samples to retain), $k_P$ (Stein kernel)
\For{$i=1,\dots,m$} 
    \State $\textstyle y_i \gets \argmin_{y \in \{x_1,\dots,x_n\}} \frac{1}{2} k_P(y) + \sum_{j=1}^{i-1} k_P(y,y_j)$
\EndFor 
\end{algorithmic}
\caption{Stein $\Pi$-Thinning (\textsc{S$\Pi$T-MALA})}
\label{alg: S Pi T}
\end{algorithm}

Let $A \preceq B$ indicate that $A - B$ is a positive semi-definite matrix for $A,B \in \mathbb{R}^{n \times n}$.
For a symmetric positive definite matrix $A$ let $\smash{\|z\|_A := \sqrt{z^\top A^{-1} z} }$ for $\smash{z \in \mathbb{R}^d}$.
Let $\smash{C^s(\mathbb{R}^d)}$ denote the set of $s$-times continuously differentiable real-valued functions on $\mathbb{R}^d$.

\begin{theorem}[Strong consistency of \textsc{S$\Pi$IS-} and \textsc{S$\Pi$T-MALA}]
    \label{thm: consistency of SQT}
    Let \Cref{ass: embed} hold where $k = k_P$ is a Stein kernel, and let $D_P : \mathcal{X} \times \mathcal{X} \rightarrow [0,\infty]$ denote the associated \ac{ksd}.
    Assume also that
    \begin{enumerate}[label={\normalfont (A\arabic*)},leftmargin=1cm]
    \item $\nabla \log p \in C^2(\mathbb{R}^d)$ with $\sup_{x \in \mathbb{R}^d} \|\nabla^2 \log p(x)\| < \infty$ \label{ass: p}
    \item $\exists$ $b_1 > 0, B_1 \geq 0$ such that $- \nabla^2 \log p(x) \succeq b_1 I$ for all $\|x\| \geq B_1$ \label{ass: p concave}
    \item $k_P \in C^2(\mathbb{R}^d)$. \label{ass: kp}
    \item $\exists$ $0 < b_2 < 2 b_1 C_1^2$, $B_2 \geq 0$ such that $\nabla^2 k_P(x) \preceq b_2 I$ for all $\|x\| \geq B_2$ \label{ass: kP convex} 
    \end{enumerate}
    Let $P_n^\star = \sum_{i=1}^n w_i^\star \delta(x_i)$ be the result of running \Cref{alg: S Pi IS} and let $\smash{P_{n,m} = \frac{1}{m} \sum_{i=1}^m  \delta(y_i)}$ be the result of running \Cref{alg: S Pi T}.
    Let $m \leq n$ and $\smash{m = \Omega((\log n)^{\delta})}$ for some $\delta > 2$.
    Then there exists $\epsilon_0 > 0$ such that, for all step sizes $\epsilon \in (0,\epsilon_0)$ and all initial states $x_0 \in \mathbb{R}^d$, $\smash{D_P(P_n^\star)\rightarrow 0 , D_P (P_{n,m}) \rightarrow 0}$ almost surely as $m,n \rightarrow \infty$.
\end{theorem}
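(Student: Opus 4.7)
The plan is to reduce the theorem to three ingredients: (i) geometric ergodicity of $\Pi$-invariant \ac{mala}, (ii) almost-sure convergence of $D_P(P_n^\star)$ by sandwiching against the \ac{snis} estimator, and (iii) a greedy-comparison bound transferring this convergence to $D_P(P_{n,m})$.

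For (i), the recommended $\pi(x)\propto p(x)\sqrt{k_P(x)}$ gives $\log\pi = \log p + \tfrac{1}{2}\log k_P + \text{const}$, which is $C^2$ by \ref{ass: p}, \ref{ass: kp} and the strict positivity $k_P\geq C_1^2>0$ from \Cref{ass: embed}. Decomposing the Hessian,
\begin{equation*}
-\nabla^2\log\pi \;=\; -\nabla^2\log p \;-\; \tfrac{1}{2}\frac{\nabla^2 k_P}{k_P} \;+\; \tfrac{1}{2}\frac{(\nabla k_P)(\nabla k_P)^{\top}}{k_P^2},
\end{equation*}
the final rank-one term is positive semi-definite. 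For $\|x\|\geq\max(B_1,B_2)$, \ref{ass: p concave} gives $-\nabla^2\log p\succeq b_1 I$, and \ref{ass: kP convex} combined with $k_P\geq C_1^2$ gives $\nabla^2 k_P/k_P\preceq (b_2/C_1^2)I$, yielding
\begin{equation*}
-\nabla^2\log\pi \;\succeq\; \bigl(b_1-\tfrac{b_2}{2C_1^2}\bigr)I \;\succ\; 0
\end{equation*}
by the slack in \ref{ass: kP convex}. Since $\log\pi$ is also globally $C^2$, this is precisely the distant strong log-concavity setting in which \citet{durmus2022geometric} establish $V$-uniform geometric ergodicity (and Harris recurrence) of preconditioned \ac{mala} from every initial state, provided the step size lies below some $\epsilon_0>0$.

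For (ii), set $W(x) := (\mathrm{d}P/\mathrm{d}\Pi)(x) = C_2/\sqrt{k_P(x)}$ and form the \ac{snis} measure $\tilde P_n=\sum_i\tilde w_i\delta(x_i)$ with $\tilde w_i = W(x_i)/\sum_j W(x_j)$. Since $\tilde w$ lies in the simplex, optimality of $w^\star$ yields the deterministic sandwich $D_P(P_n^\star)\leq D_P(\tilde P_n)$, and using the Stein property $\mu_P=0$,
\begin{equation*}
D_P(\tilde P_n)^2 \;=\; \frac{1}{(n\bar W_n)^2}\sum_{i,j} W(x_i)W(x_j)\,k_P(x_i,x_j), \qquad \bar W_n := \tfrac{1}{n}\sum_i W(x_i).
\end{equation*}
Two observations drive the analysis: $W(x)\leq C_2/C_1$ is uniformly bounded, and the identity $W(x)\sqrt{k_P(x,x)}\equiv C_2$ combined with Cauchy--Schwarz in $\mathcal{H}(k_P)$ bounds each summand by $C_2^2$. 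Harris recurrence from (i) gives the scalar SLLN $\bar W_n\to 1$ a.s., and the bounded $V$-statistic converges a.s.\ by peeling off the $O(1/n)$ diagonal and iterating the SLLN (outer variable against $x\mapsto \int W(x)W(y)k_P(x,y)\,\mathrm{d}\Pi(y)$, inner a standard application), with joint limit $\iint k_P\,\mathrm{d}P\,\mathrm{d}P = \|\mu_P\|_{\mathcal{H}(k_P)}^2 = 0$. Hence $D_P(\tilde P_n)\to 0$, and therefore $D_P(P_n^\star)\to 0$ almost surely.

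For (iii), the greedy rule of \Cref{alg: S Pi T} is a regularised Frank--Wolfe step in $\mathcal{H}(k_P)$ targeting $\mu_P=0$. Adapting the argument of \citet{riabiz2020optimal} from $P$- to $\Pi$-invariant chains (by substituting the \ac{snis}-weighted empirical measure for the unweighted one at the comparison step) yields a bound of the form $D_P(P_{n,m})^2 \lesssim D_P(P_n^\star)^2 + m^{-1}\max_{i\leq n} k_P(x_i,x_i)$. The at-most-quadratic growth of $k_P$ enforced by \ref{ass: kP convex}, together with the sub-Gaussian tail control on $\Pi$ inherited from (i), gives $\max_{i\leq n}\|x_i\|^2 = O(\log n)$ (a.s.\ eventually, via Borel--Cantelli on the geometric mixing), and thus $\max_i k_P(x_i,x_i) = O(\log n)$. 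The hypothesis $m=\Omega((\log n)^{\delta})$ with $\delta>2$ then drives the residual to zero, giving $D_P(P_{n,m})\to 0$ a.s. The principal obstacle throughout is the $V$-statistic limit in (ii) holding for \emph{every} $x_0\in\mathbb{R}^d$ rather than $\Pi$-almost every: this is exactly why Harris recurrence, and not merely geometric ergodicity, is invoked from (i); careful handling of bounded weights, possibly unbounded kernel, and Markov dependence in the off-diagonal double sum is the only genuinely delicate analytic step.
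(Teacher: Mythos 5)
Your part (i) matches the paper's first step in spirit, but it elides two things the paper handles explicitly: (a) \citet{durmus2022geometric} treat \emph{standard} \ac{mala} ($M=I$), and extending their conclusion to preconditioned \ac{mala} requires the change-of-variables argument of \Cref{prop: mala equiv} and \Cref{cor: DM for general M} (you simply assert the result for preconditioned \ac{mala}); and (b) you never verify the bounded-Hessian condition (DM2) for $\log\pi$ — you say $\log\pi\in C^2$, but Durmus--Moulines need $\sup_x\|\nabla^2\log\pi(x)\|<\infty$, which the paper establishes via a compactness argument splitting $\|x-x_0\|\leq B_2$ from $\|x-x_0\|> B_2$.

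More substantively, parts (ii)--(iii) reverse the paper's logic, and it is the reversal that creates the gap. The paper proves $D_P(P_{n,m})\to 0$ first by verifying conditions (R$^+$1-3) of Theorem~3 of \citet{riabiz2020optimal} — a black-box result that handles the greedy algorithm, the Markov dependence, and the arbitrary initial state all at once — and then observes $D_P(P_n^\star)\leq D_P(P_{n,m})$ trivially, since $P_{n,m}$ is feasible for the program defining $w^\star$. You go the other way: prove $D_P(P_n^\star)\to 0$ via the \ac{snis} sandwich, then transfer to $D_P(P_{n,m})$ with an asserted greedy bound $D_P(P_{n,m})^2\lesssim D_P(P_n^\star)^2 + m^{-1}\max_i k_P(x_i,x_i)$. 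The ``easy'' inequality $D_P(P_n^\star)\leq D_P(P_{n,m})$ goes in the \emph{opposite} direction to what you need, so the greedy bound is doing all the work — and you never derive it; you only say it can be obtained ``by adapting'' \citet{riabiz2020optimal}. But Theorem~3 of that paper is not stated as a comparison bound against $P_n^\star$: it is a direct a.s.\ convergence statement for the greedy output, and the conditions under which it holds are precisely (R$^+$1-3). You would either need to re-derive those estimates from scratch (which you sketch but do not do, in particular the interpolation between the $\log m/m$ Frank--Wolfe rate and the Markov-chain error term), or verify (R$^+$1-3), which is what the paper does. Either way the asserted bound is a gap, not a lemma.

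The same issue appears in your part (ii): the claimed almost-sure convergence of the Hilbert-space $V$-statistic $n^{-2}\sum_{i,j}W(x_i)W(x_j)k_P(x_i,x_j)$ for a Markov chain started from \emph{arbitrary} $x_0$ is not proved. ``Peeling off the diagonal and iterating the SLLN'' does not yield a rigorous argument for the off-diagonal double sum under Markov dependence — the inner average $\frac{1}{n}\sum_j W(x_j)k_P(x_i,x_j)$ depends on $i$, and uniform control over $i$ is exactly the nontrivial step. The correct route is a strong law in $\mathcal{H}(k_P)$ for $\frac{1}{n}\sum_i W(x_i)k_P(\cdot,x_i)$, which again needs quantitative moment/ergodicity conditions on the chain of exactly the (R$^+$) flavour, since $k_P(\cdot,x_i)$ is unbounded in $\mathcal{H}(k_P)$-norm. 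Your closing remark recognises this tension (Harris recurrence vs.\ geometric ergodicity) but recognising it is not resolving it. The paper avoids the entire issue by not attempting a direct $V$-statistic argument.

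Finally, a minor sign that something is off: your route, if it went through, would appear to need only $m=\omega(\log n)$ (i.e., $\delta>1$), whereas the paper's condition $m=\Omega((\log n)^\delta)$, $\delta>2$, is exactly the translation of $\log n = O(m^{\gamma/2})$, $\gamma<1$, from Theorem~3 of \citet{riabiz2020optimal}. The mismatch is not fatal (you would just be proving under a stronger hypothesis than strictly required by your own bound), but it suggests the asserted greedy bound is not actually the one that comes out of the cited analysis.
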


\noindent The proof is in \Cref{app: proof}.
Compared to earlier authors\footnote{These earlier results required high-level assumptions on the convergence of \ac{mcmc}, for which explicit sufficient conditions had yet to be derived.}, such as \citet{chen2019stein,riabiz2020optimal}, a major novelty here is that our assumptions are \emph{explicit} and can often be verified  \citep[see also][]{hodgkinson2020reproducing}.
\labelcref{ass: p concave} is strong log-concavity of $P$ when $B_1=0$, while for $B_1 > 0$ this condition is slightly stronger than the related \emph{distant dissipativity} condition assumed in earlier work \citep{gorham2017measuring,riabiz2020optimal}.
\labelcref{ass: kP convex} holds for the Langevin--Stein kernel (i.e. weak convergence control) and for the KGM1--Stein kernel (i.e. weak convergence control + control over first moments), but not for the higher-order KGM--Stein kernels.
Extending our proof strategy to the higher-order KGM--Stein kernels would require further research into the convergence properties of \ac{mala}, and this is expected to be difficult.

\section{Benchmarking on \texttt{PosteriorDB}}
\label{sec: empirical}

The area of Bayesian computation has historically lacked a common set of benchmark problems, with classical examples being insufficiently difficult and case-studies being hand-picked \citep{chopin2017leave}.
To introduce objectivity into our assessment, we exploited the recently released \texttt{PosteriorDB} benchmark  \citep{Magnusson_posteriordb_a_set_2022}.
This project is an attempt toward standardised benchmarking, consisting of a collection of posteriors to be numerically approximated. 
Here, we systematically compared the performance of \textsc{S$\Pi$IS-MALA} against the default Stein importance sampling algorithm (i.e. $\Pi = P$; denoted \textsc{SIS-MALA}), and also against unprocessed $P$-invariant \ac{mala} (i.e. uniform weights), reporting results across the breadth of \texttt{PosteriorDB}.
The test problems in \texttt{PosteriorDB} are defined in the \texttt{Stan} probabilistic programming language, and so \texttt{BridgeStan} \citep{Roualdes_BridgeStan_Efficient_in-memory_2023} was used to directly access posterior densities and their gradients as required. 
For all instances of \ac{mala}, an adaptive algorithm was used to learn a suitable preconditioner matrix $M$ during the warm-up period; see \Cref{app: mala imp}.
All experiments that we report can be reproduced using code available at \url{https://github.com/congyewang/Stein-Pi-Importance-Sampling}.

\newcolumntype{C}[1]{>{\centering\let\newline\\\arraybackslash\hspace{0pt}}m{#1}}

\begin{table}[t!]
\centering
%\small % for NeurIPS
\scriptsize % for arXiv
\scalebox{0.75}{
\begin{tabular}{ |c|c||c|C{1cm}|C{1cm}||c|C{1cm}|C{1cm}| }
 \hline
 & & \multicolumn{3}{|c||}{Langevin--Stein Kernel} & \multicolumn{3}{|c|}{KGM3--Stein Kernel} \\
 \hline
 Task & $d$ & \textsc{MALA} & \textsc{SIS -MALA} & \textsc{S$\Pi$IS -MALA} & \textsc{MALA} & \textsc{SIS -MALA} & \textsc{S$\Pi$IS -MALA} \\
 \hline
earnings-earn\_height & 3 & 1.41 & 0.0674 & \textbf{0.0332} & 5.33 & 0.656 & \textbf{0.181} \\
gp\_pois\_regr-gp\_regr & 3 & 0.298 & 0.0436 & \textbf{0.0373} & 1.22 & 0.385 & \textbf{0.223} \\
kidiq-kidscore\_momhs & 3 & 1.04 & 0.109 & \textbf{0.0941} & 4.66 & 0.848 & \textbf{0.476} \\
kidiq-kidscore\_momiq & 3 & 5.03 & 0.516 & \textbf{0.358} & 25.3 & 4.86 & \textbf{1.55} \\
mesquite-logmesquite\_logvolume & 3 & 1.10 & 0.179 & \textbf{0.156} & 4.97 & 1.70 & \textbf{0.844} \\
arma-arma11 & 4 & 4.47 & 1.09 & \textbf{1.01} & 26.0 & 8.91 & \textbf{6.03} \\
earnings-logearn\_logheight\_male & 4 & 9.46 & 1.96 & \textbf{1.59} & 53.9 & 15.4 & \textbf{8.65} \\
garch-garch11 & 4 & 0.543 & 0.159 & \textbf{0.130} & 4.70 & 1.16 & \textbf{1.01} \\
kidiq-kidscore\_momhsiq & 4 & 5.21 & 0.982 & \textbf{0.897} & 29.3 & 7.25 & \textbf{5.05} \\
earnings-logearn\_interaction\_z & 5 & 3.09 & 1.36 & \textbf{1.33} & 19.3 & 10.4 & \textbf{8.94} \\
kidiq-kidscore\_interaction & 5 & 7.74 & \textbf{1.65} & 1.79 & 47.8 & 13.2 & \textbf{10.1} \\
kidiq\_with\_mom\_work-kidscore\_interaction\_c & 5 & 1.35 & \textbf{0.659} & 0.711 & 7.92 & \textbf{4.05} & 4.17 \\
kidiq\_with\_mom\_work-kidscore\_interaction\_c2 & 5 & 1.38 & \textbf{0.689} & 0.699 & 8.09 & \textbf{4.24} & 4.25 \\
kidiq\_with\_mom\_work-kidscore\_interaction\_z & 5 & 1.11 & 0.500 & \textbf{0.499} & 6.62 & \textbf{2.63} & 3.25 \\
kidiq\_with\_mom\_work-kidscore\_mom\_work & 5 & 1.07 & \textbf{0.507} & 0.545 & 6.70 & \textbf{2.63} & 3.04 \\
low\_dim\_gauss\_mix-low\_dim\_gauss\_mix & 5 & 5.51 & 1.87 & \textbf{1.76} & 37.5 & 14.7 & \textbf{11.3} \\
mesquite-logmesquite\_logva & 5 & 1.83 & 0.821 & \textbf{0.818} & 12.6 & 5.73 & \textbf{5.59} \\
hmm\_example-hmm\_example & 6 & 1.99 & 0.578 & \textbf{0.523} & 11.6 & 4.13 & \textbf{3.40} \\
sblrc-blr & 6 & 479 & 154 & \textbf{134} & 3300 & 1100 & \textbf{854} \\
sblri-blr & 6 & 201 & 66.7 & \textbf{60.3} & 1340 & \textbf{514} & 595 \\
arK-arK & 7 & 6.87 & 3.39 & \textbf{3.16} & 60.4 & 26.4 & \textbf{23.0} \\
mesquite-logmesquite\_logvash & 7 & 1.89 & \textbf{1.18} & 1.23 & 15.5 & \textbf{8.88} & 10.1 \\
bball\_drive\_event\_0-hmm\_drive\_0 & 8 & 1.15 & \textbf{0.679} & 0.698 & 8.55 & 4.72 & \textbf{3.99} \\
bball\_drive\_event\_1-hmm\_drive\_1 & 8 & 42.9 & \textbf{11.9} & 12.4 & 285 & 85.6 & \textbf{67.8} \\
hudson\_lynx\_hare-lotka\_volterra & 8 & 4.62 & 2.29 & \textbf{2.15} & 47.4 & \textbf{18.8} & 18.9 \\
mesquite-logmesquite & 8 & 1.46 & \textbf{1.00} & 1.06 & 13.3 & \textbf{8.28} & 9.14 \\
mesquite-logmesquite\_logvas & 8 & 2.02 & \textbf{1.31} & 1.35 & 19.2 & \textbf{10.8} & 12.2 \\
mesquite-mesquite & 8 & 0.429 & 0.268 & \textbf{0.235} & 3.71 & \textbf{2.17} & 2.42 \\
eight\_schools-eight\_schools\_centered & 10 & 0.526 & \textbf{0.100} & 0.182 & 7.53 & \textbf{2.15} & 215 \\
eight\_schools-eight\_schools\_noncentered & 10 & 0.210 & 0.137 & \textbf{0.137} & 43.6 & 28.7 & \textbf{27.5} \\
nes1972-nes & 10 & 6.16 & 3.89 & \textbf{3.45} & 72.9 & 36.2 & \textbf{34.4} \\
nes1976-nes & 10 & 6.67 & 3.86 & \textbf{3.53} & 77.5 & 35.5 & \textbf{34.4} \\
nes1980-nes & 10 & 4.34 & 2.68 & \textbf{2.57} & 49.8 & \textbf{25.4} & 25.7 \\
nes1984-nes & 10 & 6.18 & 3.75 & \textbf{3.43} & 71.3 & 34.9 & \textbf{33.6} \\
nes1988-nes & 10 & 7.40 & 3.70 & \textbf{3.27} & 81.4 & 34.6 & \textbf{32.4} \\
nes1992-nes & 10 & 7.52 & 4.32 & \textbf{3.84} & 89.1 & 39.7 & \textbf{37.3} \\
nes1996-nes & 10 & 6.44 & 3.87 & \textbf{3.53} & 74.1 & 36.4 & \textbf{34.3} \\
nes2000-nes & 10 & 3.35 & 2.22 & \textbf{2.20} & 38.6 & \textbf{21.3} & 22.8 \\
diamonds-diamonds & 26 & 196 & 157 & \textbf{143} & 5120 & 2990 & \textbf{2620} \\
mcycle\_gp-accel\_gp & 66 & 11.3 & \textbf{8.25} & 9.79 & 960 & \textbf{623} & 815 \\

 \hline
\end{tabular}}

\vspace{5pt}

\caption{Benchmarking on \texttt{PosteriorDB}. 
Here we compared raw output from MALA with the post-processed output provided by the default Stein importance sampling method of \citet{liu2017black} (\textsc{SIS-MALA}) and the proposed Stein $\Pi$-Importance Sampling method (\textsc{S$\Pi$IS-MALA}).
Here $d = \text{dim}(P)$ and the number of MALA samples was $n=3 \times 10^3$.
The Langevin and KGM3--Stein kernels were used for \textsc{SIS-MALA} and \textsc{S$\Pi$IS-MALA} and the associated \acp{ksd} are reported.
Ten replicates were computed and statistically significant improvement is highlighted in \textbf{bold}.
}
\label{tab: results}
\end{table}

Results are reported in \Cref{tab: results} for $n = 3 \times 10^3$ samples from \ac{mala}.
These focus on the Langevin--Stein kernel, for which our theory holds, and the KGM3--Stein kernel, for which it does not.
There was a significant improvement of \textsc{S$\Pi$IS-MALA} over \textsc{SIS-MALA} in 73\% of test problems for the Langevin--Stein kernel and in 65\% of test problems for the KGM3--Stein kernel.
Compared to unprocessed \ac{mala}, a significant improvement occurred in 100\% and 97\% of cases, respectively for each kernel.
However, the extent of improvement decreased when the dimension $d$ of the target increased, supporting the intuition that we set out earlier and in \Cref{app: dimension}.
An in-depth breakdown of results, including varying the number $n$ of samples that were used, and the performance \textsc{S$\Pi$T-MALA}, can be found in \Cref{app: posteriordb,app: sparse posteriorDB}.

Our focus is on the development of algorithms for minimisation of \acp{ksd}; the properties of \acp{ksd} themselves are out of scope for this work\footnote{The interested reader is referred to \citet{gorham2017measuring,barp2022targeted,kanagawa2022controlling}.}.
Nonetheless, there is much interest in better understanding the properties of \acp{ksd}, and we therefore also report performance of \textsc{S$\Pi$IS-MALA} in terms of 1-Wasserstein divergence in \Cref{app: performance}.
The main contrast between these results and the results in \Cref{tab: results} is that, being score-based, \acp{ksd} suffer from the \emph{blindness to mixing proportions} phenomena which has previously been documented in \citet{wenliang2020blindness,koehler2022statistical,liu2023using}.
Caution should therefore be taken when using algorithms based on Stein discrepancies in the context of posterior distributions with multiple high probability regions that are spatially separated.
This is also a failure mode for \ac{mcmc} algorithms such as \ac{mala}, and yet there are still many problems for which \ac{mala} has been successfully used.

The alternative choice $\Pi_1$, with $\smash{\pi_1(x) \propto p(x)^{d/(d+1)}}$, which provides a generic form of over-dispersion and is optimal for approximation in 1-Wasserstein divergence (c.f. \Cref{subsec: classical quant}), was also considered.
Results in \Cref{app: skewed-normal} indicate that, while $\Pi_1$ yields an improvement compared to the baseline of using $P$ itself, $\Pi_1$ may be less effective than our proposed $\Pi$ when $P$ is skewed.

\section{Discussion}
\label{sec: discuss}

This paper presented Stein $\Pi$-Importance Sampling; an algorithm that is simple to implement, admits an end-to-end theoretical treatment, and achieves a significant improvement over existing post-processing methods based on \ac{ksd}.
On the negative side, second order derivatives of the statistical model are required.
For models for which access to second order derivatives is impractical, our methodology and theoretical analysis are directly applicable to gradient-free \ac{ksd} \citep{fisher2022gradient}, and this would be an interesting direction for future work.

\paragraph*{Acknowledgements}

CW was supported by the China Scholarship Council.
HK and CJO were supported by EP/W019590/1.

%\bibliographystyle{apalike}
%\bibliography{bibliography}

\newpage

\appendix

\setcounter{figure}{0}
\renewcommand{\thefigure}{S\arabic{figure}}
\renewcommand{\theHfigure}{S\arabic{figure}}

\begin{center}
\huge
Supplement
\end{center}

This supplement contains supporting material for the paper \emph{Stein $\Pi$-Importance Sampling}.
The mathematical background on Stein kernels is contained in \Cref{app: stein kernel}.
The proof of \Cref{thm: consistency of SQT} is contained in \Cref{app: proof}.
For implementation of Stein $\Pi$-Importance Sampling without the aid of automatic differentiation, various explicit derivatives are required; the relevant calculations can be found in \Cref{app: derivatives}.
The empirical protocols and additional empirical results are presented in \Cref{app: empirical}.

\section{Mathematical Background}
\label{app: stein kernel}

This appendix contains mathematical background on reproducing kernels and Stein kernels, as used in the main text.
\Cref{subsec: vvrkhs} introduces matrix-valued reproducing kernels, while \Cref{subsec: stein kernels} specialises to Stein kernels by application of a Stein operator to a matrix-valued kernel.
A selection of useful Stein kernels are presented in \Cref{subsec: selecting sk}.

\subsection{Matrix-Valued Reproducing Kernels}
\label{subsec: vvrkhs}

A \emph{matrix-valued kernel} is a function $K : \mathbb{R}^d \times \mathbb{R}^d \rightarrow \mathbb{R}^{d \times d}$, that is both
\begin{enumerate}
    \item symmetric; $K(x,y) = K(y,x)$ for all $x,y \in \mathbb{R}^d$, and
    \item positive semi-definite; $\sum_{i=1}^n \sum_{j=1}^n \langle c_i , K(x_i,x_j) c_j \rangle \geq 0$ for all $x_1,\dots,x_n \in \mathbb{R}^d$ and $c_1,\dots,c_n \in \mathbb{R}^d$. 
\end{enumerate} 
Let $K_x = K(\cdot,x)$.
For vector-valued functions $g,g' : \mathbb{R}^d \rightarrow \mathbb{R}^d$, defined by $g = \sum_{i=1}^n K_{x_i} c_i$ and $g' = \sum_{j=1}^m K_{x_j'} c_i'$, define an inner product
\begin{align}
    \langle g , g' \rangle_{\mathcal{H}(K)} = \sum_{i=1}^n \sum_{j=1}^m \langle c_i , K(x_i , x_j') c_j' \rangle . \label{eq: HK inn prod}
\end{align}
There is a unique Hilbert space of such vector-valued functions associated to $K$, denoted $\mathcal{H}(K)$; see Proposition 2.1 of \citet{carmeli2006vector}.
This space is characterised as
\begin{align*}
    \mathcal{H}(K) = \overline{\mathrm{span}}\{ K_x c : x, c \in \mathbb{R}^d \}
\end{align*}
where here the closure is taken with respect to the inner product in \eqref{eq: HK inn prod}.
It can be shown that $\mathcal{H}(K)$ is in fact a \ac{rkhs} which satisfies the \emph{reproducing property}
\begin{align*}
    \langle g , K_x c \rangle_{\mathcal{H}(K)} = \langle g(x) , c \rangle
\end{align*}
for all $g \in \mathcal{H}(K)$ and $x, c \in \mathbb{R}^d$.
Matrix-valued kernels are the natural starting point for construction of \acp{ksd}, as described next.

\subsection{Stein Kernels}
\label{subsec: stein kernels}

A general construction for Stein kernels is to first identify a matrix-valued \ac{rkhs} $\mathcal{H}(K)$ and an operator $S_P : \mathcal{H}(K) \rightarrow L^1(P)$ for which $\int S_p h \; \mathrm{d}P = 0$ for all $h \in \mathcal{H}(K)$.
Such an operator will be called a \emph{Stein operator}.
The collection $\{S_p h : h \in \mathcal{H}(K)\}$ inherits the structure of an \ac{rkhs}, whose reproducing kernel
\begin{align}
k_P(x,y) = \langle S_P K_x , S_P K_y \rangle_{\mathcal{H}(K)}  \label{eq: stein kernel}
\end{align}
is a Stein kernel, meaning that $\mu_P = 0$ where $\mu_P$ is the kernel mean embedding from \eqref{eq: pettis}; see \citet{barp2022targeted}.
Explicit calculations for the Stein kernels considered in this work can be found in \Cref{app: derivatives}.

For univariate distributions, \citet{barbour1988stein} proposed to obtain Stein operators from infinitessimal generators of $P$-invariant continuous-time Markov processes; see also \citet{barbour1990stein,gotze1991rate}.
The approach was extended to multivariate distributions in \citet{gorham2015measuring}.
The starting point is the $P$-invariant It\^{o} diffusion
\begin{align}
    \mathrm{d}X_t = \frac{1}{2} \frac{1}{p(X_t)} \nabla \cdot [ p(X_t) M(X_t) ] \mathrm{d}t + M(X_t)^{1/2} \mathrm{d}W_t , 
    \label{eq: ito}
\end{align}
where $p$ is the density of $P$, assumed to be positive, $M : \mathbb{R}^d \rightarrow \mathbb{R}^{d \times d}$ is a symmetric matrix called the \emph{diffusion matrix}, and $W_t$ is a standard Wiener process \citep{kent1978time,roberts2002langevin}.
Here the notation $[\nabla \cdot A]_i = \nabla \cdot (A_{i,\cdot}^\top)$ indicates the divergence operator applied to each row of the matrix $\smash{A(x) \in \mathbb{R}^{d \times d}}$.
The infinitessimal generator is
\begin{align*}
    (A_P u)(x) = \frac{1}{2} \frac{1}{p(x)} \nabla \cdot [p(x) M(x) \nabla u(x)] .
\end{align*}
Substituting $h(x)$ for $\frac{1}{2} \nabla u(x)$, we obtain a Stein operator
\begin{align}
(S_P h)(x) = \frac{1}{p(x)} \nabla \cdot [p(x) M(x) h(x)] \label{eq: diff stein op}
\end{align}
called the \emph{diffusion Stein operator} \citep{gorham2019measuring}.
This is indeed a Stein operator, since under mild integrability conditions on $K$, the divergence theorem gives that $\int S_p h \; \mathrm{d}P = 0$ for all $h \in \mathcal{H}(K)$; for full details and a proof see \citet{barp2022targeted}.

\subsection{Selecting a Stein Kernel}
\label{subsec: selecting sk}

There are several choices for a Stein kernel, and which we should use depends on what form of convergence we hope to control \citep{gorham2017measuring,gorham2019measuring,hodgkinson2020reproducing,barp2022targeted,kanagawa2022controlling}.
\Cref{app: lang stein} describes the Langevin--Stein kernel for weak convergence control, \Cref{app: KGM kernel} describes the KGM--Stein kernels for additional control over moments, and \Cref{app: Riemann stein} presents the Riemann--Stein kernel, whose convergence properties have to-date been less well-studied.

All of the kernels that we consider have length scale parameters that need to be specified, and some also have location parameters to be specified.
As a reasonably automatic default we define
$$
x_\star \in \argmax p(x) , \qquad
\Sigma^{-1} = - \nabla^2 \log p(x_\star)
$$
as a location and a matrix of characteristic length scales for $P$ that will be used throughout.
These values can typically be obtained using gradient-based optimisation, which is usually cheaper to perform compared to full approximation of $P$.
It is assumed that $\nabla^2 \log p(x_\star)$ is positive definite in the sequel.

\subsubsection{Weak Convergence Control with Langevin--Stein Kernels}
\label{app: lang stein}

The first kernel we consider, which we called the Langevin--Stein kernel in the main text, was introduced by \citet{gorham2017measuring}.
This Stein kernel was developed for the purpose of controlling the weak convergence of a sequence $(Q_n)_{n \in \mathbb{N}} \subset \mathcal{P}(\mathbb{R}^d)$ to $P$.
Recall that a sequence $(Q_n)_{n \in \mathbb{N}}$ is said to \emph{converge weakly} (or \emph{in distribution}) to $P$ if $\smash{\int f \dd Q_n \to \int f \dd P}$ for all continuous bounded functions $\smash{f : \mathbb{R}^d \rightarrow \mathbb{R}}$. 
This convergence is denoted $\smash{Q_n \stackrel{\text{d}}{\rightarrow} P}$ in shorthand.

The problem considered in \citet{gorham2017measuring} was how to select a combination of matrix-valued kernel $K$ (and, implicitly, a diffusion matrix $M$) such that the Stein kernel $k_P$ in \eqref{eq: stein kernel} generates a \ac{ksd} $D_P(Q)$ in \eqref{eq: KSD} for which $D_P(Q_n) \rightarrow 0$ implies $\smash{Q_n \stackrel{\text{d}}{\rightarrow} P}$. 
Their solution was to combine the inverse multi-quadric kernel with an identity diffusion matrix;
$$
K(x,y) = (1 + \|x-y\|_\Sigma^2 )^{-\beta} I , \qquad M(x) = I
$$
for $\beta \in (0,1)$.
Provided that $P$ has a density $p$ for which $\nabla \log p(x)$ is Lipschitz, and that $P$ is \emph{distantly dissipative} \citep[see Definition 4 of][]{gorham2017measuring}, the associated \ac{ksd} enjoys weak convergence control.
Technically, the results in \citet{gorham2017measuring} apply only when $\Sigma = I$, but Theorem 4 in \citet{chen2019stein} demonstrated that they hold also for any positive definite $\Sigma$.
Following the recommendation of several previous authors, including \citet{chen2018stein,chen2019stein,riabiz2020optimal}, we take $\beta = \frac{1}{2}$ throughout.

\subsubsection{Moment Convergence Control with KGM--Stein Kernels}
\label{app: KGM kernel}

Despite its many elegant properties, weak convergence can be insufficient for applications where we are interested in integrals $\int f \; \mathrm{d}P$ for which the integrand $f : \mathbb{R}^d \rightarrow \mathbb{R}$ is unbounded.
In particular, this is the case for moments of the form $\smash{f(x) = x_1^{\alpha_1} \dots x_d^{\alpha_d}}$, $\smash{0 \neq \alpha \in \mathbb{N}_0^d}$.
In such situations, we may seek also the stronger property of \emph{moment convergence control}.
The development of \acp{ksd} for moment convergence control was recently considered by \citet{kanagawa2022controlling}, and we refered to their construction as the KGM--Stein kernels in the main text.
(For convenience, we have adopted the initials of the authors in naming the KGM--Stein kernel.)

A sequence $(Q_n)_{n \in \mathbb{N}} \subset \mathcal{P}(\mathbb{R}^d)$ is said to converge to $P$ \emph{in the $s$th order moment} if 
 $\int \lVert x \rVert^s \dd Q_n (x) \to \int \lVert x \rVert^s \dd P(x)$.
To establish convergence of moments, we need an additional condition on top of weak convergence control: uniform integrability control. 
A sequence of measures $(Q_n)_{n \in \mathbb{N}}$ is said to have \emph{uniformly integrable $s$th moments} if for any $\varepsilon > 0$, we can take $r>0$ such that 
$$
\sup_{n \in \mathbb{N}} \; \int_{\lVert x \rVert > r} \lVert x  \rVert^s \; \dd Q_n(x) < \varepsilon.
$$
This condition essentially states that the tail decay of the measures is well-controlled (so that it has a convergent moment).
The \ac{ksd} convergence $D_P(Q_n) \to 0$ implies uniform integrability if for any $\varepsilon > 0$, we can take $r_{\varepsilon} > 0$ and $f_{\varepsilon} \in \mathcal{H}(K)$ such that 
\begin{align}
S_P f_{\varepsilon} (x) \geq \lVert x \rVert^s 1\{\lVert x \rVert > r_\varepsilon\}- \varepsilon, \label{eq: Heish req}
\end{align}
i.e., the Stein-modified \ac{rkhs} can approximate the (norm-weighted) indicator function arbitrarily well. 
Such a function $f_{\varepsilon}$ can be explicitly constructed (while not guaranteed to be a member of the \ac{rkhs}). 
Specifically, the choice $f_{\varepsilon} = (1-\iota_{\varepsilon})g$ satisfies \eqref{eq: Heish req} under an appropriate dissipativity condition, where $\iota_\varepsilon$ is a differentiable indicator function vanishing outside a ball, and $\smash{g(x)= - x / \sqrt{1 + \lVert x \rVert^2}}$. 
This motivated \citet{kanagawa2022controlling} to introduce the \emph{$s$th order KGM--Stein kernel}, which is based on the matrix-valued kernel and diffusion matrix
$$ 
K(x,y) = \left[ \phi(\|x- y\|_\Sigma) + \kappa_{\rm lin}(x, y) \right] I , \qquad M(x) = (1 + \|x - x_\star\|_\Sigma^2)^{\frac{s-1}{2}} I , 
$$
where $\smash{(x,y) \mapsto \phi(\|x-y\|_\Sigma)}$ is a $C_0^1$~universal kernel \citep[see ][Theorem 4.8]{barp2022targeted}.
For comparability of our results, we take $\phi$ to be the inverse multi-quadric $\smash{\phi(r) = (1 + r^2)^{-1/2}}$, and 
\[ \kappa_{\rm lin}(x,y) = \frac{1 + (x - x_\star)^\top \Sigma^{-1} (y-x_\star) }{\sqrt{1 + \lVert x - x_\star \rVert_\Sigma^2} \sqrt{1 + \lVert y - x_\star \rVert_\Sigma^2} }.\]
Here the normalised linear kernel $\kappa_{\rm lin}$ ensures $g \in \mathcal{H}(K)$, while the $C_0^1$~universal kernel $\phi$ allows approximation of $S_P \iota_\varepsilon g$; see \citet{kanagawa2022controlling}.

\subsubsection{Exploiting Geometry with Riemann--Langevin--Stein Kernels}
\label{app: Riemann stein}

For academic interest only, here we describe the \emph{Riemann--Stein kernel} that featured in \Cref{fig: 2D densities} of the main text.
This Stein kernel is motivated by the analysis of \citet{gorham2019measuring}, who argued that the use of rapidly mixing It\^{o} diffusions in Stein operators can lead to sharper convergence control.
The Riemann--Stein kernel is based on the class of so-called \emph{Riemannian} diffusions considered in \citet{girolami2011riemann}, who proposed to take the diffusion matrix $M$ in \eqref{eq: ito} to be $M = (\mathcal{I}_{\text{prior}} + \mathcal{I}_{\text{Fisher}})^{-1}$, the inverse of the Fisher information matrix, $\mathcal{I}_{\text{Fisher}}$, regularised using the Hessian of the negative log-prior, $\mathcal{I}_{\text{prior}}$.
For the two-dimensional illustration in \Cref{subsec: illustration}, this leads to the diffusion matrix
$$
M(x) = \left( I + \sum_{i=1}^n [\nabla f_i(x)] [\nabla f_i(x)]^\top \right)^{-1} ,
$$
where we recall that $y_i = f_i(x) + \epsilon_i$, where the $\epsilon_i$ are independent with $\epsilon_i \sim \mathcal{N}(0,1)$, and the prior is $x \sim \mathcal{N}(0,1)$.
For the presented experiment we paired the above diffusion matrix with the inverse multi-quadric kernel $K(x,y) = (1 + \|x-y\|_\Sigma^2)^{-\beta}$ for $\beta = \frac{1}{2}$.
The Riemann--Stein kernel extends naturally to distributions $P$ defined on Riemannian manifolds $\mathcal{X}$; see \citet{barp2022riemann} and Example 1 of \citet{hodgkinson2020reproducing}.

Unfortunately, the Riemann--Stein kernel is prohibitively expensive in most real applications, since each evaluation of $M$ requires a full scan through the size-$n$ dataset.
The computational complexity of Stein $\Pi$-Thinning with the Riemann--Stein kernel is therefore $O(m^2 n^2)$, which is unfavourable compared to the $O(m^2 n)$ complexity in the case where the Stein kernel is not data-dependent.
Furthermore, the convergence control properties of the Riemann--Stein kernel have yet to be established.
For these reasons we included the Riemann--Stein kernel for illustration only; further groundwork will be required before the Riemann-Stein kernel can be practically used.

\section{Proof of \Cref{thm: consistency of SQT}}
\label{app: proof}

This appendix is devoted to the proof of \Cref{thm: consistency of SQT}.
The proof is based on the recent work of \citet{durmus2022geometric}, on the geometric convergence of \ac{mala}, and on the analysis of sparse (greedy) approximation of kernel discrepancies performed in \citet{riabiz2020optimal}; these existing results are recalled in \Cref{subsec: aux}.
An additional technical result on preconditioned \ac{mala} is contained in \Cref{subsec: precond mala}.
The proof of \Cref{thm: consistency of SQT} itself is contained in \Cref{subsec: proof thm}.

\subsection{Auxiliary Results}
\label{subsec: aux}

To precisely describe the results on which our analysis is based, we first need to introduce some notation and terminology.
Let $V: \mathcal{X} \to [1,\infty)$ and, for a function $f : \mathcal{X} \rightarrow \mathbb{R}$ and a measure $\mu$ on $\mathcal{X}$, let 
$$
\|f\|_V := \sup_{x \in \mathcal{X}} \frac{|f(x)|}{V(x)}, \qquad \|\mu\|_V := \sup_{\|f\|_V \leq 1} \left| \int_{\mathcal{X}} f \mathrm{d}\mu \right| .
$$
Recall that a $Q$-invariant Markov chain $(x_i)_{i \in \mathbb{N}} \subset \mathcal{X}$ with $n^{\text{th}}$ step transition kernel $\mathrm{Q}^n$ is \emph{$V$-uniformly ergodic} \citep[see Theorem 16.0.1 of][]{meyn2012markov} if and only if $\exists R \in [0,\infty), \rho \in (0,1)$ such that 
\begin{align}
\| \mathrm{Q}^n(x,\cdot)-Q \|_{V} \leq R \rho^n V(x)  \label{eq: V unif ergod}
\end{align}
for all initial states $x \in \mathcal{X}$ and all $n \in \mathbb{N}$.

Although \ac{mala} (\Cref{alg: pi mala}) is classical \citep{roberts2002langevin}, until recently explicit sufficient conditions for ergodicity of \ac{mala} had not been obtained.
The first result we will need is due \citet{durmus2022geometric}, who presented the first explicit conditions for $V$-uniform convergence of \ac{mala}.
It applies only to \emph{standard} \ac{mala}, meaning that the preconditioning matrix $M$ appearing in \Cref{alg: pi mala} is the identity matrix.
The extension of this result to preconditioned \ac{mala} will be handled in \Cref{subsec: precond mala}.

\begin{theorem}
\label{thm: DM}
Let $Q \in \mathcal{P}(\mathbb{R}^d)$ admit a density, $q$, such that
\begin{enumerate}[label={\normalfont (DM\arabic*)},leftmargin=1.2cm]
    \item there exists $x_0$ with $\nabla \log q(x_0) = 0$
    \item $q$ is twice continuously differentiable with $\sup_{x \in \mathbb{R}^d} \| \nabla^2 \log q(x-x_0) \| < \infty$
    \item there exists $b > 0$ and $B \geq 0$ such that $- \nabla^2 \log q(x-x_0) \succeq b I$ for all $\|x-x_0\| \geq B$.
\end{enumerate}
Then there exists $\epsilon_0 > 0$ such that for all step sizes $\epsilon \in (0,\epsilon_0)$, standard $Q$-invariant \ac{mala} (i.e. with $M = I$) is $V$-uniformly ergodic for $V(x) = \exp(\frac{b}{16} \|x-x_0\|^2)$.
\end{theorem}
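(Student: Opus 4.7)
The plan is to invoke the classical Meyn--Tweedie framework for $V$-uniform ergodicity (Theorem 16.0.1 of \citet{meyn2012markov}): it suffices to verify (i) a geometric Lyapunov drift inequality $\mathrm{Q}V \leq \lambda V + K \mathbf{1}_C$ for some $\lambda \in (0,1)$, $K < \infty$ and some sublevel set $C$ of $V$, together with (ii) a minorisation condition $\mathrm{Q}(x,\cdot) \geq \delta \nu(\cdot)$ for $x \in C$. WLOG translate so that $x_0 = 0$; then $\nabla \log q(0) = 0$, $-\nabla^2 \log q(x) \succeq bI$ for $\|x\| \geq B$, and $\sup_x \|\nabla^2 \log q(x)\| \leq L$ for some $L < \infty$ by (DM2). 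Writing the MALA kernel as
\begin{align*}
\mathrm{Q} V(x) = \mathbb{E}_x\left[ \alpha(x,x') V(x') + (1 - \alpha(x,x')) V(x) \right],
\end{align*}
where $x' = x + \epsilon \nabla \log q(x) + \sqrt{2\epsilon} Z$ with $Z \sim \mathcal{N}(0,I)$ and $\alpha$ is the usual Metropolis--Hastings ratio, the task reduces to controlling this expectation.

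First I would analyse the unadjusted Langevin contribution. Because $V(x) = \exp(\frac{b}{16}\|x\|^2)$ is the exponential of a quadratic form and $Z$ is Gaussian, $\mathbb{E}[V(x + \epsilon \nabla \log q(x) + \sqrt{2\epsilon}Z)]$ admits a closed-form Gaussian integral, valid whenever $1 - \frac{b \epsilon}{4}$ is positive. Substituting the strong-concavity estimate $\langle x, \nabla \log q(x)\rangle \leq -b \|x\|^2 + c_0$ (valid for $\|x\| \geq B$ with $c_0$ depending on $B, L$) and Taylor-expanding in $\epsilon$ shows that for $\epsilon$ small enough the unadjusted drift gives $\mathbb{E}[V(x')/V(x)] \leq e^{-\kappa \epsilon} < 1$ in the tail, for some $\kappa > 0$ depending on $b$.

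The main obstacle is the Metropolis rejection: for MALA the acceptance ratio can degrade as $\|x\| \to \infty$, and on the rejection event we keep $V(x)$ rather than the smaller $V(x')$, which could in principle defeat the drift. The strategy I would follow is that of \citet{durmus2022geometric}: decompose the proposal into a ``typical'' event $\mathcal{E}_x = \{\|Z\| \leq r_\epsilon\}$ with $r_\epsilon$ chosen so that on $\mathcal{E}_x$ the proposal satisfies $\log q(x') - \log q(x) + \text{(correction)} \geq -\eta$ uniformly in $\|x\|$ large (using a second-order Taylor expansion of $\log q$ around $x$, which yields $O(\epsilon^{3/2})$ error terms controlled by $L$ and $b$), so that $\alpha(x,x') \geq 1 - \eta$. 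On $\mathcal{E}_x$ one inherits the contraction from the unadjusted analysis; on the complement, a Gaussian tail bound on $\|Z\|$ together with the crude inequality $V(x')/V(x) \leq \exp(\frac{b}{8}\epsilon \|Z\|^2 \cdot \text{polynomial})$ (recovered from completing the square) shows the contribution is negligible once $\epsilon < \epsilon_0$. Combined, this establishes $\mathrm{Q}V(x) \leq \lambda V(x)$ for $\|x\| \geq R$ with $\lambda < 1$ and some $R$ depending on $\epsilon_0, b, B, L$.

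The minorisation is comparatively routine: on the ball $C = \{\|x\| \leq R\}$, continuity of $\nabla \log q$ ensures that the Gaussian proposal density $(4\pi\epsilon)^{-d/2} \exp(-\frac{1}{4\epsilon}\|y - x - \epsilon \nabla \log q(x)\|^2)$ is bounded below by a positive constant uniformly in $x \in C$ on any compact $A \subset \mathbb{R}^d$, and the acceptance probability is bounded below by compactness and continuity of $q$; integrating against a reference measure concentrated on a suitable $A$ gives $\mathrm{Q}(x,\cdot) \geq \delta \nu(\cdot)$. Meyn--Tweedie then yields the $V$-geometric rate \eqref{eq: V unif ergod}. The delicate step is calibrating $\epsilon_0$ so that both the tail acceptance bound and the Gaussian integral for $V$ remain valid simultaneously; this is where the explicit dependence on $b$ (and the specific constant $b/16$ in the exponent of $V$) enters.
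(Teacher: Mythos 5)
The paper does not actually prove this statement: it is recalled verbatim as an auxiliary result in the appendix, and the entire proof given there is ``This is Theorem 1 of \citet{durmus2022geometric}.'' Your proposal therefore attempts something the paper deliberately outsources, namely a reconstruction of the Durmus--Moulines argument itself. As an outline it is faithful to that reference: the drift-plus-minorisation route via Theorem 16.0.1 of \citet{meyn2012markov}, the exponential-quadratic Lyapunov function whose Gaussian moment generating function forces $\epsilon \lesssim 1/b$ (and is the origin of the constant $b/16$), and the decomposition of the proposal noise into typical and atypical events are all genuine features of the actual proof.

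The one place where your sketch glosses over the genuinely hard step is the claim that on the typical event the log-acceptance ratio is bounded below by $-\eta$ ``uniformly in $\|x\|$ large'' via ``a second-order Taylor expansion of $\log q$ around $x$, which yields $O(\epsilon^{3/2})$ error terms controlled by $L$ and $b$.'' A naive expansion of the Metropolis--Hastings log-ratio leaves remainder terms involving $\|\nabla \log q(x)\|$, which under (DM2) can grow linearly in $\|x\|$, so those remainders are not uniformly $O(\epsilon^{3/2})$ over $\mathbb{R}^d$ for fixed $\epsilon$; this is precisely the mechanism by which MALA fails to be geometrically ergodic for targets with too-light tails (the classical Roberts--Tweedie counterexamples), and it is why explicit verifiable conditions appeared only recently. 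The resolution in \citet{durmus2022geometric} uses the strong concavity (DM3) not merely to obtain inward drift of the proposal mean but inside the acceptance-ratio expansion itself, so that the potentially unbounded terms enter with a favourable sign and cancel. If your argument is meant to stand on its own rather than as a pointer to the reference, that cancellation is the step that must be carried out explicitly; the remaining ingredients (the Gaussian integral for $\mathrm{Q}V$, the small-set minorisation on compact balls) are routine as you describe.
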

\begin{proof}
This is Theorem 1 of \citet{durmus2022geometric}.
\end{proof}

The next result that we will need establishes consistency of the greedy algorithm applied to samples from a Markov chain that is $Q$-invariant.

\begin{theorem}
\label{thm: Riabiz}
Let $P,Q \in \mathcal{P}(\mathcal{X})$ with $P \ll Q$.
Let $k_P : \mathcal{X} \times \mathcal{X} \rightarrow \mathbb{R}$ be a Stein kernel and let $D_P : \mathcal{X} \times \mathcal{X} \rightarrow [0,\infty]$ denote the associated \ac{ksd}.
Consider a $Q$-invariant, time-homogeneous Markov chain $(x_i)_{i \in \mathbb{N}} \subset \mathcal{X}$ such that
\begin{enumerate}[label={\normalfont (R$^+$\arabic*)},leftmargin=1.1cm]
\item $(x_i)_{i \in \mathbb{N}}$ is $V$-uniformly ergodic, such that $V(x) \geq \frac{\mathrm{d}P}{\mathrm{d}Q}(x) \sqrt{k_P(x)}$
\item $\sup_{i \in \mathbb{N}} \mathbb{E}\left[ \frac{\mathrm{d}P}{\mathrm{d}Q}(x_i) \sqrt{k_P(x_i)} V(x_i)\right] < \infty$
\item there exists $\gamma > 0$ such that 
 $\sup_{i \in \mathbb{N}} \mathbb{E} \left[ \exp\left\{ \gamma \max\left( 1 ,  \frac{\mathrm{d}P}{\mathrm{d}Q}(x_i)^2 \right) k_P(x_i) \right\} \right] < \infty$.
\end{enumerate}
Let $P_{n,m}$ be the result of running the greedy algorithm in \eqref{eq: greedy alg}.
If $m \leq n$ and $\log(n) = O(m^{\gamma/2})$ for some $\gamma < 1$, then $D_P (P_{n,m}) \rightarrow 0$ almost surely as $m,n \rightarrow \infty$.
\end{theorem}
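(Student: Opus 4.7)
The plan is to apply \Cref{thm: Riabiz} with $Q = \Pi$, the sampling distribution whose density is $\pi(x) \propto p(x)\sqrt{k_P(x)}$, and then to transfer the conclusion to $P_n^\star$ via optimality. The algebraic identity $(\mathrm{d}P/\mathrm{d}\Pi)(x) = C_2/\sqrt{k_P(x)}$ will dramatically simplify the Riabiz hypotheses. The first task is to upgrade \Cref{thm: DM}, which concerns standard \ac{mala}, to the preconditioned chain of \Cref{alg: pi mala}. This is handled by the change of variables $\tilde x = M^{1/2} x$, which transforms the preconditioned $\Pi$-chain into standard \ac{mala} targeting the pushforward $\tilde \pi(\tilde x) \propto \pi(M^{-1/2}\tilde x)$; conditions (DM1)--(DM3) transfer to $\tilde \pi$ with constants modulated by the extreme eigenvalues of $M$, and $V$-uniform ergodicity pulls back to the original chain with $V(x) = \exp(c\|x - x_\pi\|^2)$ for some $c > 0$ and $x_\pi$ the mode of $\pi$.

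The next step is to verify (DM1)--(DM3) for $\pi$. Writing $U = -\log \pi$ up to an additive constant, we obtain from (A1) and (A3) that $U \in C^2$ with
\begin{align*}
\nabla^2 U \;=\; -\nabla^2 \log p - \tfrac{1}{2}\bigl[\, k_P^{-1}\nabla^2 k_P - k_P^{-2}\,\nabla k_P (\nabla k_P)^{\top}\bigr].
\end{align*}
The rank-one term in brackets is positive semi-definite and can be dropped for a lower bound. Combining the global bound $k_P \geq C_1^2$ from \Cref{ass: embed} with (A2) and (A4),
\begin{align*}
\nabla^2 U(x) \;\succeq\; \Bigl(b_1 - \tfrac{b_2}{2 C_1^2}\Bigr) I \quad \text{for all } \|x\| \geq \max(B_1,B_2),
\end{align*}
and this coefficient is strictly positive by (A4). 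This gives (DM3) with rate $b := b_1 - b_2/(2 C_1^2) > 0$. Condition (DM2) follows since $\nabla^2 \log k_P$ is bounded on compacts by continuity (A3), and bounded outside by (A4) together with $k_P \geq C_1^2$; (DM1) follows from coercivity of $U$.

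Armed with $V$-uniform ergodicity, the hypotheses of \Cref{thm: Riabiz} follow cleanly from the identity $(\mathrm{d}P/\mathrm{d}\Pi)\sqrt{k_P} \equiv C_2$. Condition (R$^+$1) holds because this product is constant and $V \geq 1$. Condition (R$^+$2) reduces (up to the constant $C_2$) to $\sup_i \mathbb{E}[V(x_i)] < \infty$; this follows from the standard bound $|\mathbb{E}[V(x_i)] - \int V \mathrm{d}\Pi| \leq R \rho^i V(x_0)$ implied by $V$-uniform ergodicity, provided $\int V \mathrm{d}\Pi < \infty$. The latter holds because (A2) gives $\pi(x) \lesssim \exp(-\tfrac{b}{2}\|x\|^2)$ at infinity, $\sqrt{k_P(x)}$ grows at most linearly by (A4), and $V(x) \lesssim \exp(\tfrac{b}{16}\|x\|^2)$ with $b/16 < b/2$. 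Finally $(\mathrm{d}P/\mathrm{d}\Pi)^2 k_P \equiv C_2^2$, so (R$^+$3) reduces to $\sup_i \mathbb{E}[\exp(\gamma k_P(x_i))] < \infty$ for some $\gamma > 0$; since $k_P$ grows at most quadratically by (A4), taking $\gamma$ small ensures $\exp(\gamma k_P) \lesssim V$ and (R$^+$2) concludes.

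\Cref{thm: Riabiz} now gives $D_P(P_{n,m}) \to 0$ almost surely whenever $\log n = O(m^{\gamma/2})$ for some $\gamma < 1$; the hypothesis $m = \Omega((\log n)^\delta)$ with $\delta > 2$ yields $\log n = O(m^{1/\delta})$ with $1/\delta < 1/2$, so any $\gamma \in (2/\delta, 1)$ works. For the optimally weighted measure, $P_{n,m}$ places uniform mass $1/m$ on a subset of $\{x_1,\dots,x_n\}$ and is therefore a feasible point of the quadratic programme defining $w^\star$; hence $D_P(P_n^\star) \leq D_P(P_{n,m}) \to 0$ almost surely. The principal technical obstacle lies in the verification step: tracking the effect of the preconditioner $M$ on the rate $b$ inherited from \Cref{thm: DM}, and simultaneously ensuring that the quadratic growth of $k_P$ permitted by (A4) is compatible with (i) the lower bound $k_P \geq C_1^2$ needed for (DM2)--(DM3), and (ii) the $\Pi$-integrability of $V$; the condition $b_2 < 2 b_1 C_1^2$ in (A4) provides exactly the margin required, but the chain of inequalities must be propagated with care.
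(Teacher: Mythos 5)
Your proposal does not prove the stated result; it assumes it. The statement you were asked to establish is \Cref{thm: Riabiz} itself --- the almost-sure convergence $D_P(P_{n,m}) \rightarrow 0$ for the greedy (Stein thinning) output of a \emph{general} $V$-uniformly ergodic, $Q$-invariant Markov chain under conditions (R$^+$1--3). Your very first step is ``apply \Cref{thm: Riabiz} with $Q = \Pi$'', and everything that follows --- verifying (DM1)--(DM3) for $\pi \propto p\sqrt{k_P}$, transferring ergodicity through the preconditioner, checking (R$^+$1--3) via the identity $(\mathrm{d}P/\mathrm{d}\Pi)\sqrt{k_P} \equiv C_2$, and the closing comparison $D_P(P_n^\star) \leq D_P(P_{n,m})$ --- is a verification of the \emph{hypotheses} of \Cref{thm: Riabiz} in the special case $Q = \Pi$. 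That is, in outline, the paper's proof of \Cref{thm: consistency of SQT}, not a proof of \Cref{thm: Riabiz}; as an argument for the latter it is circular.

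In the paper, \Cref{thm: Riabiz} is an imported auxiliary result: it is Theorem 3 of \citet{riabiz2020optimal}, cited without proof. A genuine proof would have to work at the level of the general chain: use $V$-uniform ergodicity together with the exponential moment condition (R$^+$3) to obtain a concentration/large-deviations bound for the importance-weighted empirical kernel mean, and then analyse the greedy selection rule \eqref{eq: greedy alg} to show the thinned measure's \ac{ksd} vanishes under the coupling $\log(n) = O(m^{\gamma/2})$. None of these ingredients appear in your write-up. (As a secondary remark, read instead as a sketch of \Cref{thm: consistency of SQT} your argument is broadly consistent with the paper's, though the paper derives (DM1) from the established strong log-concavity of $\pi$ at infinity rather than asserting coercivity outright, and obtains (R$^+$2)--(R$^+$3) from the $V$-norm bound $\|\mathrm{Q}^n(x,\cdot)-Q\|_V \leq R\rho^n V(x)$ rather than a direct Gaussian-tail integrability estimate.)
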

\begin{proof}
    This is Theorem 3 of \citet{riabiz2020optimal}.
\end{proof}

\subsection{Preconditioned \ac{mala}}
\label{subsec: precond mala}

In addition to the auxiliary results in \Cref{subsec: aux}, which concern standard \ac{mala} (i.e. with $M = I$), we require an elementary fact about \ac{mala}, namely that preconditioned \ac{mala} is equivalent to standard \ac{mala} under a linear transformation of the state variable.
Recall that the $M$-preconditioned \ac{mala} algorithm is a Metropolis--Hastings algorithm whose proposal is the Euler--Maruyama discretisation of the It\^{o} diffusion \eqref{eq: ito}.

\begin{proposition}
\label{prop: mala equiv}
Let $M(x) \equiv M$ for a symmetric positive definite and position-independent matrix $M \in \mathbb{R}^{d \times d}$.
Let $Q \in \mathcal{P}(\mathbb{R}^d)$ admit a \ac{pdf} $q$ for which the $Q$-invariant  diffusion $(X_t)_{t \geq 0}$, given by setting $p=q$ in \eqref{eq: ito}, is well-defined.
Then under the change of variables $Y_t := M^{1/2} X_t$, 
\begin{align}
    \mathrm{d}Y_t = \frac{1}{2} (\nabla \log \tilde{q})(Y_t) \mathrm{d}t + \mathrm{d}W_t , 
    \label{eq: transformed ito}
\end{align}
where $\tilde{q}(x) \propto q(M^{-1/2} x)$ for all $x \in \mathbb{R}^d$.
\end{proposition}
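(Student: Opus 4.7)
The proof is a direct computation combining three standard ingredients: (i) simplification of the divergence term in \eqref{eq: ito} under the assumption that $M$ is position-independent, (ii) an affine change-of-variables formula (Itô's lemma with vanishing Hessian), and (iii) the chain rule for the gradient of a pushed-forward density. No substantive analytical machinery beyond these is required; the point of the lemma is merely to reduce preconditioned \textsc{MALA} to a standard-form Langevin diffusion so that Theorem~\ref{thm: DM} can later be applied to the transformed process.

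First I would simplify the SDE governing $X_t$. Because $M$ does not depend on $x$, the row-wise divergence factors out, giving $\nabla \cdot [q(x) M] = M \nabla q(x)$, so the drift collapses to $\tfrac{1}{2} M \nabla \log q(X_t)$ and the diffusion reads $dX_t = \tfrac{1}{2} M \nabla \log q(X_t)\, dt + M^{1/2}\, dW_t$. This is the only place where the constancy of $M$ is actually used.

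Next I would apply Itô's formula to the affine map $\phi(x) = M^{1/2} x$. Since $\phi$ has constant Jacobian $M^{1/2}$ and vanishing Hessian, no correction term appears and $dY_t = M^{1/2}\, dX_t$ by linearity. Substituting the simplified drift from the previous step and replacing $X_t$ by $M^{-1/2} Y_t$ inside $\nabla \log q$ expresses both the drift and the diffusion coefficient as functions of $Y_t$ alone.

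Finally I would invoke the chain rule to rewrite the drift in terms of $\tilde q$. Since $\tilde q$ is (up to normalisation) the density of the pushforward $\phi_\# Q$, symmetry of $M^{-1/2}$ gives $\nabla_y \log \tilde q(y) = M^{-1/2} (\nabla \log q)(M^{-1/2} y)$; using this identity to eliminate $\nabla \log q$ from the transformed SDE identifies the resulting equation with the standard Langevin diffusion for $\tilde q$. The only delicate point throughout the argument is the bookkeeping of the matrix square roots generated by the transformation and by the chain rule — in particular, making sure that factors arising from the noise transformation and from differentiating $q \circ M^{-1/2}$ combine to produce the claimed coefficients — but there is no genuine analytical obstacle and the affine structure of the change of variables precludes any Itô correction term.
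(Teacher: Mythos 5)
There is a genuine gap, and it sits precisely at the step you flag as ``the only delicate point'' and then wave away. If you set $M(x)\equiv M$ literally in \eqref{eq: ito} as you propose in step (i), the forward computation does \emph{not} land on \eqref{eq: transformed ito}. Concretely, from $\mathrm{d}X_t = \tfrac{1}{2}M\nabla\log q(X_t)\,\mathrm{d}t + M^{1/2}\,\mathrm{d}W_t$ and $Y_t = M^{1/2}X_t$ you get $\mathrm{d}Y_t = \tfrac{1}{2}M^{3/2}(\nabla\log q)(M^{-1/2}Y_t)\,\mathrm{d}t + M\,\mathrm{d}W_t$; substituting $(\nabla\log q)(M^{-1/2}y) = M^{1/2}\nabla\log\tilde q(y)$ (your chain-rule identity, which is correct) gives $\mathrm{d}Y_t = \tfrac{1}{2}M^{2}\nabla\log\tilde q(Y_t)\,\mathrm{d}t + M\,\mathrm{d}W_t$. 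The noise coefficient is $M$, not the identity, and the drift is off by $M^{2}$ — so the map $Y=M^{1/2}X$ does not whiten this diffusion at all; it merely converts diffusion matrix $M$ into $M^{2}$.

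The resolution is that for ``$M$-preconditioned MALA'' in the sense of Algorithm~\ref{alg: pi mala}, the diffusion matrix entering \eqref{eq: ito} is actually $M^{-1}$, not $M$: the proposal drift is $M^{-1}\nabla\log\pi$ and the noise scale is $M^{-1/2}$. With $\mathrm{d}X_t = \tfrac{1}{2}M^{-1}\nabla\log q(X_t)\,\mathrm{d}t + M^{-1/2}\,\mathrm{d}W_t$, the same three steps you outline do close up cleanly: $\mathrm{d}Y_t = M^{1/2}\,\mathrm{d}X_t = \tfrac{1}{2}M^{-1/2}(\nabla\log q)(M^{-1/2}Y_t)\,\mathrm{d}t + \mathrm{d}W_t = \tfrac{1}{2}\nabla\log\tilde q(Y_t)\,\mathrm{d}t + \mathrm{d}W_t$. (Equivalently, if one insists on diffusion matrix $M$, the correct change of variables is $Y_t = M^{-1/2}X_t$ with $\tilde q(y)\propto q(M^{1/2}y)$.) The paper's own proof runs the argument backwards — it starts from \eqref{eq: transformed ito}, substitutes $Y_t = M^{1/2}X_t$, and arrives at $\mathrm{d}X_t = \tfrac{1}{2}M^{-1}\nabla\log q(X_t)\,\mathrm{d}t + M^{-1/2}\,\mathrm{d}W_t$ — and its concluding phrase ``identical to \eqref{eq: ito} in the case where $M(x)=M$ is constant'' carries the same symbol clash you inherited, but the displayed equation there makes the intended reading unambiguous. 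Your plan is otherwise the right one; you just need to actually carry out the matrix bookkeeping rather than assert it, at which point the mismatch (and its resolution via the $M\leftrightarrow M^{-1}$ identification with Algorithm~\ref{alg: pi mala}) becomes apparent.
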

\begin{proof}
    From the chain rule,
    \begin{align*}
        (\nabla \log \tilde{q})(y) = \nabla_y \log q(M^{-1/2} y) = M^{-1/2} (\nabla \log q)(M^{-1/2} y) ,
    \end{align*}
    and thus, substituting $Y_t = M^{1/2} X_t$, \eqref{eq: transformed ito} is equal to
    \begin{align*}
        \mathrm{d}X_t & = M^{-1/2} \left[ 
 \frac{1}{2} M^{-1/2} (\nabla \log q)(M^{-1/2} M^{1/2} X_t) + \mathrm{d}W_t  \right] \\
 & = \frac{1}{2} M^{-1} (\nabla \log q)(X_t) + M^{-1/2} \mathrm{d}W_t ,
    \end{align*}
    which is identical to \eqref{eq: ito} in the case where $M(x) = M$ is constant.
\end{proof}

Let $Q$ and $\tilde{Q}$ be the distributions referred to in \Cref{prop: mala equiv}, whose \acp{pdf} are respectively $q(x)$ and $\smash{\tilde{q}(x) \propto q(M^{-1/2} x)}$.
\Cref{prop: mala equiv} then implies that the $M$-preconditioned \ac{mala} algorithm applied to $Q$ (i.e. \Cref{alg: pi mala} for $\Pi = Q$) is equivalent to the standard \ac{mala} algorithm (i.e. $M = I$) applied to $\tilde{Q}$.
This fact allows us to generalise the result of \Cref{thm: DM} as follows:

\begin{corollary}
\label{cor: DM for general M}
Consider a symmetric positive definite matrix $M \in \mathbb{R}^{d \times d}$.
Assume that conditions {\normalfont (DM1-3)} in \Cref{thm: DM} are satisfied.
Then there exists $\epsilon_0' > 0$ and $b' > 0$ such that for all step sizes $\epsilon \in (0,\epsilon_0')$, the $M$-preconditioned $Q$-invariant \ac{mala} is $V$-uniformly ergodic for $V(x) = \exp(\frac{b'}{16} \|x-x_0\|^2)$.
\end{corollary}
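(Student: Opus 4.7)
The plan is to reduce $M$-preconditioned MALA on $Q$ to \emph{standard} MALA on the transformed target $\tilde{Q}$ appearing in \Cref{prop: mala equiv}, and then apply \Cref{thm: DM} to $\tilde{Q}$. Crucially, the update rule of \Cref{alg: pi mala} becomes, under $y = M^{1/2} x$, the standard MALA update with the \emph{same} step size $\epsilon$, so any admissible range of $\epsilon$ in $y$-coordinates transfers to preconditioned MALA on $Q$ without modification.

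The first substantive step is to check that conditions (DM1--3) hold for $\tilde q$ given that they hold for $q$. From the chain-rule identities $\nabla \log \tilde q(y) = M^{-1/2}(\nabla \log q)(M^{-1/2} y)$ and $\nabla^2 \log \tilde q(y) = M^{-1/2}(\nabla^2 \log q)(M^{-1/2} y)M^{-1/2}$, I obtain (DM1) at $y_0 := M^{1/2} x_0$, (DM2) via $\sup_y \|\nabla^2 \log \tilde q(y)\| \leq \|M^{-1}\|\,\sup_x \|\nabla^2 \log q(x)\| < \infty$, and (DM3) via $-\nabla^2 \log \tilde q(y) \succeq b M^{-1} \succeq (b/\lambda_{\max}(M))\,I$ whenever $\|M^{-1/2}(y-y_0)\| \geq B$, which in particular holds whenever $\|y-y_0\| \geq B\sqrt{\lambda_{\max}(M)}$. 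Setting $\tilde b := b/\lambda_{\max}(M)$, \Cref{thm: DM} then delivers some $\tilde \epsilon_0 > 0$ such that for all $\epsilon \in (0,\tilde\epsilon_0)$ the standard $\tilde Q$-invariant MALA in $y$-coordinates is $\tilde V$-uniformly ergodic for $\tilde V(y) = \exp(\tfrac{\tilde b}{16}\|y-y_0\|^2)$.

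The second step is to transport this ergodicity back to $x$-coordinates. For any measurable $f:\mathbb{R}^d\to\mathbb{R}$, setting $\tilde f(y) := f(M^{-1/2}y)$ gives $\mathbb{E}[f(X_n)\mid X_0=x] = \mathbb{E}[\tilde f(Y_n)\mid Y_0 = M^{1/2}x]$ and $\int f\,\mathrm{d}Q = \int \tilde f\,\mathrm{d}\tilde Q$, so the bound in $y$-coordinates yields
\[
\bigl\|\mathrm{Q}^n(x,\cdot) - Q\bigr\|_{V} \;\leq\; R\rho^n\, \tilde V\bigl(M^{1/2}x\bigr)
\]
for any $V$ with $V(x) \leq \tilde V(M^{1/2}x) = \exp\!\bigl(\tfrac{\tilde b}{16}(x-x_0)^\top M(x-x_0)\bigr)$. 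Taking $b' := \tilde b\,\lambda_{\min}(M) = b\,\lambda_{\min}(M)/\lambda_{\max}(M)$ and $V(x) = \exp(\tfrac{b'}{16}\|x-x_0\|^2)$ ensures $V \leq \tilde V(M^{1/2}\cdot)$ via the eigenvalue bound $(x-x_0)^\top M(x-x_0) \geq \lambda_{\min}(M)\|x-x_0\|^2$, and then $\epsilon_0' := \tilde \epsilon_0$ gives the claim.

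The main obstacle is a minor mismatch in geometry: the change of variables naturally produces the $M$-weighted exponential $\exp(\tfrac{\tilde b}{16}(x-x_0)^\top M(x-x_0))$ rather than the Euclidean form stated. Converting the right-hand side back to $V(x) = \exp(\tfrac{b'}{16}\|x-x_0\|^2)$ with $b'$ as above is clean in the direction $V \leq \tilde V(M^{1/2}\cdot)$, which is the direction needed to carry the ergodicity bound downwards; the reverse inequality is unattainable as a uniform-in-$x$ estimate since $\lambda_{\max}(M) > \lambda_{\min}(M)$ in general. This asymmetry is harmless for the subsequent use of \Cref{cor: DM for general M} in verifying hypothesis (R$^+$1) of \Cref{thm: Riabiz}, where one only requires the Lyapunov function to dominate $(\mathrm{d}P/\mathrm{d}Q)\sqrt{k_P}$, so only a lower bound on $V$ of the form $\exp(\tfrac{b'}{16}\|x-x_0\|^2)$ is ultimately needed.
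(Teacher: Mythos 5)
Your proposal follows essentially the same route as the paper: reduce $M$-preconditioned MALA on $Q$ to standard MALA on $\tilde{Q}$ via \Cref{prop: mala equiv}, verify (DM1--3) for $\tilde{q}$ using the chain-rule identities and the eigenvalue bounds $\lambda_{\min}(M)^{-1}$ and $\lambda_{\max}(M)^{-1}$, and invoke \Cref{thm: DM}. You are in fact more explicit than the paper about transporting the Lyapunov bound back to $x$-coordinates (the paper stops at establishing (DM1--3) for $\tilde{Q}$ and absorbs the resulting mismatch between $\exp(\tfrac{\tilde b}{16}(x-x_0)^\top M(x-x_0))$ and the Euclidean $V$ into the unspecified constant $b'$), and your observation that only the one-sided comparison is needed downstream is correct.
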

\begin{proof}
    From \Cref{thm: DM} and \Cref{prop: mala equiv}, the result follows if we can establish (DM1-3) for $\tilde{Q}$, since $M$-preconditioned \ac{mala} is equivalent to standard \ac{mala} applied to $\tilde{Q}$.
    For a matrix $A \in \mathbb{R}^{d \times d}$, let $\lambda_{\min}(A)$ and $\lambda_{\max}(A)$ respectively denote the minimum and maximum eigenvalues of $A$.
    For (DM1) we set $y_0 = M^{1/2} x_0$ and observe that
    \begin{align*}
        (\nabla \log \tilde{q})(y_0) & = M^{-1/2} (\nabla \log q)(x_0) = 0 .
    \end{align*}
    For (DM2) we have that
    \begin{align*}
        \sup_{y \in \mathbb{R}^d} \|\nabla^2 (\log \tilde{q})(y-y_0)\| & = \sup_{y \in \mathbb{R}^d} \| M^{-1/2} (\nabla^2 \log q)(M^{-1/2}(y-y_0)) M^{-1/2} \| \\
        & \leq \lambda_{\min}(M)^{-1} \sup_{x \in \mathbb{R}^d} \| (\nabla^2 \log q)(x-x_0) \| < \infty .
    \end{align*}
    For (DM3) we have that
    \begin{align*}
        - (\nabla^2 \log \tilde{q})(y-y_0) & = - M^{-1/2} (\nabla^2 \log q)(M^{-1/2}(y-y_0)) M^{-1/2} \\
        & = - M^{-1/2} (\nabla^2 \log q)(x-x_0) M^{-1/2} 
        \succeq M^{1/2} (b I) M^{1/2} = b M^{-1} \succeq b' I 
    \end{align*}
    where $b' = b \lambda_{\max}(M)^{-1}$, which holds for all $\|x-x_0\| \geq B$, and in particular for all $\|y-y_0\| \geq B'$ where $B' = B \lambda_{\max}(M)^{1/2}$.
    Thus (DM1-3) are established for $\tilde{Q}$.
\end{proof}

\begin{remark}
The choice $M = \Sigma^{-1}$, which sets the preconditioner matrix $M$ equal to the inverse of the length scale matrix $\Sigma$ used in the specification of the kernel $K$ (c.f. \Cref{subsec: selecting sk}), leads to the elegant interpretation that Stein $\Pi$-Importance Sampling applied to $M$-preconditioned \ac{mala} is equivalent to the Stein $\Pi$-Importance Sampling applied to standard \ac{mala} (i.e. with $M=I$) for the whitened target $\tilde{P}$ with \ac{pdf} $\tilde{p}(x) \propto p(M^{-1/2} x)$. 
For our experiments, however, the preconditioner matrix $M$ was learned during a warm-up phase of \ac{mala}, since in general the curvature of $P$ (captured by $\Sigma$) and the curvature of $\Pi$ (captured by $M^{-1}$) may be different.
\end{remark}

\subsection{Proof of \Cref{thm: consistency of SQT}}
\label{subsec: proof thm}

The route to establishing \Cref{thm: consistency of SQT} has three parts.
First, we establish (DM1-3) of \Cref{thm: DM} with $Q = \Pi$, to deduce from \Cref{cor: DM for general M} that $\Pi$-invariant $M$-preconditioned \ac{mala} is $V$-uniformly ergodic.
This in turn enables us to establish conditions (R$^+$1-3) of \Cref{thm: Riabiz}, again for $Q = \Pi$, from which the strong consistency $D_P(P_{n,m}) \stackrel{\text{a.s.}}{\rightarrow} 0$ of \textsc{S$\Pi$T-MALA} is established.
Finally, we note that $0 \leq D_P(P_n^\star) \leq D_P(P_{n,m})$, since the support of $P_{n,m}$ is contained in the support of $P_n^\star$, and the latter is optimally weighted, whence also the strong consistency of \textsc{S$\Pi$IS-MALA}.

\paragraph{Establish (DM1-3)} 

First we establish (DM1-3) for $Q = \Pi$.
Fix $x_0 \in \mathbb{R}^d$.
For (DM2), first recall that the range of $k_P$ is $[C_1^2,\infty)$ where $C_1 > 0$, from \Cref{ass: embed}.
Since $\log(\cdot)$ has bounded second derivatives on $[C_1^2,\infty)$, there is a constant $C > 0$ such that
\begin{align*}
\forall x \in \mathbb{R}^d, \qquad \| \nabla^2 \log k_P(x) \| \leq C \| \nabla^2 k_P(x) \| . 
\end{align*}
Thus, using compactness of the set $\{x : \|x-x_0\| \leq B_2\}$,
\begin{align}
    \sup_{x \in \mathbb{R}^d} \| \nabla^2 \log k_P(x) \| \leq C \max\Bigg( \underbrace{ \sup_{\|x-x_0\| \leq B_2 } \| \nabla^2 k_P(x) \| }_{ < \infty \text{ by \labelcref{ass: kp} } } ,  \underbrace{ \sup_{ \|x-x_0\| \geq B_2 } \| \nabla^2 k_P(x) \| }_{ < b_2 \|I\| \text{ by 
\labelcref{ass: kP convex} } } \Bigg) < \infty .
    \label{eq: bounded logs}
\end{align}
Now, $\pi$ is twice differentiable as it is the product of twice differentiable functions $p$ and $\smash{k_P^{1/2}}$ from \labelcref{ass: p} and \labelcref{ass: kp}, and moreover
$$
\sup_{x \in \mathbb{R}^d} \| \nabla^2 \log \pi(x-x_0) \| \leq \underbrace{ \sup_{x \in \mathbb{R}^d} \| \nabla^2 \log p(x) \| }_{< \infty \text{ by \labelcref{ass: p} } } + \frac{1}{2} \underbrace{ \sup_{x \in \mathbb{R}^d} \| \nabla^2 \log k_P(x) \| }_{ < \infty \text{ by \eqref{eq: bounded logs} } } < \infty ,
$$
so (DM2) is satisfied.
For (DM3), first note from the chain and product rules that for all $\|x\| \geq B_2$
\begin{align}
\nabla^2 \log k_P(x-x_0) = \underbrace{ \frac{\nabla^2 k_P(x-x_0)}{k_P(x-x_0)} }_{ \preceq (b_2/C_1^2) I \text{ by \labelcref{ass: kP convex} } } - \underbrace{ \frac{[\nabla k_P(x-x_0)][\nabla k_P(x-x_0)]^\top}{k_P(x-x_0)^2} }_{ \succeq 0 } \preceq \frac{ b_2}{C_1^2} I. \label{eq: d2logk}
\end{align}
Thus, for all $\|x-x_0\| \geq B := \|x_0\| + \max(B_1,B_2)$, 
\begin{align}
- \nabla^2 \log \pi(x-x_0) = \underbrace{ - \nabla^2 \log p(x-x_0) }_{ \succeq b_1 I \text{ by \labelcref{ass: p concave} } }  - \frac{1}{2} \underbrace{ \nabla^2 \log k_P(x-x_0) }_{ \preceq (b_2/C_1^2) I \text{ by \eqref{eq: d2logk} } } \succeq \underbrace{ \left( b_1 - \frac{b_2}{2 C_1^2} \right) }_{ =: b >0} I
\label{eq: q log conc}
\end{align}
as required.
The same argument establishes (DM1); from \eqref{eq: q log conc} we have $\lim_{\|x\| \rightarrow \infty} \pi(x) = 0$, and since $\pi$ is a continuously differentiable density there must exist an $x_0$ at which $\pi$ is locally minimised. 
Thus we have established (DM1-3) for $Q = \Pi$ and we may conclude from \Cref{cor: DM for general M} that there is an $\epsilon_0' > 0$ and $b' > 0$ such that, for all $\epsilon \in (0,\epsilon_0')$, the $\Pi$-invariant $M$-preconditioned \ac{mala} chain $(x_i)_{i \in \mathbb{N}}$ is $V$-uniformly ergodic for $V(x) = C_2 \exp(\frac{b'}{16} \|x-x_0\|^2)$ (since if a Markov chain is $V$-uniformly ergodic, then it is also $C V$-uniformly ergodic).

\paragraph{Establish (R$^+$1-3)} 

The aim is now to establish conditions (R$^+$1-3) of \Cref{thm: Riabiz} for $Q = \Pi$.
By construction $\smash{\mathrm{d}P/\mathrm{d}\Pi = C_2 / \sqrt{k_P(x)} < C_2 / C_1 < \infty}$, where $C_1$ and $C_2$ were defined in \Cref{ass: embed}, so that $P \ll \Pi$.
It has already been established that $(x_i)_{i \in \mathbb{N}}$ is $V$-uniformly ergodic, and further
\begin{align*}
    V(x) = C_2 \exp\left( \frac{b'}{16} \|x-x_0\|^2 \right) \geq C_2 = \frac{\mathrm{d}P}{\mathrm{d}\Pi}(x) \sqrt{k_P(x)} 
\end{align*}
for all $x$, which establishes (R$^+$1).
Let $R$ and $\rho$ denote constants for which the $V$-uniform ergodicity property \eqref{eq: V unif ergod} is satisfied.
From $V$-uniform ergodicity, the integral $\int V \; \mathrm{d}\Pi$ exists and
\begin{align*}
    \left| \mathbb{E}\left[ \frac{\mathrm{d}P}{\mathrm{d}\Pi}(x_i) \sqrt{k_P(x_i)} V(x_i)\right] - C_2 \int V \; \mathrm{d}\Pi \right| & = C_2 \left| \mathbb{E}[V(x_i)] - \int V \; \mathrm{d}\Pi \right| \\& \leq C_2 R \rho^n V(x_0) \rightarrow 0
\end{align*}
which establishes (R$^+$2).
Fix $\gamma > 0$.
By construction $\mathrm{d}P / \mathrm{d}\Pi \leq C_2 / C_1$, and thus 
\begin{align*}
    \exp \left\{ \gamma \max\left( 1 , \frac{\mathrm{d}P}{\mathrm{d}\Pi}(x)^2 \right) k_P(x) \right\} < \exp \left\{ \tilde{\gamma} k_P(x) \right\}
\end{align*}
where $\tilde{\gamma} = \max(1, C_2 / C_1 ) \gamma$.
Since we have assumed that  $k_P$ is continuous with, from \labelcref{ass: kP convex},
\begin{align*}
    b_3 := \limsup_{\|x\| \rightarrow \infty} \frac{ k_P(x) }{  \|x\|^2 } < \infty ,
\end{align*}
we may take $\gamma$ such that $\tilde{\gamma} b_3 < b' / 16$, so that $\|x \mapsto \exp\{ \tilde{\gamma} k_P(x) \} \|_V < \infty$ and in particular
\begin{align*}
    \left| \mathbb{E}\left[ \exp\{ \tilde{\gamma} k_P(x_i) \} \right] - \int \exp\{ \tilde{\gamma} k_P(x) \} \; \mathrm{d}\Pi(x) \right| 
    & \leq \|x \mapsto \exp\{ \tilde{\gamma} k_P(x) \} \|_V \times R \rho^n V(x_0) \rightarrow 0
\end{align*}
which establishes (R$^+$3).
Thus we have established (R$^+$1-3) for $Q = \Pi$, so from \Cref{thm: Riabiz} we have strong consistency of \textsc{S$\Pi$T-MALA} (i.e. $D_P(P_{n,m}) \stackrel{\text{a.s.}}{\rightarrow} 0$) provided that $m \leq n$ with $\smash{\log(n) = O(m^{\gamma/2})}$ for some $\gamma < 1$.
The latter condition is equivalent to $\smash{m = \Omega((\log n)^{\delta})}$ for some $\delta > 2$, which we used for the statement.
Since $0 \leq D_P(P_n^\star) \leq D_P(P_{n,m})$, the strong consistency of \textsc{S$\Pi$IS-MALA} is also established.

\section{Explicit Calculation of Stein Kernels}
\label{app: derivatives}

This appendix contains explicit calculations for the Langevin--Stein and KGM--Stein kernels $k_P$, which are sufficient to implement Stein $\Pi$-Importance Sampling and Stein $\Pi$-Thinning.
These calculations can also be performed using automatic differentiation, but comparison to the analytic expressions is an important step in validation of computer code.

To proceed, we observe that the diffusion Stein operator $S_P$ in \eqref{eq: diff stein op} applied to a matrix-valued kernel $K$ is equivalent to the Langevin--Stein operator applied to the kernel $C(x,y) = M(x) K(x,y) M(y)^\top$.
In the case of the Langevin--Stein and KGM--Stein kernels we have $K(x,y) = \kappa(x,y) I$ for some $\kappa(x,y)$ and $M(x) = (1 + \|x - x_\star\|_\Sigma^2)^{(s-1)/2} I$ for some $s \in \{0,1,2,\dots\}$.
Thus $C(x,y) = c(x,y) I$ where 
\begin{align*}
    c(x,y) & := (1 + \|x - x_\star\|_\Sigma^2)^{(s-1)/2} (1 + \|y - x_\star\|_\Sigma^2)^{(s-1)/2} \kappa(x,y)
\end{align*}
and 
\begin{align*}
    k_P(x,y) & = \nabla_x \cdot \nabla_y c(x,y) + [\nabla_x c(x,y)] \cdot [\nabla_y \log p(y)] + [\nabla_y c(x,y)] \cdot [\nabla_x \log p(x)] \\
    & \hspace{200pt} + c(x,y) [\nabla_x \log p(x)] \cdot [\nabla_y \log p(y)] ,
\end{align*}
following the calculations in \citet{oates2017control}.
To evaluate the terms in this formula we start by differentiating $c(x,y)$, to obtain 
\begin{align*}
    \nabla_x c(x,y) & = (1 + \|x - x_\star\|_\Sigma^2)^{(s-1)/2} (1 + \|y-x_\star\|_\Sigma^2)^{(s-1)/2} \\
    & \qquad \qquad \times \left[ \frac{(s-1) \kappa(x,y) \Sigma^{-1} (x-x_\star)}{1 + \|x-x_\star\|_\Sigma^2} + \nabla_x \kappa(x,y) \right] \\
    \nabla_y c(x,y) & = (1 + \|x-x_\star\|_\Sigma^2)^{(s-1)/2} (1 + \|y-x_\star\|_\Sigma^2)^{(s-1)/2} \\
    & \qquad \qquad \times \left[ \frac{(s-1) \kappa(x,y) \Sigma^{-1} (y-x_\star)}{1 + \|y-x_\star\|_\Sigma^2} + \nabla_y \kappa(x,y) \right] \\
    \nabla_x \cdot \nabla_y 
    c(x,y) & = (1 + \|x-x_\star\|_\Sigma^2)^{(s-1)/2} (1 + \|y-x_\star\|_\Sigma^2)^{(s-1)/2} \\
    & \quad \times \left[ \frac{(s-1)^2 \kappa(x,y) (x-x_\star)^\top \Sigma^{-2} (y-x_\star)}{(1 + \|x-x_\star\|_\Sigma^2) (1 + \|y-x_\star\|_\Sigma^2)} + \frac{(s-1) (y - x_\star)^\top \Sigma^{-1} \nabla_x \kappa(x,y)}{(1 + \|y-x_\star\|_\Sigma^2)} \right. \\
    & \qquad \qquad \qquad \left.  + \frac{(s-1) (x-x_\star)^\top \Sigma^{-1} \nabla_y \kappa(x,y)}{(1 + \|x-x_\star\|_\Sigma^2)} + \nabla_x \cdot \nabla_y \kappa(x,y) \right] .
\end{align*}
These expressions involve gradients of $\kappa(x,y)$, and explicit formulae for these are presented for the choice of $\kappa(x,y)$ corresponding to the Langevin--Stein kernel in \Cref{subsec: calc lange}, and to the KGM--Stein kernel in \Cref{subsec: calc kgm}.

To implement Stein $\Pi$-Thinning we require access to both $k_P(x)$ and $\nabla k_P(x)$, the latter for use in the proposal distribution and acceptance probability in \ac{mala}.
These quantities will now be calculated.
In what follows we assume that $\kappa(x,y)$ is continuously differentiable, so that partial derivatives with respect to $x$ and $y$ can be interchanged. 
Then
\begin{align*}
    c_0(x) & := c(x,x) \\
    & = (1 + \|x-x_\star\|_\Sigma^2)^{s-1} \kappa(x,x) \\
    c_1(x) & := \left. \nabla_x c(x,y) \right|_{y \rightarrow x} \\
    & = (1 + \|x-x_\star\|_\Sigma^2)^{s-1} \left[ \frac{(s-1) \kappa(x,x) \Sigma^{-1} (x-x_\star)}{(1 + \|x-x_\star\|_\Sigma^2)} + \left. \nabla_x \kappa(x,y) \right|_{y \rightarrow x} \right] \\
    c_2(x) & := \left. \nabla_x \cdot \nabla_y c(x,y) \right|_{y \rightarrow x} \\& = (1 + \|x-x_\star\|_\Sigma^2)^{s-1} \left[ \frac{(s-1)^2 \kappa(x,x) (x-x_\star)^\top \Sigma^{-2} (x-x_\star)}{(1 + \|x-x_\star\|_\Sigma^2)^2}  \right. \\
    & \hspace{100pt} \left. + \frac{2 (s-1) (x-x_\star)^\top \Sigma^{-1} \left. \nabla_x \kappa(x,y) \right|_{y \rightarrow x}}{(1 + \|x-x_\star\|_\Sigma^2)} + \left. \nabla_x \cdot \nabla_y \kappa(x,y) \right|_{y \rightarrow x}  \right]
\end{align*}
so that 
\begin{align}
    k_P(x) & := k_P(x,x) = c_2(x) + 2 c_1(x) \cdot \nabla_x \log p(x) + c_0(x) \|\nabla_x \log p(x) \|^2 . \label{eq: kp lin in c}
\end{align}
Let $[\nabla_x c_1(x)]_{i,j} = \partial_{x_i} [c_1(x)]_j$ and $[\nabla_x^2 \log p(x)]_{i,j} = \partial_{x_i} \partial_{x_j} \log p(x)$.
Now we can differentiate \eqref{eq: kp lin in c} to get
\begin{align}
    \nabla_x k_P(x) & = \nabla_x c_2(x) + 2 [\nabla_x c_1(x)] [\nabla_x \log p(x)] + 2 [\nabla_x^2 \log p(x)] c_1(x) \nonumber \\
    & \hspace{30pt} + [\nabla_x c_0(x)] \|\nabla_x \log p(x)\|^2 + 2 c_0(x) [\nabla_x^2 \log p(x)] [\nabla_x \log p(x)] .  \label{eq: dkp lin in c}
\end{align}
In what follows we also derive explicit formulae for $c_0(x)$, $c_1(x)$ and $c_2(x)$, and hence for $\nabla_x c_0(x)$, $\nabla_x c_1(x)$ and $\nabla_x c_2(x)$, for the case of the Langevin--Stein kernel in \Cref{subsec: calc lange}, and the KGM--Stein kernel in \Cref{subsec: calc kgm}.

\subsection{Explicit Formulae for the Langevin--Stein Kernel}
\label{subsec: calc lange}

The Langevin--Stein kernel from \Cref{app: lang stein} corresponds to the choice $s=1$ and $\kappa(x,y)$ the inverse multi-quadric kernel, so that
\begin{align*}
    \kappa(x,y) & = (1 + \|x-y\|_\Sigma^2)^{-\beta} \\   
    \nabla_x \kappa(x,y) & = - 2 \beta (1 + \|x-y\|_\Sigma^2)^{-\beta-1} \Sigma^{-1}(x-y) \\
    \nabla_y \kappa(x,y) & = 2 \beta (1 + \|x-y\|_\Sigma^2)^{-\beta-1} \Sigma^{-1}(x-y) \\
    \nabla_x \cdot \nabla_y \kappa(x,y) & = - 4 \beta (\beta + 1) (1 + \|x-y\|_\Sigma^2)^{-\beta-2} (x-y)^\top \Sigma^{-2} (x-y) \\
    & \hspace{100pt} + 2 \beta \text{tr}(\Sigma^{-1}) (1 + \|x-y\|_\Sigma^2)^{-\beta-1}.
\end{align*}
Evaluating on the diagonal:
\begin{align*}
    \kappa(x,x) & = 1 \\
    \left. \nabla_x \kappa(x,y) \right|_{y \rightarrow x} = \left. \nabla_y \kappa(x,y) \right|_{y \rightarrow x} & = 0 \\
    \left. \nabla_x \cdot \nabla_y \kappa(x,y) \right|_{y \rightarrow x} & = 2 \beta \text{tr}(\Sigma^{-1}) ,
\end{align*}
so that $c_0(x) = 1$, $c_1(x) = 0$, $c_2(x) = 2 \beta \text{tr}(\Sigma^{-1})$.
Differentiating these formulae, $\nabla_x c_0(x) = 0$, $\nabla_x c_1(x) = 0$, $\nabla_x c_2(x) = 0$.

\subsection{Explicit Formulae for the KGM--Stein Kernel}
\label{subsec: calc kgm}

The KGM kernel of order $s$ from \Cref{app: KGM kernel} corresponds to the choice
\begin{align*}
    \kappa(x,y) = (1 + \|x-y\|_\Sigma^2)^{-\beta} + \frac{1 + (x-x_\star)^\top \Sigma^{-1} (y-x_\star)}{(1 + \|x-x_\star\|_\Sigma^2)^{s/2} (1 + \|y-x_\star\|_\Sigma^2)^{s/2} } ,
\end{align*}
for which we have 
\begin{align*} 
    \nabla_x \kappa(x,y) & = - 2 \beta (1 + \|x-y\|_\Sigma^2)^{-\beta-1} \Sigma^{-1} (x-y)  \\
    & \qquad + \frac{ \Sigma^{-1}(y-x_\star) - s[1 + (x-x_\star)^\top \Sigma^{-1} (y-x_\star)] \Sigma^{-1}(x-x_\star) (1 + \|x-x_\star\|_\Sigma^2)^{-1} }{ (1 + \|x-x_\star\|_\Sigma^2)^{s/2} (1 + \|y-x_\star\|_\Sigma^2)^{s/2}  } \\
    \nabla_y \kappa(x,y) & = 2 \beta (1 + \|x-y\|_\Sigma^2)^{-\beta-1} \Sigma^{-1} (x-y)  \\
    & \qquad + \frac{ \Sigma^{-1}(x-x_\star) - s[1 + (x-x_\star)^\top \Sigma^{-1} (y-x_\star)] \Sigma^{-1}(y-x_\star) (1 + \|y-x_\star\|_\Sigma^2)^{-1} }{ (1 + \|x-x_\star\|_\Sigma^2)^{s/2} (1 + \|y-x_\star\|_\Sigma^2)^{s/2}  } \\
    \nabla_x \cdot \nabla_y \kappa(x,y) & = - 4 \beta (\beta + 1) (1 + \|x-y\|_\Sigma^2)^{-\beta-2} (x-y)^\top \Sigma^{-2} (x-y) + 2 \beta \text{tr}(\Sigma^{-1}) (1 + \|x-y\|_\Sigma^2)^{-\beta-1} \\
    & \qquad + \frac{ \left[ \begin{array}{l} \text{tr}(\Sigma^{-1}) - s(1 + \|x-x_\star\|_\Sigma^2)^{-1} (x-x_\star)^\top \Sigma^{-2} (x-x_\star) \\
    \qquad - s(1 + \|y-x_\star\|_\Sigma^2)^{-1} (y-x_\star)^\top \Sigma^{-2} (y-x_\star) \\
    \qquad + s^2[1 + (x-x_\star)^\top \Sigma^{-1} (y-x_\star)] (1 + \|x-x_\star\|_\Sigma^2)^{-1} (1 + \|y-x_\star\|_\Sigma^2)^{-1} \\
    \hspace{150pt} \times (x-x_\star)^\top \Sigma^{-2} (y-x_\star) \end{array} \right] }{ (1 + \|x-x_\star\|_\Sigma^2)^{s/2} (1 + \|y-x_\star\|_\Sigma^2)^{s/2} } .
\end{align*}
Evaluating on the diagonal:
\begin{align*}
    \kappa(x,x) & = 1 + (1 + \|x-x_\star\|_\Sigma^2)^{-s+1} \\
    \left. \nabla_x \kappa(x,y) \right|_{y \rightarrow x} = \left. \nabla_y \kappa(x,y) \right|_{y \rightarrow x} & = - (s-1) \Sigma^{-1} (x-x_\star) (1 + \|x-x_\star\|_\Sigma^2)^{-s} \\
    \left. \nabla_x \cdot \nabla_y \kappa(x,y) \right|_{y \rightarrow x} & = 2 \beta \text{tr}(\Sigma^{-1}) + \text{tr}(\Sigma^{-1}) (1 + \|x-x_\star\|_\Sigma^2)^{-s} \\
    & \qquad + s(s-2) (1 + \|x-x_\star\|_\Sigma^2)^{-s-1} (x-x_\star)^\top \Sigma^{-2} (x-x_\star) 
\end{align*}
so that
\begin{align*}
    c_0(x) & = 1 + (1 + \|x-x_\star\|_\Sigma^2)^{s-1} \\
    c_1(x) & = (s-1) (1 + \|x-x_\star\|_\Sigma^2)^{s-2} \Sigma^{-1} (x-x_\star) \\
    c_2(x) & = \frac{ [(s-1)^2 (1 + \|x-x_\star\|_\Sigma^2)^{s-1} - 1] (x-x_\star)^\top \Sigma^{-2} (x-x_\star) }{ (1 + \|x-x_\star\|_\Sigma^2)^2 } + \frac{\text{tr}(\Sigma^{-1})[ 1 + 2\beta (1 + \|x-x_\star\|_\Sigma^2)^s ]}{ (1 + \|x-x_\star\|_\Sigma^2) } .
\end{align*}
Differentiating these formulae:
\begin{align*}
    \nabla_x c_0(x) & = 2(s-1) (1 + \|x-x_\star\|_\Sigma^2)^{s-2} \Sigma^{-1}(x-x_\star) \\
    \nabla_x c_1(x) & = 2(s-1)(s-2)(1 + \|x-x_\star\|_\Sigma^2)^{s-3} [\Sigma^{-1} (x-x_\star)] [\Sigma^{-1}(x-x_\star)]^\top \\
    & \qquad + (s-1) (1 + \|x-x_\star\|_\Sigma^2)^{s-2} \Sigma^{-1} \\
    \nabla_x c_2(x) & = 2 (s-1)^2 (s-3) (1 + \|x-x_\star\|_\Sigma^2)^{s-4} [(x-x_\star)^\top \Sigma^{-2} (x-x_\star)] \Sigma^{-1} (x-x_\star) \\
    & \qquad + 2(s-1)^2 (1 + \|x-x_\star\|_\Sigma^2)^{s-3} \Sigma^{-2}(x-x_\star) \\
    & \qquad + 4 \beta \text{tr}(\Sigma^{-1}) (s-1) (1 + \|x-x_\star\|_\Sigma^2)^{s-2} \Sigma^{-1} (x-x_\star) \\
    & \qquad - 2 (1 + \|x-x_\star\|_\Sigma^2)^{-2}  [ \Sigma^{-2} (x-x_\star) + \text{tr}(\Sigma^{-1}) \Sigma^{-1} (x-x_\star) ] \\
    & \qquad + 4 (1 + \|x-x_\star\|_\Sigma^2)^{-3} [(x-x_\star)^\top \Sigma^{-2} (x-x_\star)] \Sigma^{-1} (x-x_\star) .
\end{align*}
These complete the analytic calculations necessary to compute the Stein kernel $k_P$ and its gradient.

\section{Empirical Assessment}
\label{app: empirical}

This appendix contains full details of the empirical protocols that were employed and the additional empirical results described in the main text.
\Cref{app: dimension} discusses the effect of dimension on our proposed $\Pi$.
Additional illuatrative results from \Cref{subsec: illustration} are contained in \Cref{subsec: 2d extra}.
The full details for how \ac{mala} was implemented are contained in \Cref{app: mala imp}.
An additional illustration using a \ac{garch} model is presented in \Cref{app: GARCH}.
The full results for \textsc{S$\Pi$IS-MALA} are contained in \Cref{app: posteriordb}, and in \Cref{app: sparse posteriorDB} the convergence of the sparse approximation provided by \textsc{S$\Pi$T-MALA} to the optimal weighted approximation is investigated.
Finally, the performance of \acp{ksd} is quantified using the 1-Wasserstein divergence in \Cref{app: performance}.

\subsection{The Effect of Dimension on $\Pi$}
\label{app: dimension}

The improvement of Stein $\Pi$-Importance Sampling over the default Stein importance sampling algorithm (i.e. $\Pi = P$) can be expected to reduce as the dimension $d$ of the target $P$ is increased.
To see this, consider the Langevin--Stein kernel
\begin{align}
k_P(x) = c_1 + c_2 \| \nabla \log p(x) \|_\Sigma^2 \label{eq: kp lang}
\end{align}
for some $c_1, c_2 > 0$; see \Cref{app: derivatives}.
Taking $P = \mathcal{N}(0,I_{d \times d})$, for which the length scale matrix $\Sigma$ appearing in \Cref{subsec: selecting sk} is $\Sigma = I_{d \times d}$, we obtain
\begin{align*}
k_P(x) = c_1 + c_2 \|x\|^2  .
\end{align*}
However, the sampling distribution $\Pi$ defined in \eqref{eq: optimal Q} depends on $k_P$ only up to an unspecified normalisation constant; we may therefore equally consider the asymptotic behaviour of $\tilde{k}_P(x) := k_P(x) / d$.
Let $X \sim P$.
Then $\mathbb{E}[\tilde{k}_P(X)] = c_2$ is a $d$-independent constant, and
\begin{align*}
    \left\| \tilde{k}_P - \mathbb{E}[\tilde{k}_P(X)] \right\|_{L^2(P)}^2 = \int \left[ \frac{k_P(x) - (c_1 + c_2 d)}{d}  \right]^2 \; \mathrm{d}P(x) = \frac{2c_2^2}{d} \rightarrow 0
\end{align*}
as $d \rightarrow \infty$.
This shows that $\tilde{k}_P$ converges to a constant function in $L^2(P)$, and thus for ``typical'' values of $x$ in the effective support of $P$, 
\begin{align*}
\pi(x) \propto p(x) \sqrt{\tilde{k}_P(x)} \stackrel{\approx}{\propto} p(x) ,    
\end{align*}
so that $\Pi \approx P$ in the $d \rightarrow \infty$ limit.
This intuition is borne out in simulations involving both the Langevin--Stein kernel (as just discussed) and also the KGM3--Stein kernel.
Indeed, \Cref{fig: high dim} shows that as the dimension $d$ is increased, the marginal distributions of $\Pi$ become increasingly similar to those of $P$.

\begin{figure}[t!]
    \centering
    \begin{subfigure}[h!]{0.49\textwidth}
    \includegraphics[width = 1\textwidth]{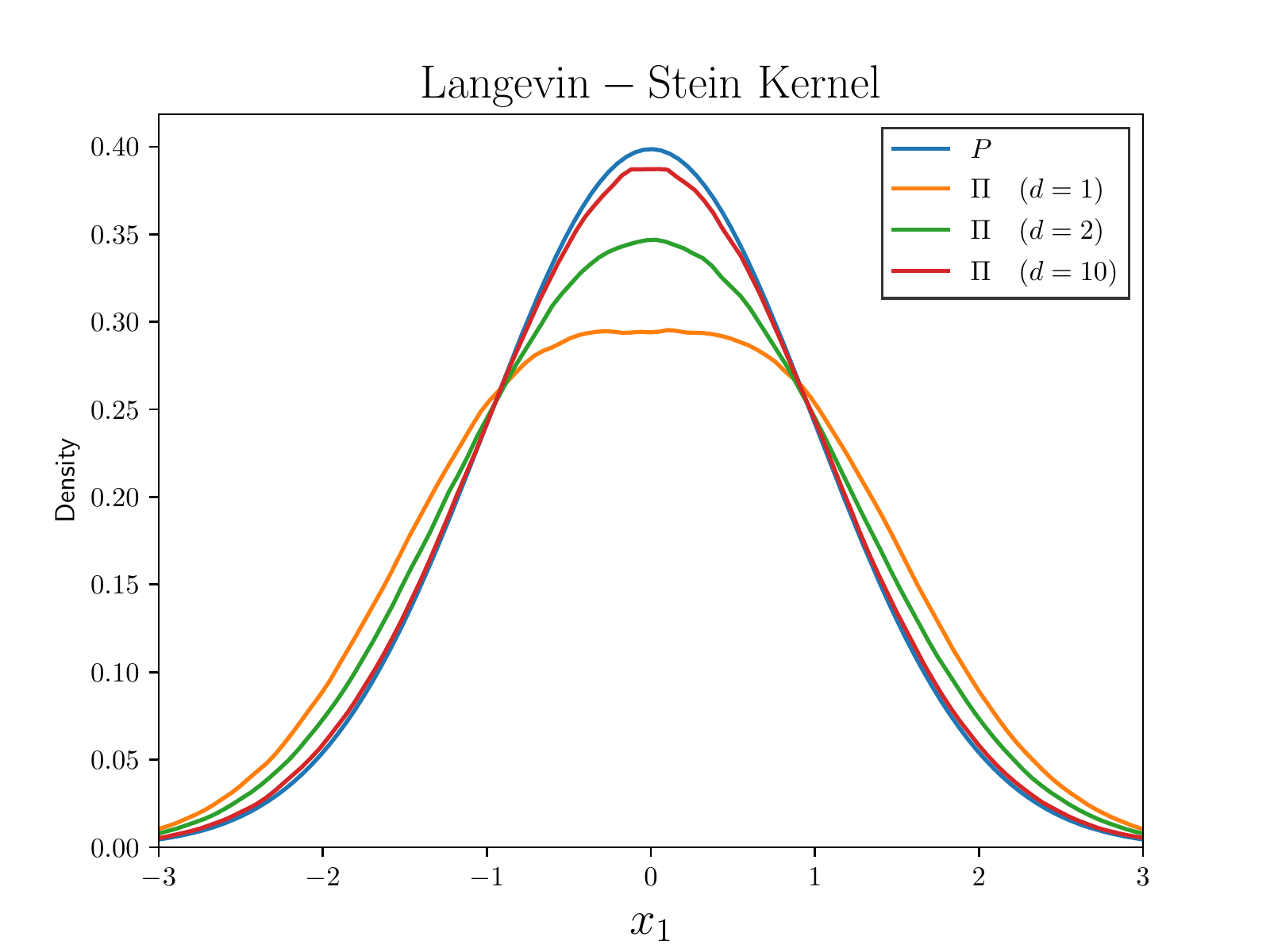}
    \caption{}
    \label{fig: dimension_imq}
    \end{subfigure}
    \begin{subfigure}[h!]{0.49\textwidth}
    \includegraphics[width = 1\textwidth]{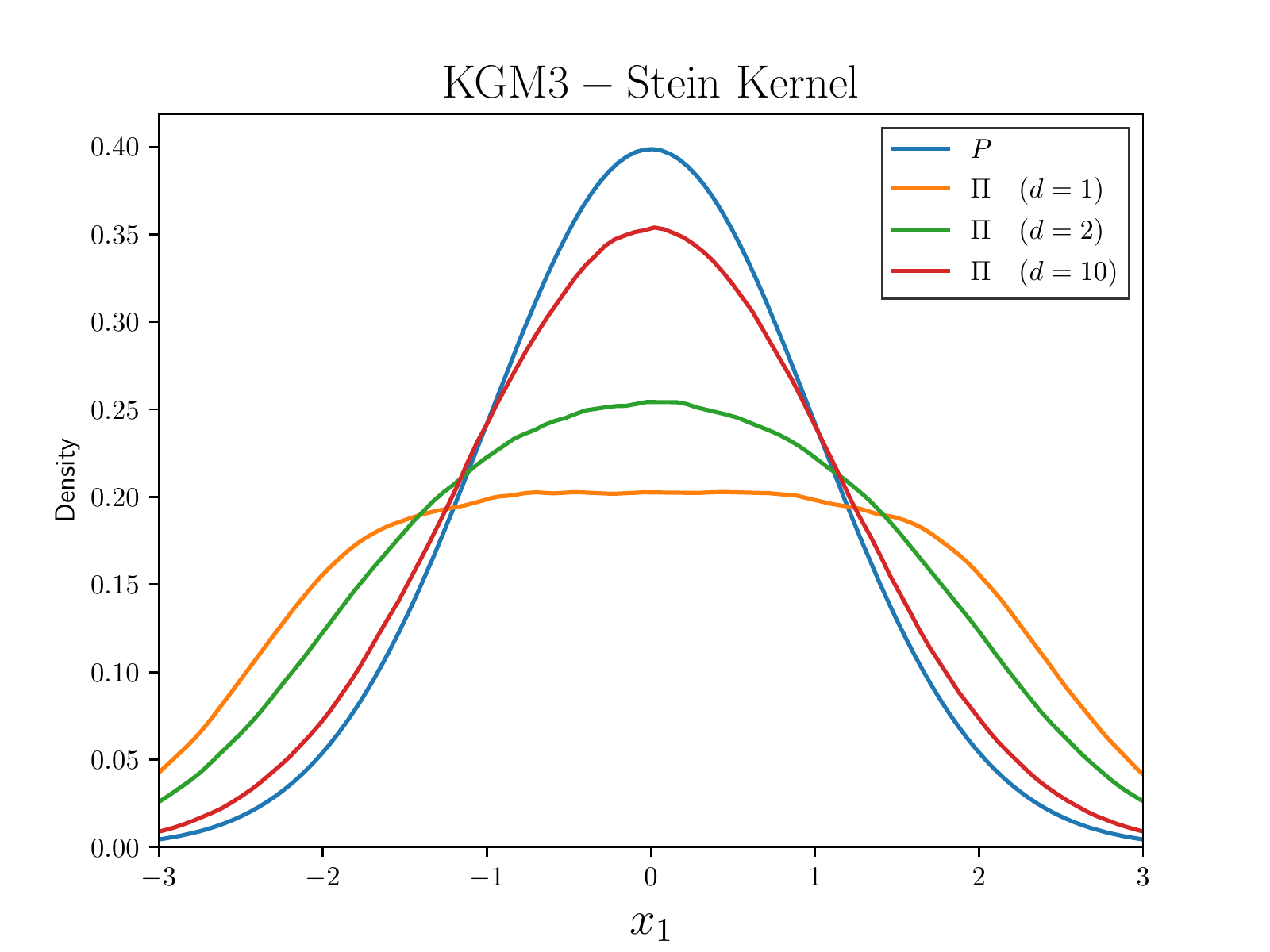}
    \caption{}
    \label{fig: fig: dimension_kgm}
    \end{subfigure}

    \caption{The effect of dimension on $\Pi$:  
    Here $P$ was taken to be the standard Gaussian distribution $\mathcal{N}(0,I_{d \times d})$ in $\mathbb{R}^d$ and the proposed distribution $\Pi$ was computed.
    The marginal distribution of the first component of $\Pi$ is plotted for $d \in \{1,2,10\}$, for both (a) the Langevin--Stein kernel and (b) the KGM3--Stein kernel.
    }
    \label{fig: high dim}
\end{figure}

\subsection{2D Illustration from the Main Text}
\label{subsec: 2d extra}

\Cref{subsec: illustration} of the main text contained a 2-dimensional illustration of Stein $\Pi$-Importance Sampling and presented the distributions $\Pi$ corresponding to different choices of Stein kernel.
Here, in \Cref{fig: 2D KSD}, we present the mean \acp{ksd} for Stein $\Pi$-Importance Sampling performed using the Langevin--Stein kernel (purple), the KGM3--Stein kernel (blue), and the Riemann--Stein kernel (red), corresponding to the sampling distributions $\Pi$ displayed in \Cref{fig: 2D densities} of the main text.

For this experiment, exact sampling from both $P$ and $\Pi$ was performed using a fine grid on which all probabilities were calculated and appropriately normalised.
Results are in broad agreement with the 1-dimensional illustration contained in the main text, in the sense that in all cases Stein $\Pi$-Importance Sampling provides a significant improvement over the default Stein importance sampling method with $\Pi$ equal to $P$.

\begin{figure}[t!]
    \centering
    \includegraphics[width = 0.49\textwidth]{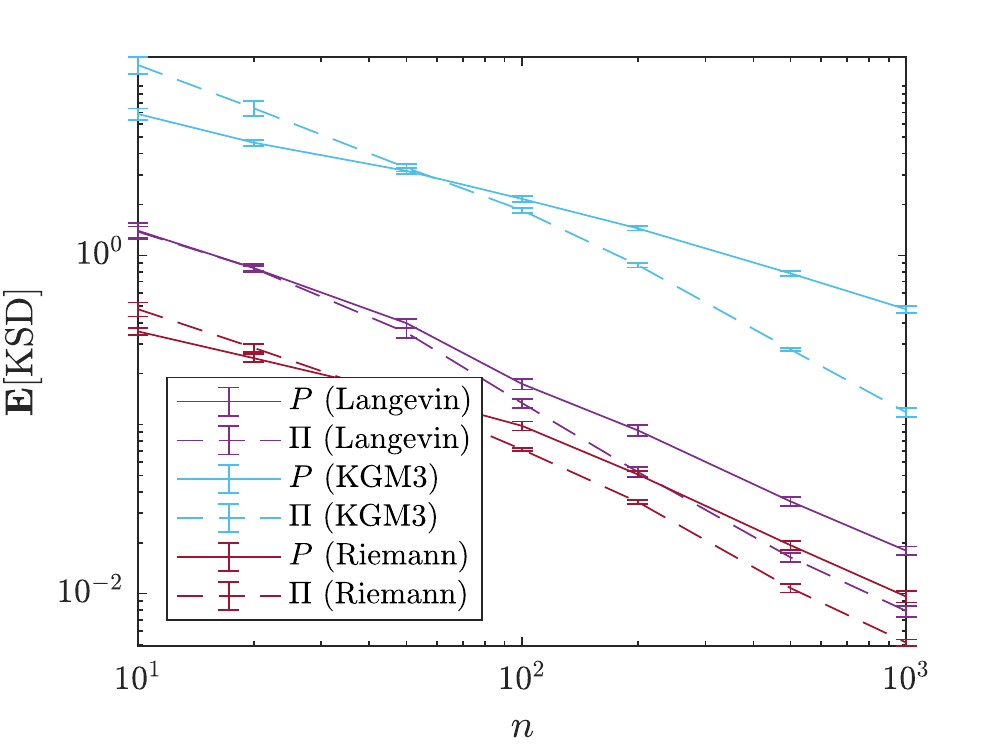}
    \caption{Assessing the performance of the sampling distributions $\Pi$ shown in \Cref{fig: 2D densities}.
    The mean kernel Stein discrepancy (KSD) for computation performed using the Langevin--Stein kernel (purple), the KGM3--Stein kernel (blue), and the Riemann--Stein kernel (red); in each case, KSD was computed using the same Stein kernel used to construct $\Pi$.  Solid lines indicate the baseline case of sampling from $P$, while dashed lines indicate the proposed approach of sampling from $\Pi$.
    (The experiment was repeated 10 times and standard error bars are plotted.)
    }
    \label{fig: 2D KSD}
\end{figure}
\FloatBarrier

\subsection{Implementation of \ac{mala}}
\label{app: mala imp}

For implementation of \ac{mala} in \Cref{alg: ada_mala} we are required to specify a step size $\epsilon$ and a preconditioner matrix $M$.
In general, suitable values for both of these parameters will be problem-dependent.
Standard practice is to perform some form of manual or automated tuning to arrive at parameter values for which the average acceptance rate is close to 0.57, motivated by the asymptotic analysis of \citet{roberts1998optimal}.
Adaptive \ac{mcmc} algorithms, which seek to optimise the parameters of \ac{mcmc} algorithms such as \ac{mala} during the warm-up period, provide an appealing solution, and was the approach taken in this work.

The adaptive \ac{mala} algorithm which we used is contained in \Cref{alg: ada_mala}, where we have let $\texttt{MALA}(x, \epsilon, M, n, k_P)$ denote the output from the preconditioned \ac{mala} with initial state $x$, step size $\epsilon$, preconditioner matrix $M$, and chain length $n$, described in \Cref{alg: pi mala}.
In \Cref{alg: ada_mala}, we use $\mathrm{cov}(\cdot)$ to denote the sample covariance matrix.
The algorithm monitors the average acceptance rate and increases or decreases it according to whether it is below or above, respectively, the 0.57 target.
For the preconditioner matrix, the sample covariance matrix of samples obtained from the penultimate tuning run of \ac{mala} is used.
For all experiments that we report using \ac{mala}, we set $\epsilon_0=1$, $M_0=I_d$, $h=10$, and $\alpha_1 = \cdots = \alpha_9 = 0.3$.
The warm-up epoch lengths were $n_0 = \dots = n_8 = 1,000$ and the final epoch length was $n_9 = 10^5$.
The samples $\smash{ \{x_{h-1,1} , \dots , x_{h-1,n_{i-1}}\} }$ from the final epoch are returned, and constituted output from \ac{mala} for our experimental assessment.

\begin{algorithm}[t!]
\begin{algorithmic}[1]
\Require $x_{0,0}$ (initial state), $\epsilon_0$ (initial step size), $M_0$ (initial preconditioner matrix), $\{n_i\}_{i=0}^{h-1}$ (epoch lengths), $\{\alpha_i\}_{i=1}^{h-1}$ (learning schedule), $h$ (number of epochs), $k_P$ (Stein kernel)

\State $\{x_{0,1} \ldots, x_{0,n_0}\} \leftarrow \texttt{MALA}(x_{0,0}, \epsilon_0, M_0, n_0, k_P)$
\For{$i=1,\dots,h-1$}
    \State $x_{i,0} \leftarrow x_{i-1, n_{i-1}}$
    \State $\rho_{i-1} \leftarrow \frac{1}{n_{i-1}} \sum_{j=1}^{n_{i-1}} 1_{x_{i-1,j} \neq x_{i-1,j-1}}$ \Comment{Average acceptance rate for chain $i$}
    \State $\epsilon_{i} \leftarrow \epsilon_{i-1} \exp(\rho_{i-1} - 0.57)$ \Comment{Update step size}
    \State $M_i \leftarrow \alpha_i M_i + (1 - \alpha_i) \mathrm{cov}(\{x_{i-1,1} \ldots, x_{i-1,n_{i-1}}\})$ \Comment{Update preconditioner matrix}
    \State $\{x_{i,1} \ldots, x_{i,n_i}\} \leftarrow \texttt{MALA}(x_{i,0}, \epsilon_i, M_i, n_i, k_P)$
\EndFor 
\end{algorithmic}
\caption{Adaptive MALA}
\label{alg: ada_mala}
\end{algorithm}

To sample from $P$ instead of $\Pi$, we used \Cref{alg: ada_mala} we formally set $k_P(x) = 1$ for all $x \in \mathbb{R}^d$, which recovers $\Pi = P$ as the target.

\subsection{Illustration on a GARCH Model}
\label{app: GARCH}

This appendix contains an additional illustrative experiment, concerning a \ac{garch} model that is a particular instance of a model from the \texttt{PosteriorDB} database discussed in \Cref{sec: empirical}.
The purpose of this illustration is to facilitate an empirical investigation in a slightly higher dimension ($d=4$) and to explore the effect of changing the order $s$ of the KGM--Stein kernel defined in \Cref{app: KGM kernel}.

First we describe the \ac{garch} model that was used.
These models are widely-used in econometrics to describe time series data $\{y_t\}_{t=1}^{n}$ in settings where the volatility process is assumed to be time-varying (but stationary).
In particular, we consider the GARCH(1,1) model 
\[
\begin{aligned}
y_t &= \phi_1 + a_t, \\
a_t &= \sigma_t\epsilon_t, \quad \epsilon_t \sim \mathcal{N}(0,1), \\
\sigma^2_{t} &= \phi_2 + \phi_3 a^{2}_{t-1} + \phi_{4} \sigma^2_{t-1},
\end{aligned}
\]
where $\phi_{2} > 0$, $\phi_{3} > 0$, $\phi_{4} > 0$, and $\phi_{3} + \phi_{4} < 1$ are the model parameters, constrained to a subset of $\mathbb{R}^4$. 
For ease of sampling, a change of variables $\tau: (\phi_1, \phi_2, \phi_3, \phi_4) \mapsto \theta$ is performed in such a way that the parameter $\theta \in \mathbb{R}^4$ is unconstrained. 
Assuming an improper flat prior on $\theta$, the log-posterior density for $\theta$ is given up to an additive constant by
\[
\log p(\theta \:|\: y_1,\ldots,y_n) \stackrel{+C}{=} \sum_{t=1}^{n} \left[-\frac{1}{2}\log(\sigma^{2}_{t}) - \frac{y^{2}_{t}}{2\sigma^{2}_{t}}\right]  + \log |J_{\tau^{-1}}(\theta)|,
\]
where $|J_{\tau^{-1}}(\theta)|$ is the Jacobian determinant of $\tau^{-1}$.

For this illustration, real data were provided within the model description of \texttt{PosteriorDB}, for which the estimated \textit{maximum a posteriori} parameter is $\smash{\hat{\phi} = (5.04, 1.36, 0.53, 0.31)}$.
The marginal distributions of $\Pi$ corresponding to the KGM--Stein kernels of orders $s \in \{2,3,4\}$ are compared to the marginals of $P$ in \Cref{fig:garch_pi}.
It can be seen that higher orders $s$ correspond to greater over-dispersion of $\Pi$; this makes intuitive sense since larger $s$ corresponds to a more stringent \ac{ksd} (controlling the convergence of moments up to order $s$) which places greater emphasis on how the tails of $P$ are approximated.
Further, for the final skewed marginal of $P$, we note that the distribution $\Pi$ exaggerates the skew, placing more of its mass in the tail of the direction which is positively skewed.
Further discussion of skewed targets is contained in \Cref{app: skewed-normal}.

\begin{figure}[t!]
    \centering
    \includegraphics[width=1\textwidth]{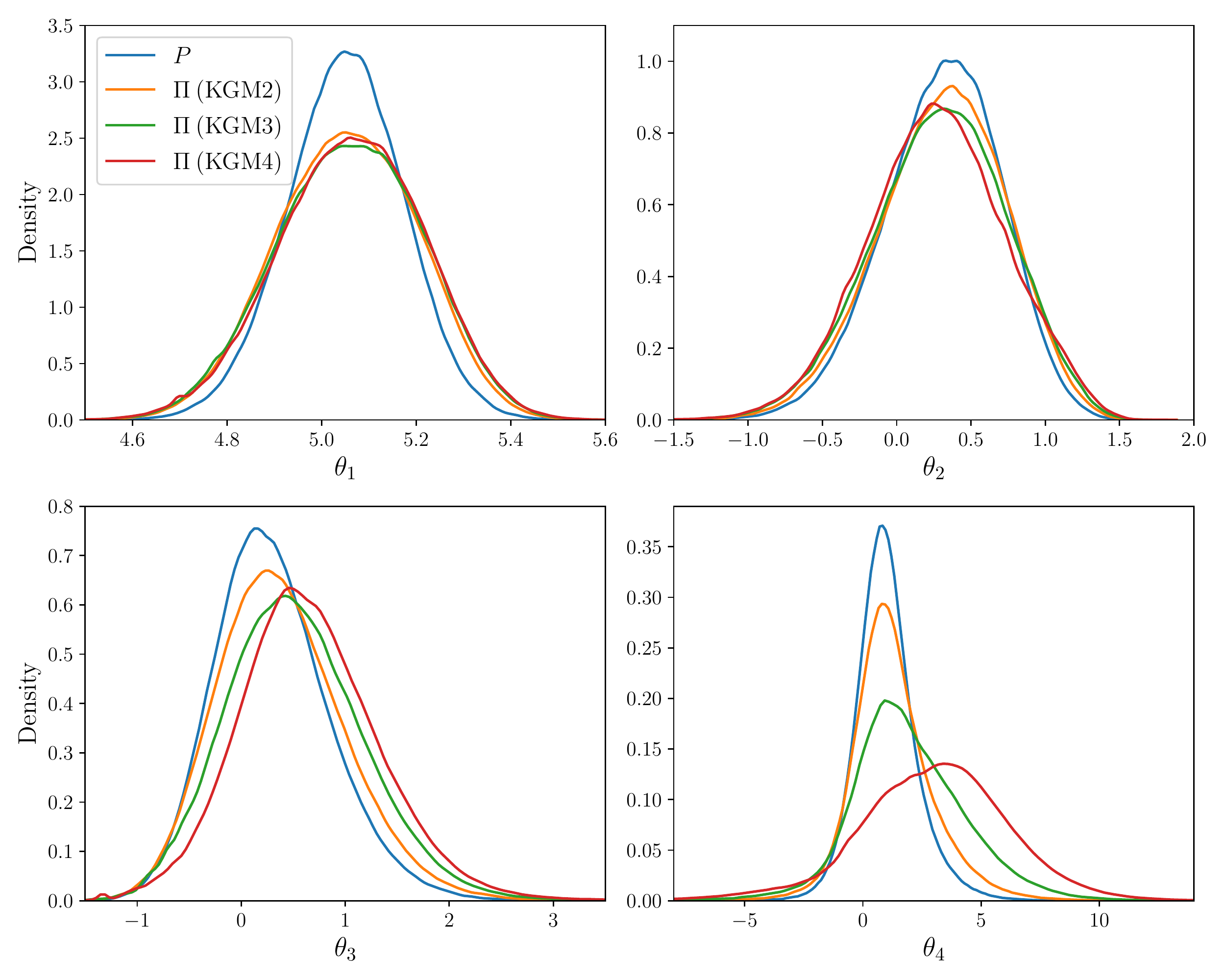}
    \caption{Illustrating the shape of $\Pi$ based on the KGM$s$--Stein kernel for a GARCH(1,1) model, controlling convergence of moments up to order $s \in \{2, 3, 4\}$. 
    The marginal density functions of each distribution were approximated using one-million samples obtained using \ac{mcmc}.}
    \label{fig:garch_pi}
\end{figure}

\subsection{Stein $\Pi$-Importance Sampling for \texttt{PosteriorDB}}
\label{app: posteriordb}

To introduce objectivity into our assessment, we exploited the \texttt{PosteriorDB} benchmark  \citep{Magnusson_posteriordb_a_set_2022}.
This ongoing project is an attempt toward standardised benchmarking, consisting of a collection of posteriors to be numerically approximated. 
The test problems in \texttt{PosteriorDB} are defined in the \texttt{Stan} probabilistic programming language, and so \texttt{BridgeStan} \citep{Roualdes_BridgeStan_Efficient_in-memory_2023} was used to directly access posterior densities and their gradients as required.
The ambition of \texttt{PosteriorDB} is to provide an extensive set of benchmark tasks; at the time we conducted our research, \texttt{PosteriorDB} was at Version 0.4.0 and contained 149 models, of which 47 came equipped with a gold-standard sample of size $n=10^3$, generated from a long run of Hamiltonian Monte Carlo (the No-U-Turn sampler in \texttt{Stan}).
Of these 47 models, a subset of 40 were found to be compatible with \texttt{BridgeStan}, which was at Version 1.0.2 at the time this research was performed.
The version of \texttt{Stan} that we used was \texttt{Stanc3} Version 2.31.0 (Unix).
Thus we used a total of 40 test problems for our empirical assessment.

For each test problem, a total of 10 replicate experiments were performed and standard errors were computed.
A sampling method was defined as being \emph{significantly better} for approximation of a given target, compared to all other methods considered, if had lower mean \ac{ksd} \emph{and} the standard error bar did not overlap with the standard error bar of any other method.
\Cref{tab: results} in the main text summarises the performance of \textsc{S$\Pi$IS-MALA}, fixing the number of samples to be $n = 3 \times 10^3$.
In this appendix, full empirical results are provided.

For sampling from \ac{mala}, we used the adaptive algorithm described in \Cref{app: mala imp} with a final epoch of length $n_{\max} = 10^5$.
Then, whenever a set of $n \ll n_{\max}$ consecutive samples from \ac{mala} are required for our experimental assessment, these were obtained by selecting at random a consecutive sequence of length $n$ from the total chain of length $10^5$.
This ensures that the performance of unprocessed \ac{mala} that we report is not negatively affected by burn-in, in so far as is practical to control.

Full results are presented in Figure \ref{fig: full posterior db ksd}.
These results broadly support the interpretation that \textsc{S$\Pi$IS-MALA} usually outperforms \textsc{SIS-MALA}, or otherwise both methods provide a similar level of performance, for the sufficiently large sample sizes $n$ considered.
The sample size threshold at which \textsc{S$\Pi$IS-MALA} outperforms \textsc{SIS-MALA} appears to be dimension-dependent.
A notable exception is panel 29 of Figure \ref{fig: full posterior db ksd}, a $d=10$ dimensional task for which \textsc{S$\Pi$IS-MALA} provided a substantially worse approximation in \ac{ksd} for the range of values of $n$ considered.

\figureSeriesHere{%
\label{fig: full posterior db ksd}
Benchmarking on \texttt{PosteriorDB}. 
Here we compared raw output from MALA (dotted lines) with the post-processed output provided by the default Stein importance sampling method of \citet{liu2017black} (\textsc{SIS-MALA}; solid lines) and the proposed Stein $\Pi$-Importance Sampling method (\textsc{S$\Pi$IS-MALA}; dashed lines).
The Langevin (purple) and KGM3--Stein kernels (blue) were used for \textsc{SIS-MALA} and \textsc{S$\Pi$IS-MALA} and the associated \acp{ksd} are reported as the number $n$ of iterations of MALA is varied.
Ten replicates were computed and standard errors were plotted.
The name of each model is shown in the title of the corresponding panel, and the dimension $d$ of the parameter vector is given in parentheses.
[Langevin--Stein kernel:  \protect\dottedpurpleline \textsc{MALA}, \protect\solidpurpleline \textsc{SIS-MALA}, \protect\dashedpurpleline \textsc{S$\Pi$IS-MALA}.
KGM3--Stein kernel:  \protect\dottedblueline \textsc{MALA}, \protect\solidblueline \textsc{SIS-MALA}, \protect\dashedblueline \textsc{S$\Pi$IS-MALA}.]
}{%
\figureSeriesRow{%
\figureSeriesElement{}{\includegraphics[width = 0.45\textwidth]{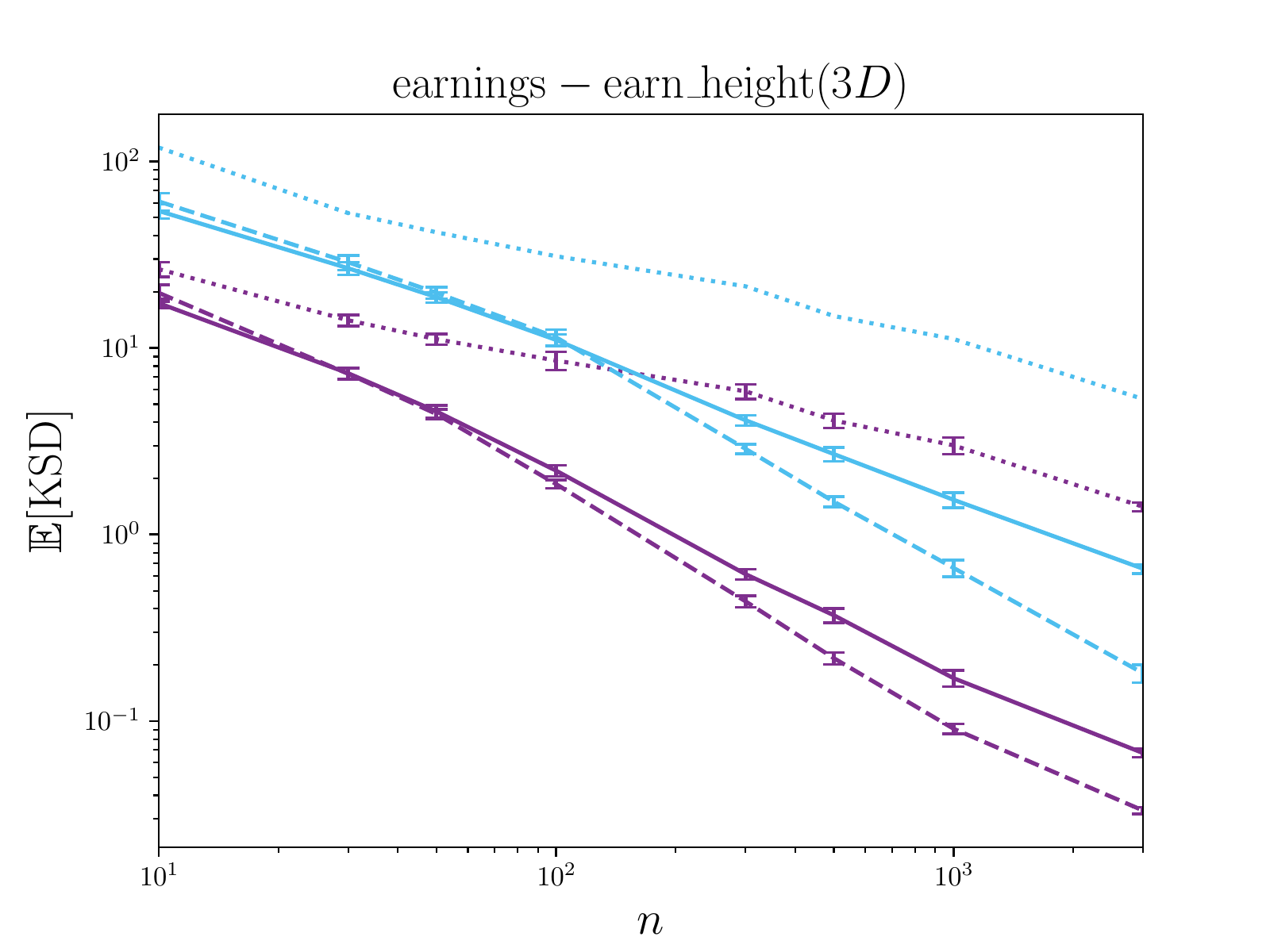}}%
\figureSeriesElement{}{\includegraphics[width = 0.45\textwidth]{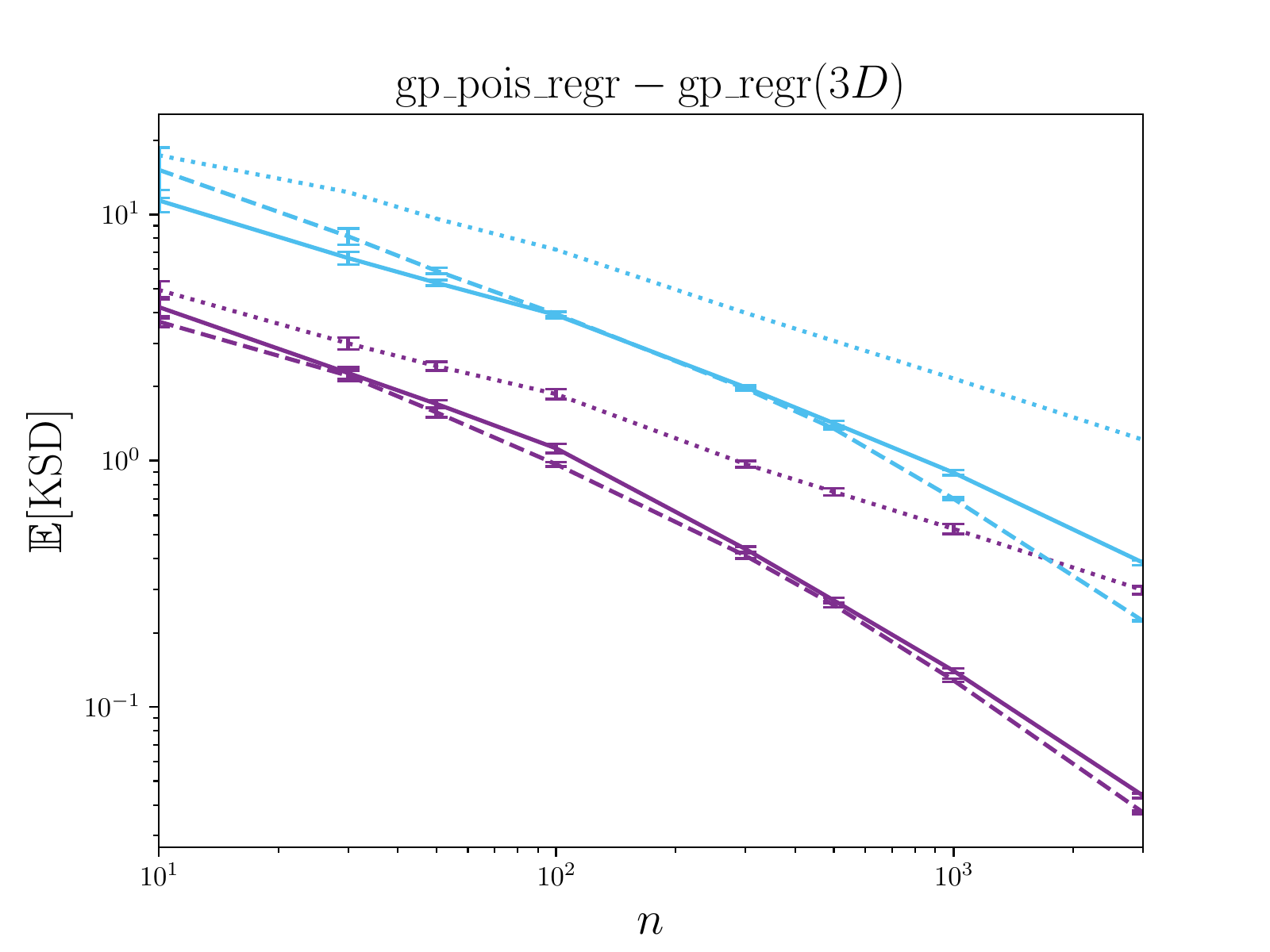}}%
}
\figureSeriesRow{%
\figureSeriesElement{}{\includegraphics[width = 0.45\textwidth]{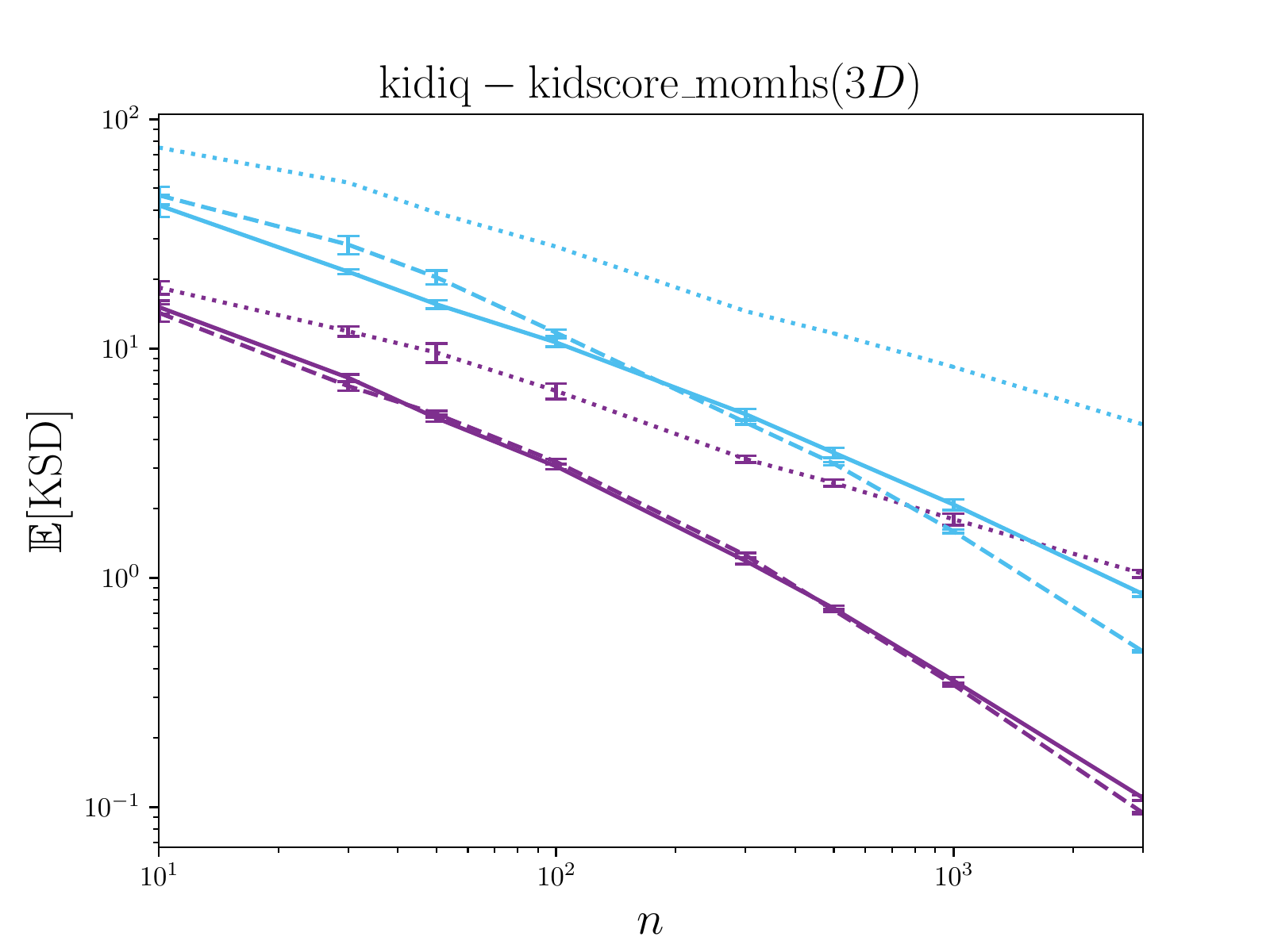}}%
\figureSeriesElement{}{\includegraphics[width = 0.45\textwidth]{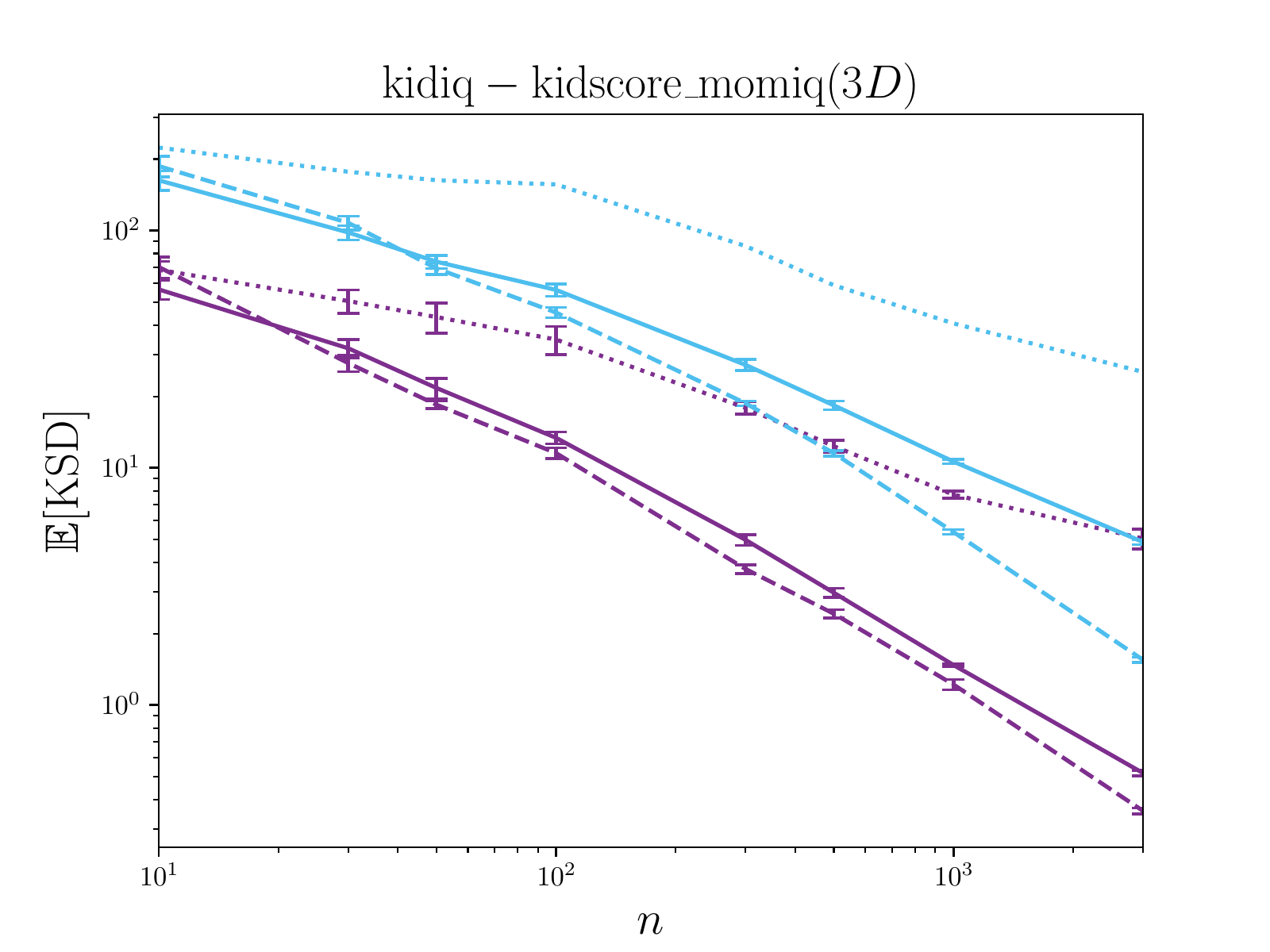}}%
}
\figureSeriesRow{%
\figureSeriesElement{}{\includegraphics[width = 0.45\textwidth]{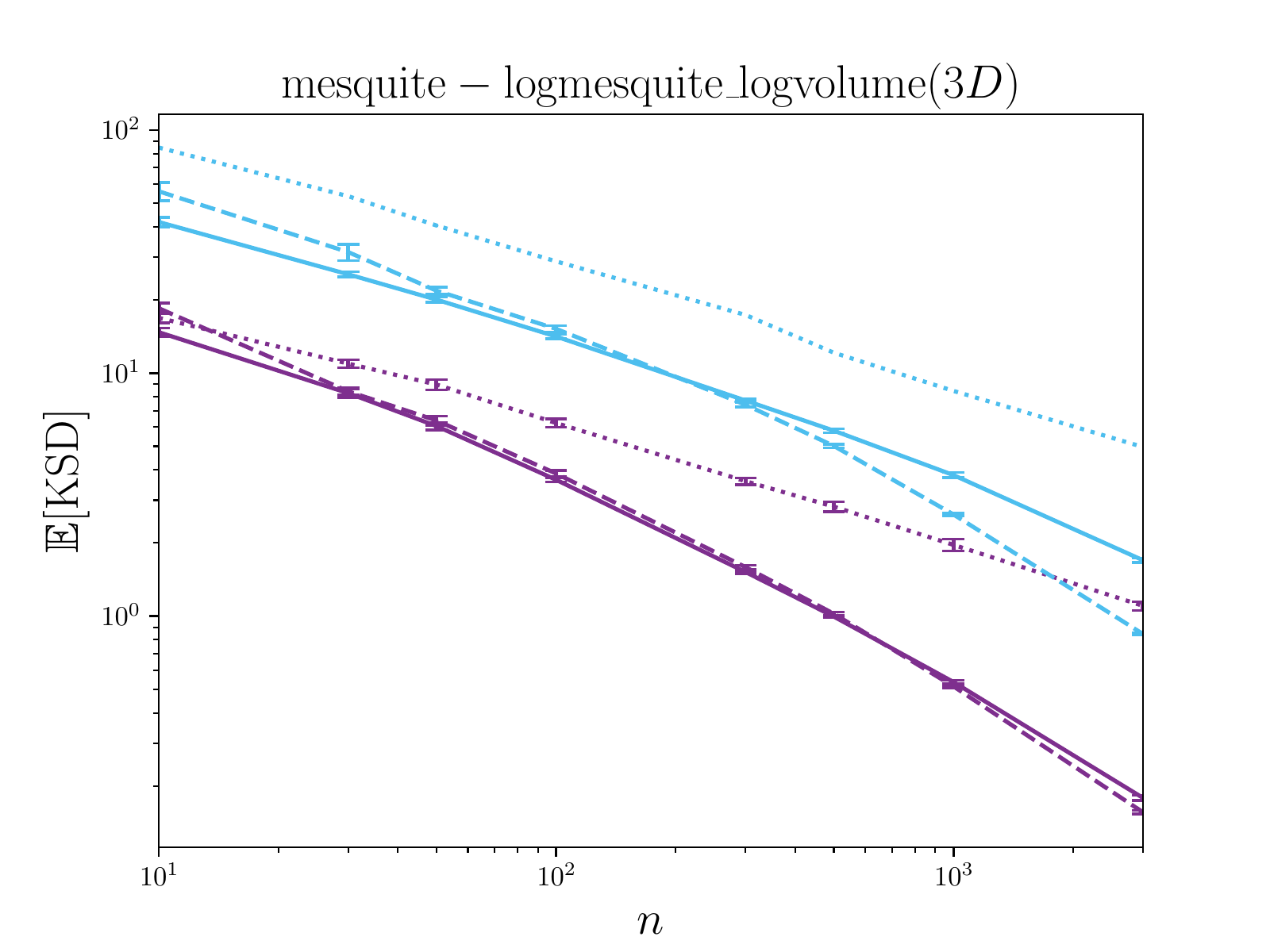}}%
\figureSeriesElement{}{\includegraphics[width = 0.45\textwidth]{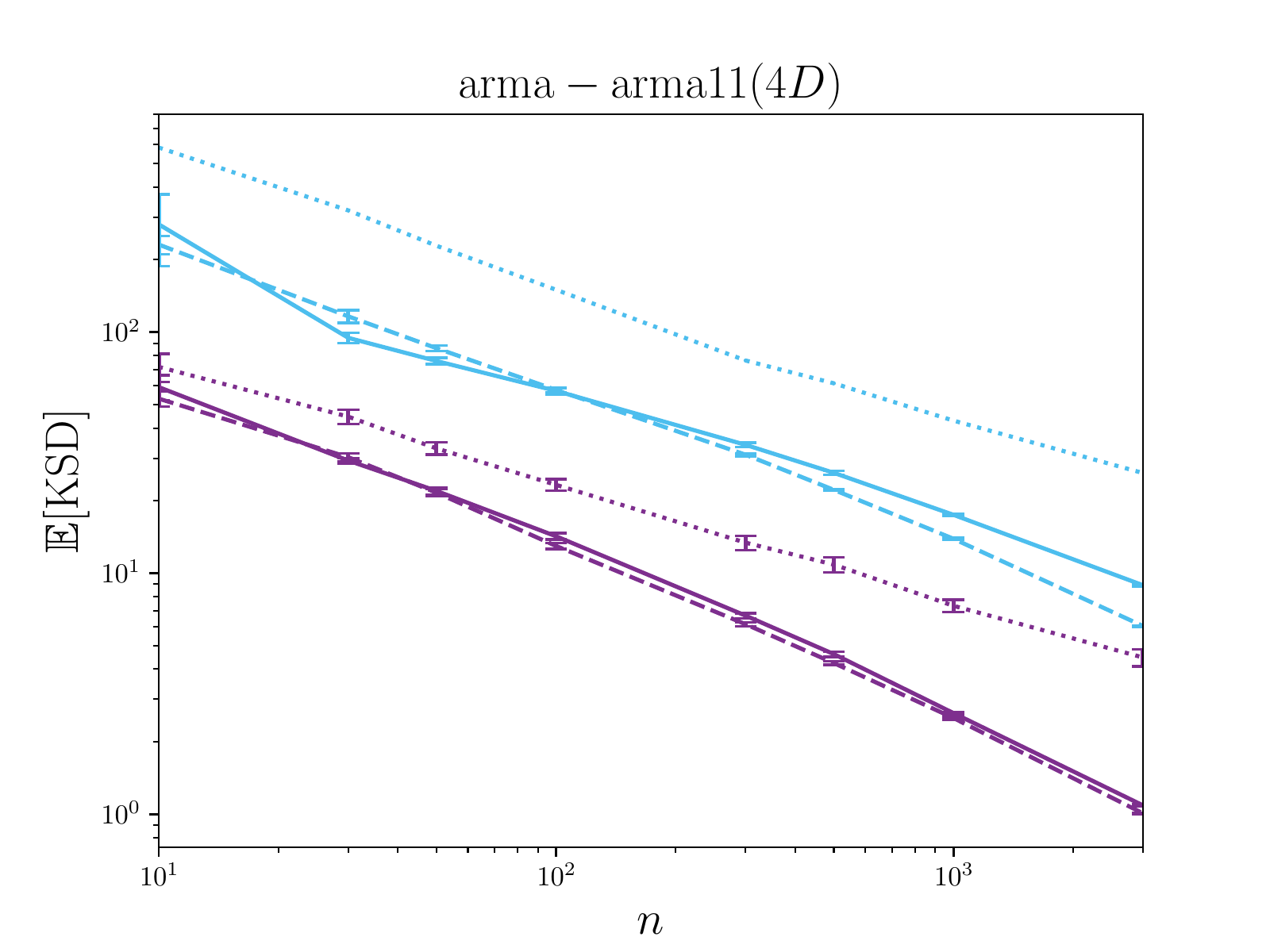}}%
}
\figureSeriesRow{%
\figureSeriesElement{}{\includegraphics[width = 0.45\textwidth]{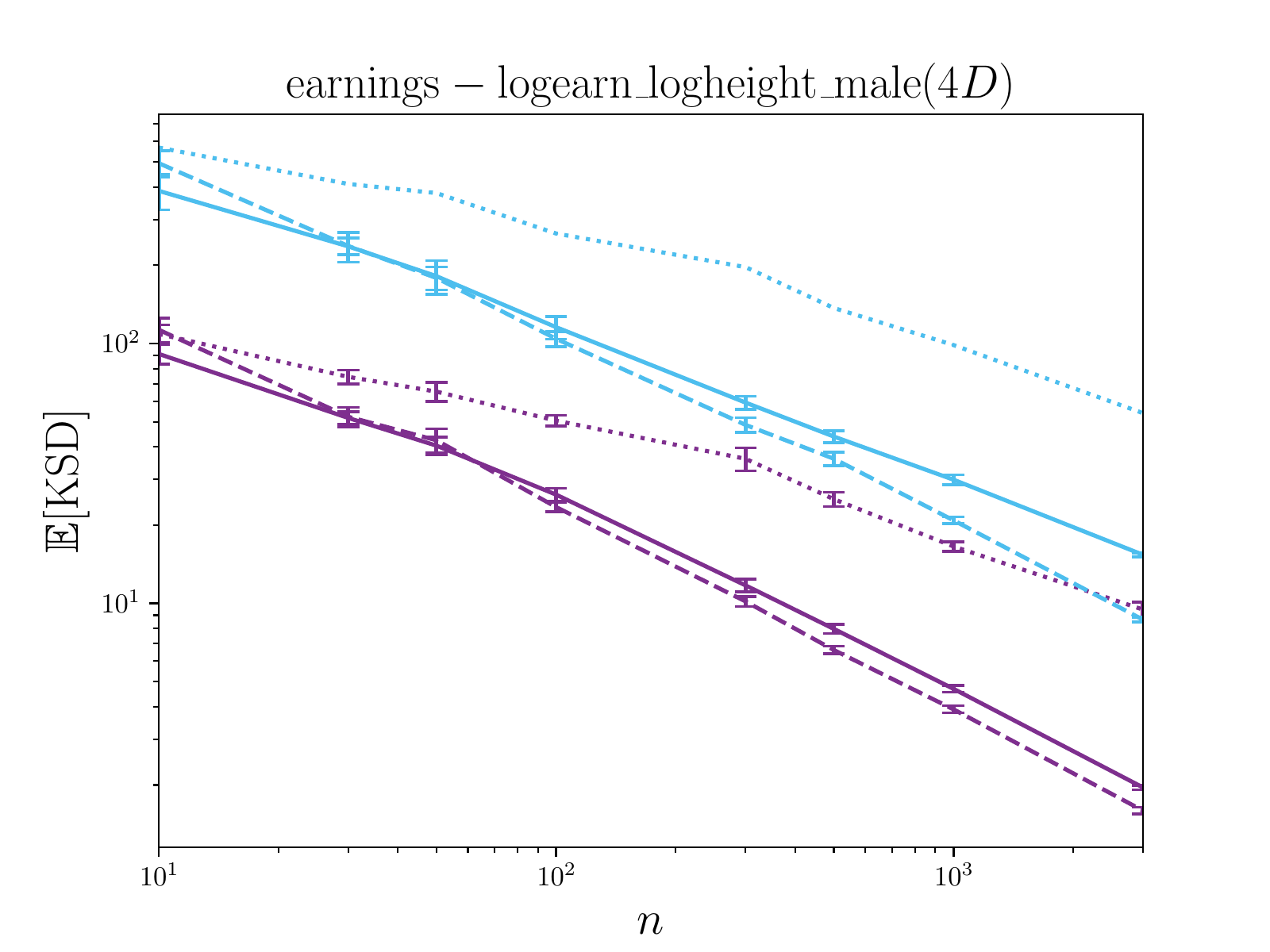}}%
\figureSeriesElement{}{\includegraphics[width = 0.45\textwidth]{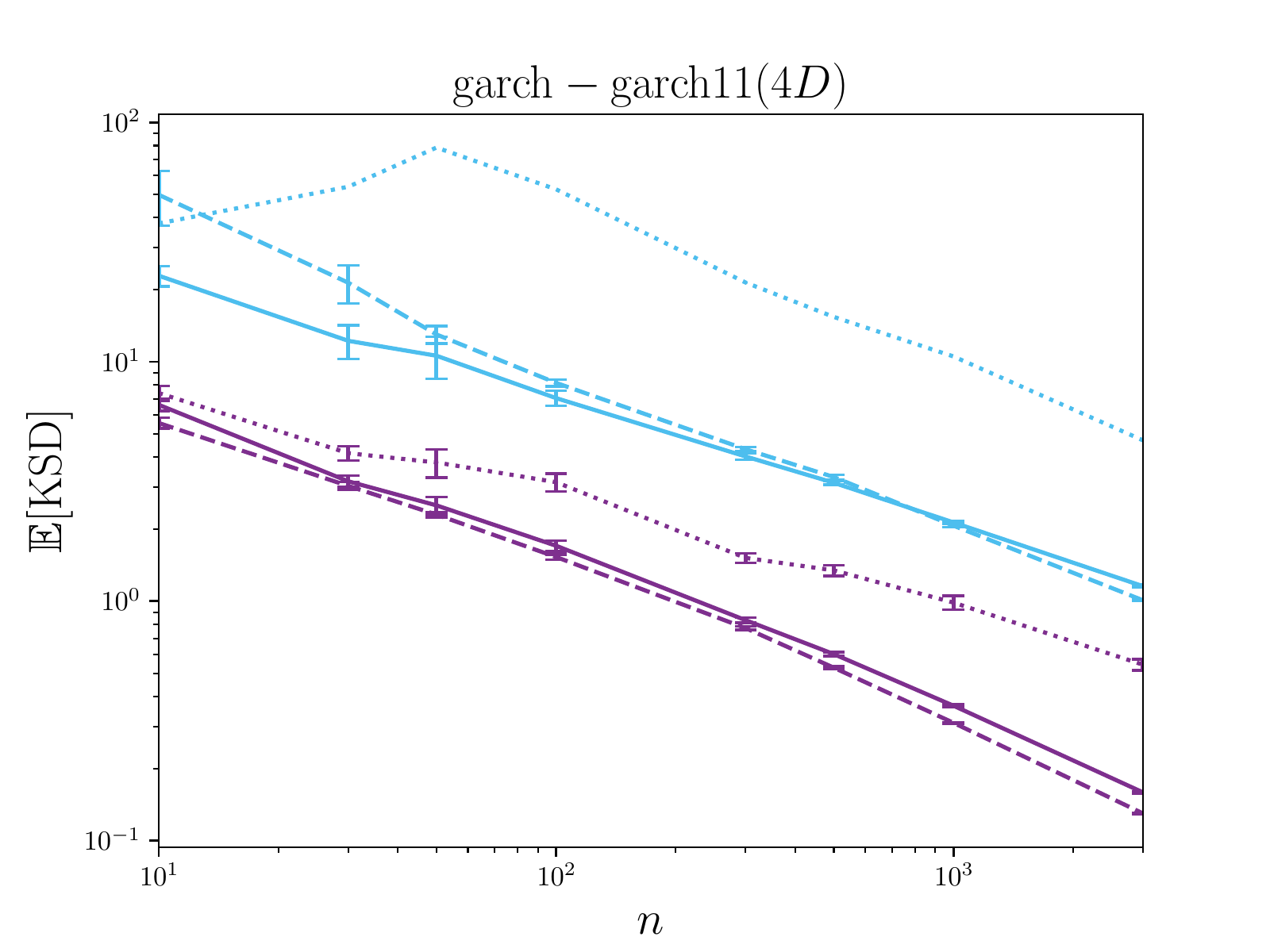}}%
}
\figureSeriesRow{%
\figureSeriesElement{}{\includegraphics[width = 0.45\textwidth]{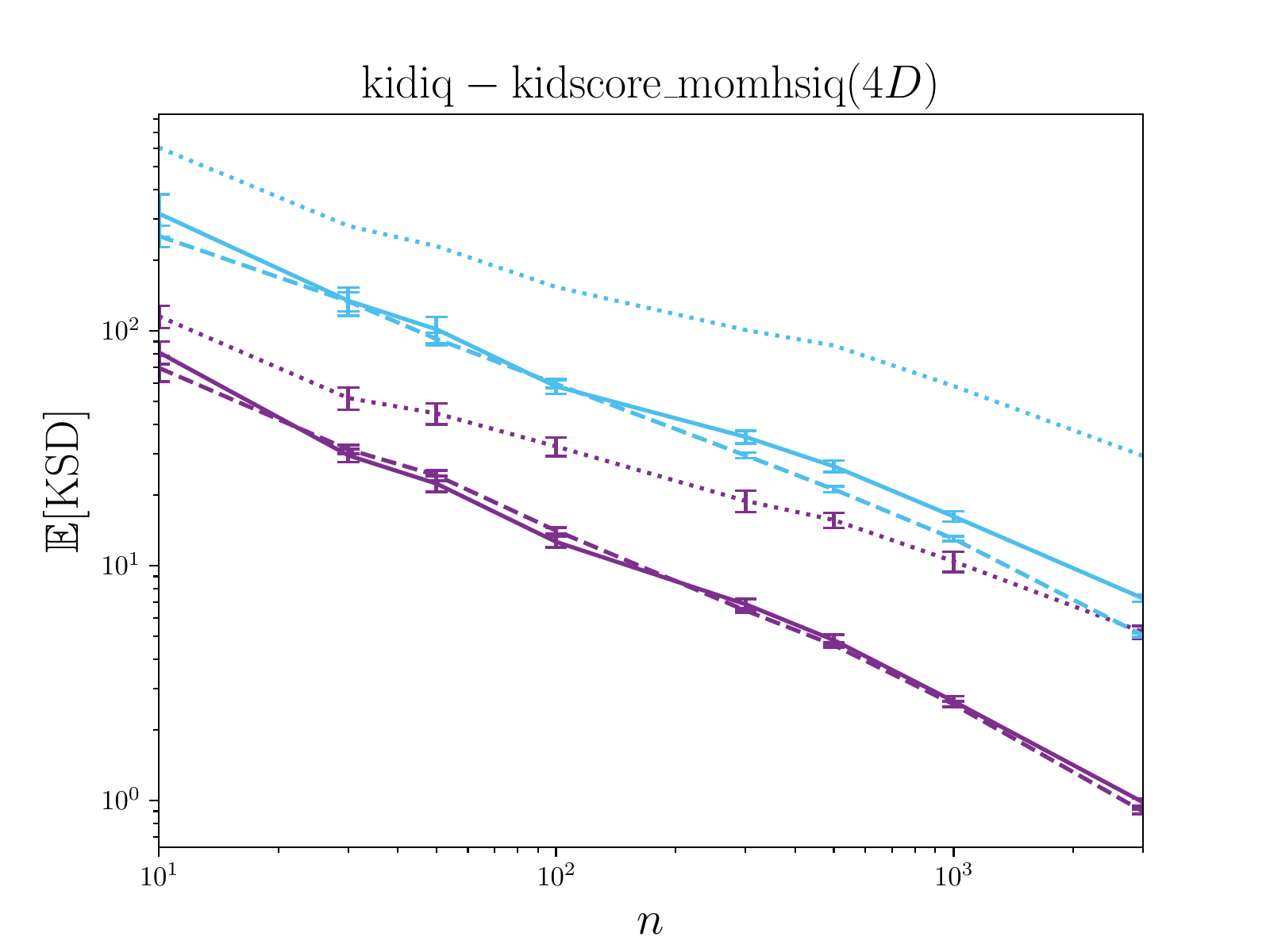}}%
\figureSeriesElement{}{\includegraphics[width = 0.45\textwidth]{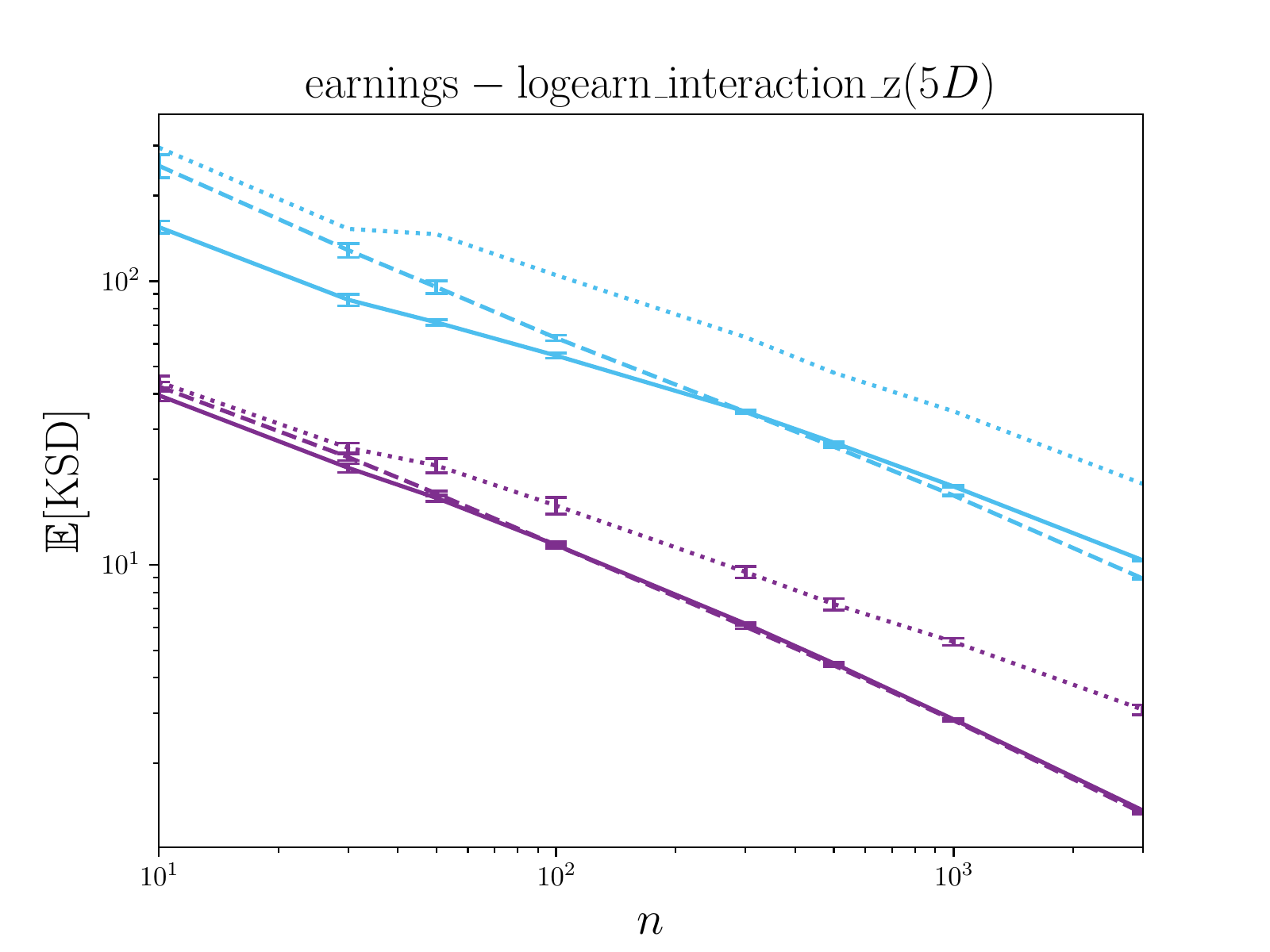}}%
}
\figureSeriesRow{%
\figureSeriesElement{}{\includegraphics[width = 0.45\textwidth]{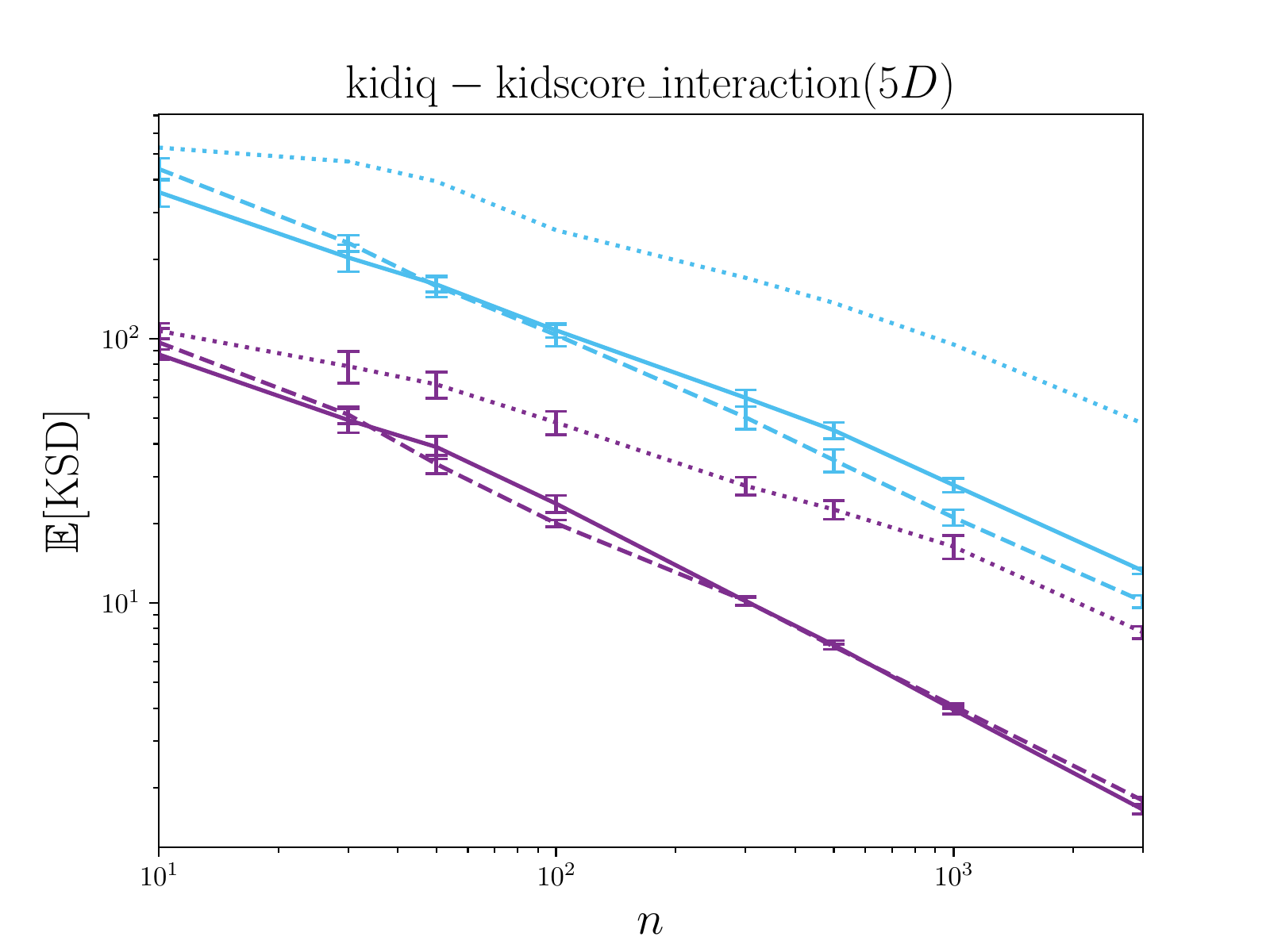}}%
\figureSeriesElement{}{\includegraphics[width = 0.45\textwidth]{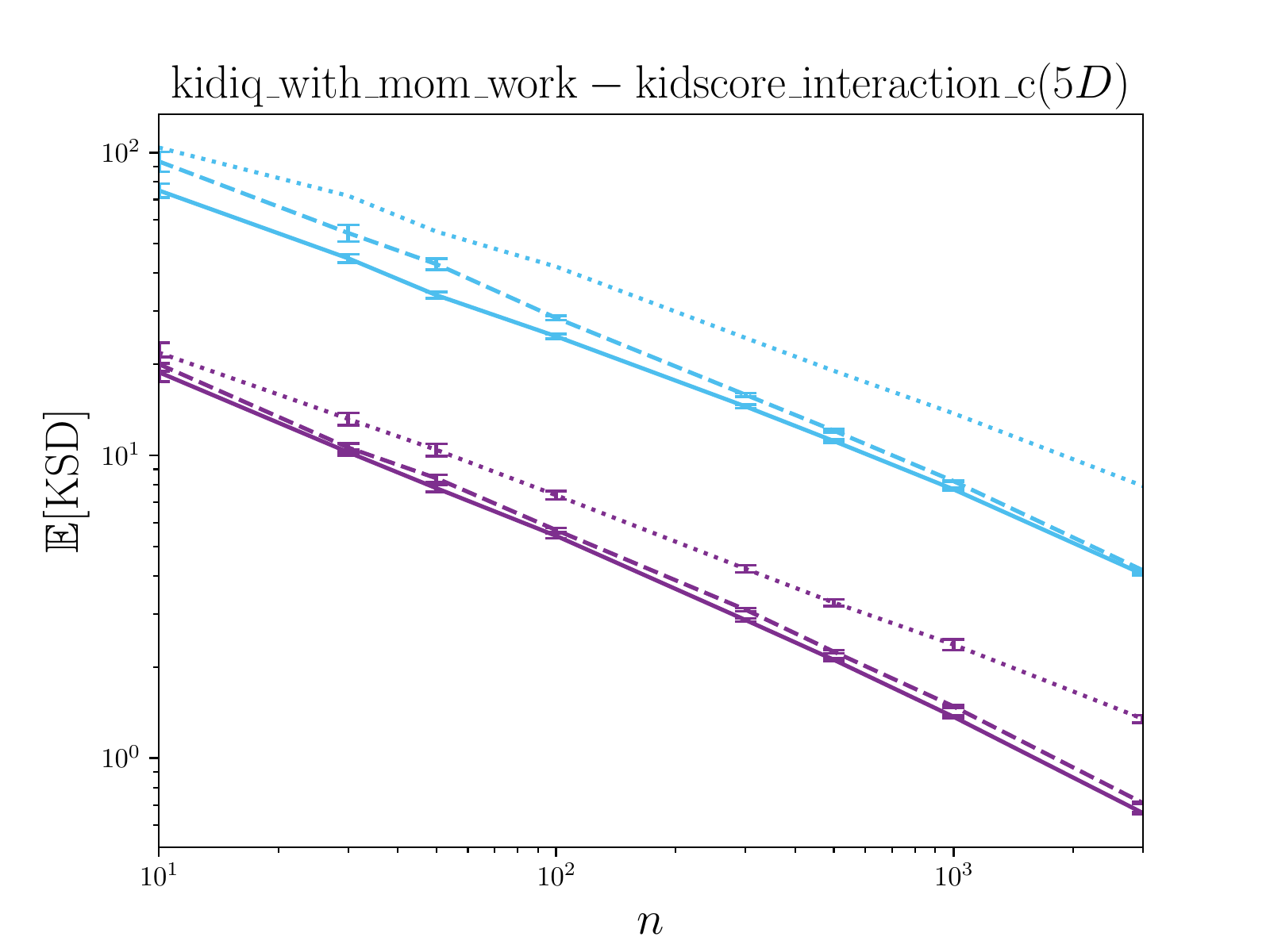}}%
}
\figureSeriesRow{%
\figureSeriesElement{}{\includegraphics[width = 0.45\textwidth]{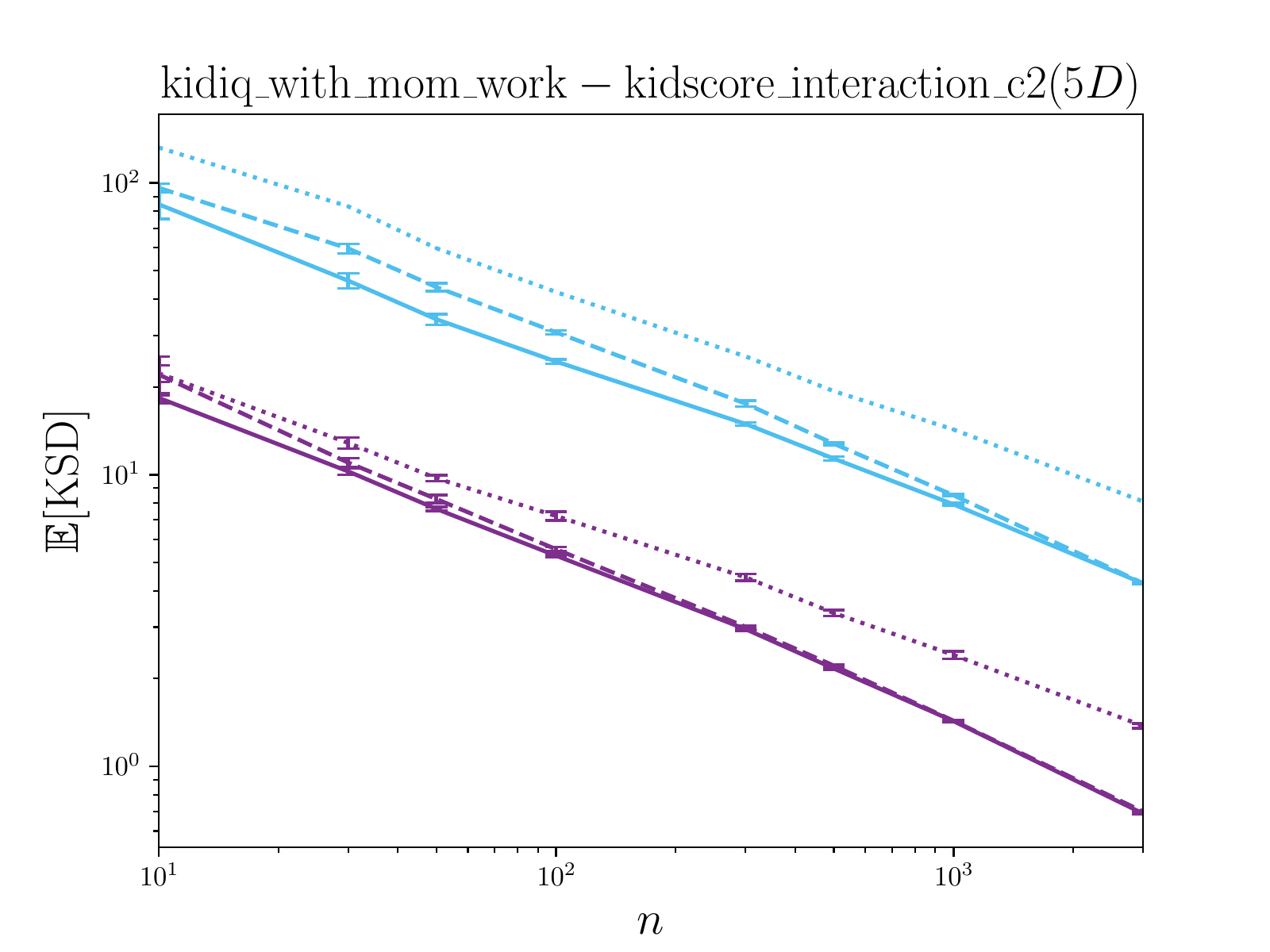}}%
\figureSeriesElement{}{\includegraphics[width = 0.45\textwidth]{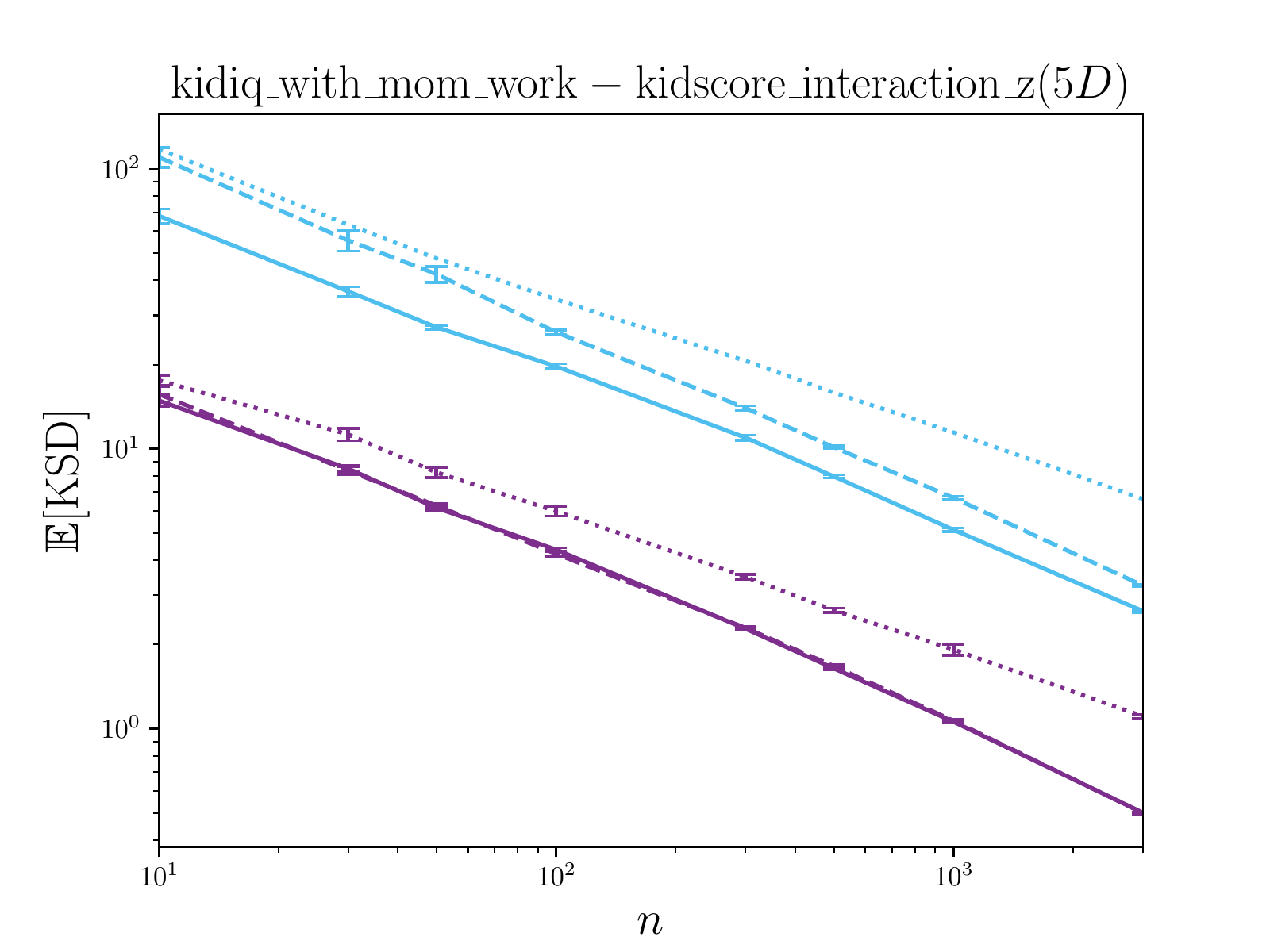}}%
}
\figureSeriesRow{%
\figureSeriesElement{}{\includegraphics[width = 0.45\textwidth]{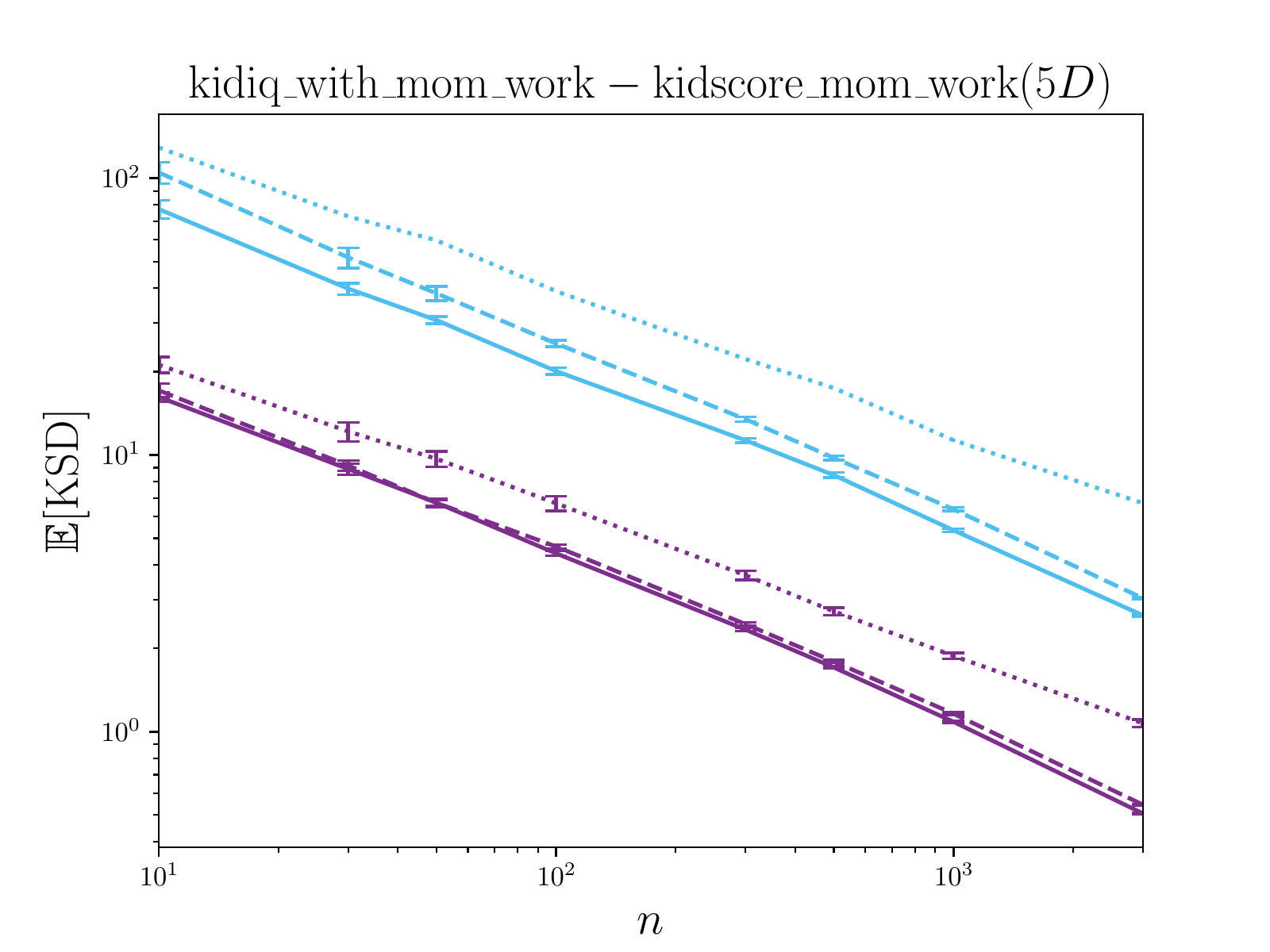}}%
\figureSeriesElement{}{\includegraphics[width = 0.45\textwidth]{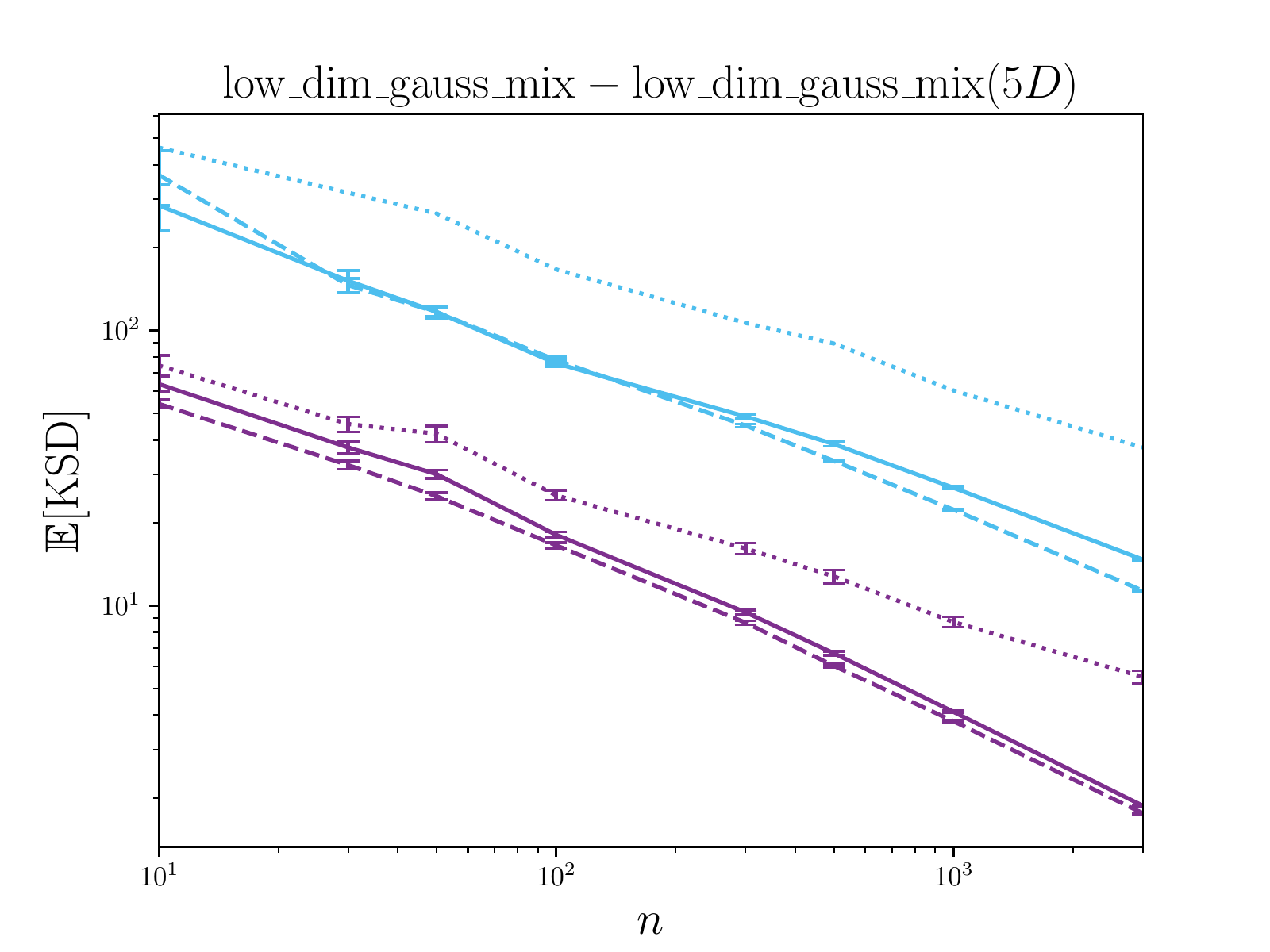}}%
}
\figureSeriesRow{%
\figureSeriesElement{}{\includegraphics[width = 0.45\textwidth]{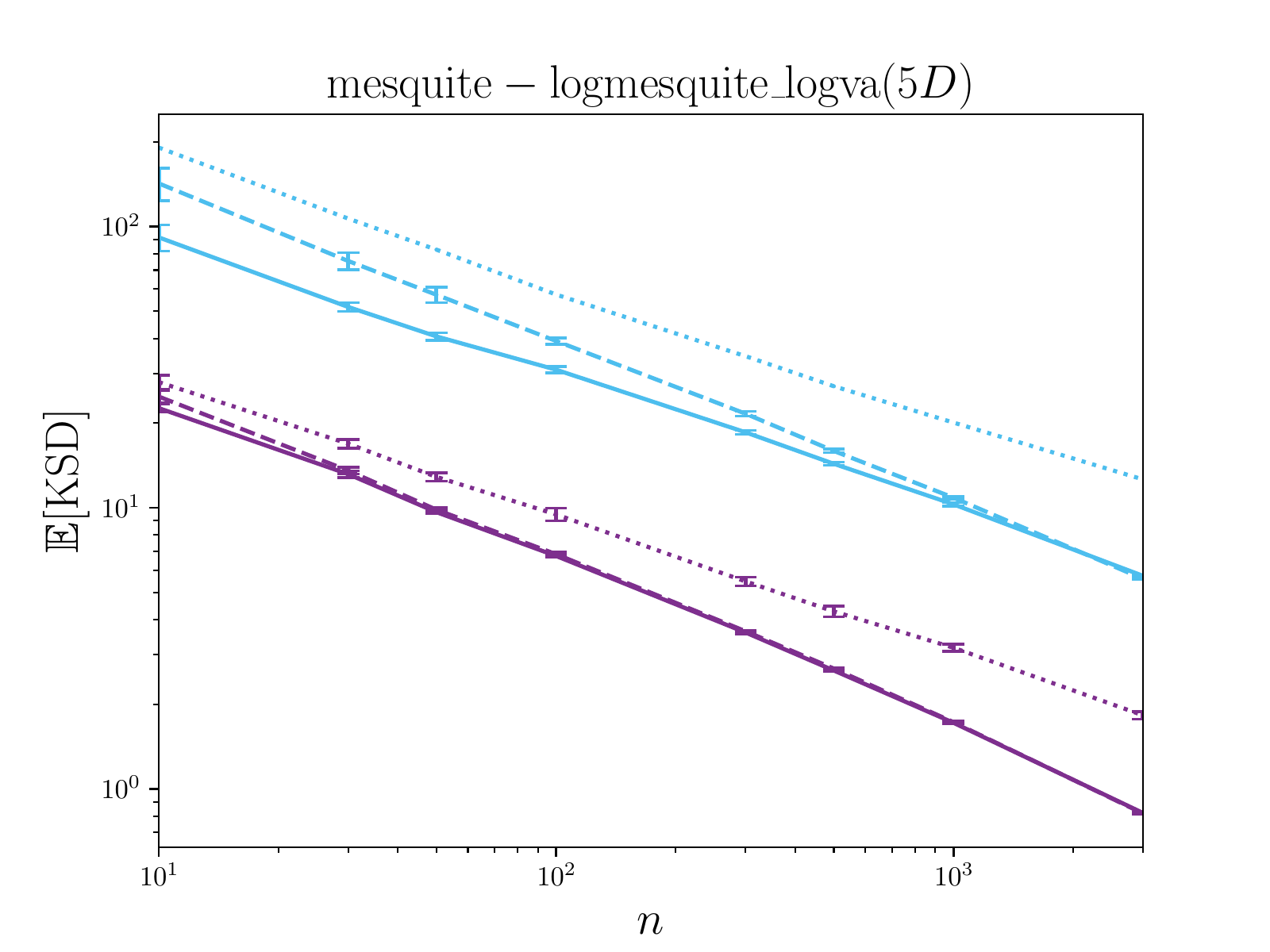}}%
\figureSeriesElement{}{\includegraphics[width = 0.45\textwidth]{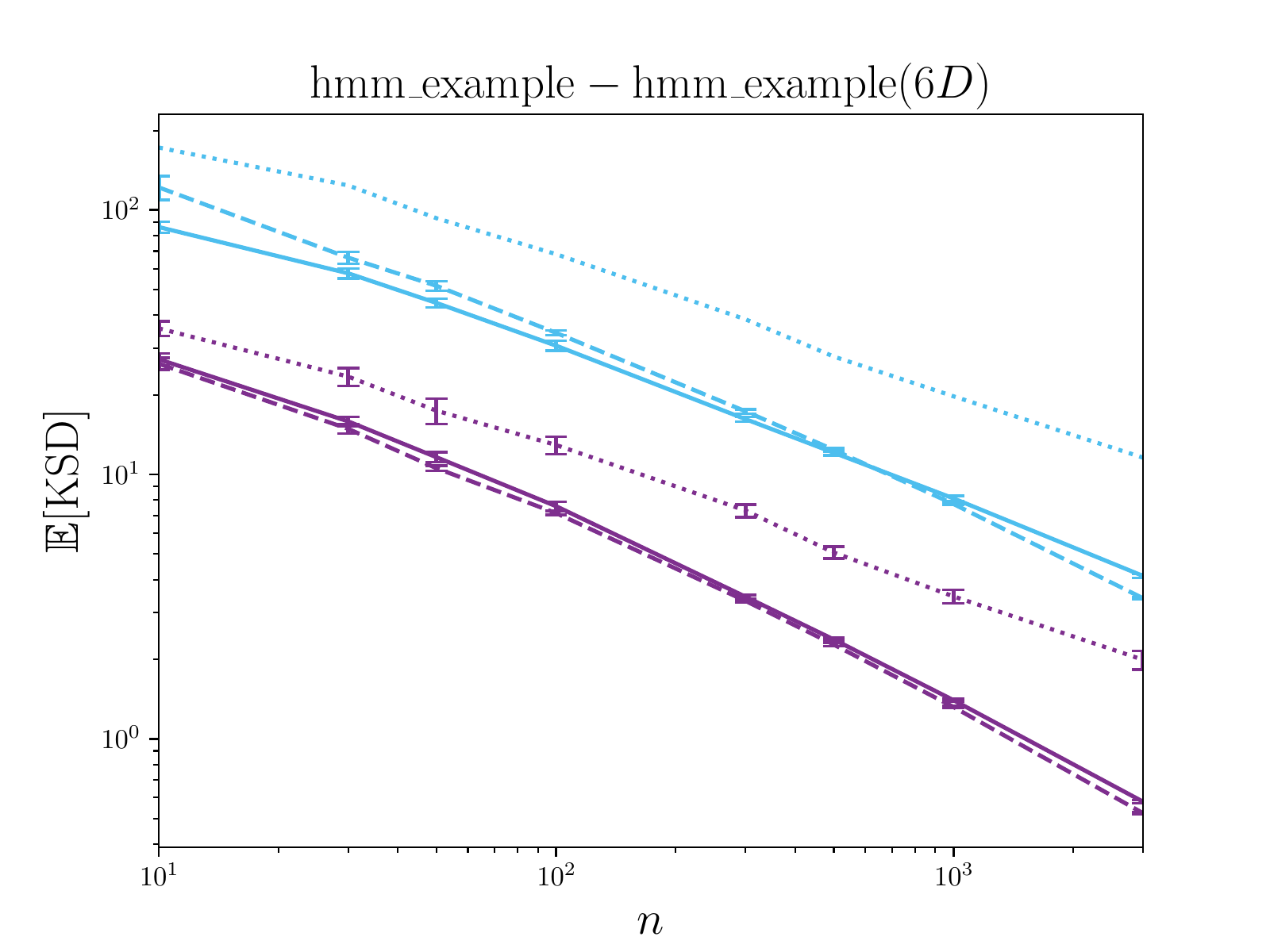}}%
}
\figureSeriesRow{%
\figureSeriesElement{}{\includegraphics[width = 0.45\textwidth]{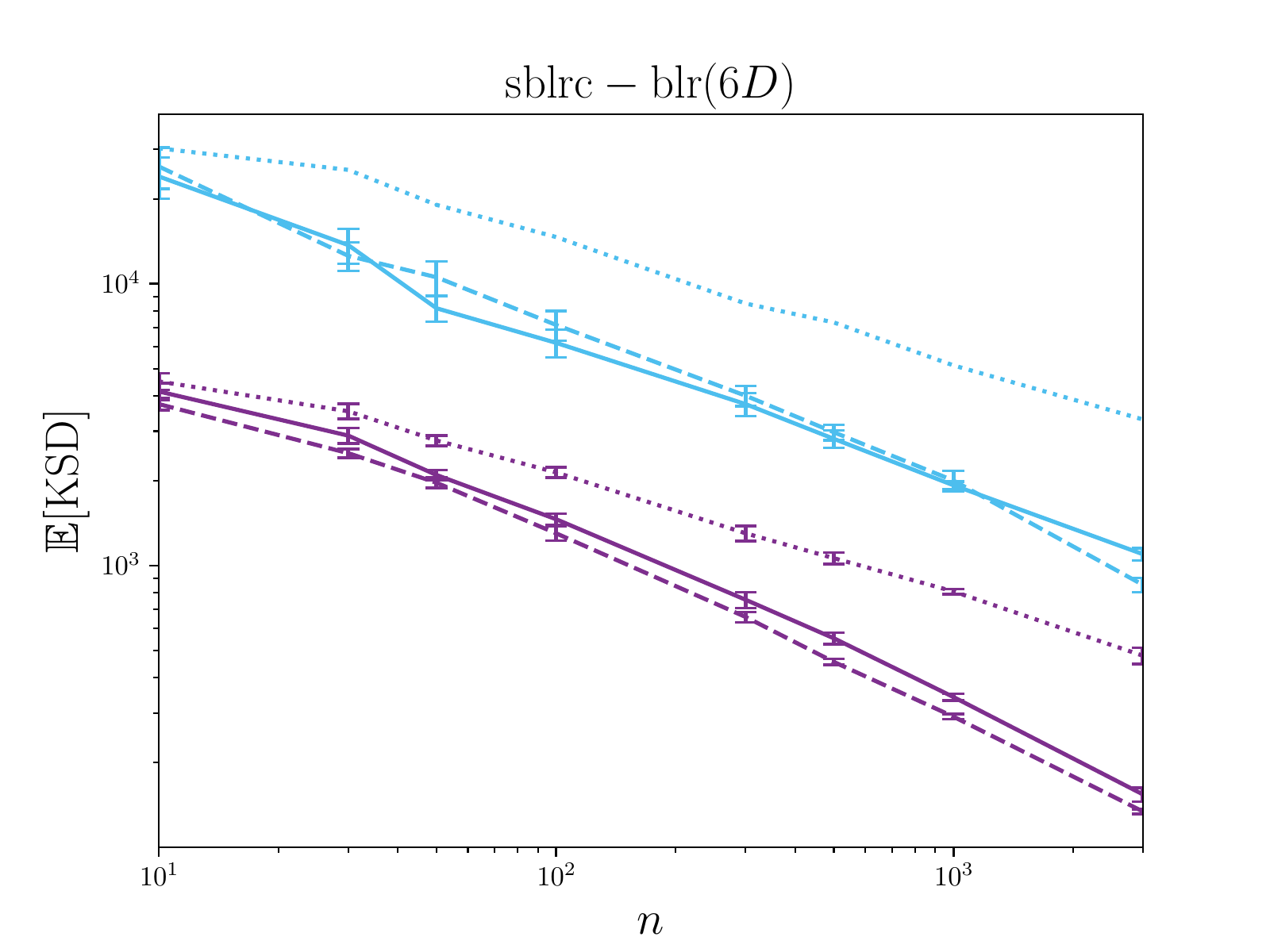}}%
\figureSeriesElement{}{\includegraphics[width = 0.45\textwidth]{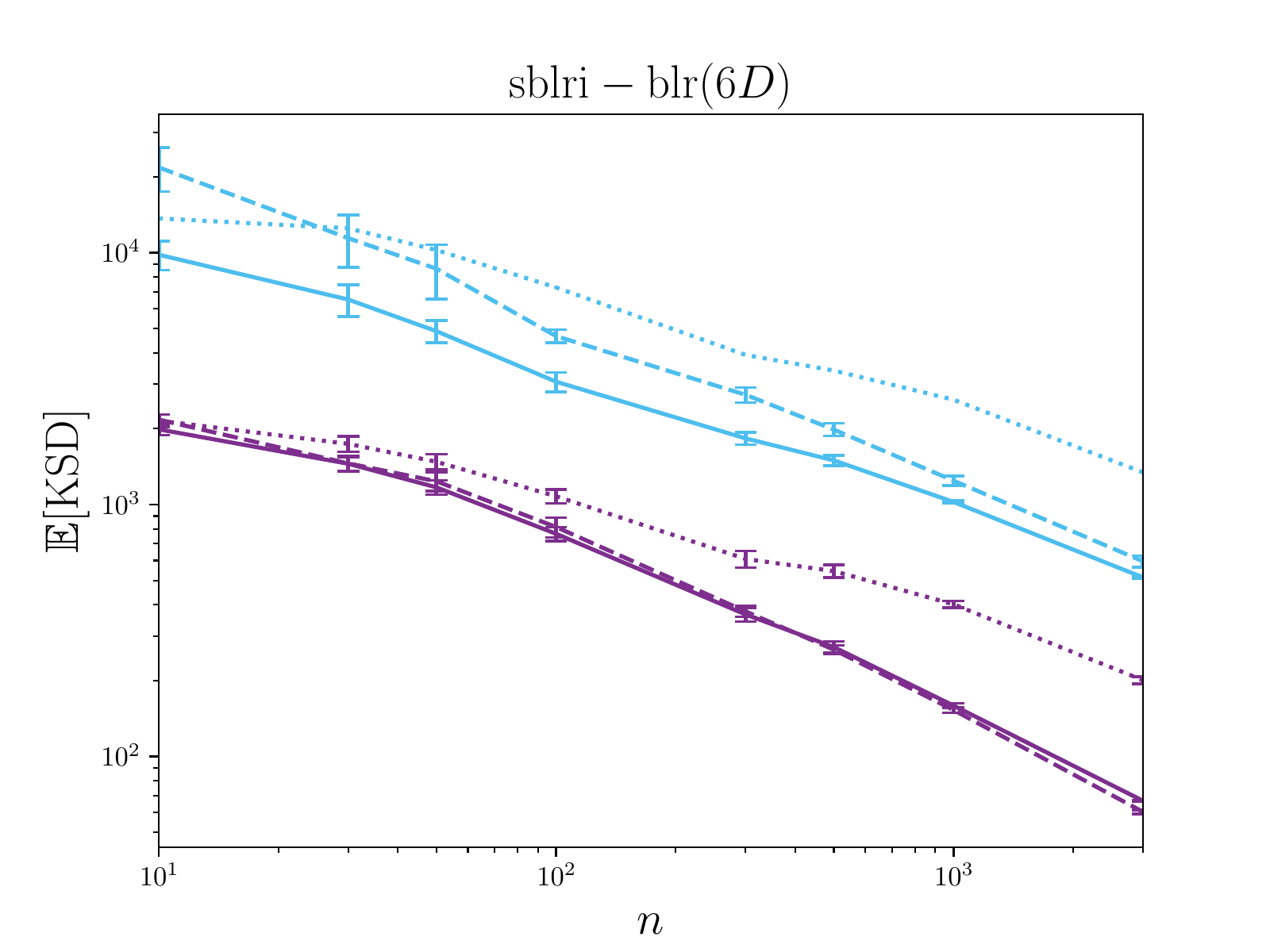}}%
}
\figureSeriesRow{%
\figureSeriesElement{}{\includegraphics[width = 0.45\textwidth]{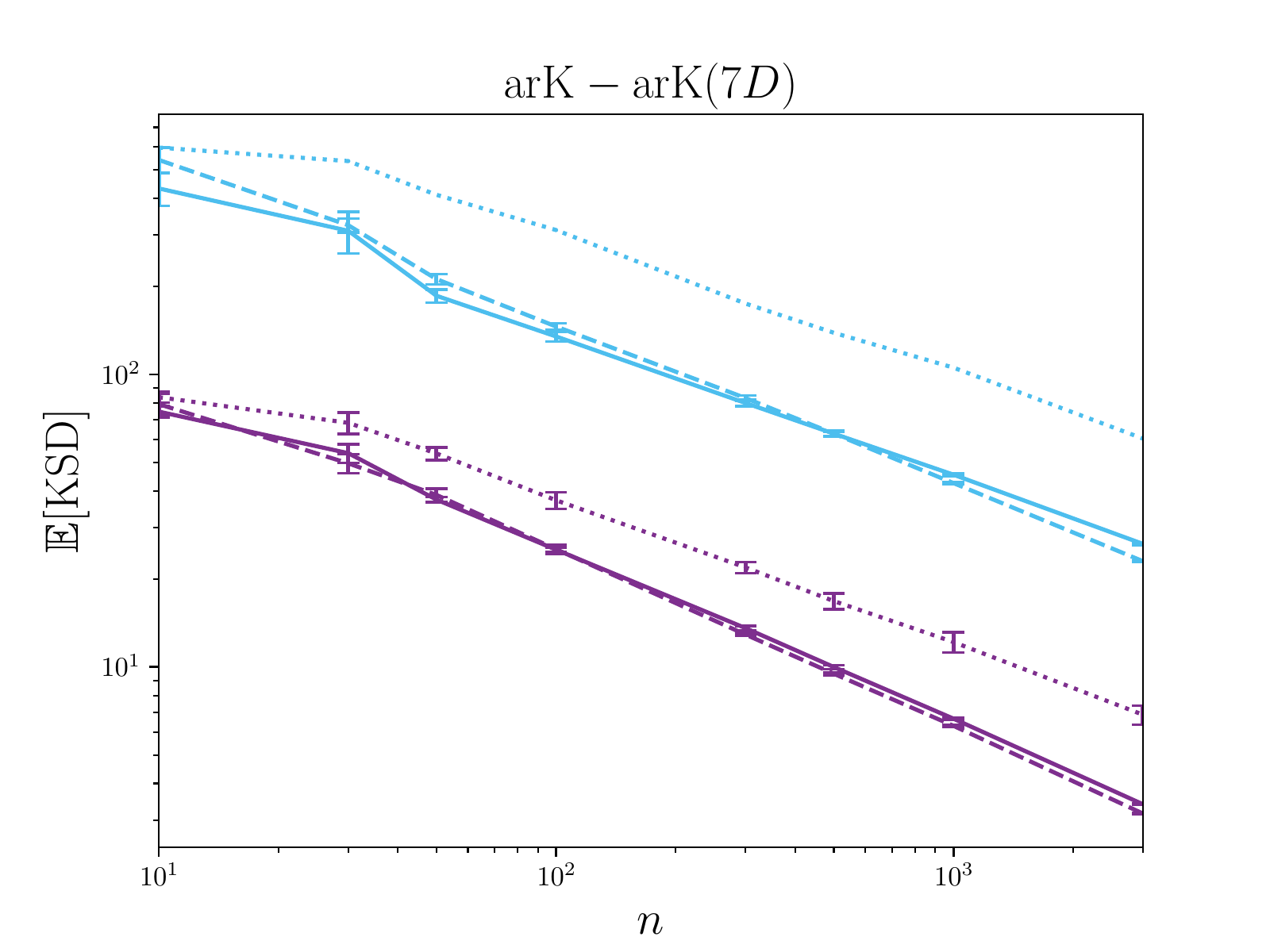}}%
\figureSeriesElement{}{\includegraphics[width = 0.45\textwidth]{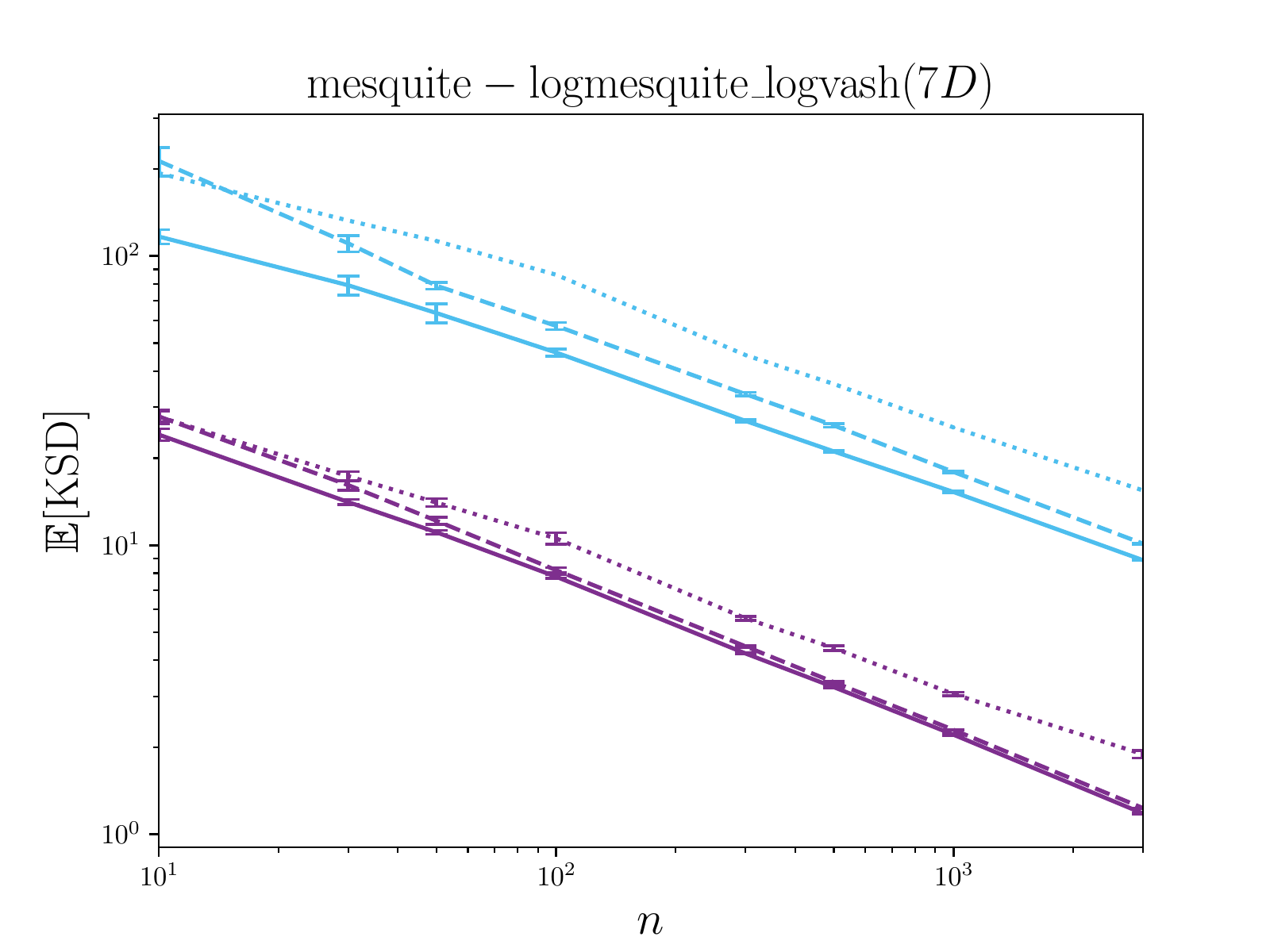}}%
}
\figureSeriesRow{%
\figureSeriesElement{}{\includegraphics[width = 0.45\textwidth]{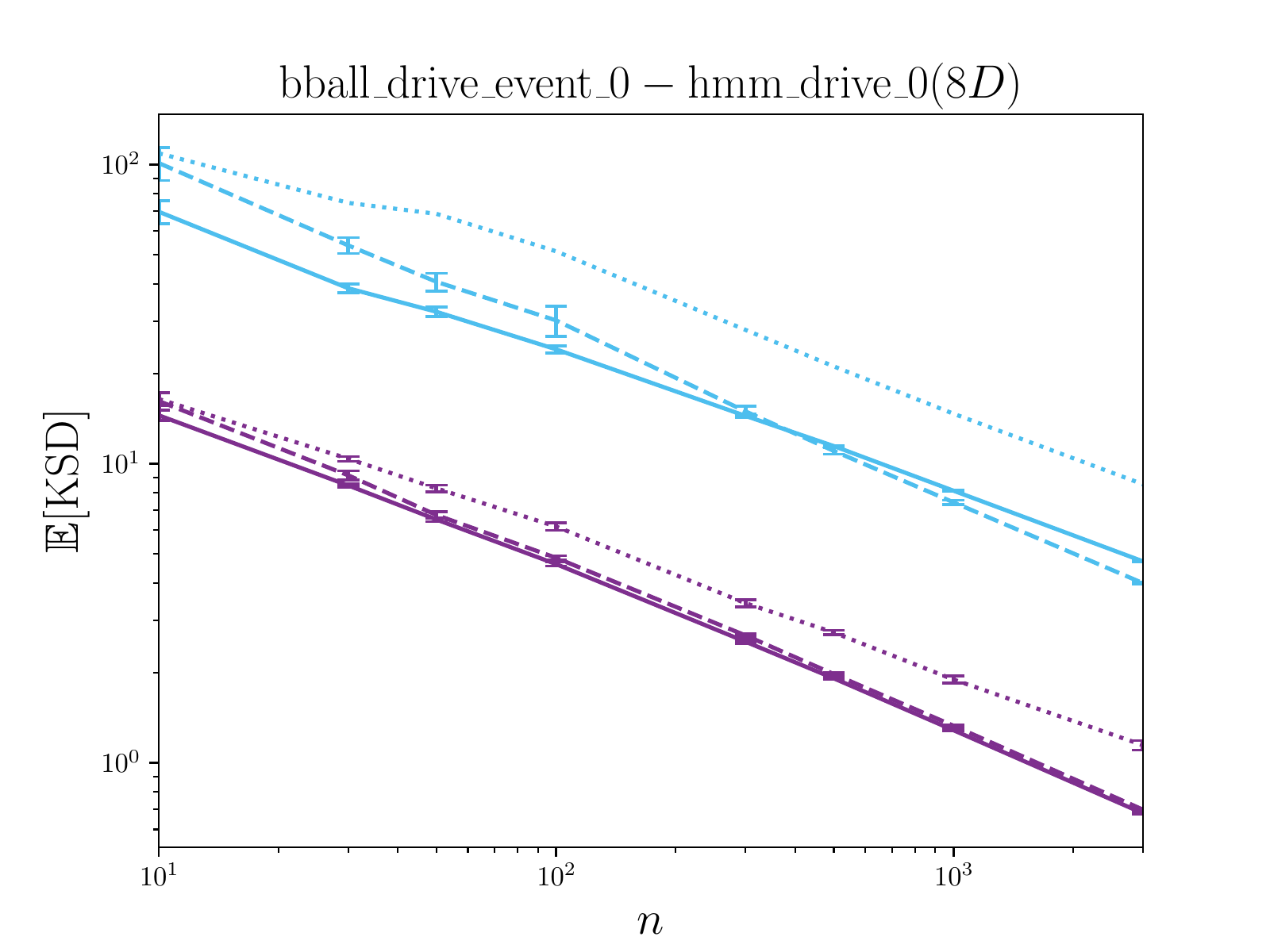}}%
\figureSeriesElement{}{\includegraphics[width = 0.45\textwidth]{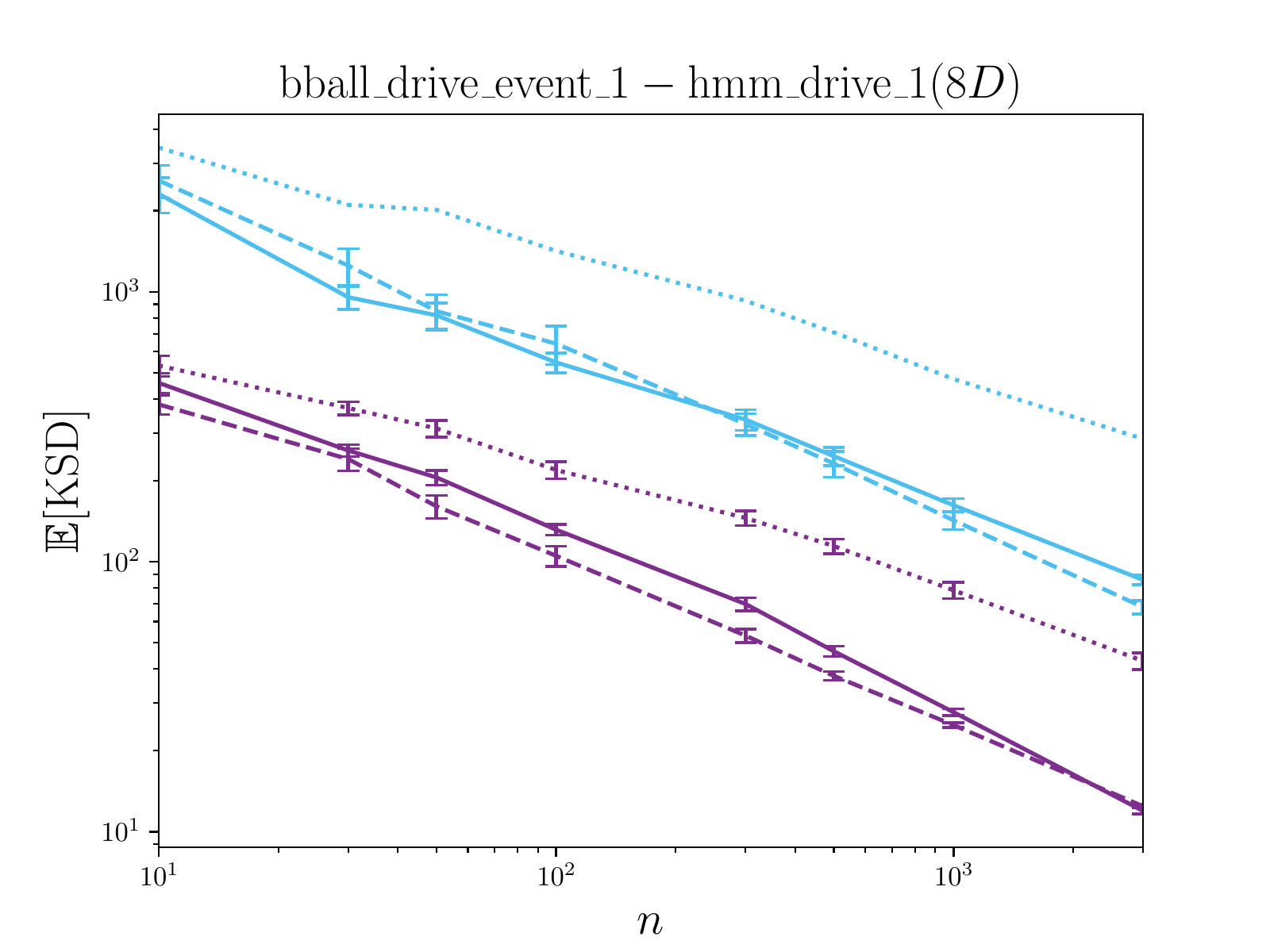}}%
}
\figureSeriesRow{%
\figureSeriesElement{}{\includegraphics[width = 0.45\textwidth]{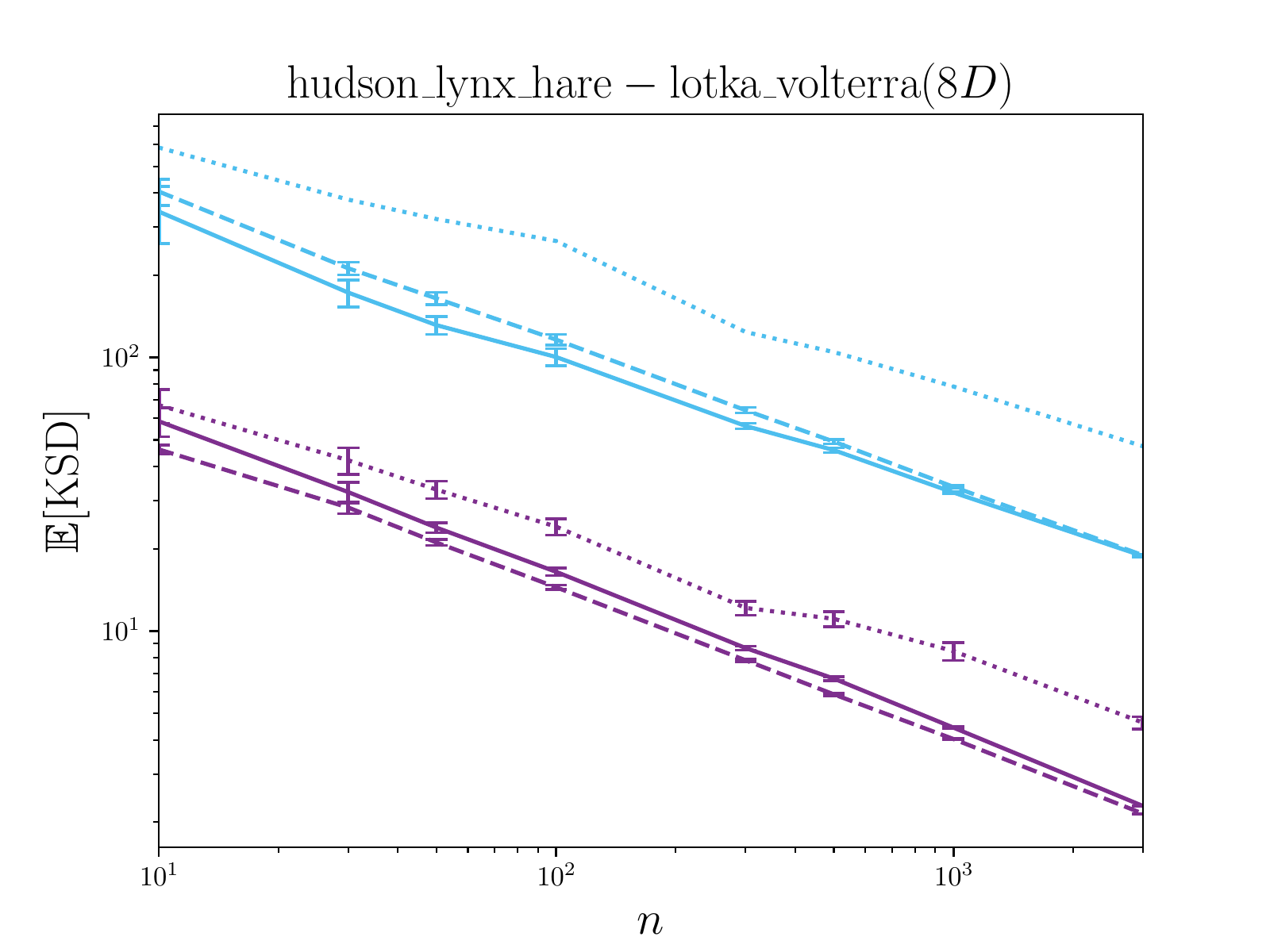}}%
\figureSeriesElement{}{\includegraphics[width = 0.45\textwidth]{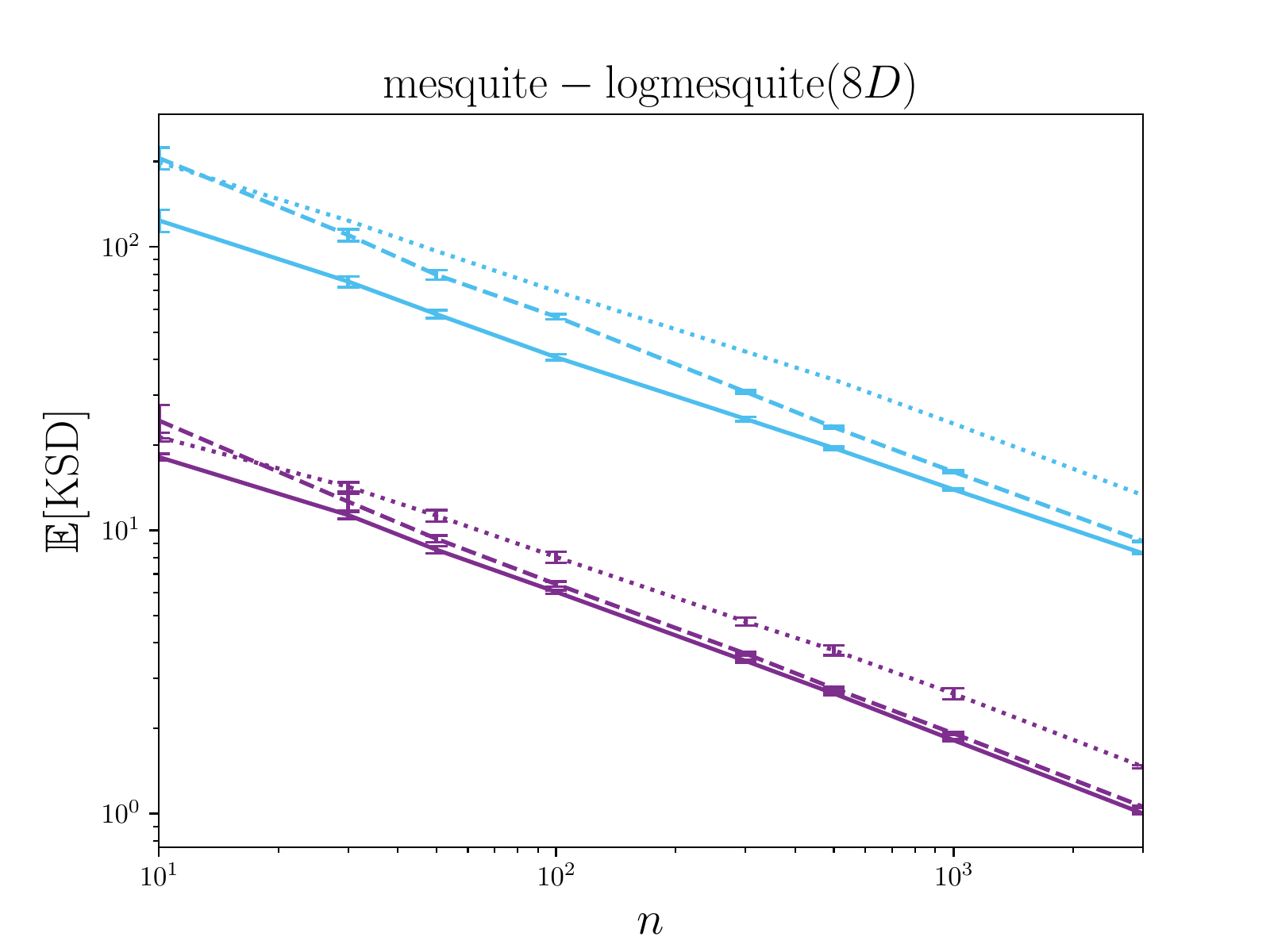}}%
}
\figureSeriesRow{%
\figureSeriesElement{}{\includegraphics[width = 0.45\textwidth]{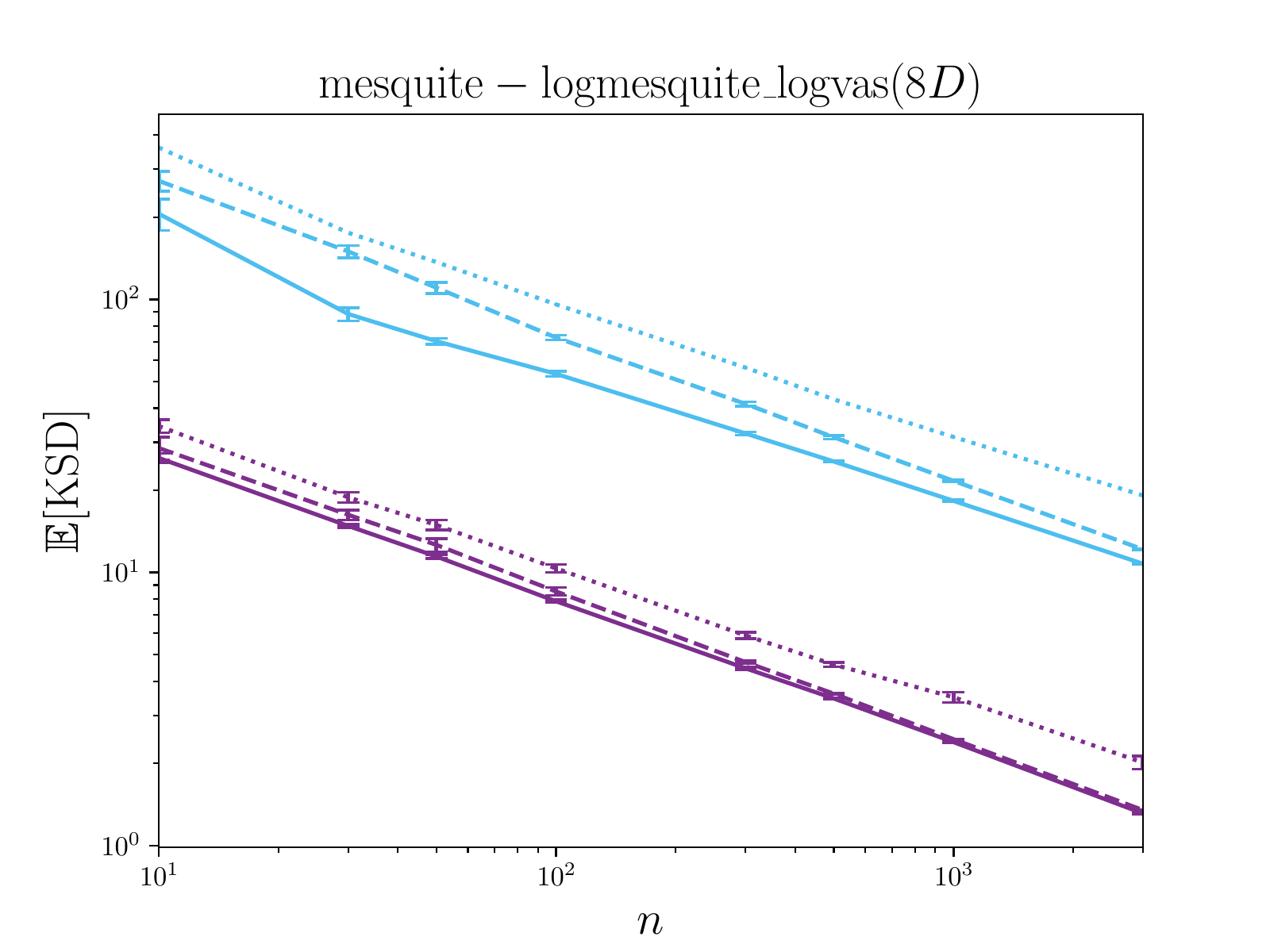}}%
\figureSeriesElement{}{\includegraphics[width = 0.45\textwidth]{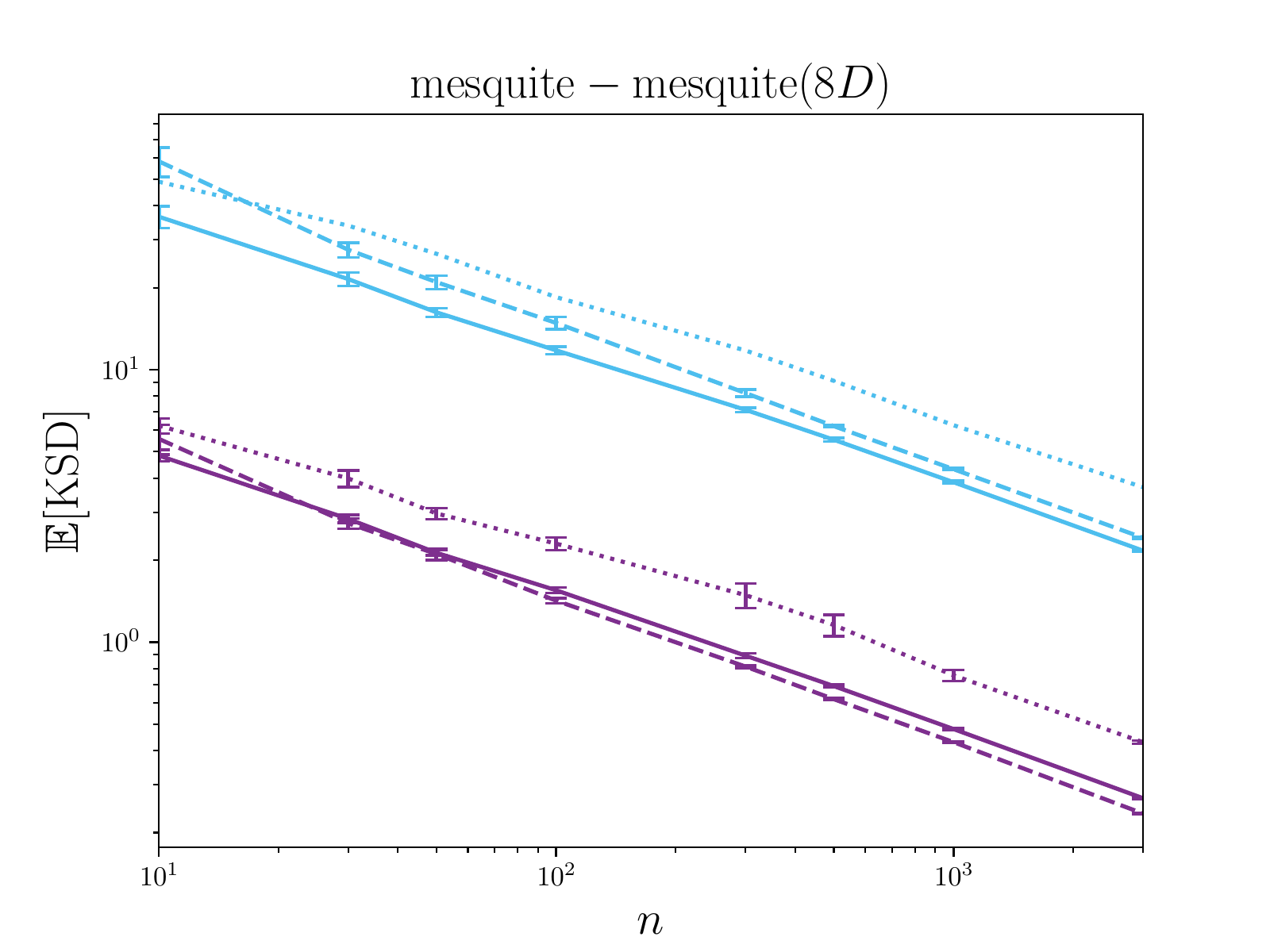}}%
}
\figureSeriesRow{%
\figureSeriesElement{}{\includegraphics[width = 0.45\textwidth]{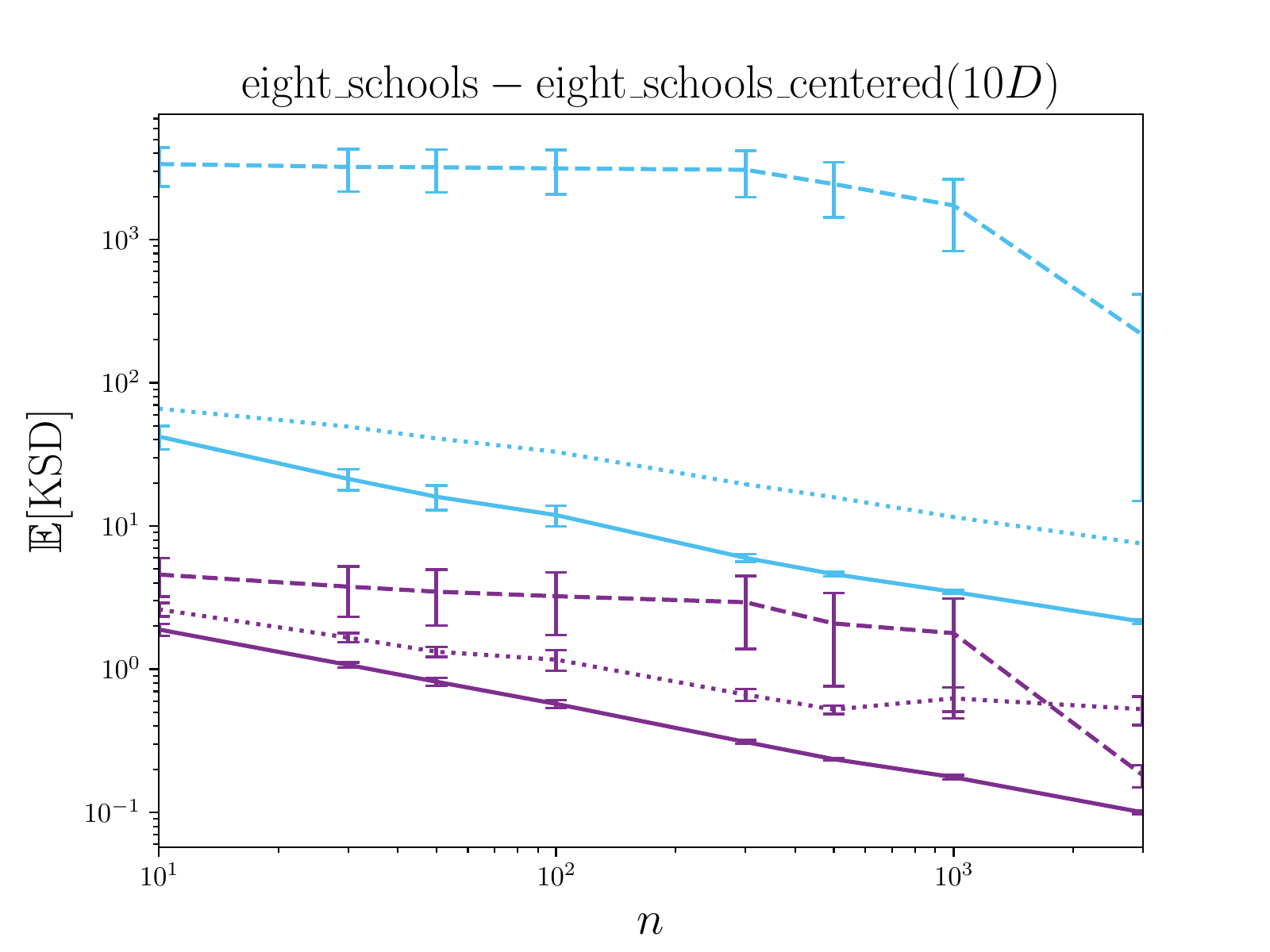} \label{fig: 8 schools}}%
\figureSeriesElement{}{\includegraphics[width = 0.45\textwidth]{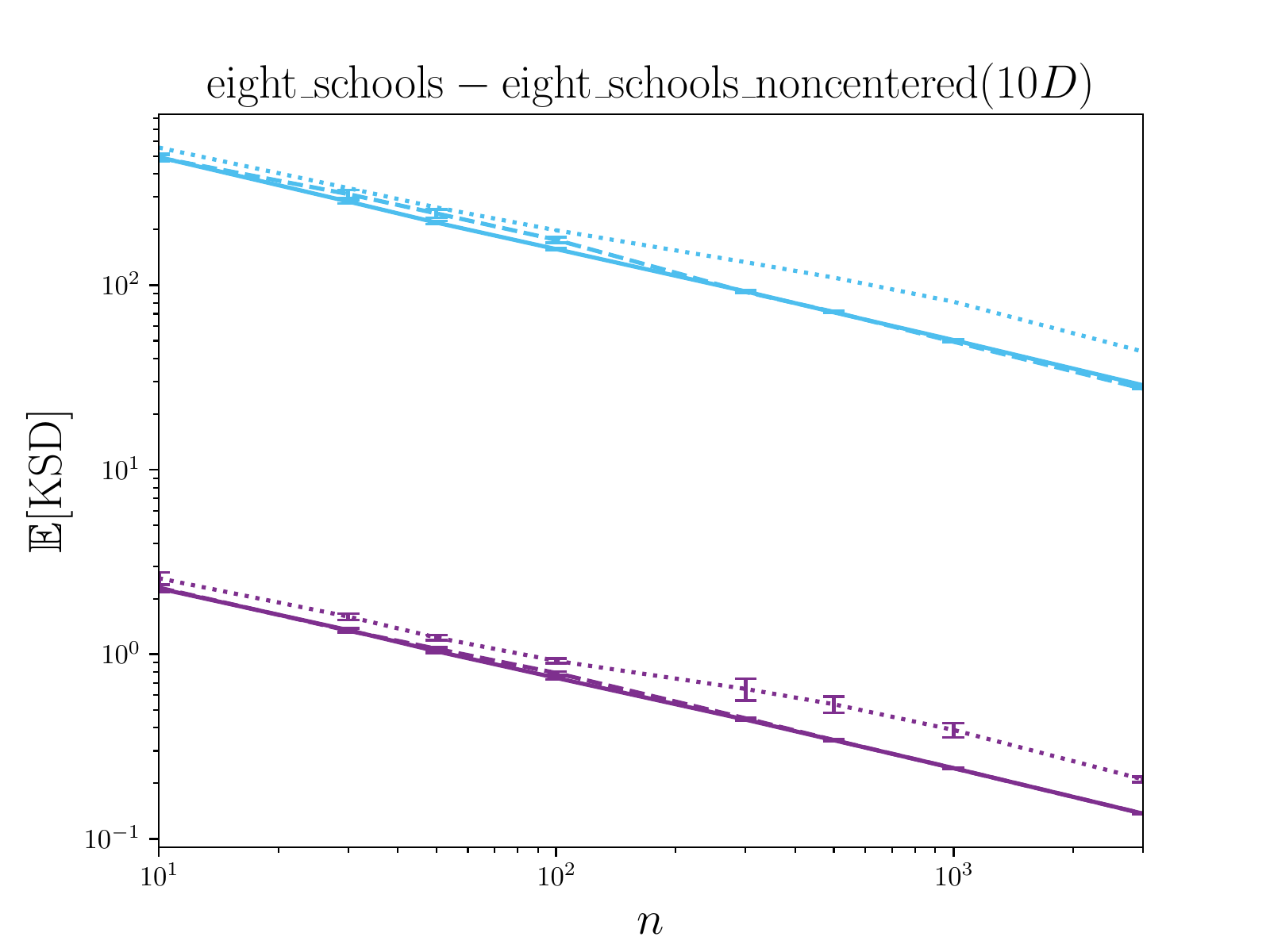}}%
}
\figureSeriesRow{%
\figureSeriesElement{}{\includegraphics[width = 0.45\textwidth]{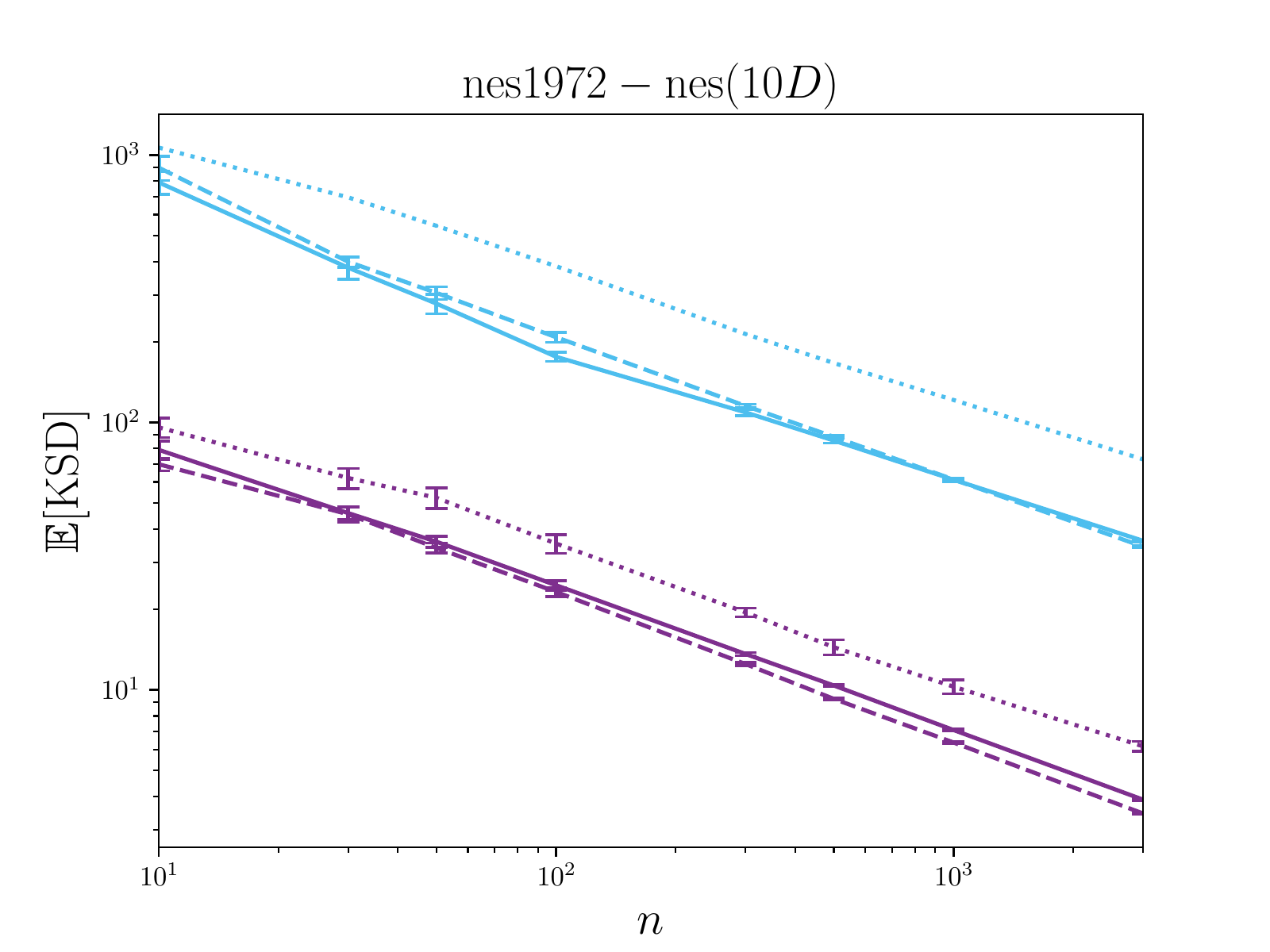}}%
\figureSeriesElement{}{\includegraphics[width = 0.45\textwidth]{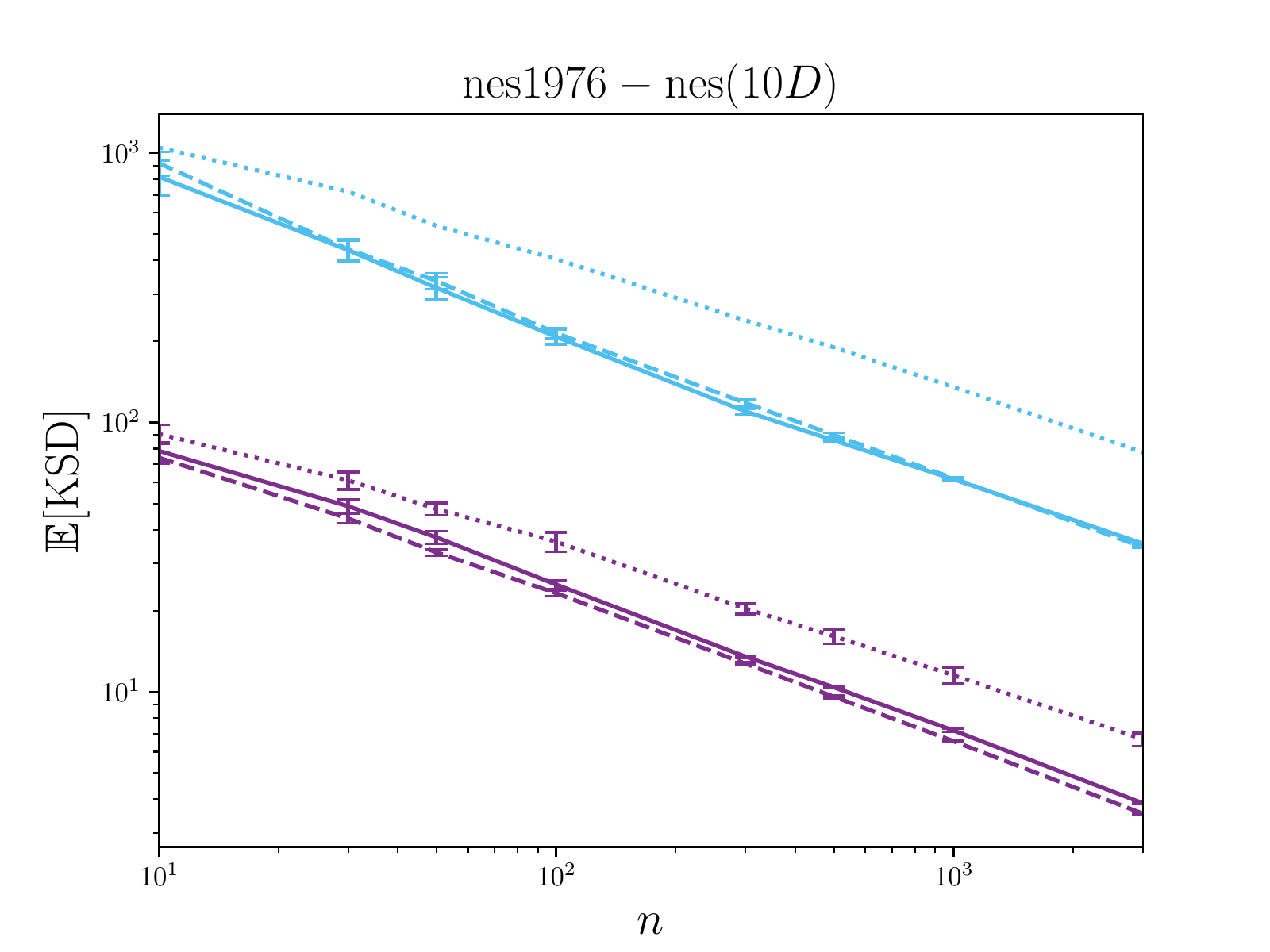}}%
}
\figureSeriesRow{%
\figureSeriesElement{}{\includegraphics[width = 0.45\textwidth]{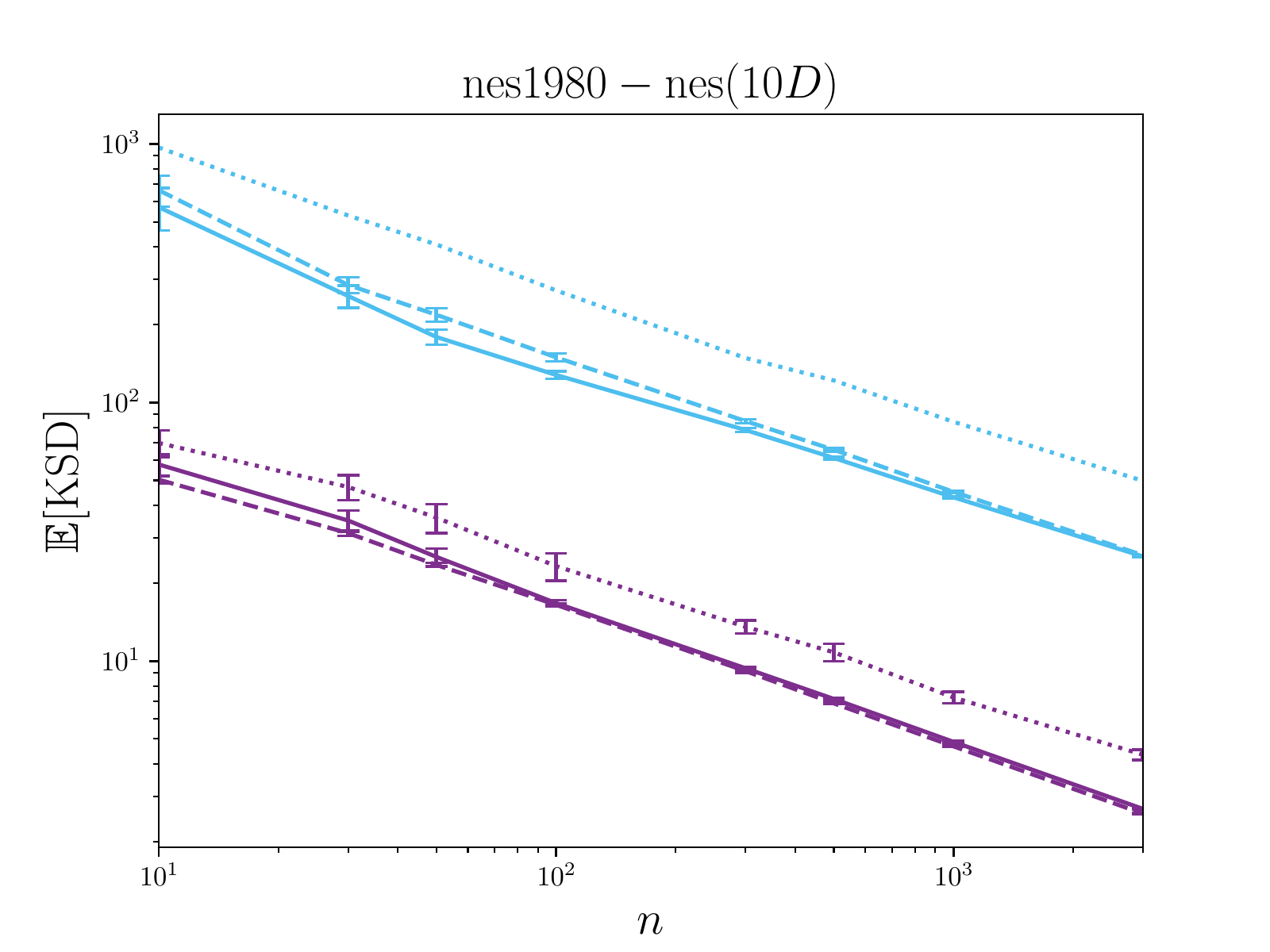}}%
\figureSeriesElement{}{\includegraphics[width = 0.45\textwidth]{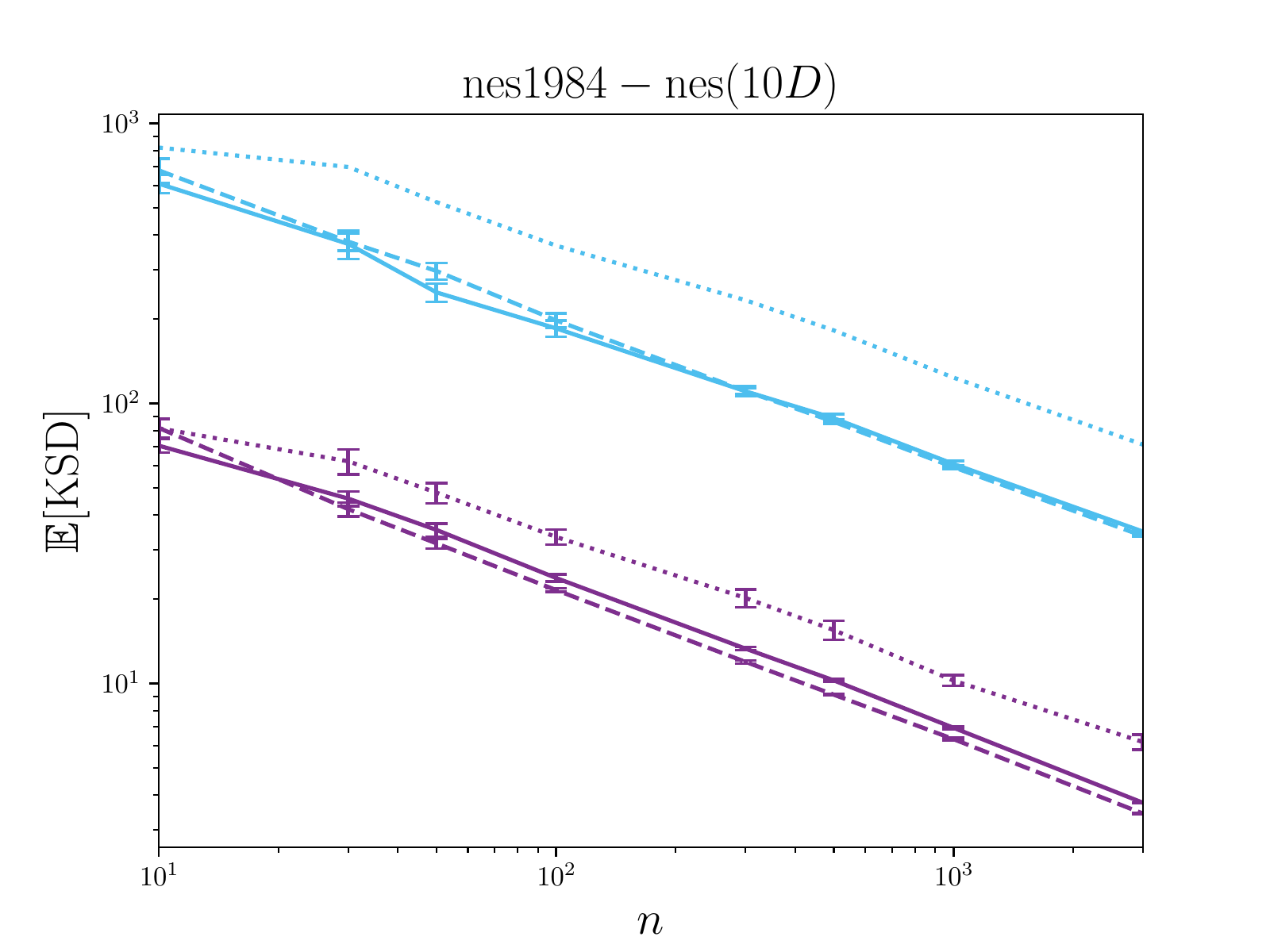}}%
}
\figureSeriesRow{%
\figureSeriesElement{}{\includegraphics[width = 0.45\textwidth]{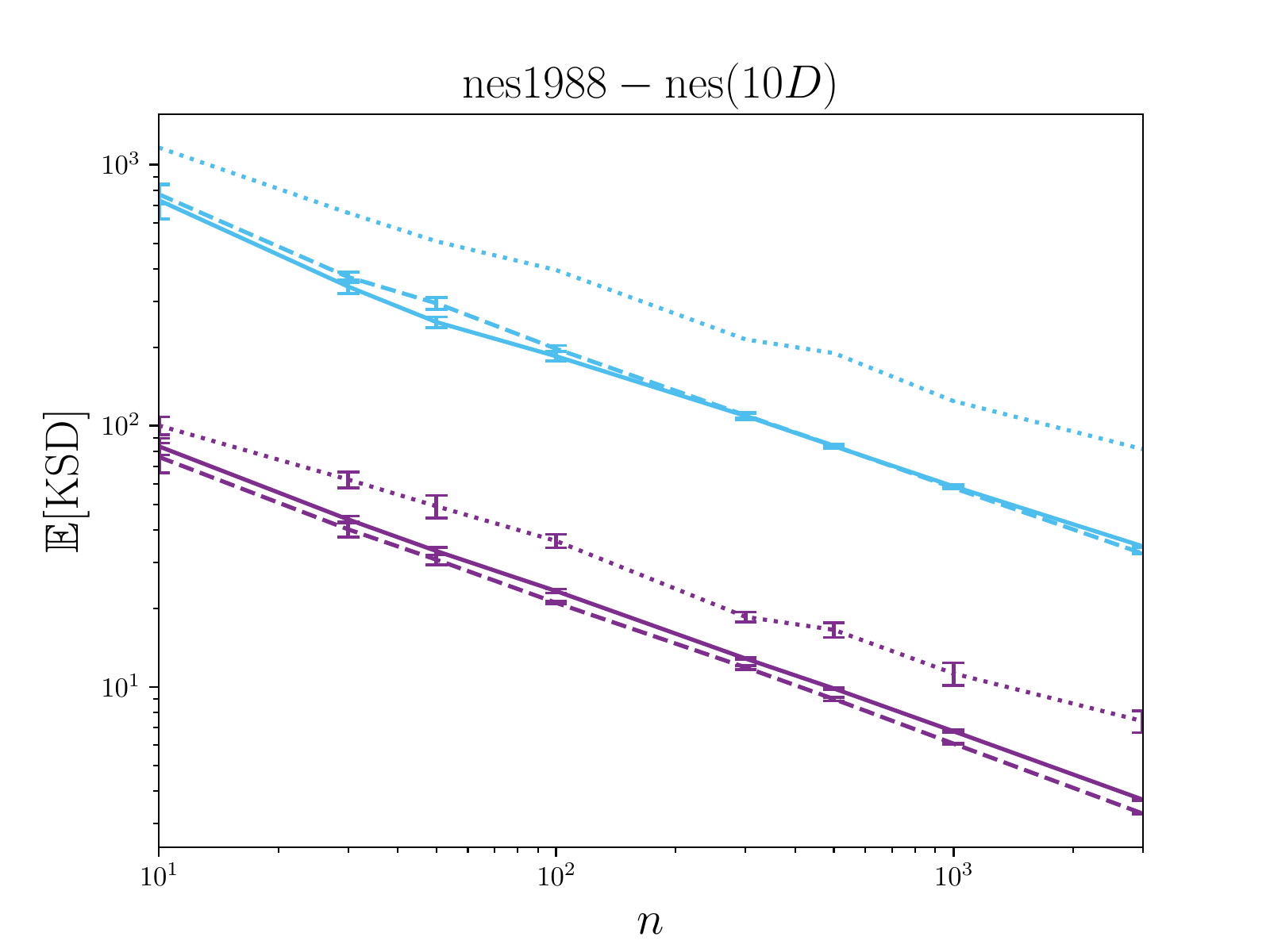}}%
\figureSeriesElement{}{\includegraphics[width = 0.45\textwidth]{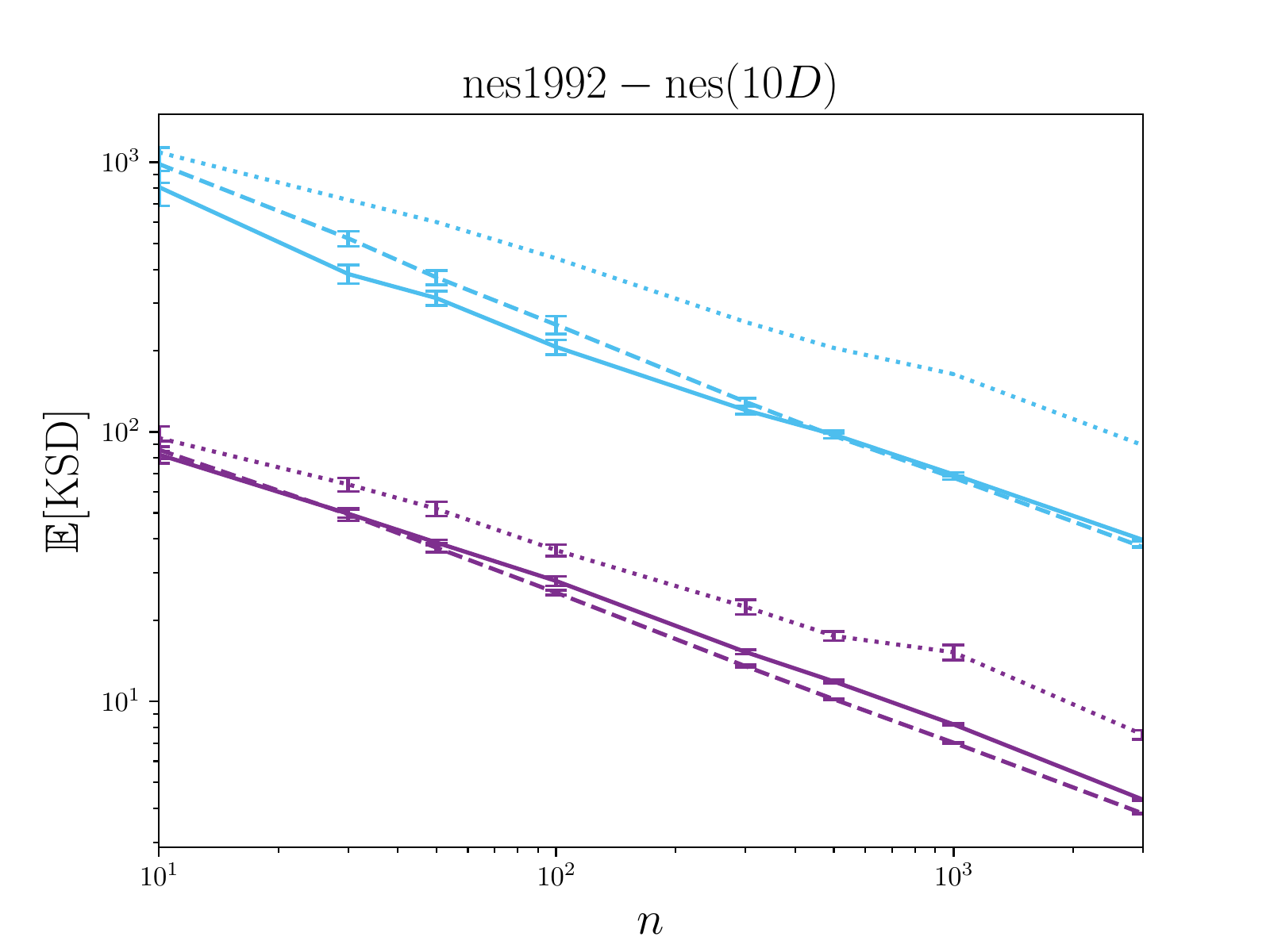}}%
}
\figureSeriesRow{%
\figureSeriesElement{}{\includegraphics[width = 0.45\textwidth]{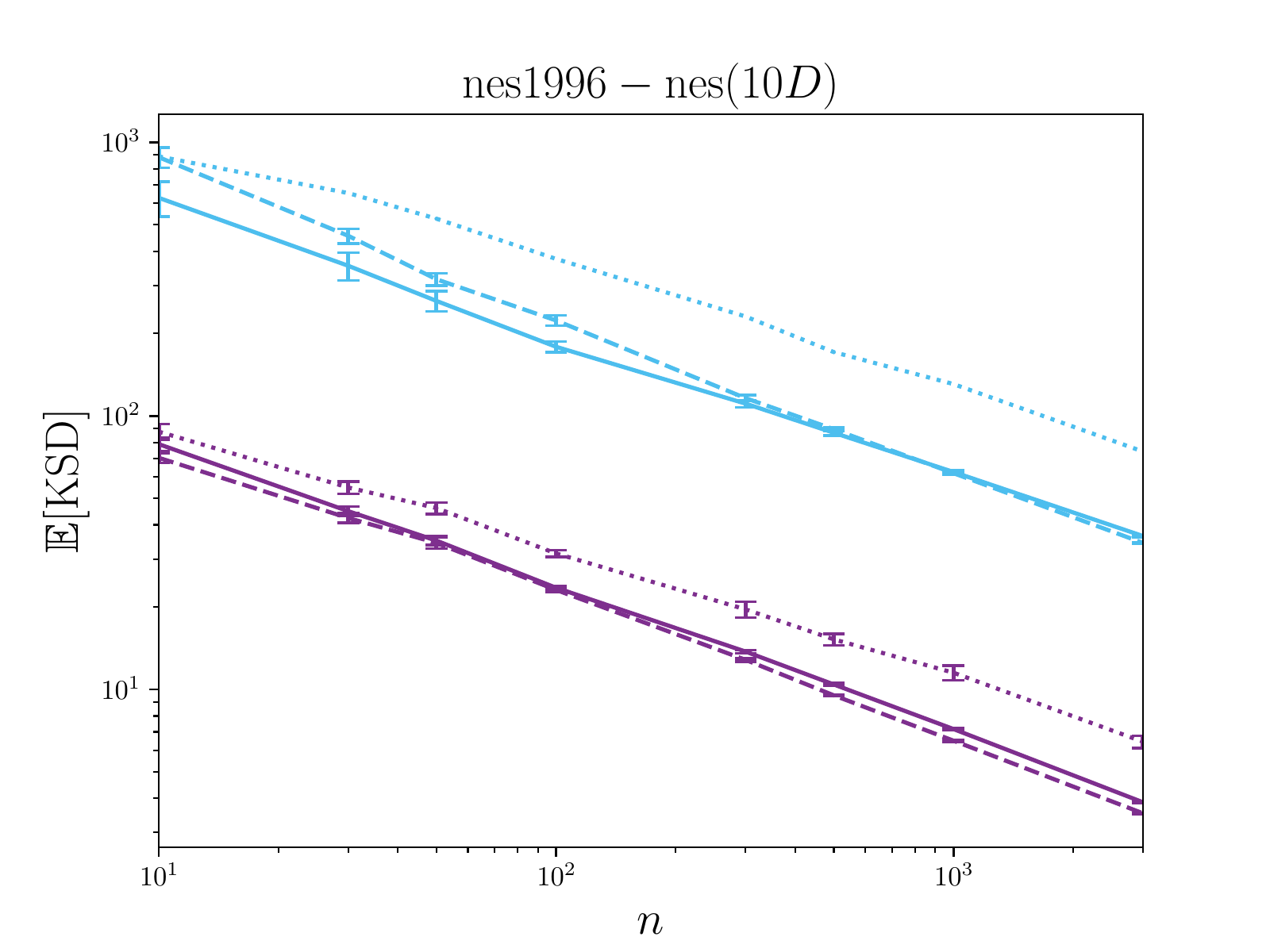}}%
\figureSeriesElement{}{\includegraphics[width = 0.45\textwidth]{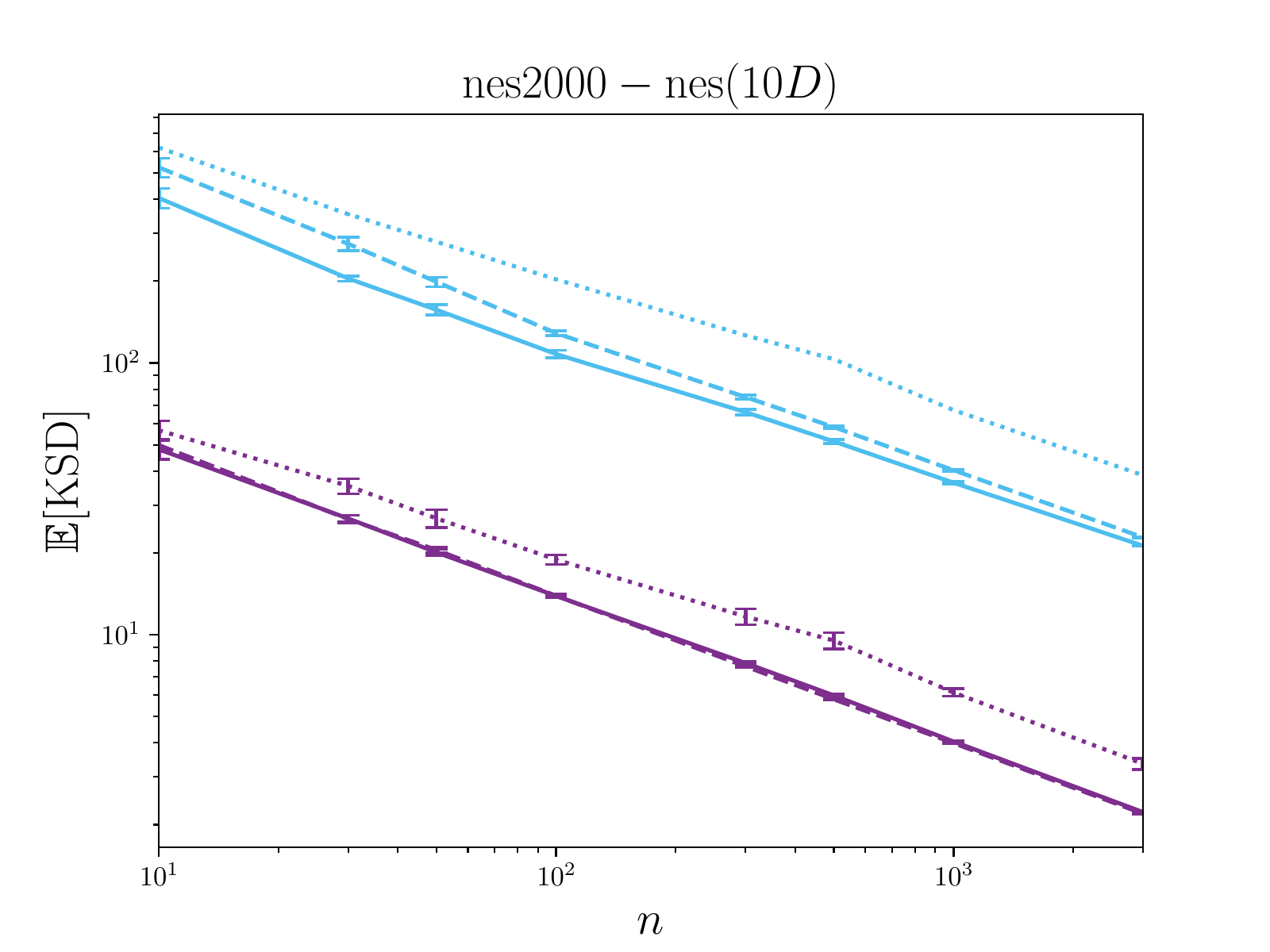}}%
}
\figureSeriesRow{%
\figureSeriesElement{}{\includegraphics[width = 0.45\textwidth]{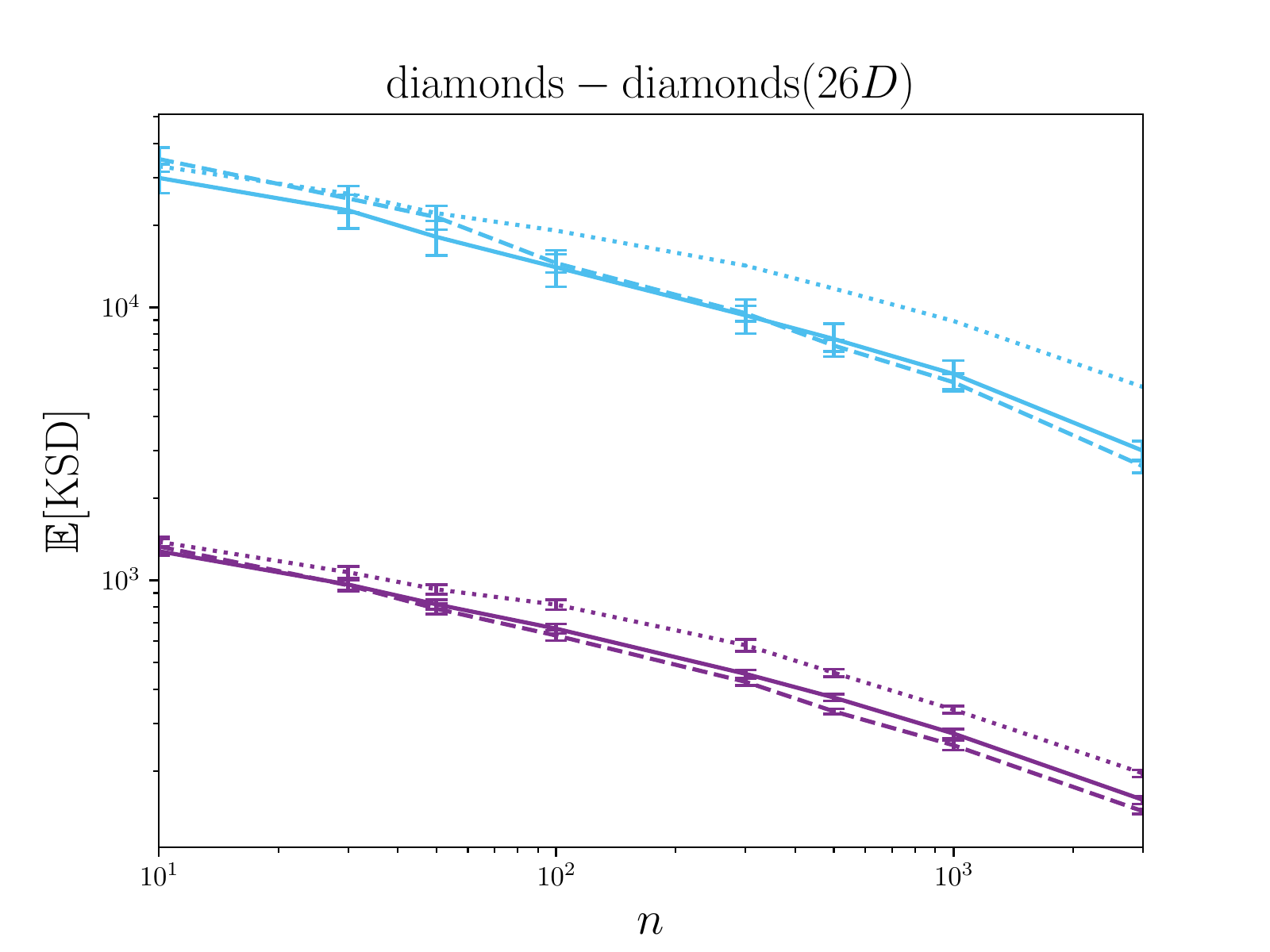}}%
\figureSeriesElement{}{\includegraphics[width = 0.45\textwidth]{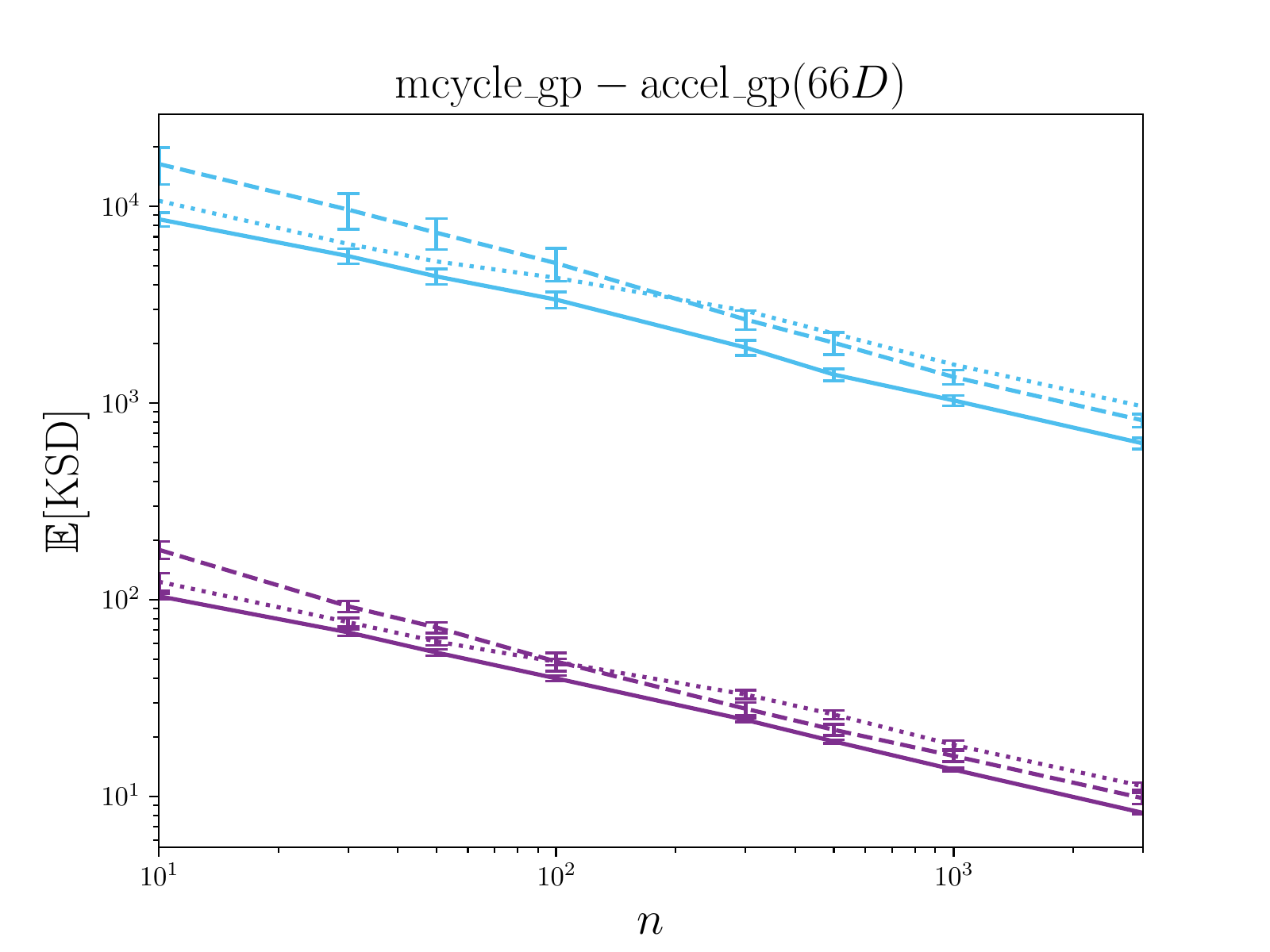}}%
}
}%

\subsection{Stein $\Pi$-Thinning for \texttt{PosteriorDB} }
\label{app: sparse posteriorDB}

The results presented in the main text concerned $n = 3 \times 10^3$ samples from \ac{mala}, which is near the limit at which the optimal weights $w^\star$ can be computed in a few seconds on a laptop PC.
For larger values of $n$, sparse approximation methods are likely to required.
In the main text we presented \emph{Stein $\Pi$-Thinning}, which employs a greedy optimisation perspective to obtain a sparse approximation to the optimal weights at cost $O(m^2 n)$, where $m$ are the number of greedy iterations performed.
Explicit and verifiable conditions for the strong consistency of the resulting \textsc{S$\Pi$T-MALA} algorithm were established in \Cref{subsec: convergence}.
The purpose of this appendix is to empirically explore the convergence of \textsc{S$\Pi$T-MALA} using the \texttt{PosteriorDB} test bed.

In the experiments we report the number of \ac{mala} samples was fixed to $n = 10^3$ and the number of greedy iterations was varied from $m = 1$ to $m = 10^3$.
The results, in \Cref{fig: full posterior db thinning}, indicate that for most models in \texttt{PosteriorDB} the minimum value of \ac{ksd} is approximately reached when $m$ is anywhere from $\frac{n}{10}$ to $\frac{n}{2}$, representing a modest but practically significant reduction in computational cost compared to \textsc{S$\Pi$IS-MALA}.
This agrees with the qualitative findings reported in the original Stein thinning paper of \citet{riabiz2020optimal}.

\figureSeriesHere{%
\label{fig: full posterior db thinning}
Benchmarking on \texttt{PosteriorDB}. 
Here we investigate the convergence of the sparse approximation provided by the proposed Stein $\Pi$-Thinning method (\textsc{S$\Pi$T-MALA}).
The Langevin (purple) and KGM3--Stein kernels (blue) were used for \textsc{S$\Pi$T-MALA} and the associated \acp{ksd} are reported as the number $m$ of iterations of Stein thinning is varied.
Ten replicates were computed and standard errors were plotted.
The name of each model is shown in the title of the corresponding panel, and the dimension $d$ of the parameter vector is given in parentheses.
[Langevin--Stein kernel:\protect\solidpurpleline \textsc{S$\Pi$T-MALA}.
KGM3--Stein kernel:\protect\solidblueline \textsc{S$\Pi$T-MALA}.]
}{%
\figureSeriesRow{%
\figureSeriesElement{}{\includegraphics[width = 0.45\textwidth]{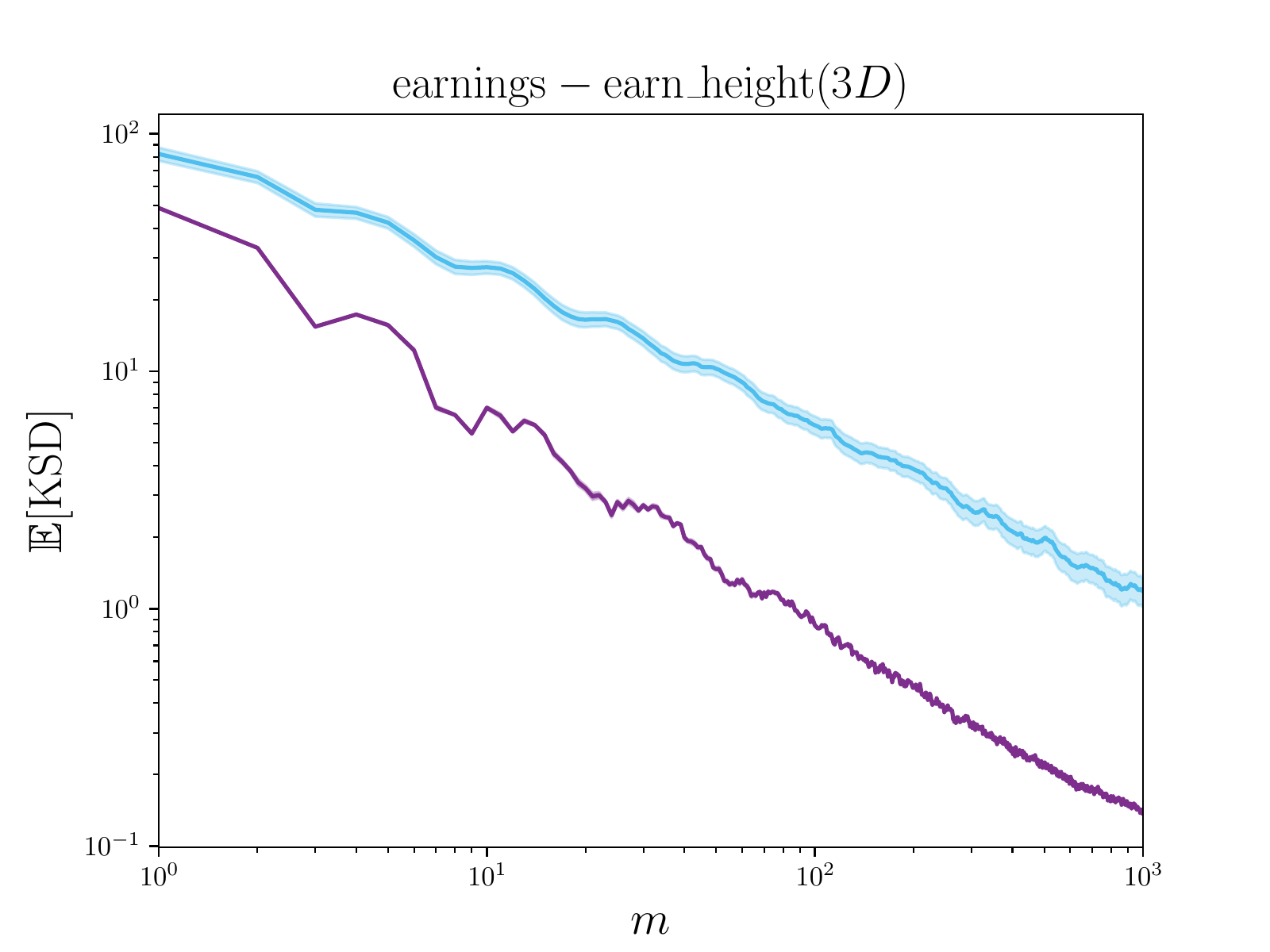}}%
\figureSeriesElement{}{\includegraphics[width = 0.45\textwidth]{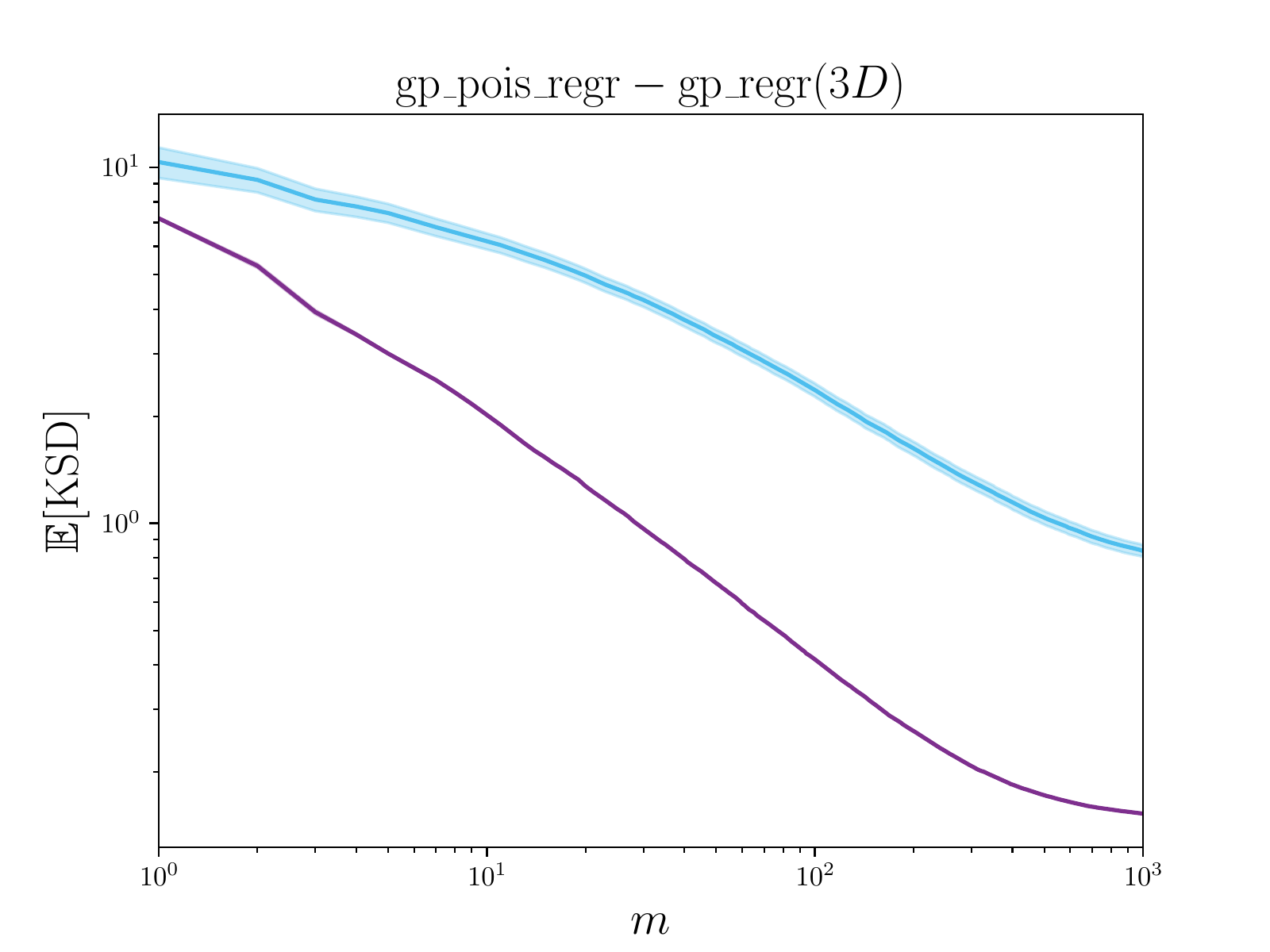}}%
}
\figureSeriesRow{%
\figureSeriesElement{}{\includegraphics[width = 0.45\textwidth]{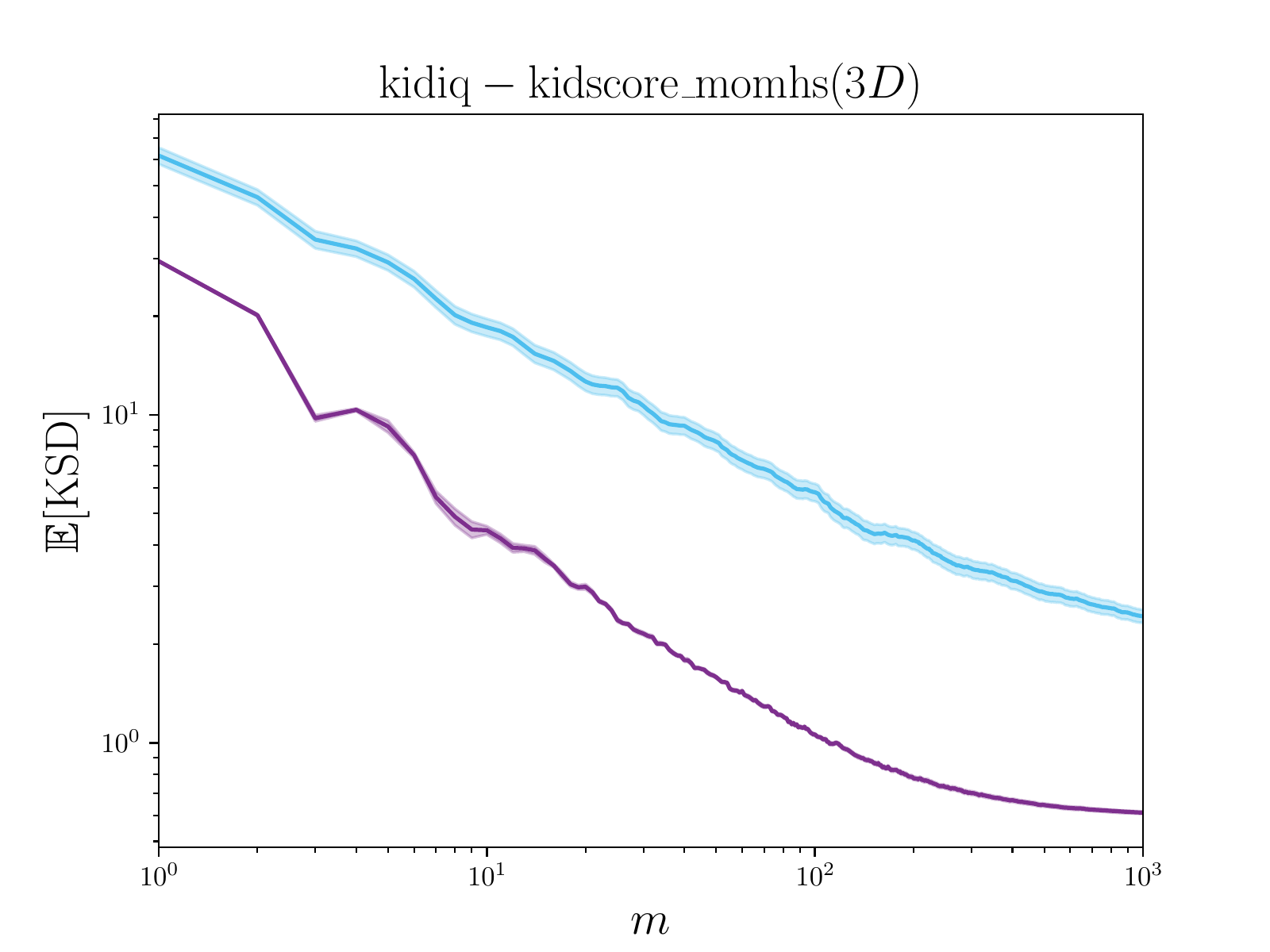}}%
\figureSeriesElement{}{\includegraphics[width = 0.45\textwidth]{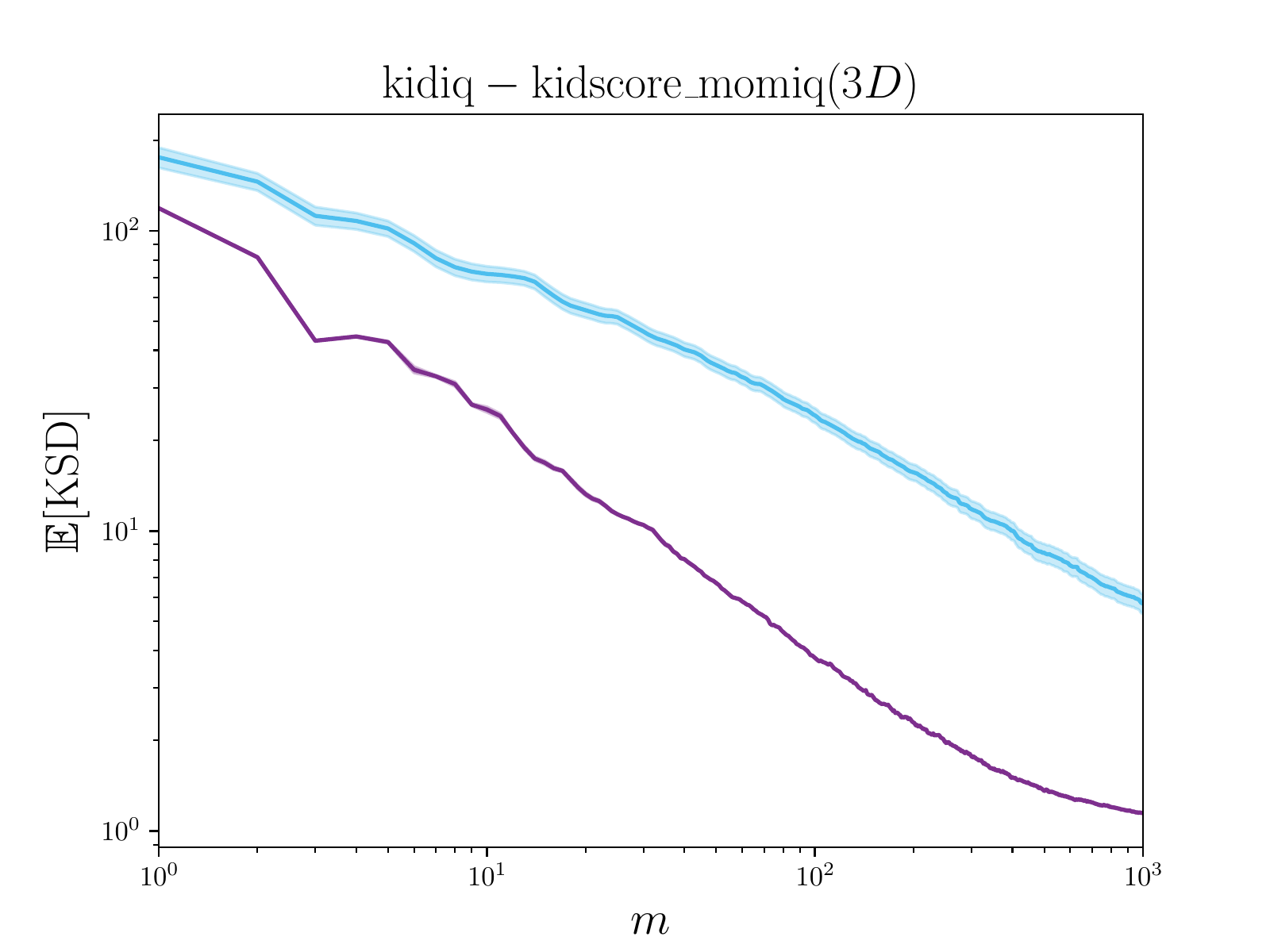}}%
}
\figureSeriesRow{%
\figureSeriesElement{}{\includegraphics[width = 0.45\textwidth]{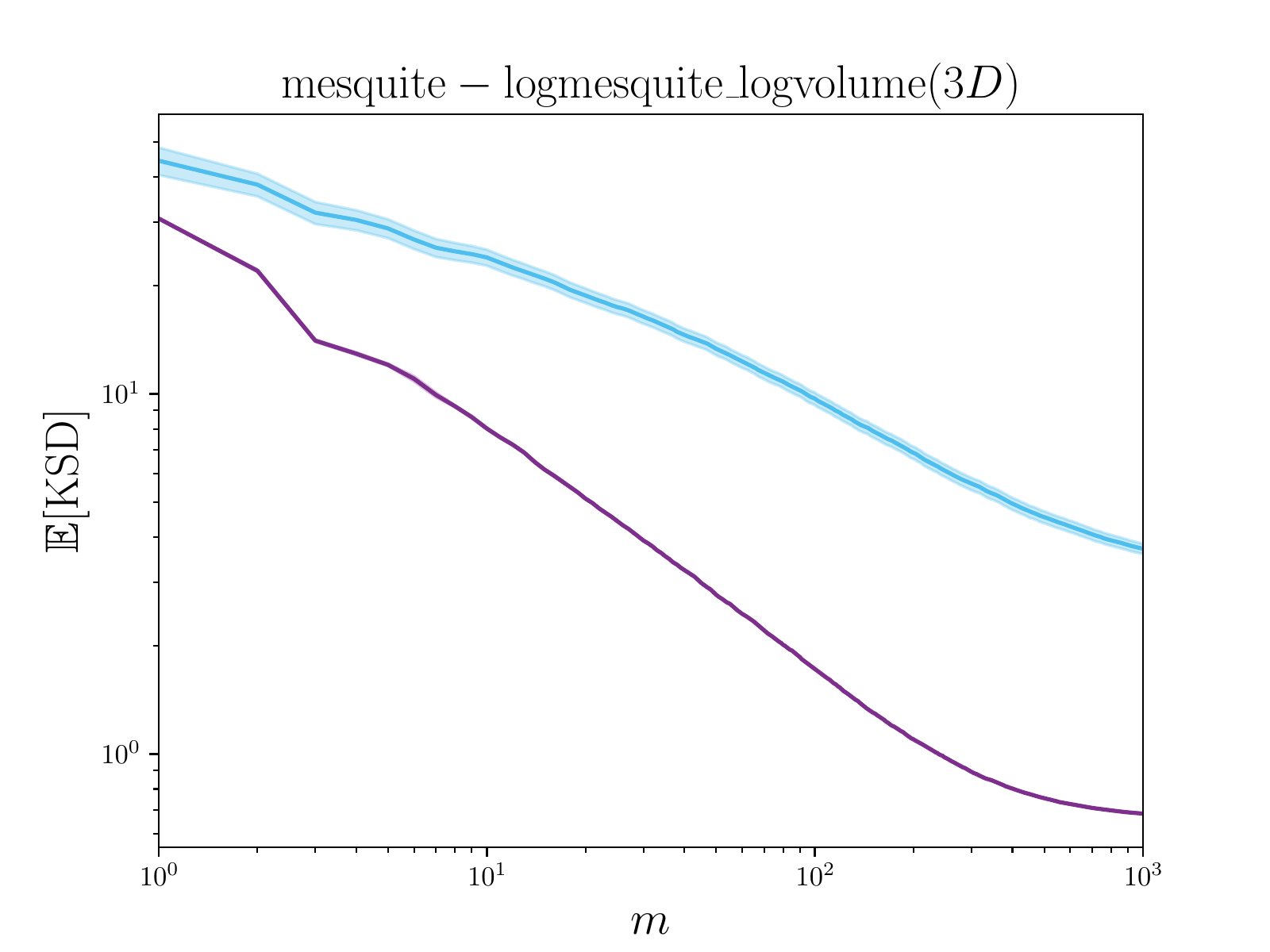}}%
\figureSeriesElement{}{\includegraphics[width = 0.45\textwidth]{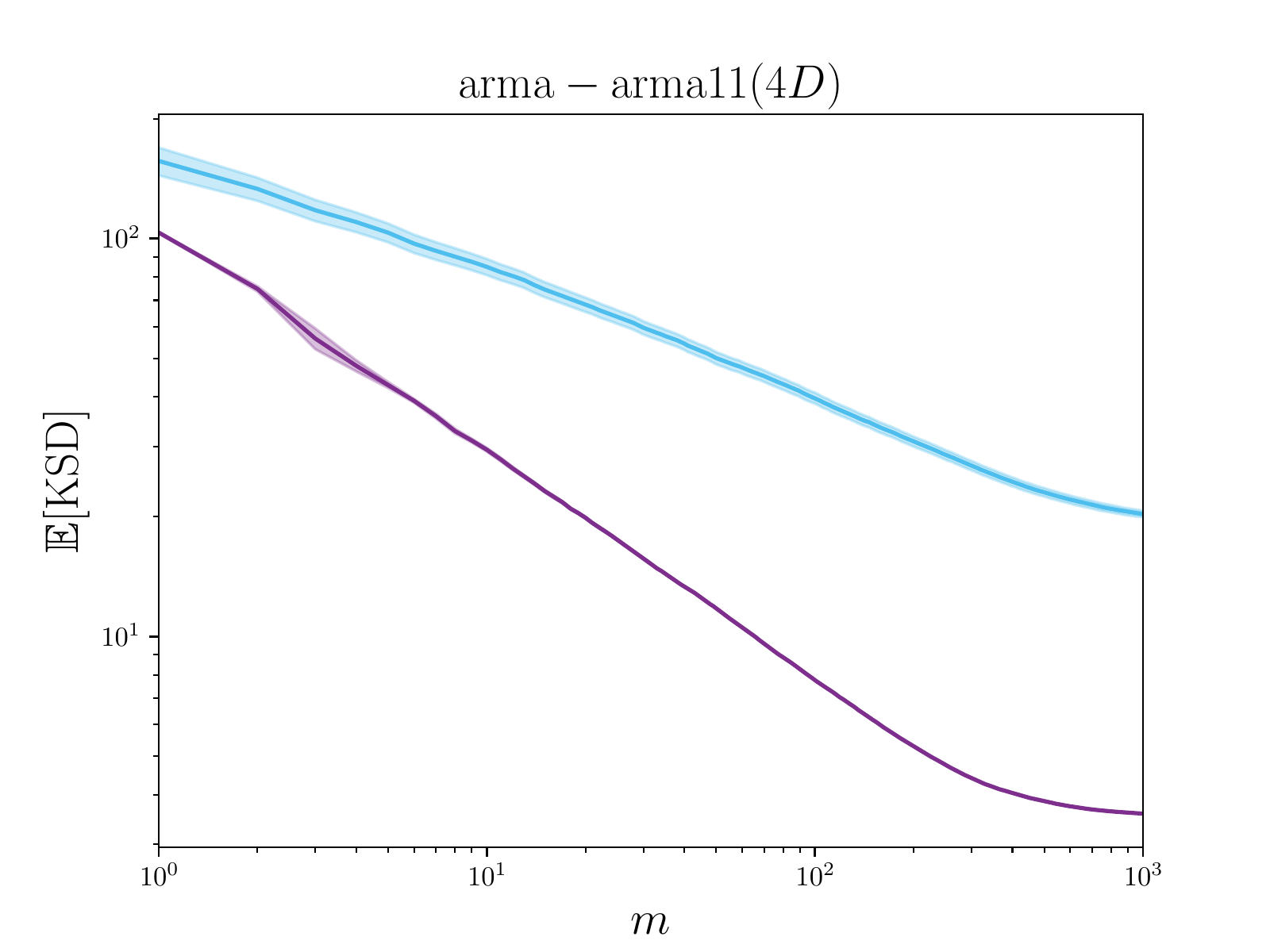}}%
}
\figureSeriesRow{%
\figureSeriesElement{}{\includegraphics[width = 0.45\textwidth]{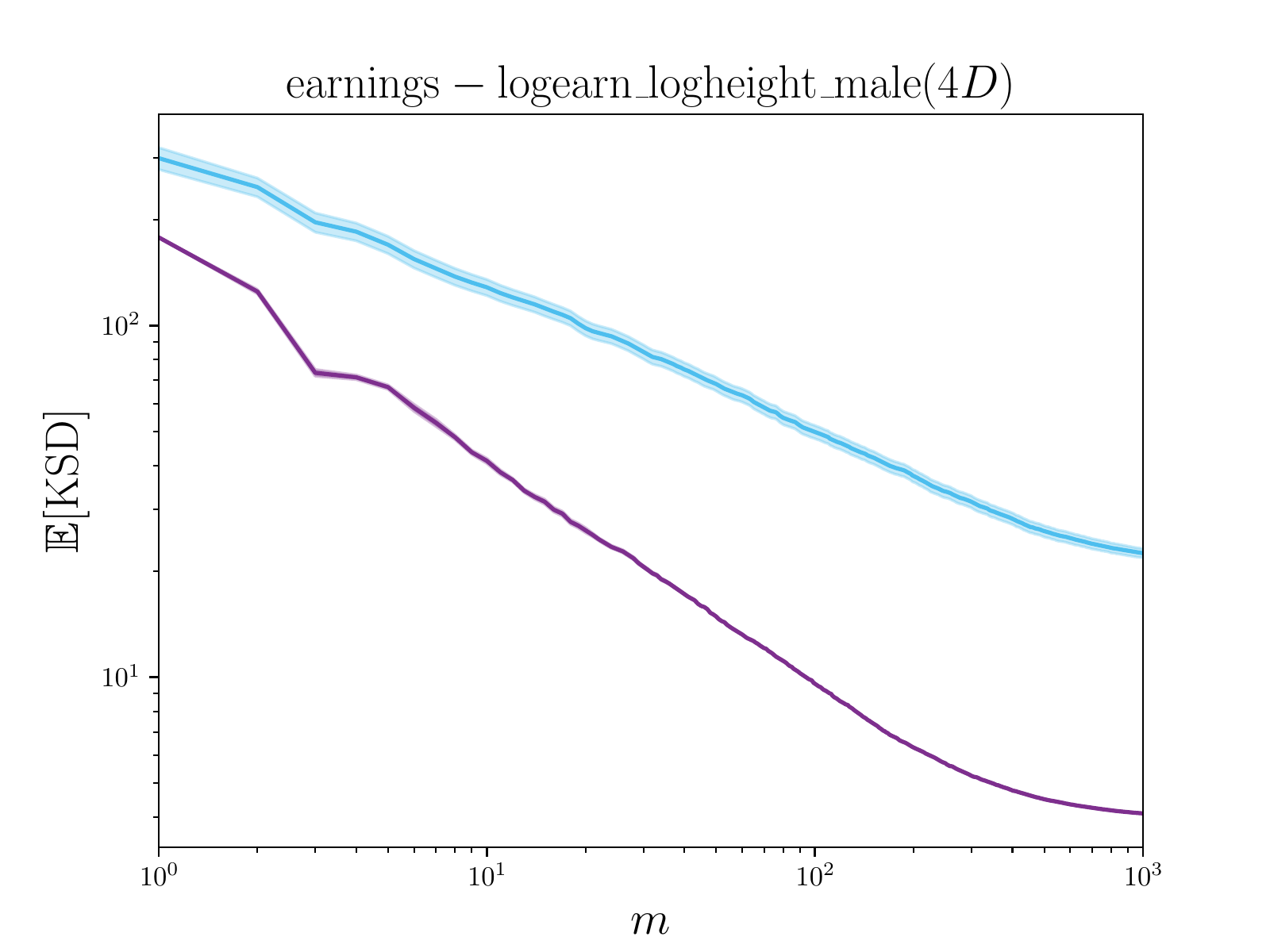}}%
\figureSeriesElement{}{\includegraphics[width = 0.45\textwidth]{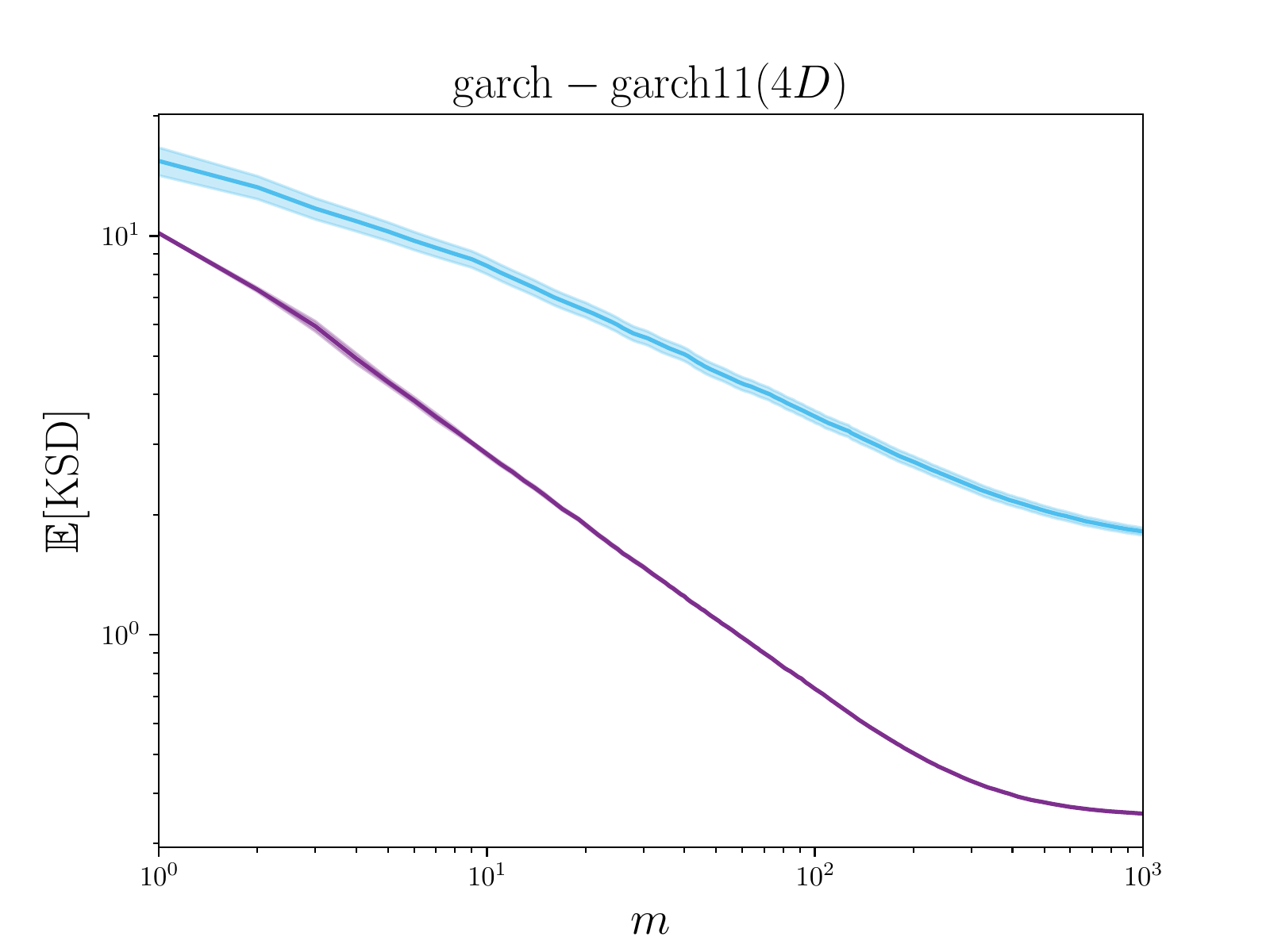}}%
}
\figureSeriesRow{%
\figureSeriesElement{}{\includegraphics[width = 0.45\textwidth]{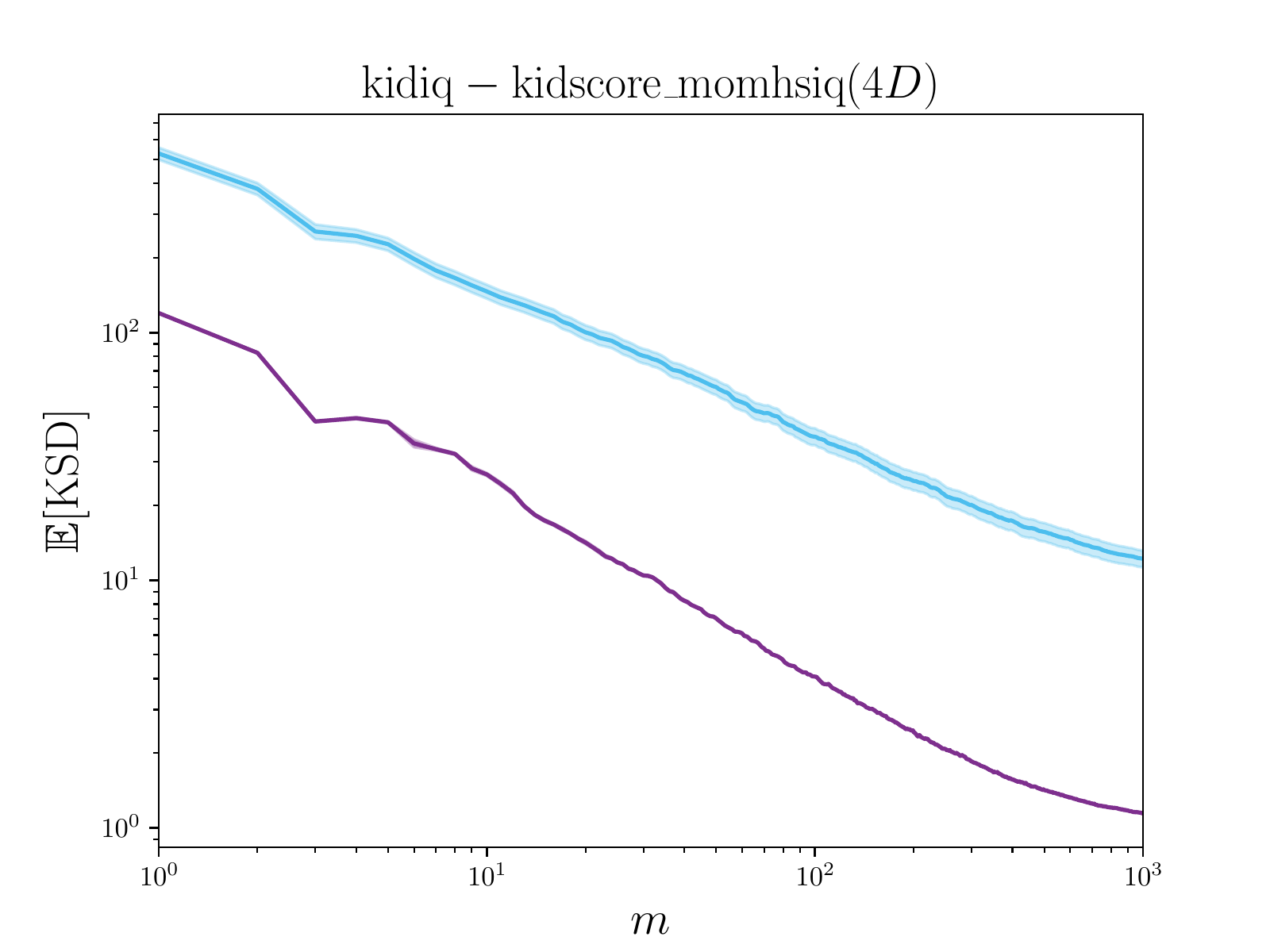}}%
\figureSeriesElement{}{\includegraphics[width = 0.45\textwidth]{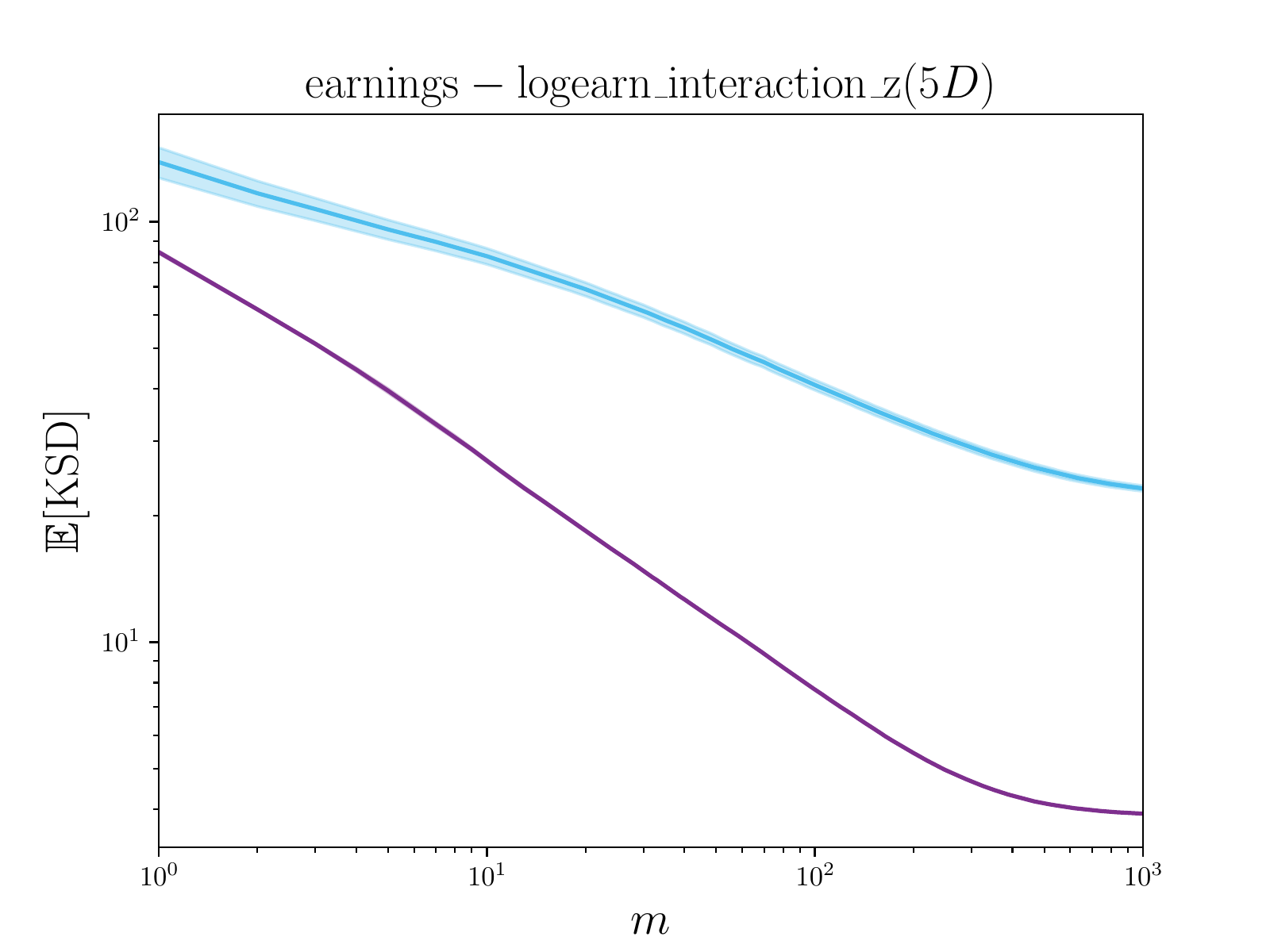}}%
}
\figureSeriesRow{%
\figureSeriesElement{}{\includegraphics[width = 0.45\textwidth]{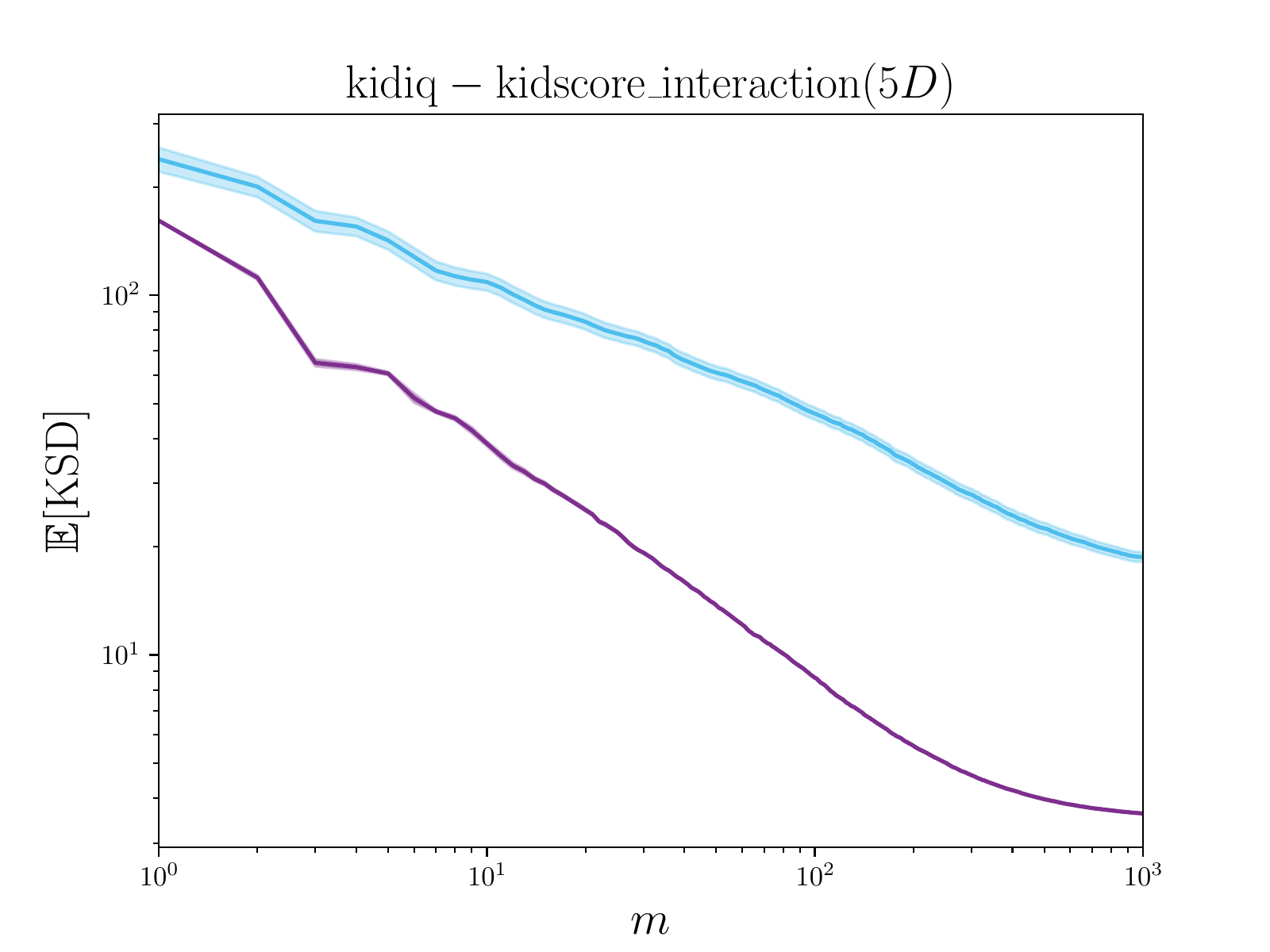}}%
\figureSeriesElement{}{\includegraphics[width = 0.45\textwidth]{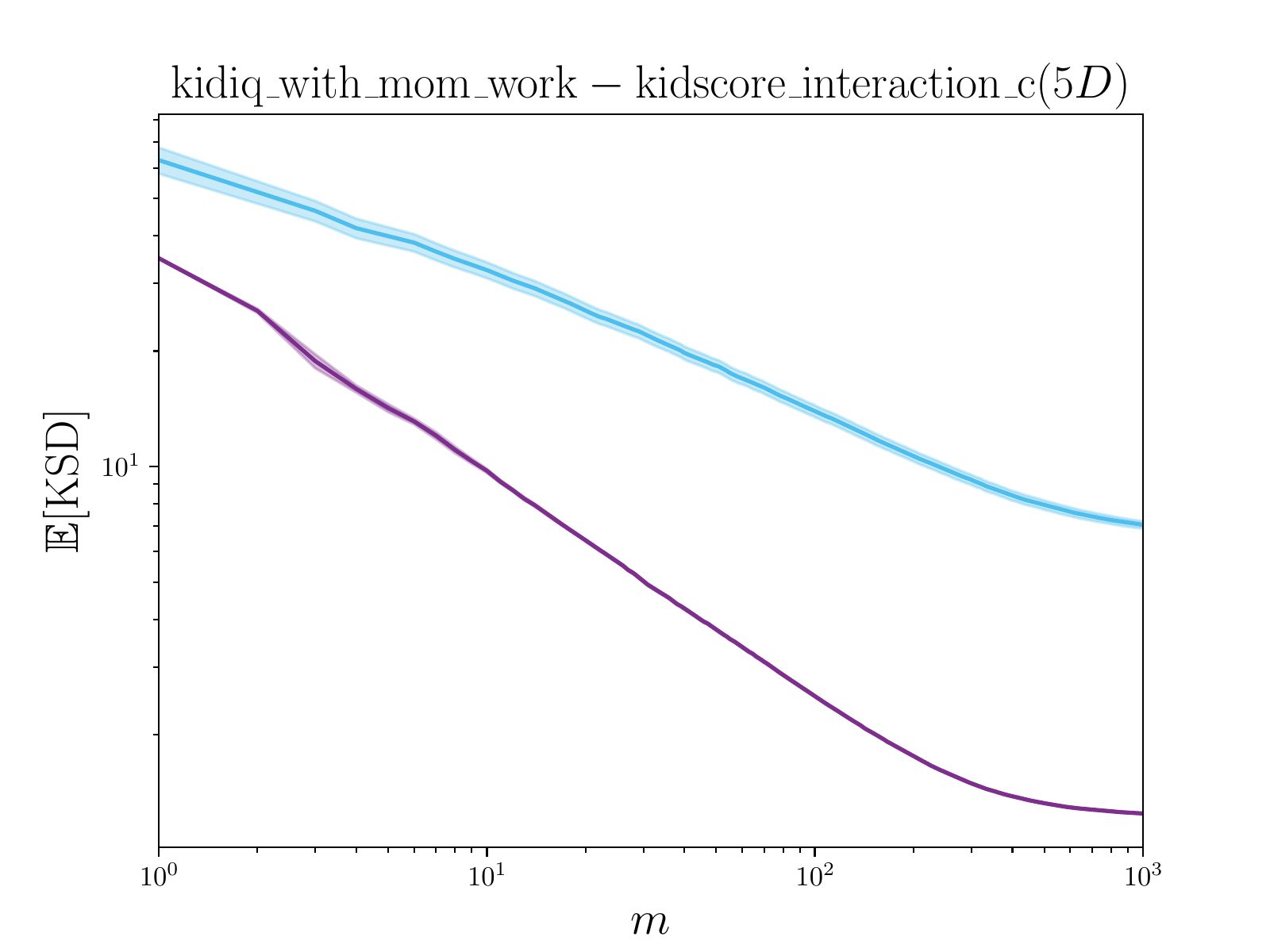}}%
}
\figureSeriesRow{%
\figureSeriesElement{}{\includegraphics[width = 0.45\textwidth]{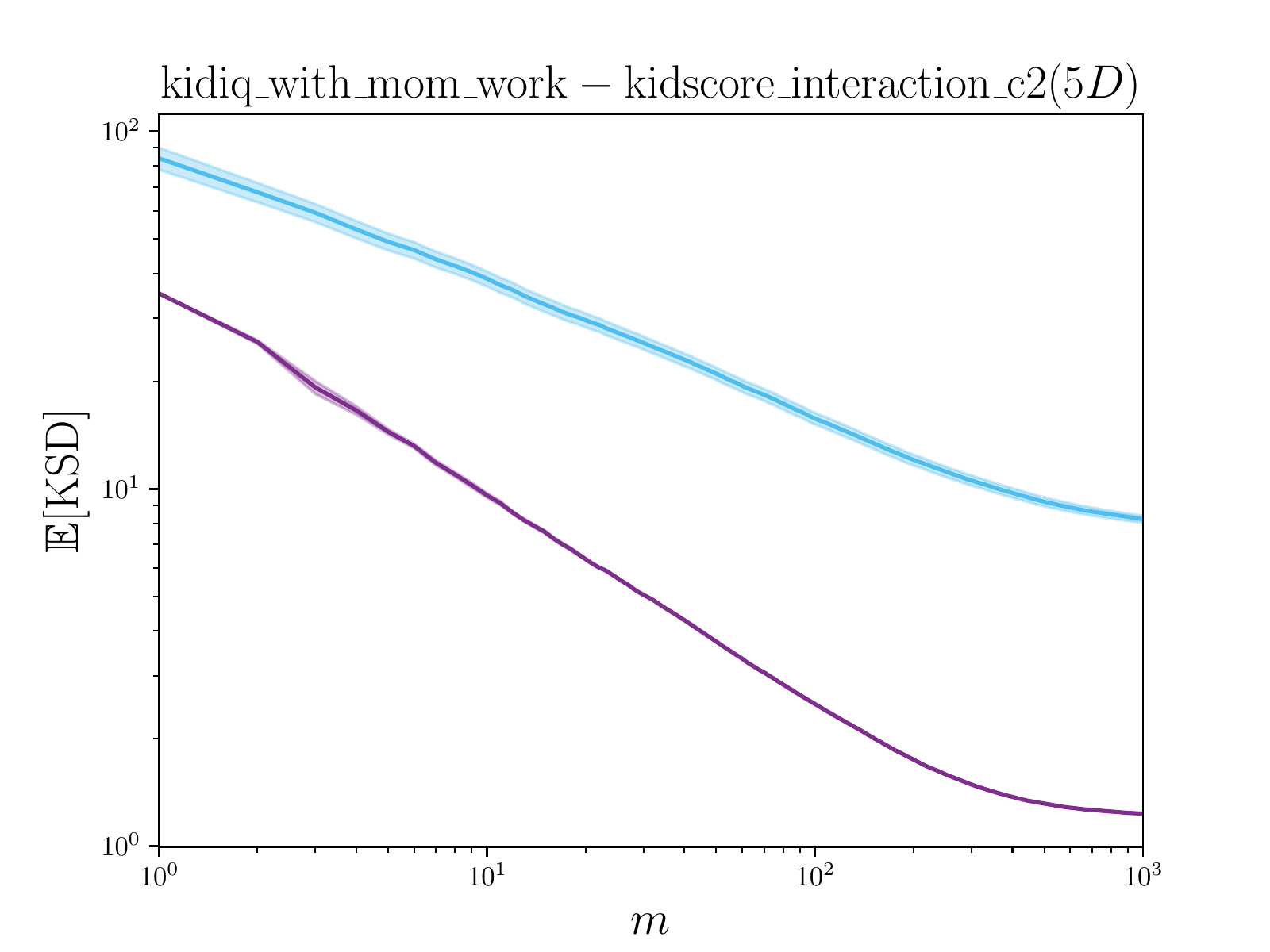}}%
\figureSeriesElement{}{\includegraphics[width = 0.45\textwidth]{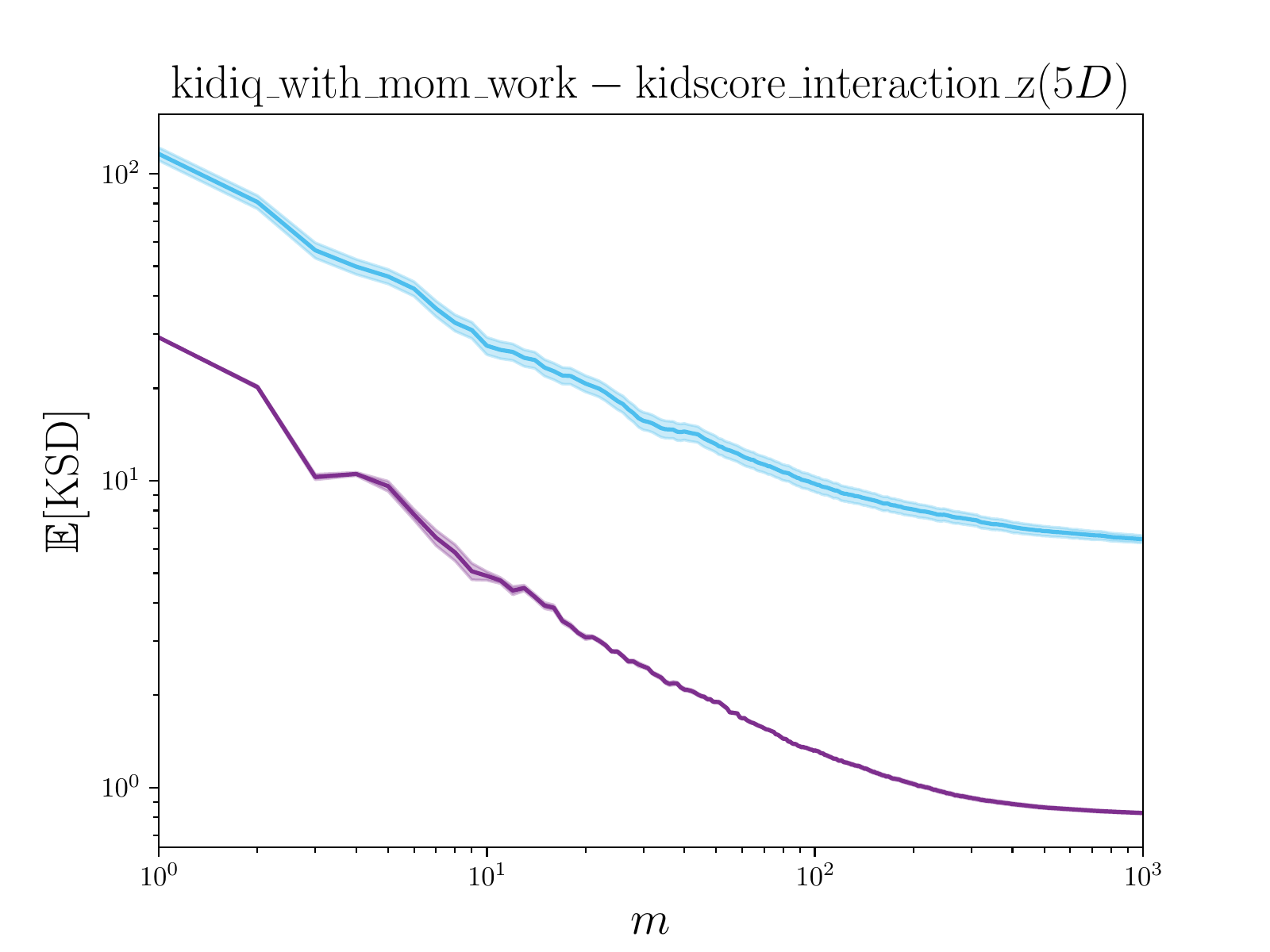}}%
}
\figureSeriesRow{%
\figureSeriesElement{}{\includegraphics[width = 0.45\textwidth]{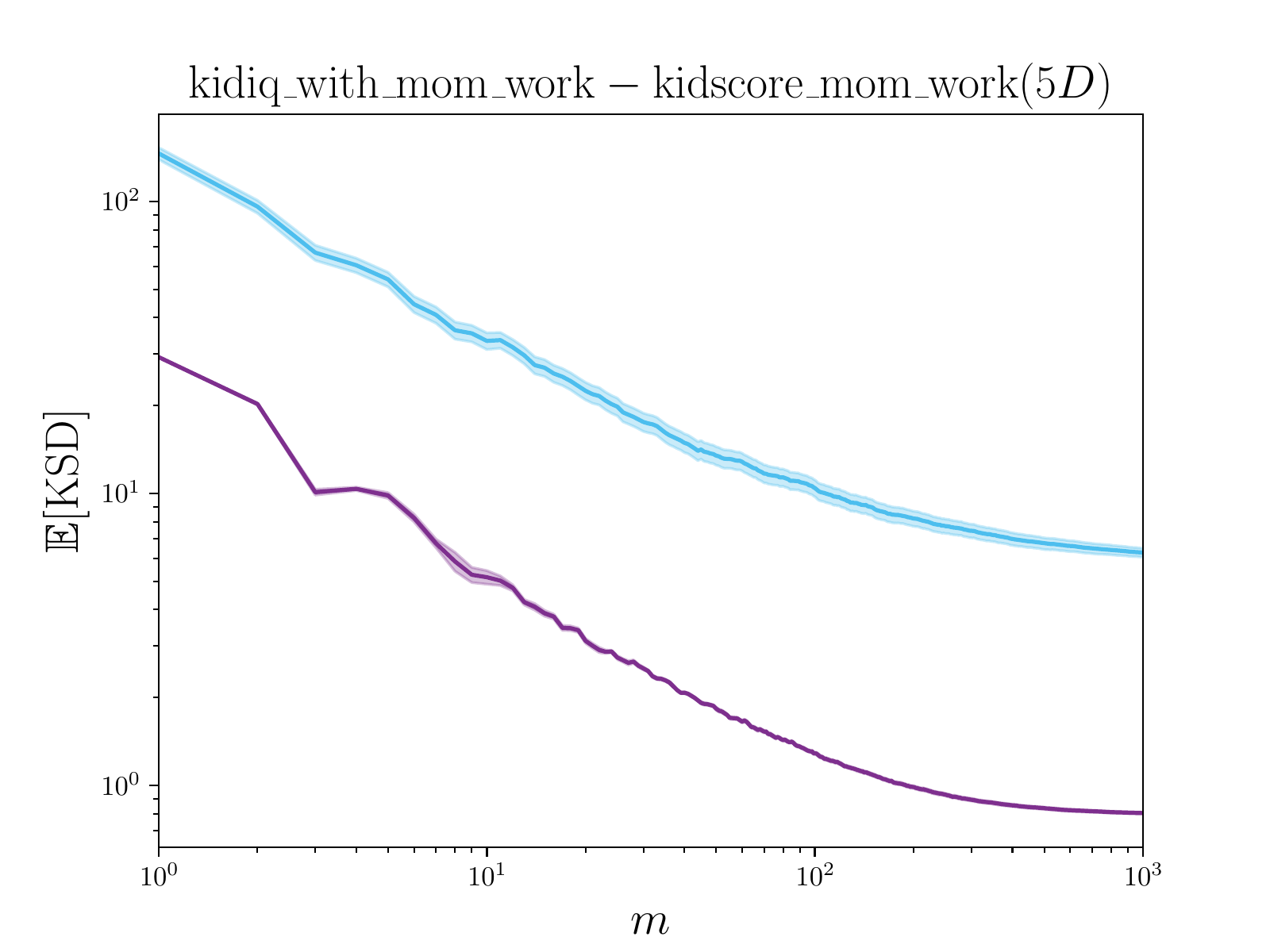}}%
\figureSeriesElement{}{\includegraphics[width = 0.45\textwidth]{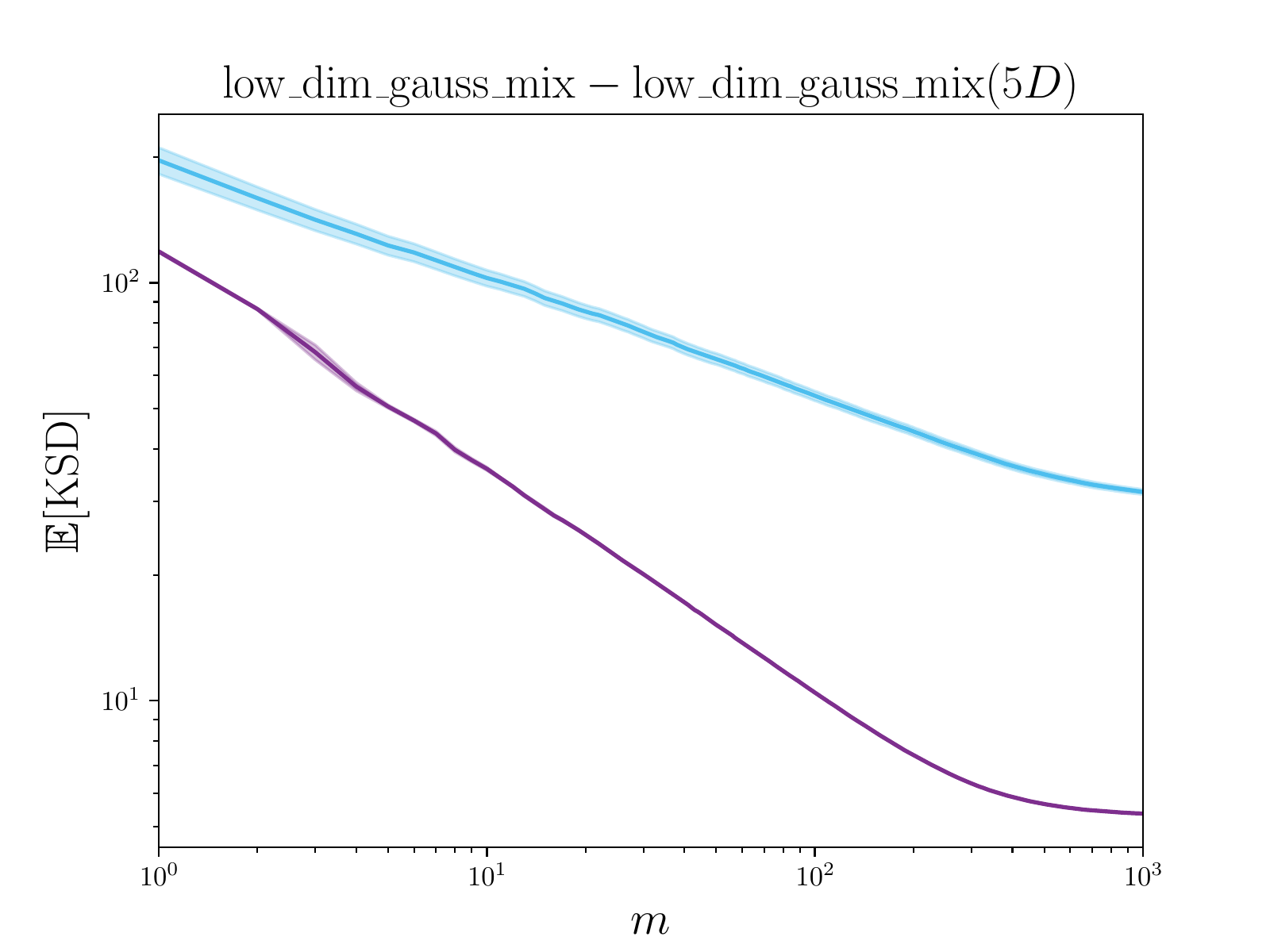}}%
}
\figureSeriesRow{%
\figureSeriesElement{}{\includegraphics[width = 0.45\textwidth]{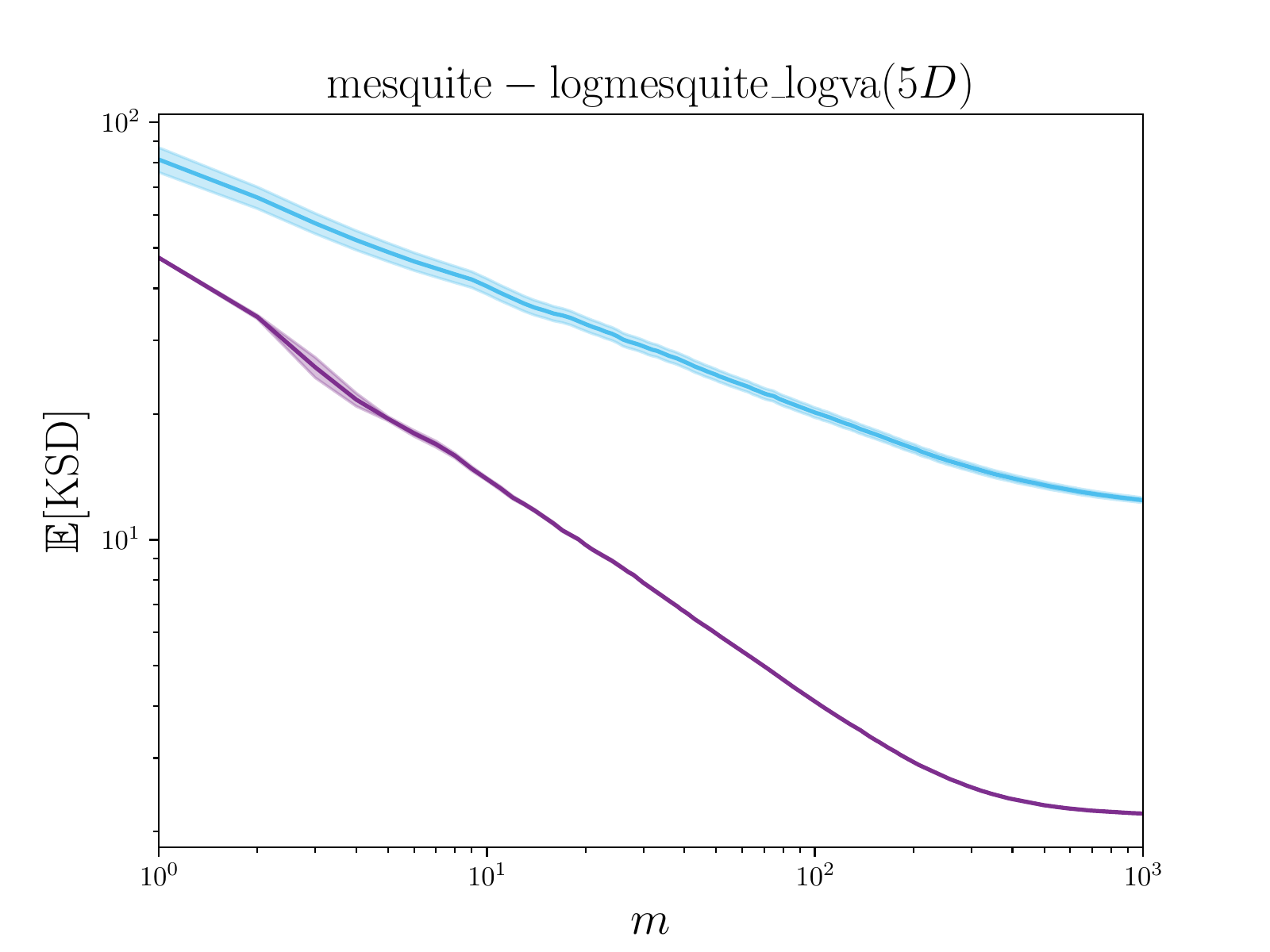}}%
\figureSeriesElement{}{\includegraphics[width = 0.45\textwidth]{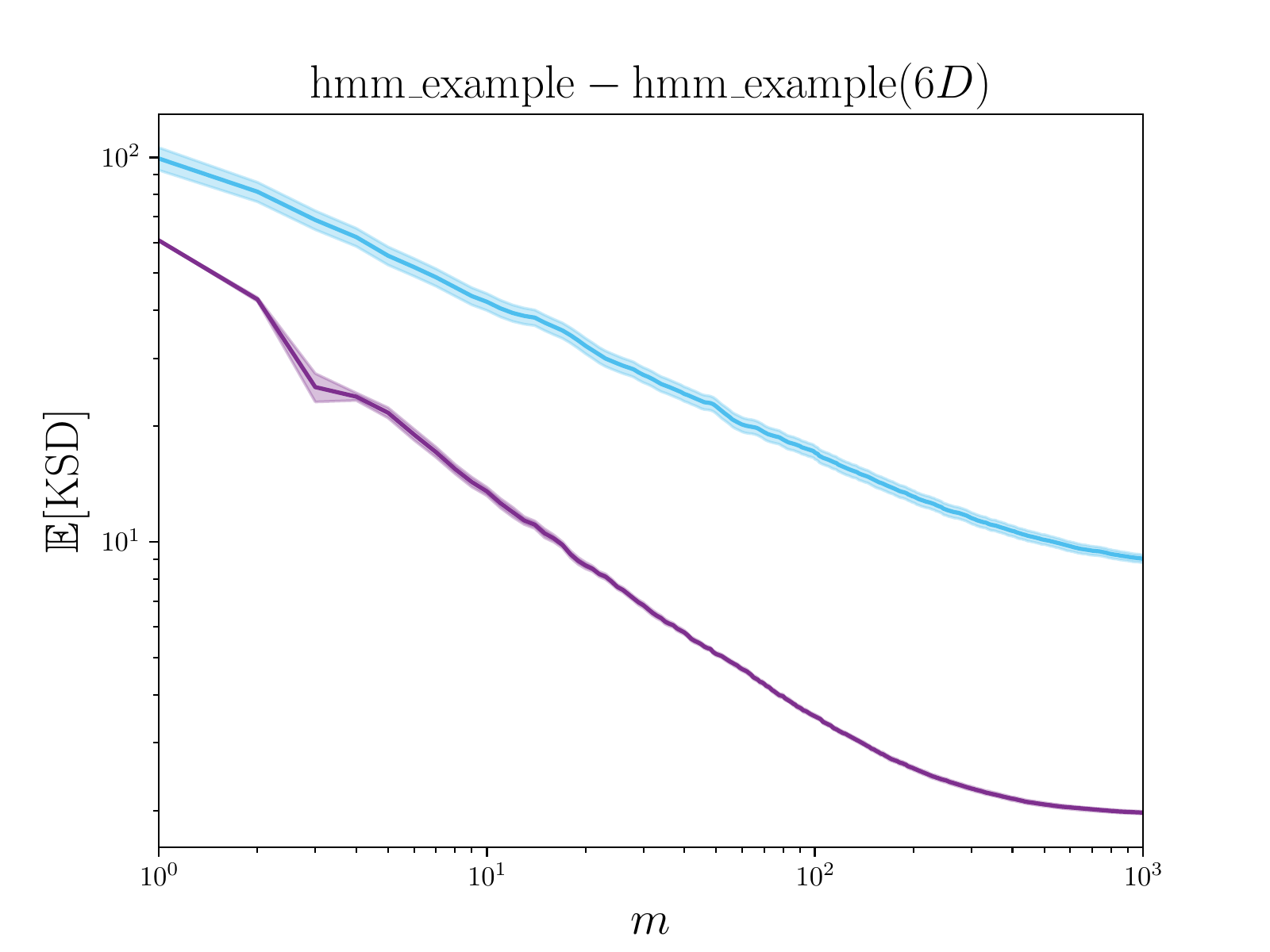}}%
}
\figureSeriesRow{%
\figureSeriesElement{}{\includegraphics[width = 0.45\textwidth]{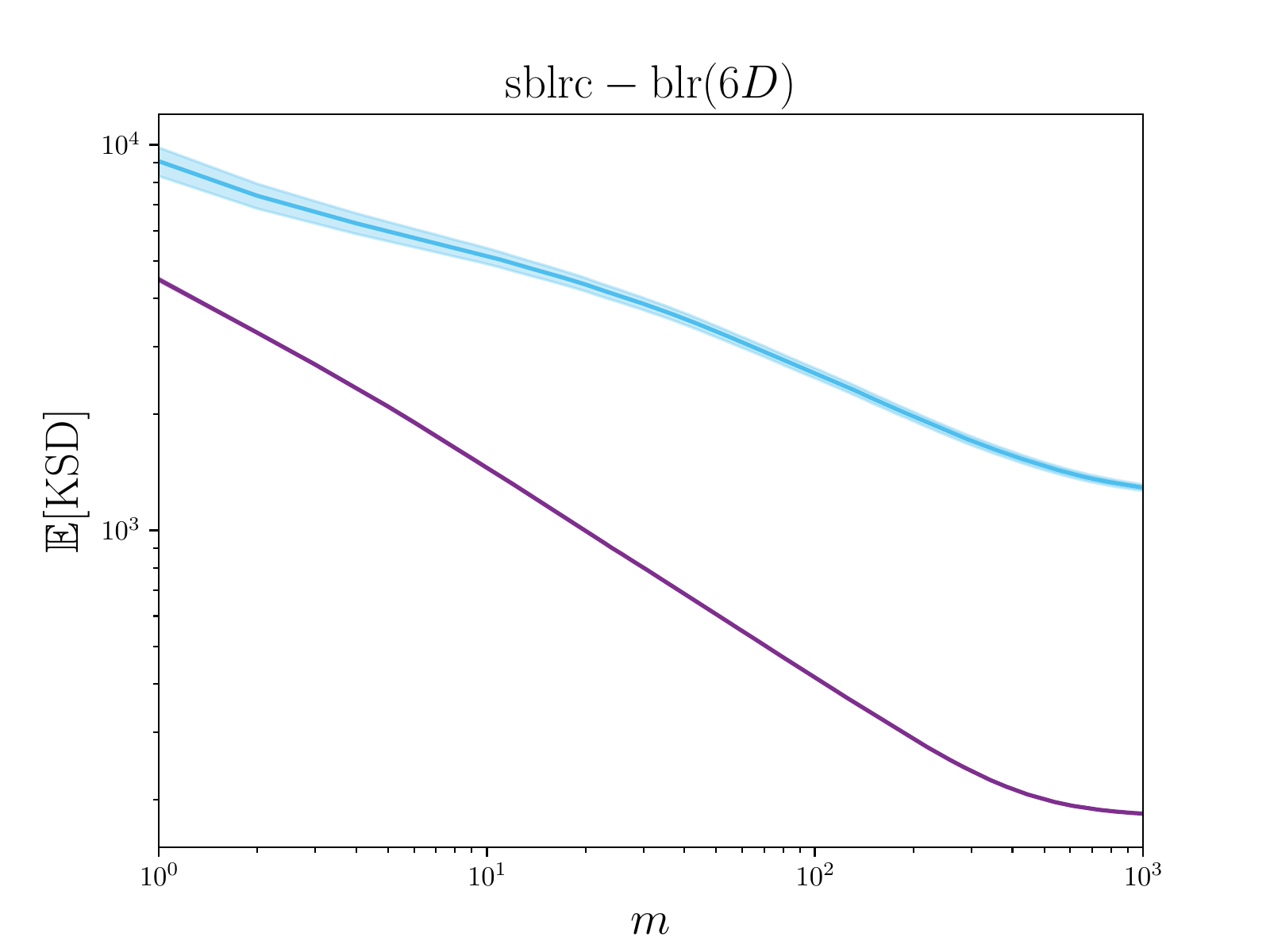}}%
\figureSeriesElement{}{\includegraphics[width = 0.45\textwidth]{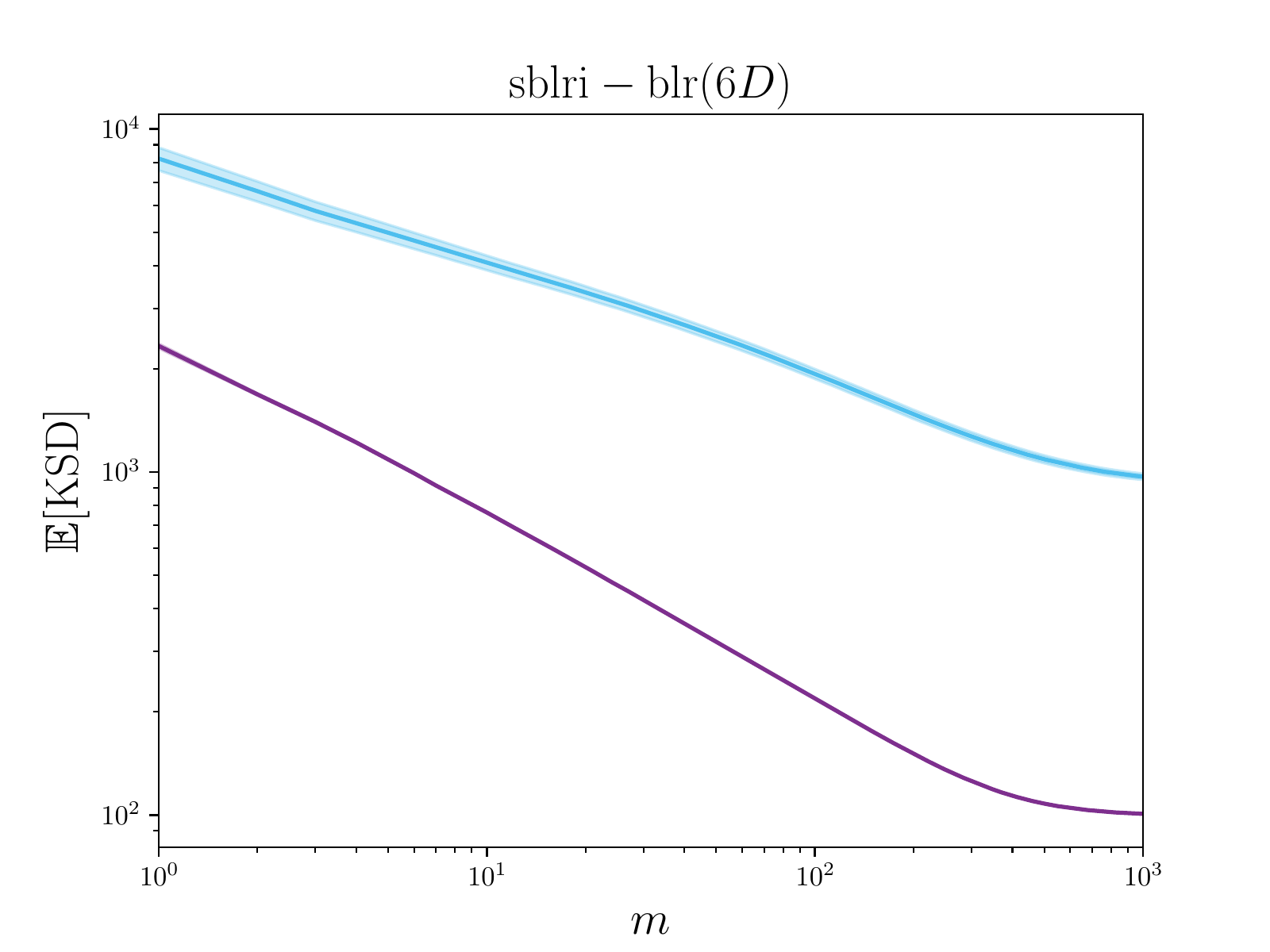}}%
}
\figureSeriesRow{%
\figureSeriesElement{}{\includegraphics[width = 0.45\textwidth]{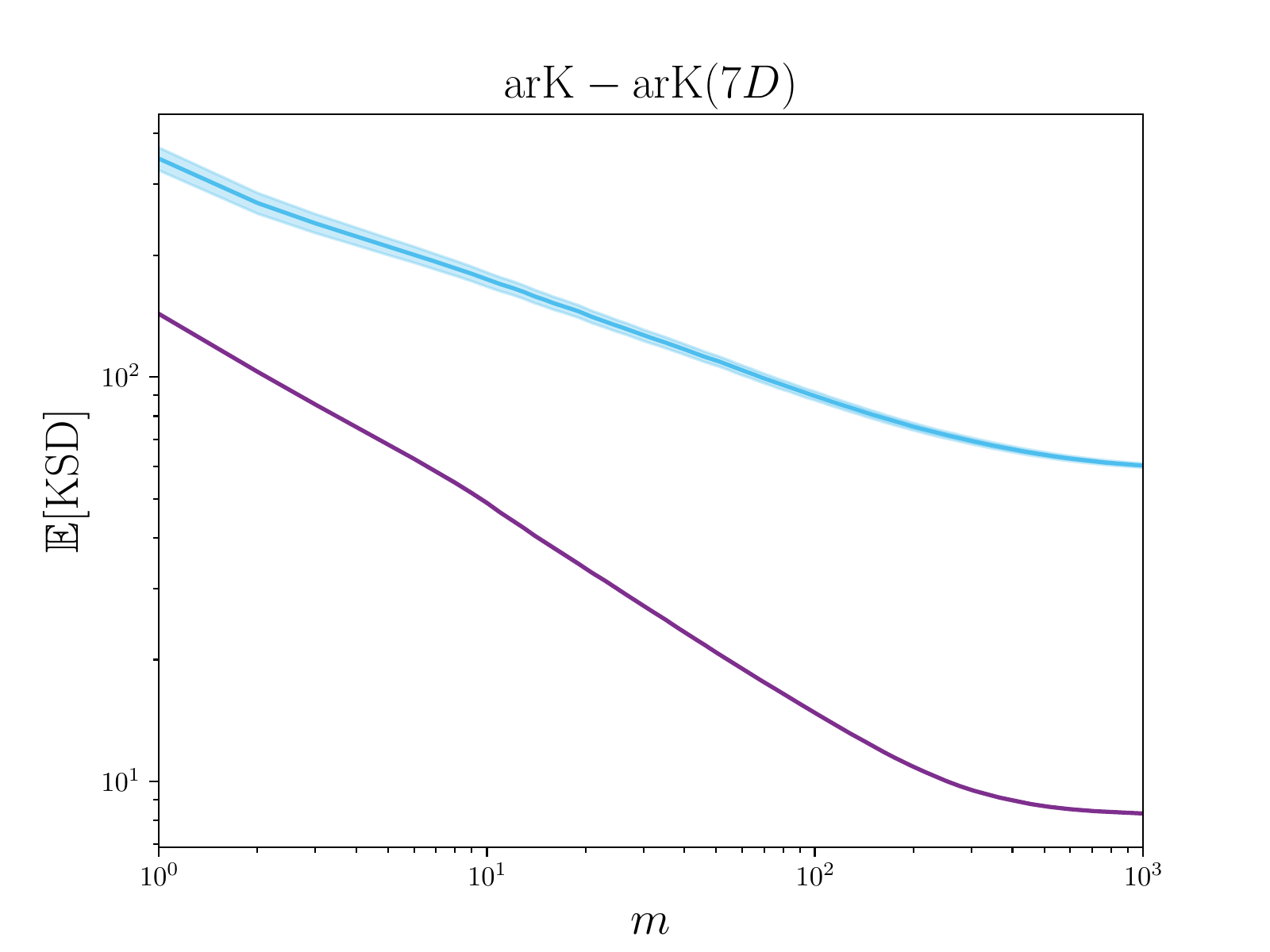}}%
\figureSeriesElement{}{\includegraphics[width = 0.45\textwidth]{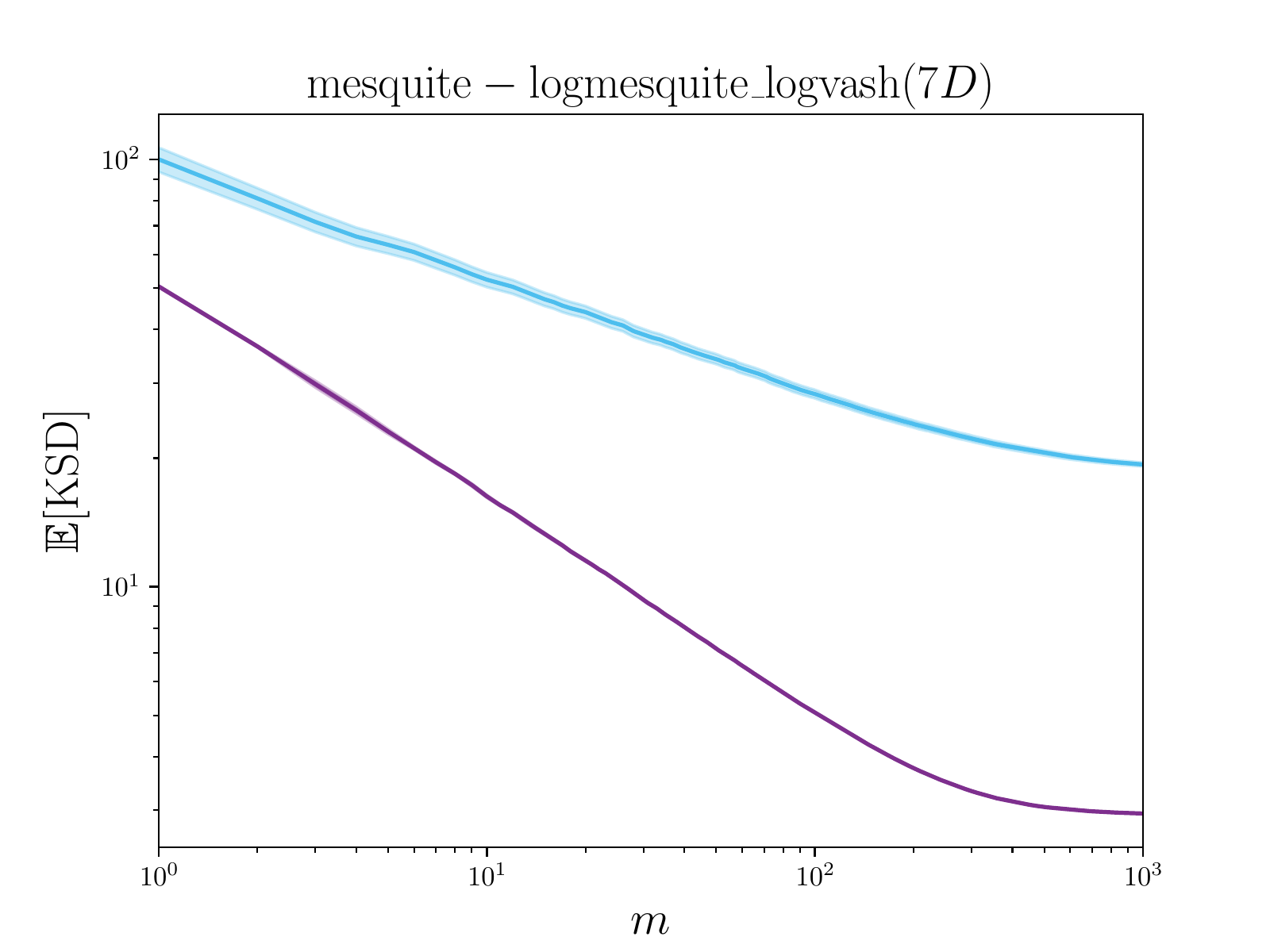}}%
}
\figureSeriesRow{%
\figureSeriesElement{}{\includegraphics[width = 0.45\textwidth]{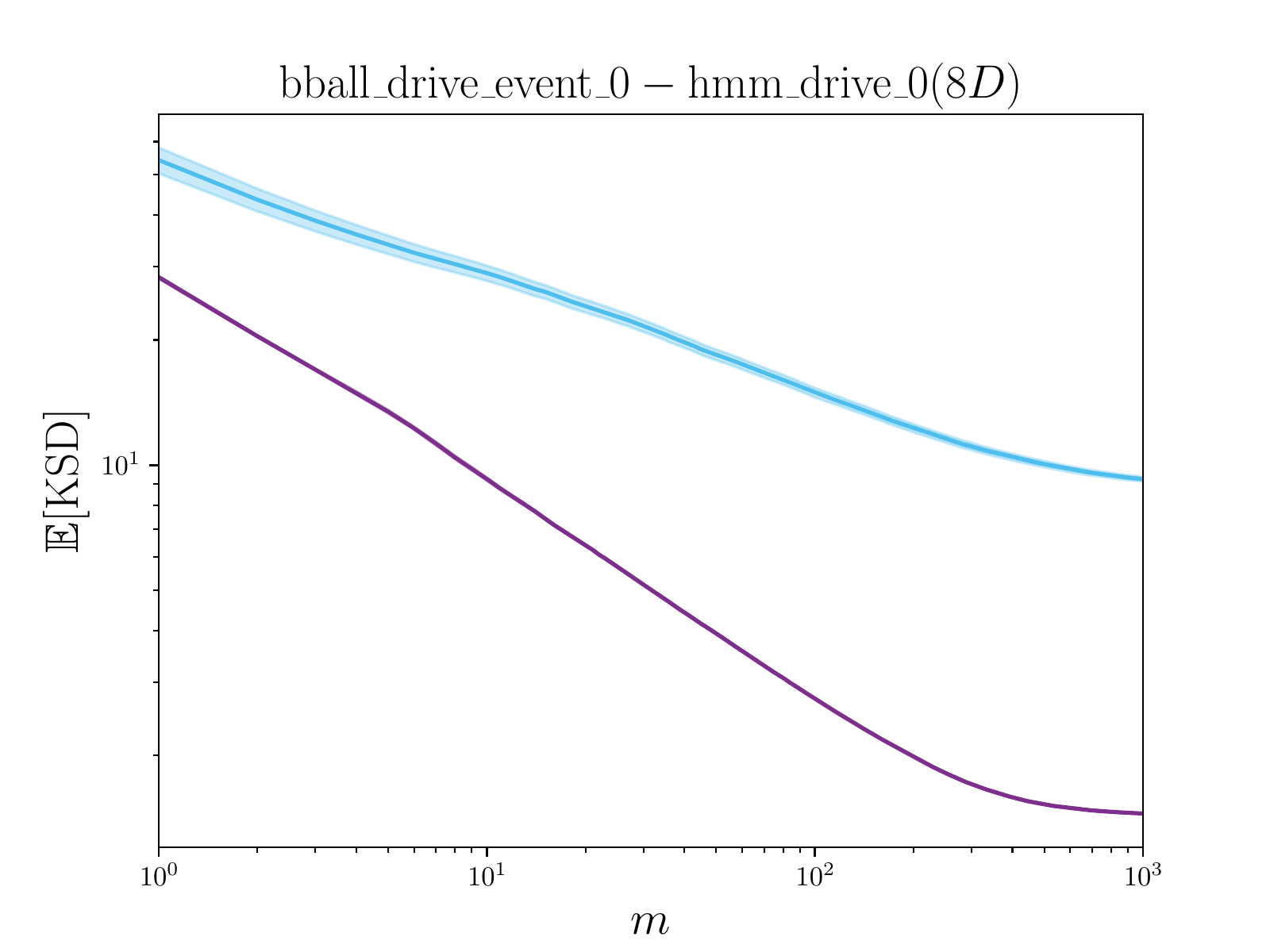}}%
\figureSeriesElement{}{\includegraphics[width = 0.45\textwidth]{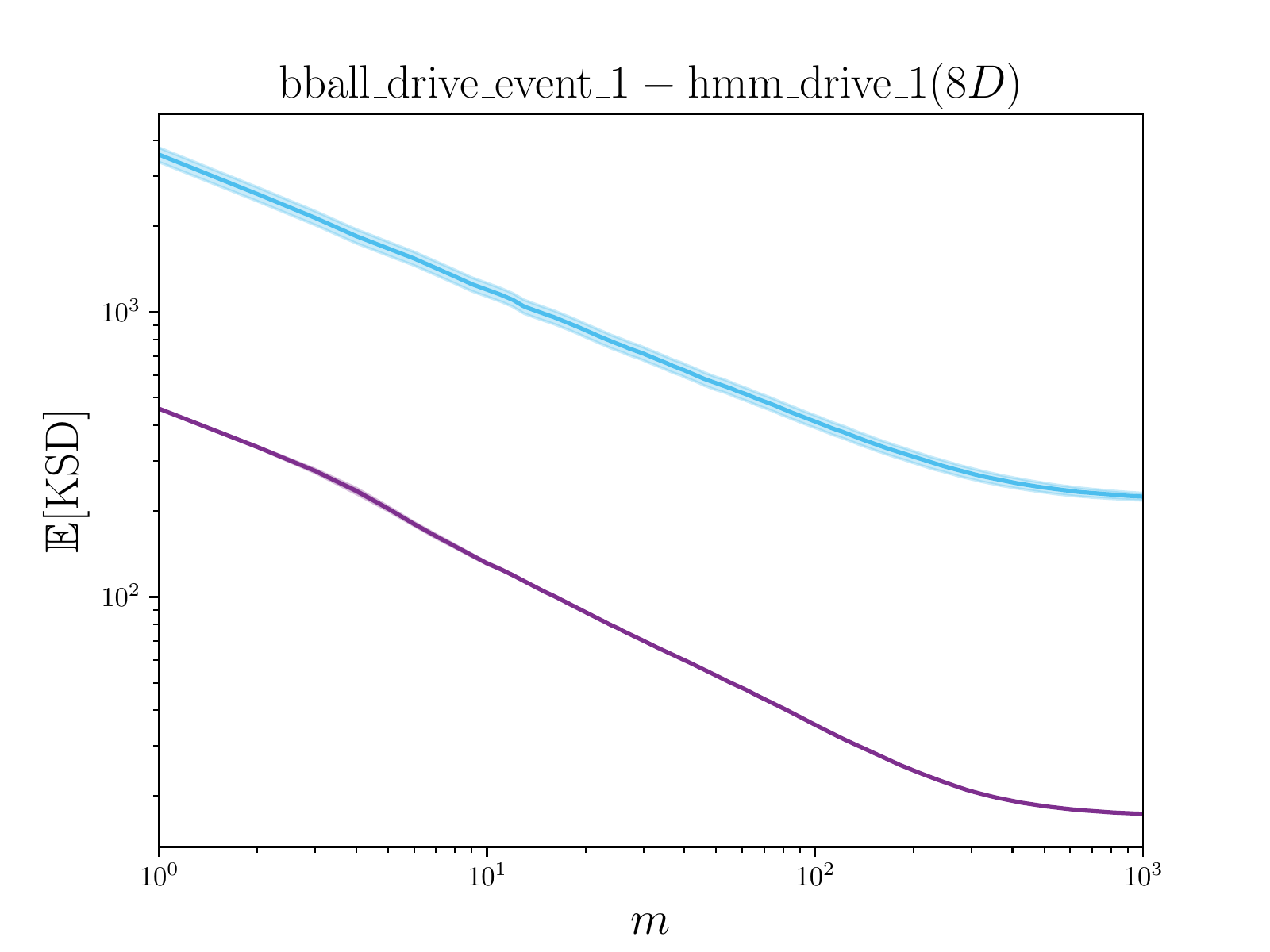}}%
}
\figureSeriesRow{%
\figureSeriesElement{}{\includegraphics[width = 0.45\textwidth]{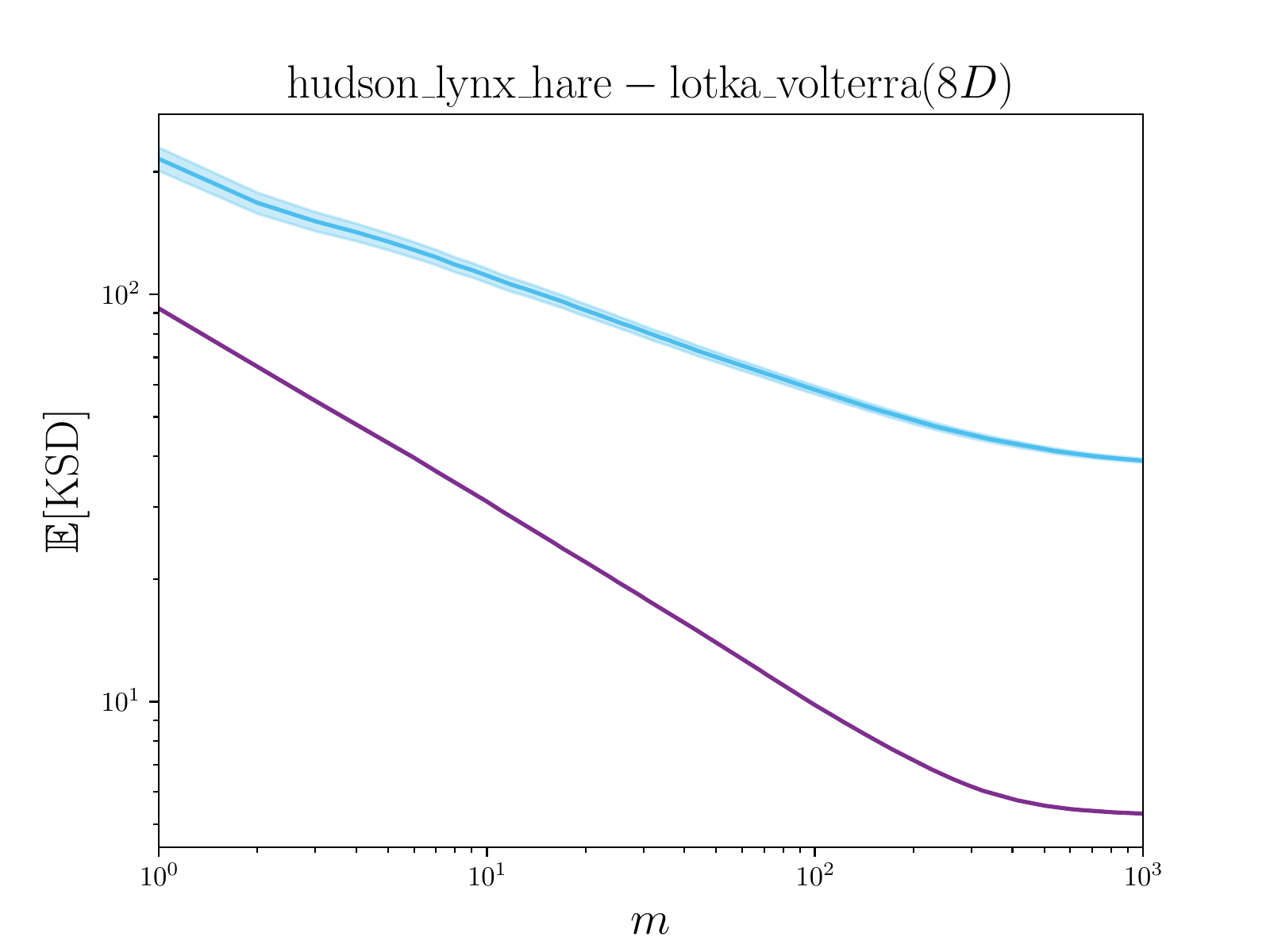}}%
\figureSeriesElement{}{\includegraphics[width = 0.45\textwidth]{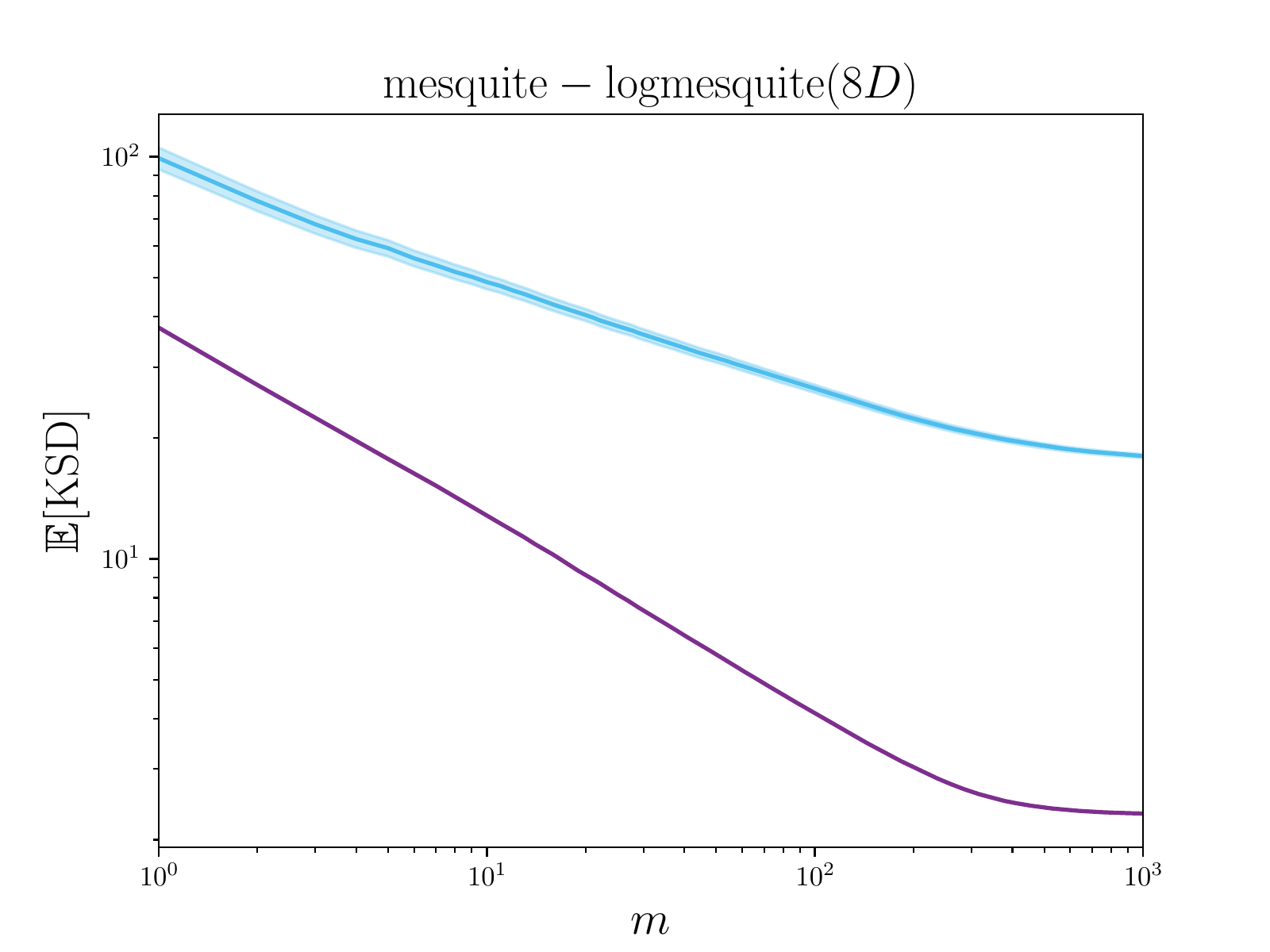}}%
}
\figureSeriesRow{%
\figureSeriesElement{}{\includegraphics[width = 0.45\textwidth]{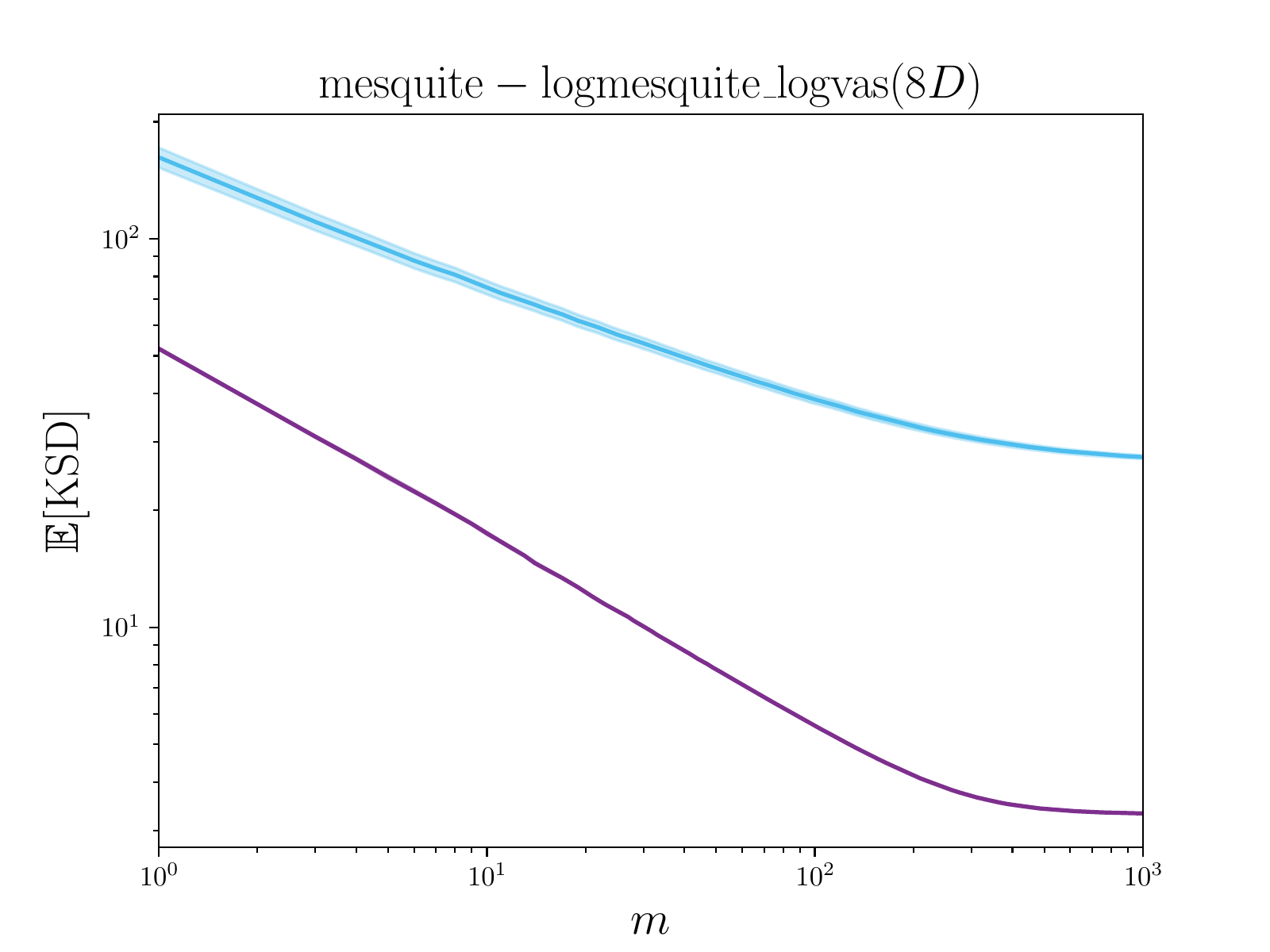}}%
\figureSeriesElement{}{\includegraphics[width = 0.45\textwidth]{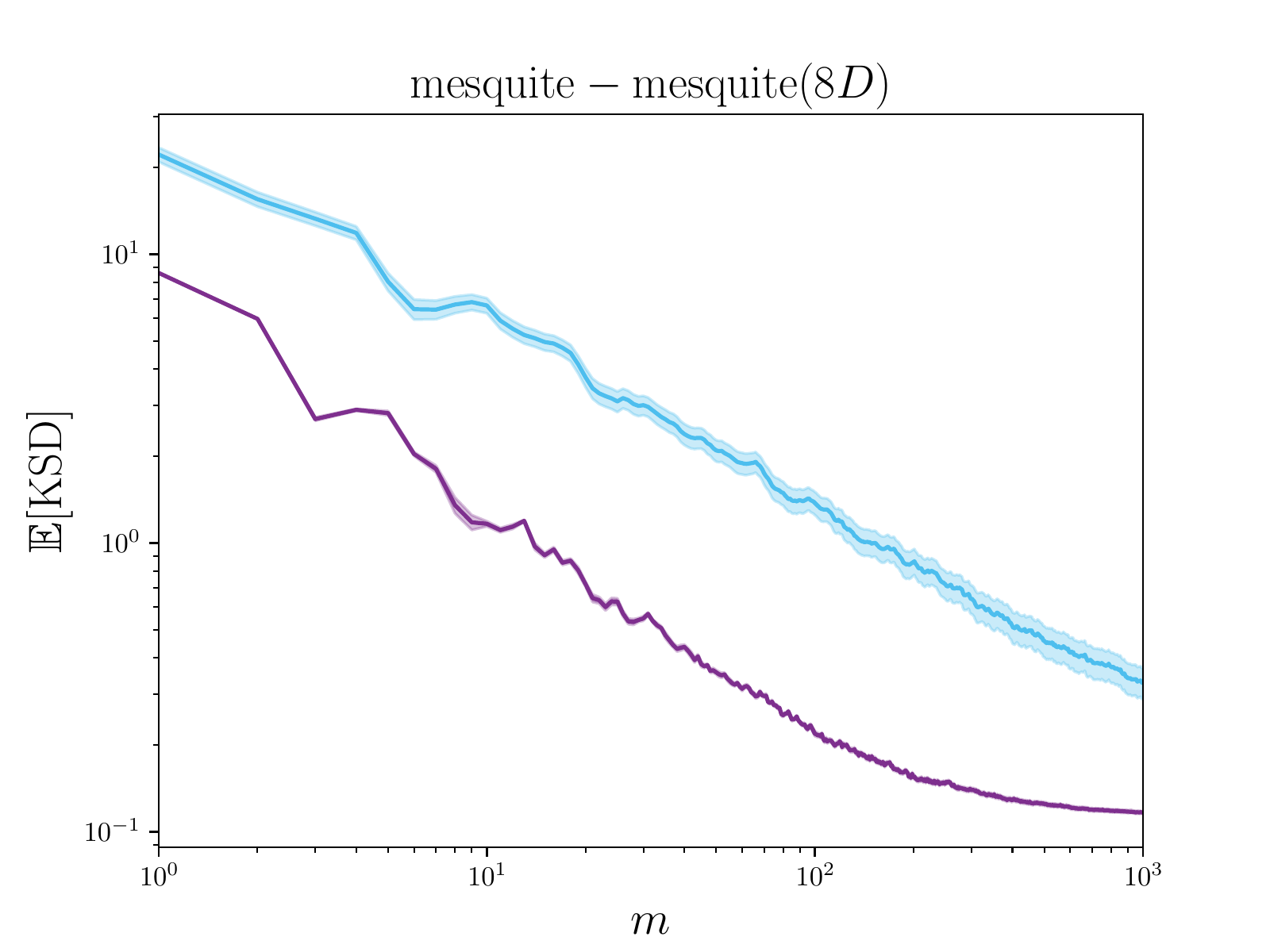}}%
}
\figureSeriesRow{%
\figureSeriesElement{}{\includegraphics[width = 0.45\textwidth]{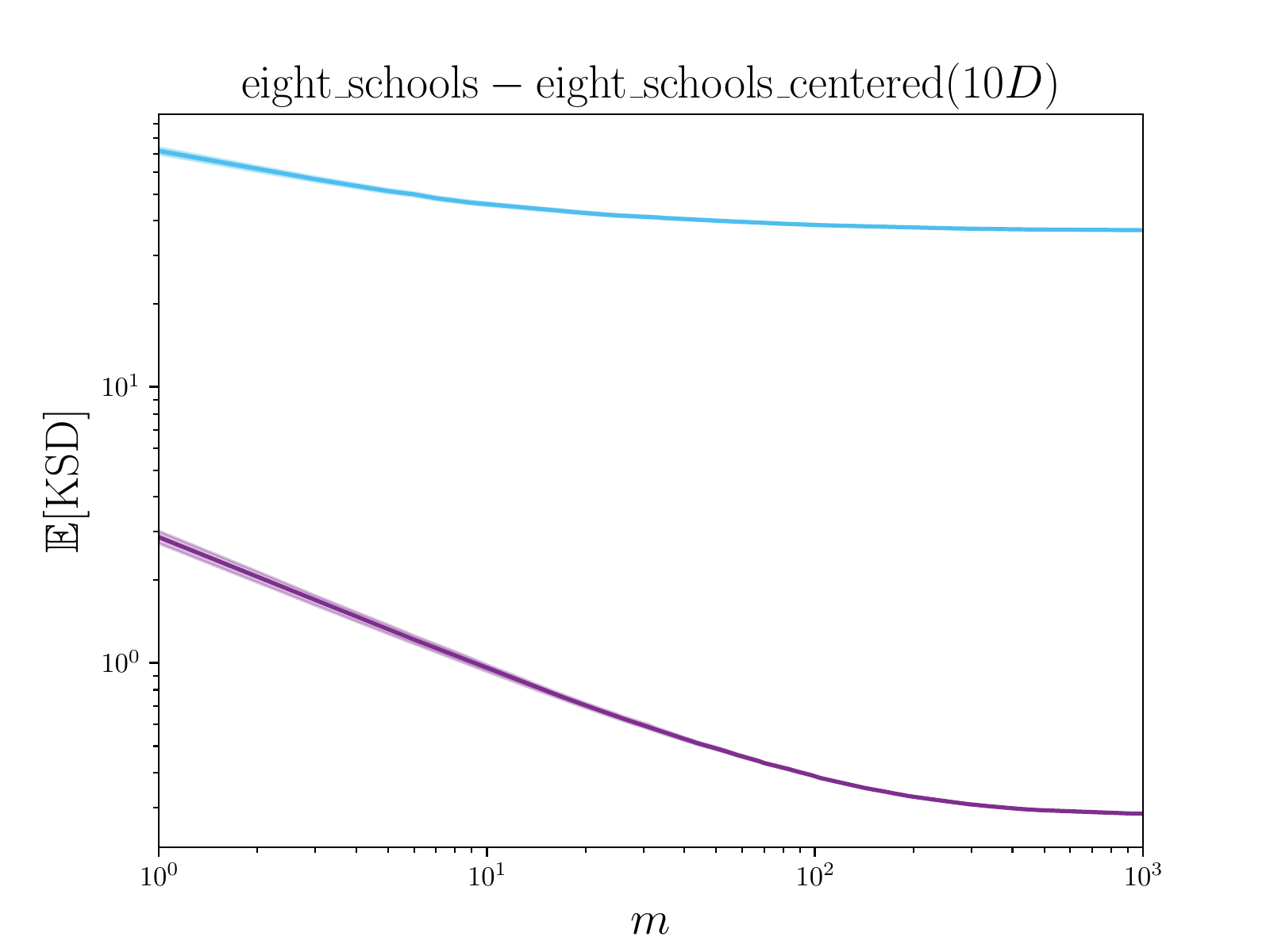} \label{fig: 8 schools}}%
\figureSeriesElement{}{\includegraphics[width = 0.45\textwidth]{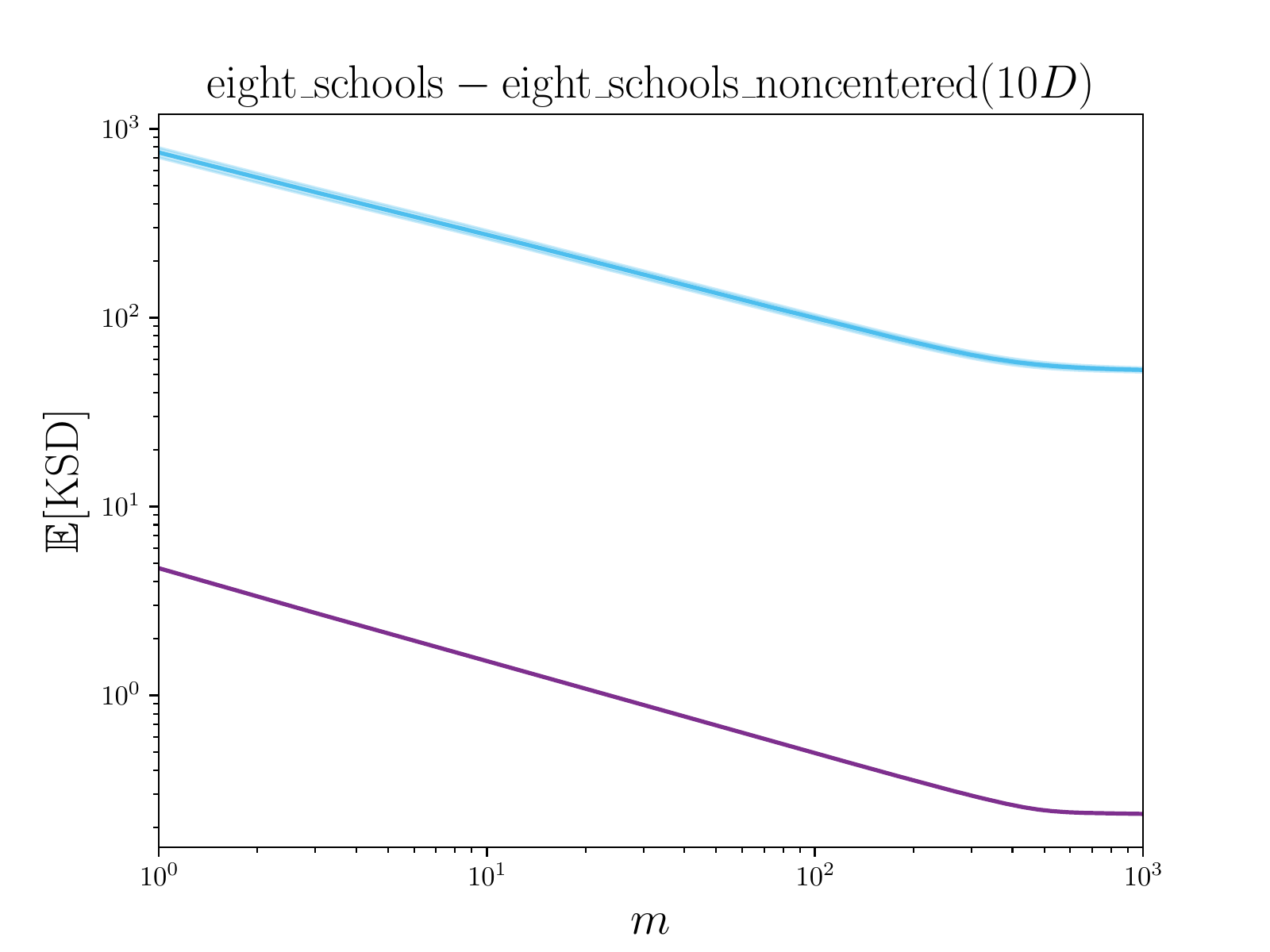}}%
}
\figureSeriesRow{%
\figureSeriesElement{}{\includegraphics[width = 0.45\textwidth]{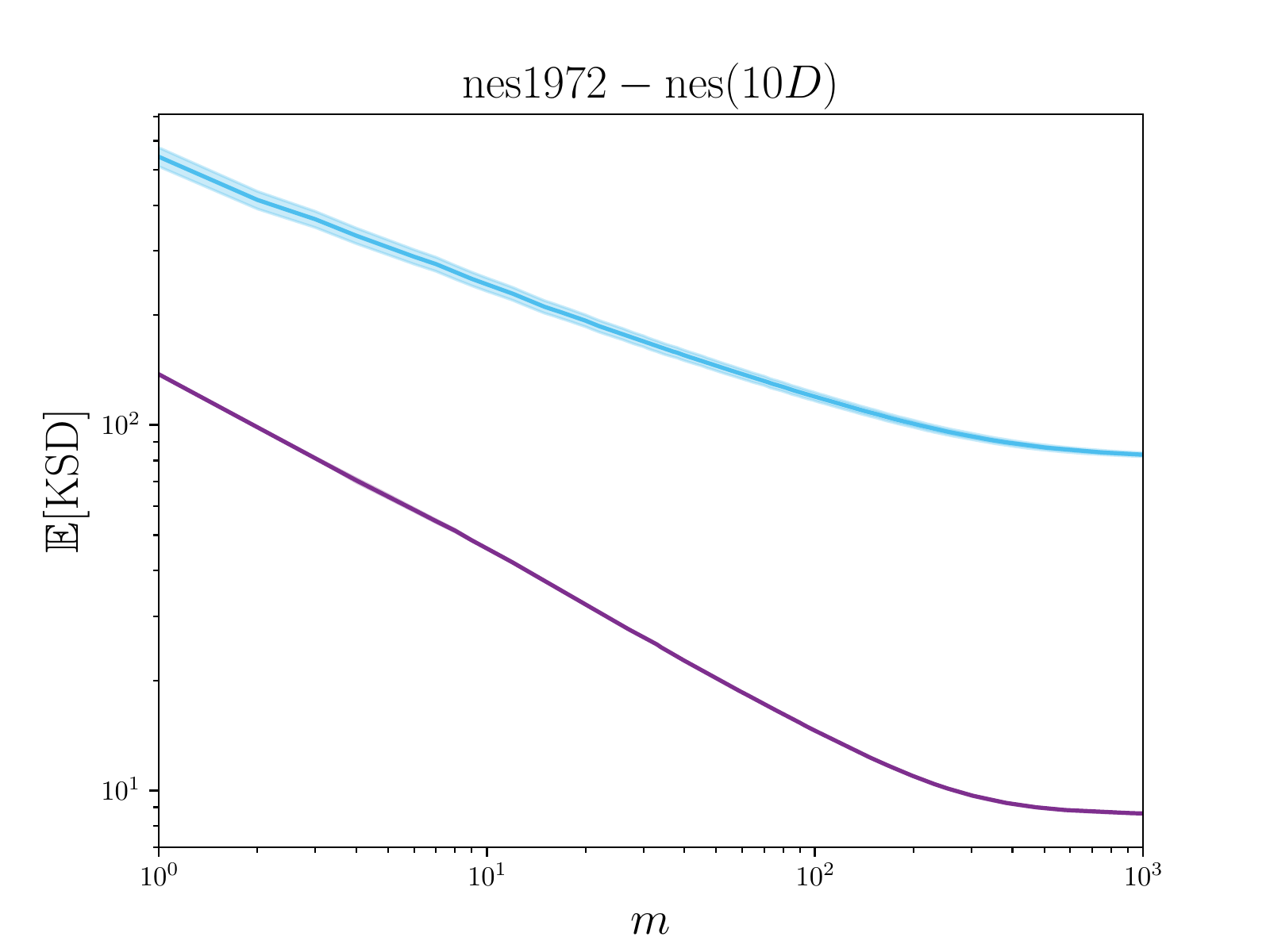}}%
\figureSeriesElement{}{\includegraphics[width = 0.45\textwidth]{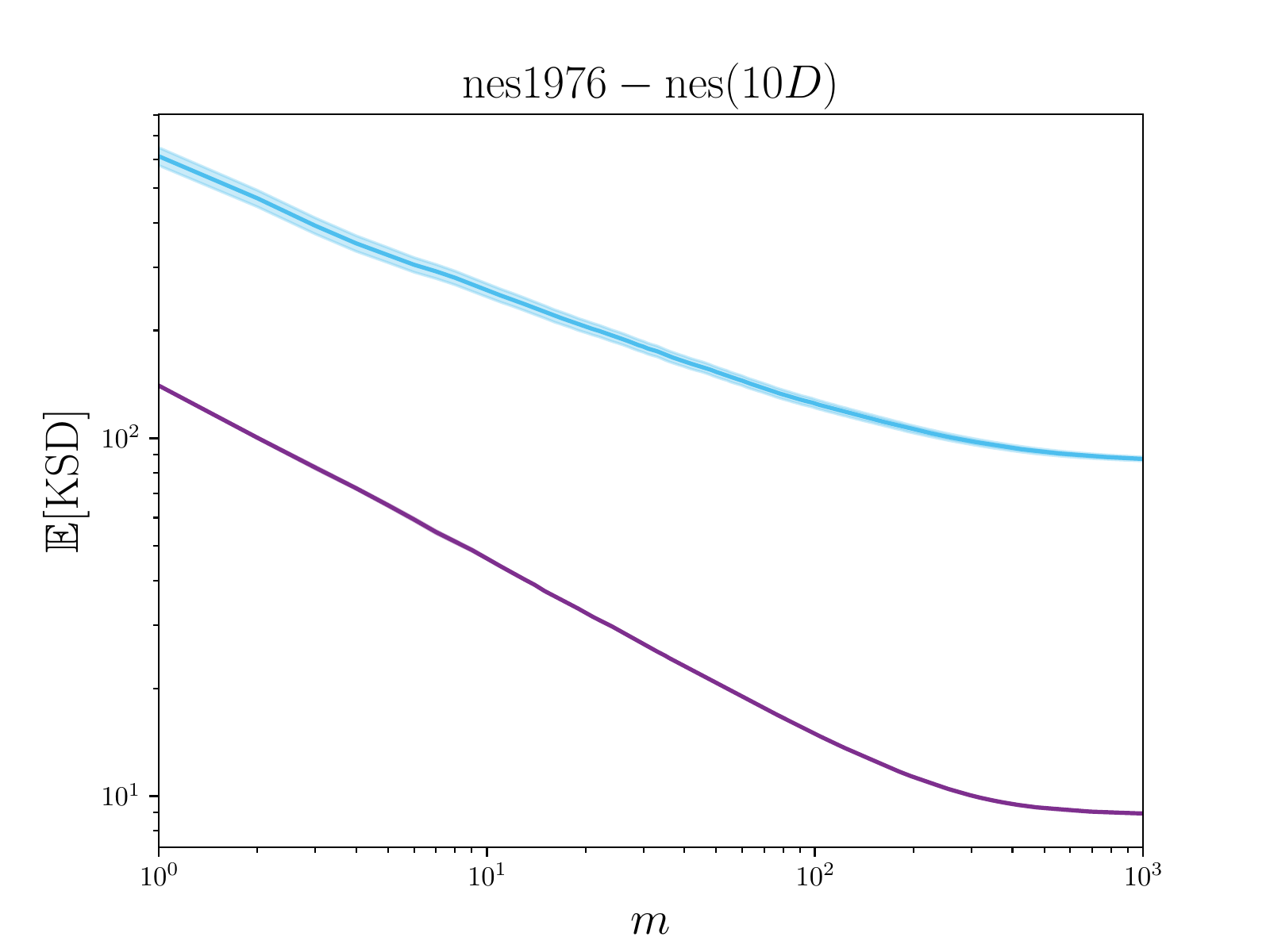}}%
}
\figureSeriesRow{%
\figureSeriesElement{}{\includegraphics[width = 0.45\textwidth]{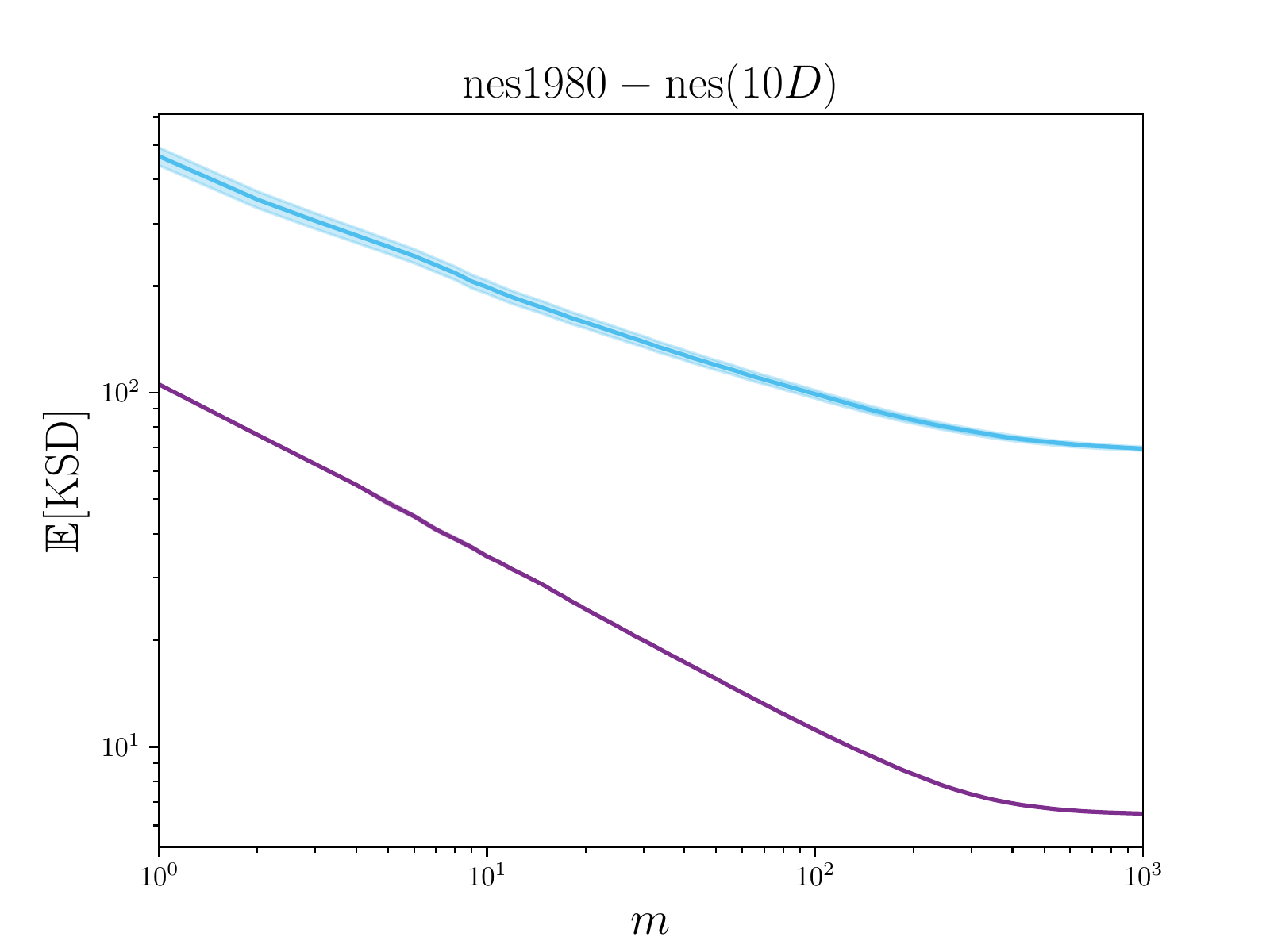}}%
\figureSeriesElement{}{\includegraphics[width = 0.45\textwidth]{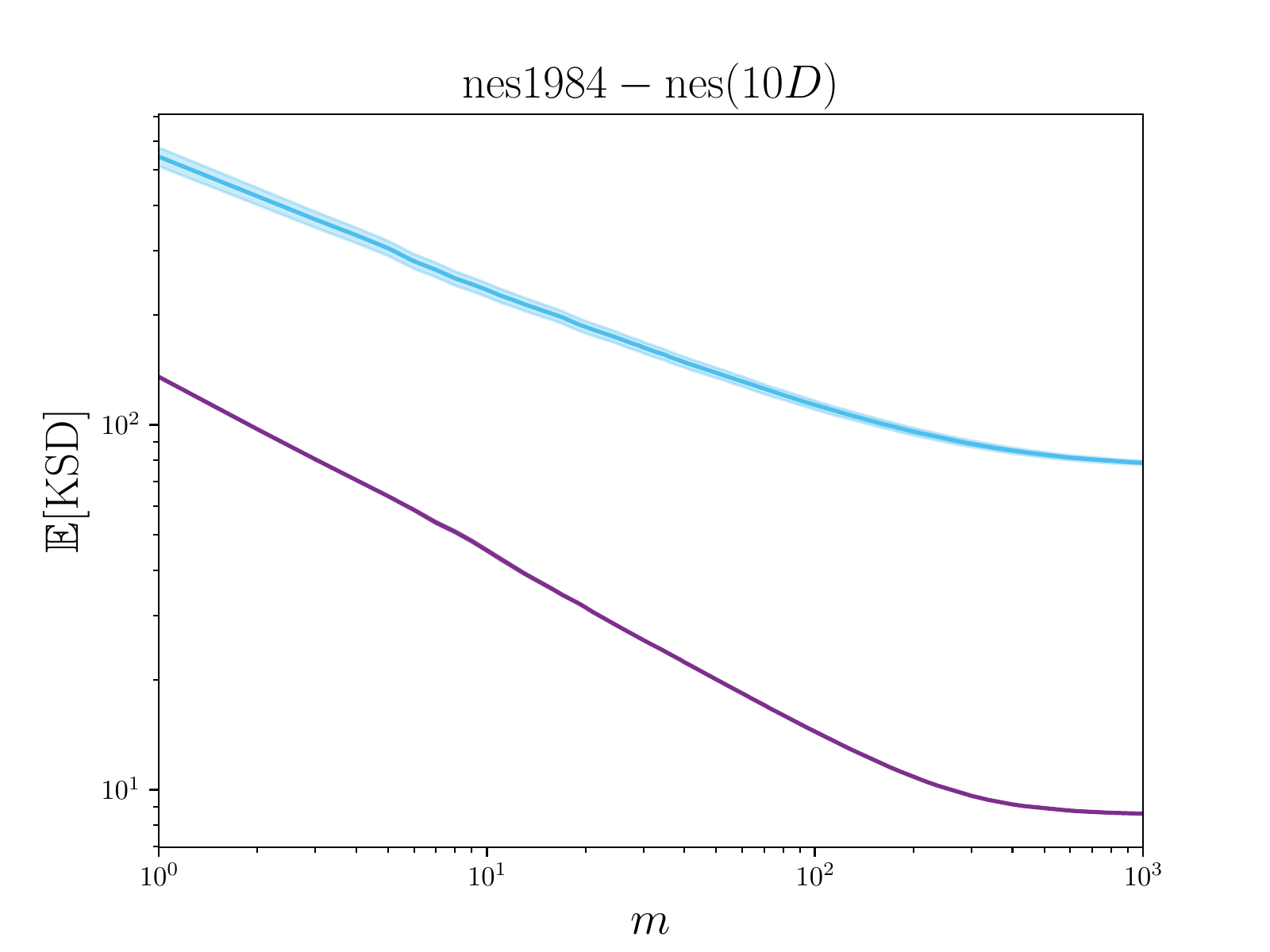}}%
}
\figureSeriesRow{%
\figureSeriesElement{}{\includegraphics[width = 0.45\textwidth]{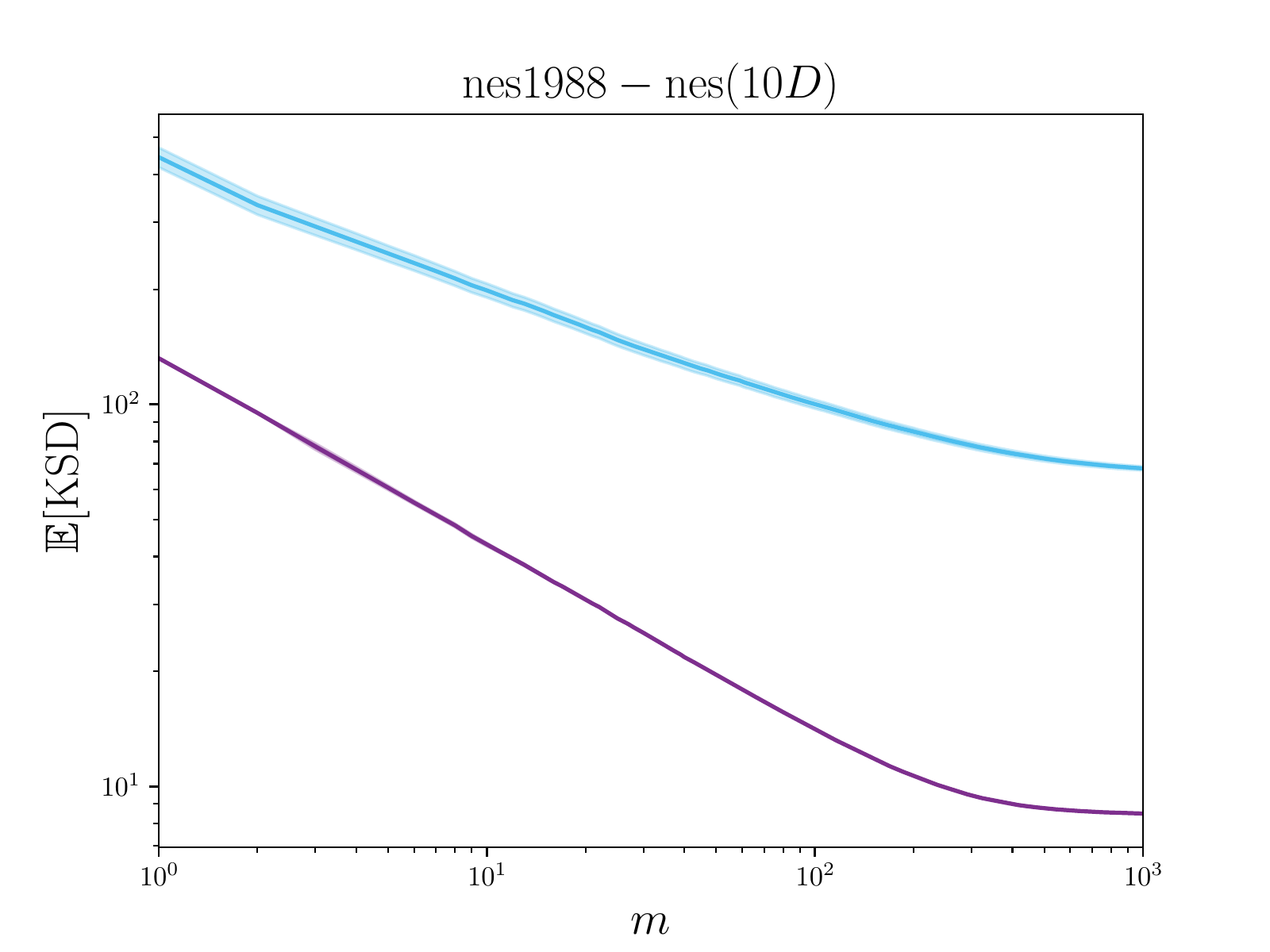}}%
\figureSeriesElement{}{\includegraphics[width = 0.45\textwidth]{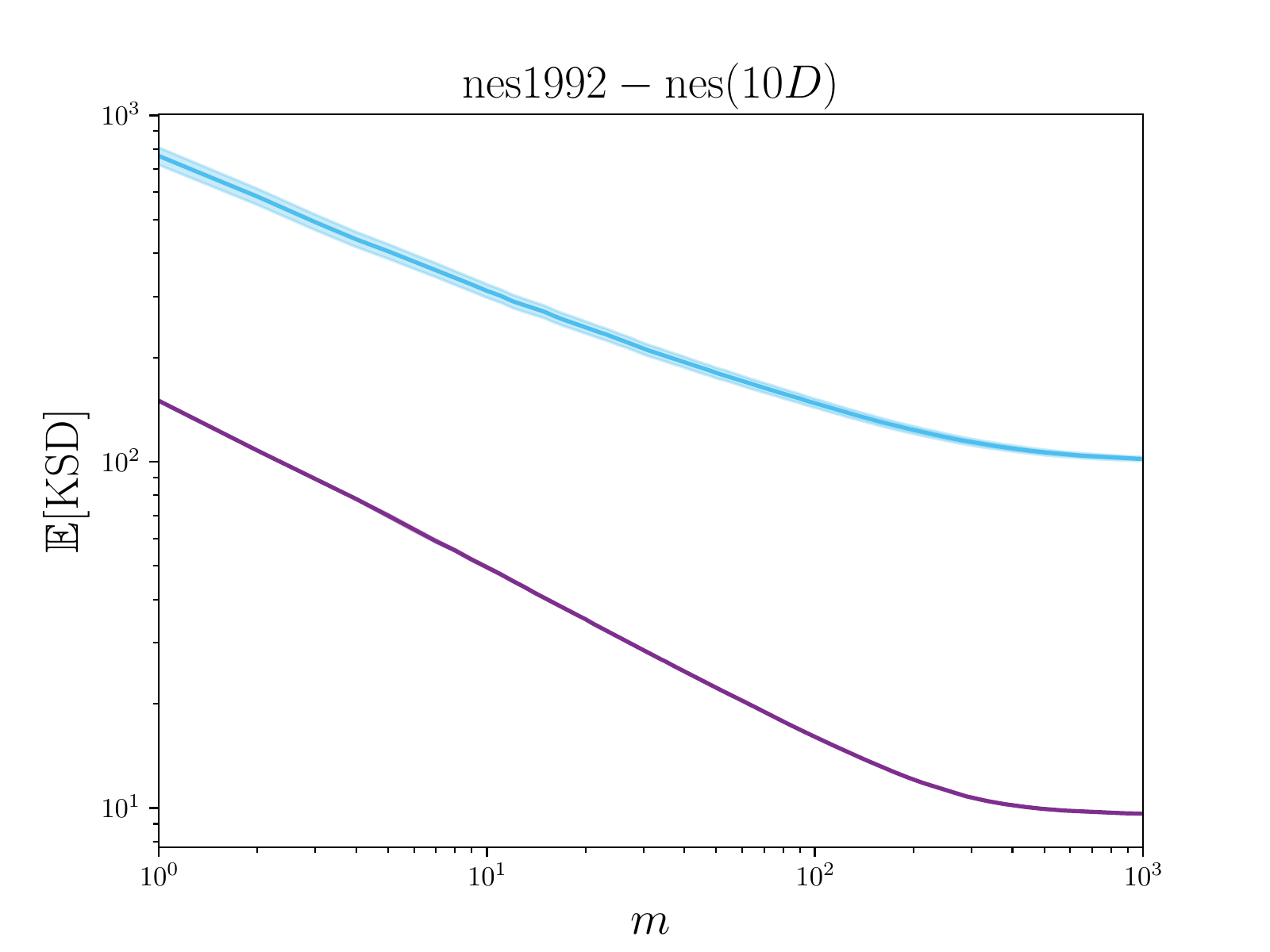}}%
}
\figureSeriesRow{%
\figureSeriesElement{}{\includegraphics[width = 0.45\textwidth]{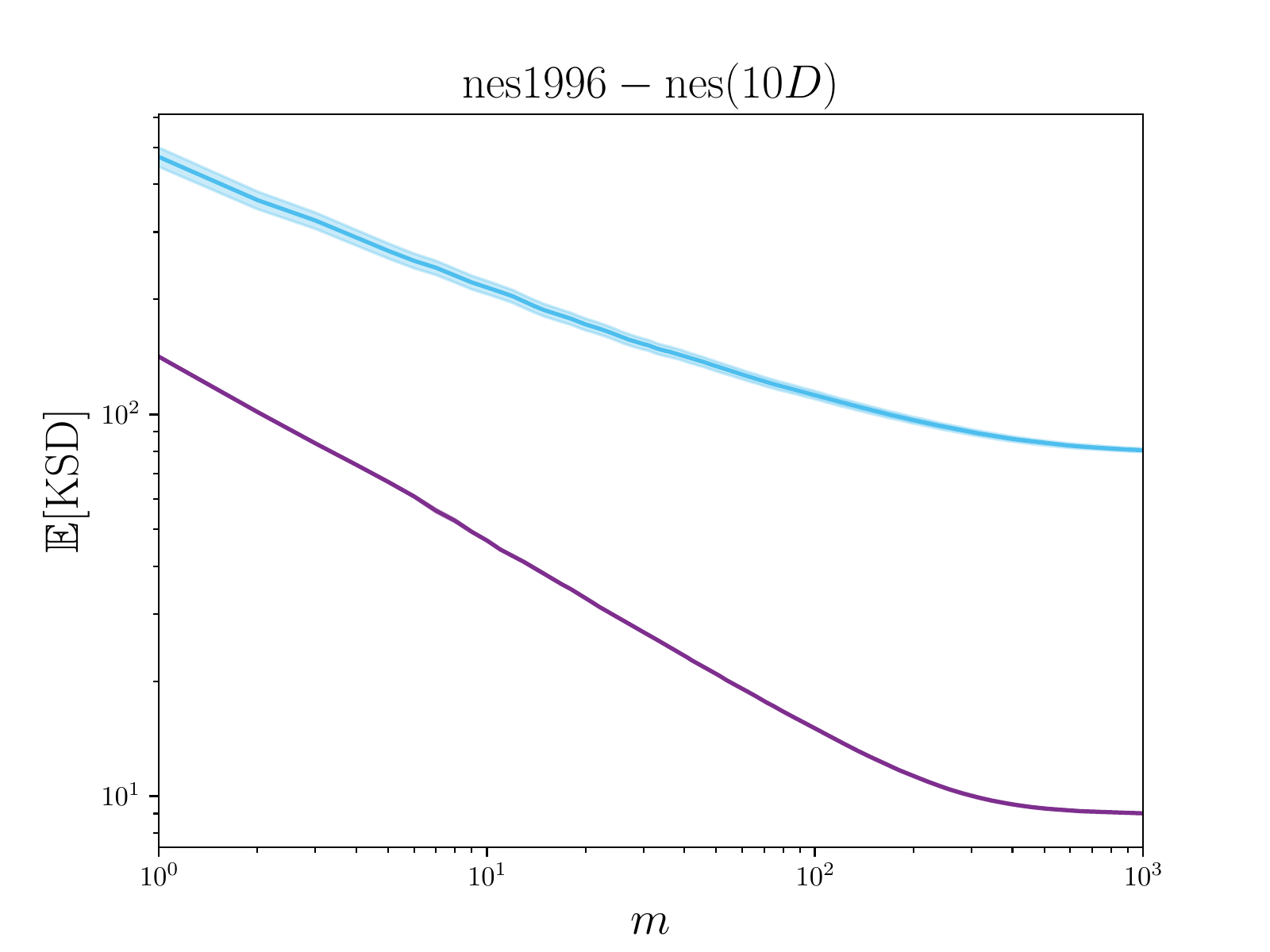}}%
\figureSeriesElement{}{\includegraphics[width = 0.45\textwidth]{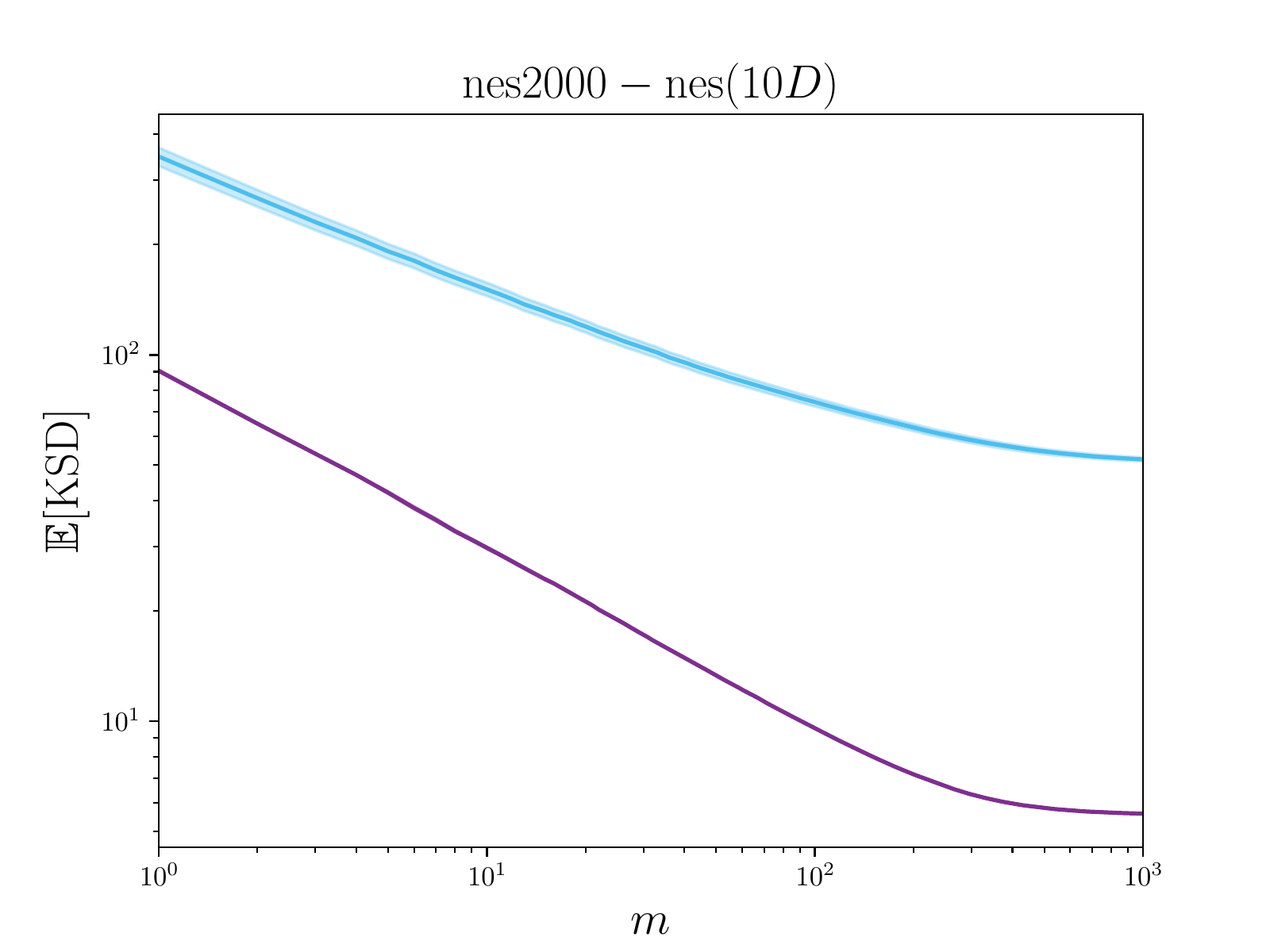}}%
}
\figureSeriesRow{%
\figureSeriesElement{}{\includegraphics[width = 0.45\textwidth]{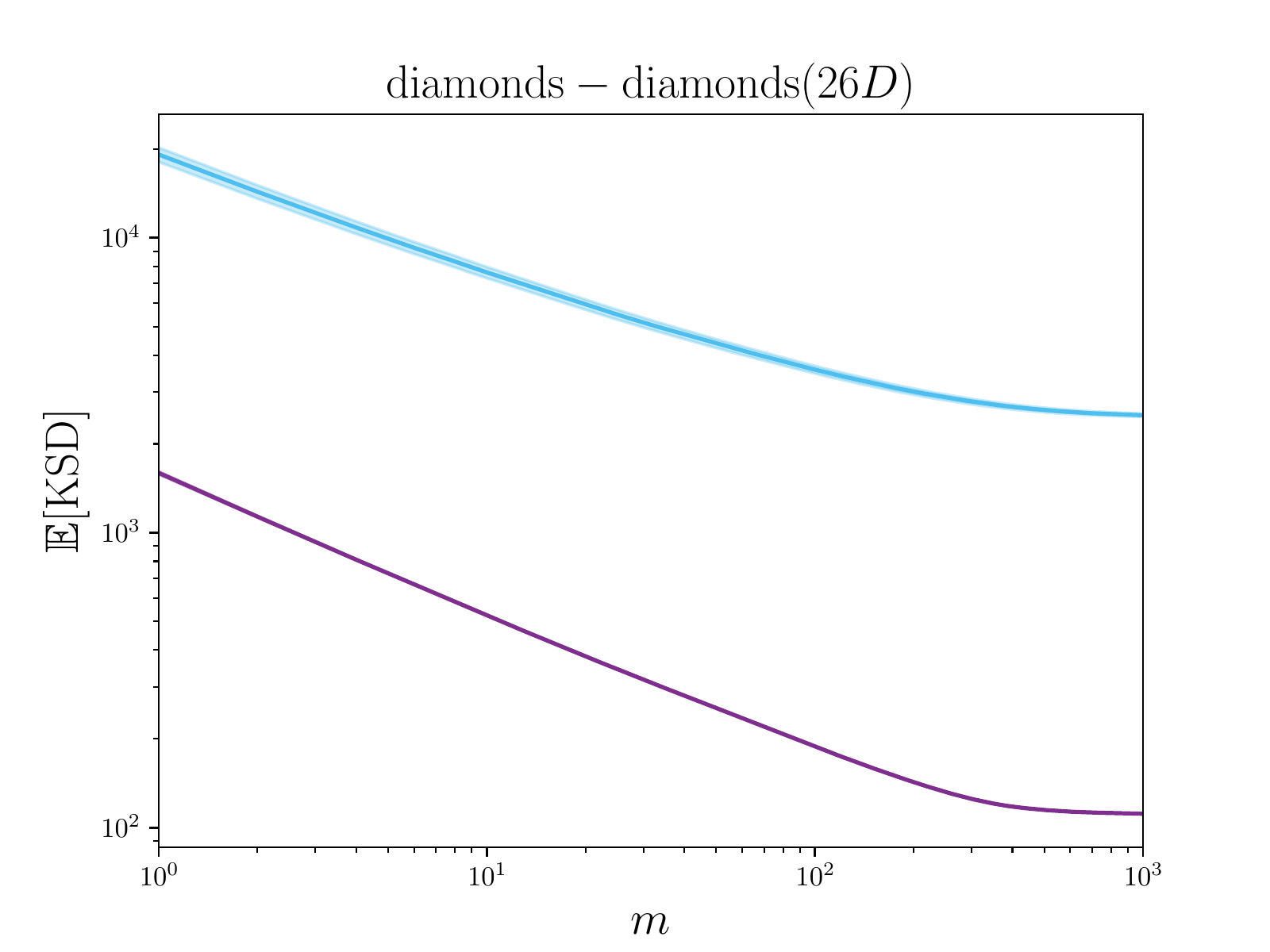}}%
\figureSeriesElement{}{\includegraphics[width = 0.45\textwidth]{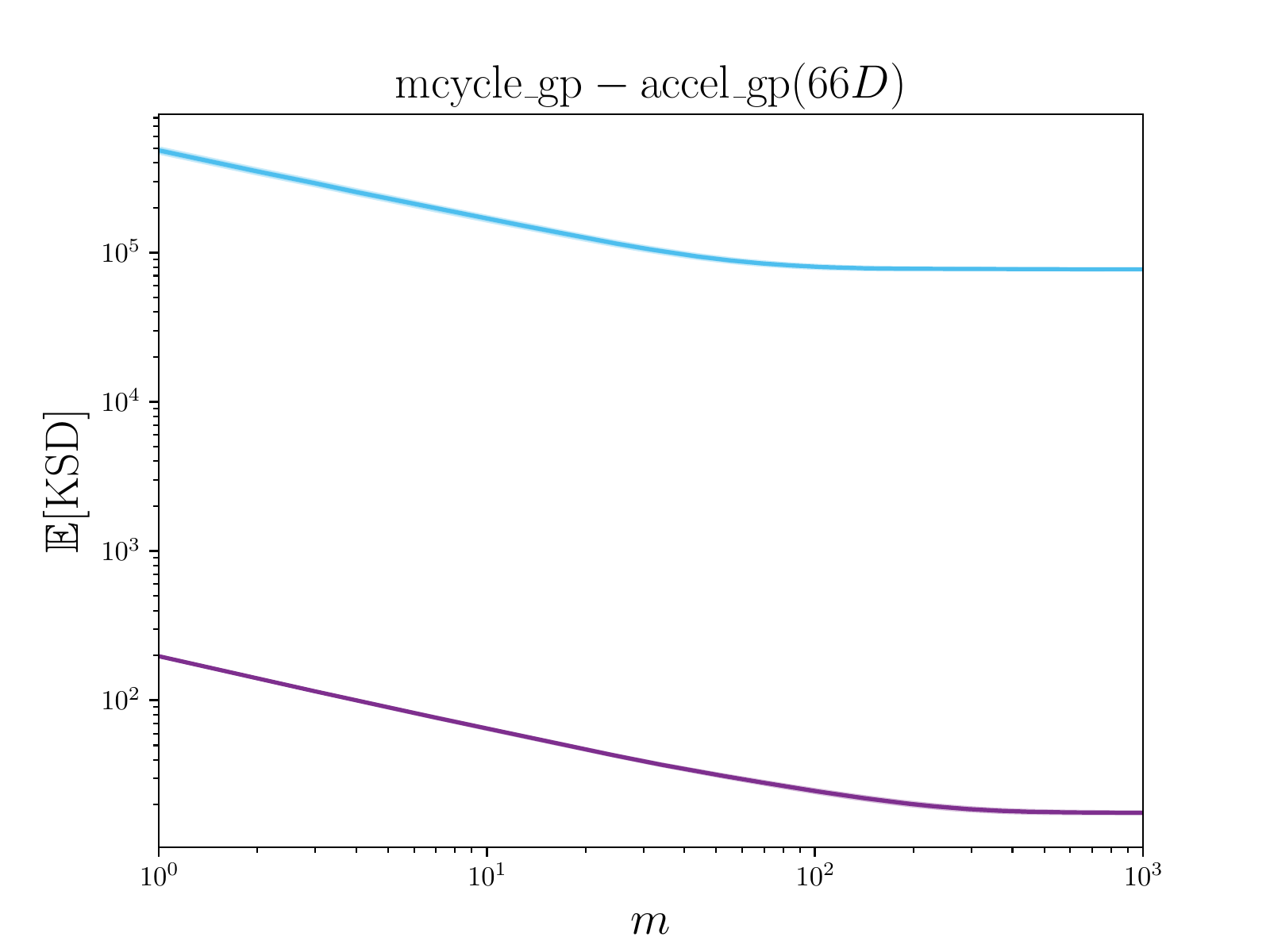}}%
}
}%

\subsection{Performance of Stein Discrepancies}
\label{app: performance}

The properties of Stein discrepancies was out of scope for this work. 
Nonetheless, there is much interest in better understanding the properties of KSDs, and in this appendix the performance of \textsc{S$\Pi$IS-MALA} in terms of 1-Wasserstein divergence is reported.
This was made possible since \texttt{PosteriorDB} supplies a set of posterior samples obtained from a long run of Hamiltonian Monte Carlo (the No-U-Turn sampler in \texttt{Stan}) which we treat as a gold standard.

Full results are presented in \ref{fig: full posterior db Wass}.
Broadly speaking, for most models the minimisation of \ac{ksd} seems to be associated with minimisation of 1-Wasserstein distance, however there are some models for which minimisation of \ac{ksd} is loosely, if at all, related to minimisation of 1-Wasserstein divergence.
In these cases, we attribute this performance to the \emph{blindness to mixing proportions} phenomena, described in \citet{wenliang2020blindness,koehler2022statistical,liu2023using}.
Convergence in 1-Wasserstein is equivalent to weak convergence plus convergence of the first moment, so the KGM--Stein kernels of order $s \geq 1$ control convergence in 1-Wasserstein.
In \Cref{subsec: sparse approx} we proved that \textsc{S$\Pi$IS-MALA} is strongly consistent in \ac{ksd} for the KGM--Stein kernel in the case $s=1$, so we can expect strong consistency in 1-Wasserstein divergence for \textsc{S$\Pi$IS-MALA} in this case as well.
It is interesting to observe that better 1-Wasserstein quantisations tend to be provided by \textsc{S$\Pi$IS-MALA} compared to \textsc{SIS-MALA} when either the Langevin--Stein or KGM--Stein kernel are used.

The development of improved Stein discrepancies is an active area of research, and we emphasise that the methodology developed in this work can be applied to \emph{any} \acp{ksd}, including potentially \acp{ksd} with better or more direct control over standard notions of convergence (such as 1-Wasserstein) that in the future may be developed.

\figureSeriesHere{%
\label{fig: full posterior db Wass}
Performance of Stein discrepancies on \texttt{PosteriorDB}. 
Here we compared raw output from MALA (dotted lines) with the post-processed output provided by the default Stein importance sampling method of \citet{liu2017black} (\textsc{SIS-MALA}; solid lines) and the proposed Stein $\Pi$-Importance Sampling method (\textsc{S$\Pi$IS-MALA}; dashed lines).
The Langevin (purple) and KGM3--Stein kernels (blue) were used for \textsc{SIS-MALA} and \textsc{S$\Pi$IS-MALA}, and the 1-Wasserstein divergence is reported as the number $n$ of iterations of MALA is varied.
Ten replicates were computed and standard errors were plotted.
The name of each model is shown in the title of the corresponding panel, and the dimension $d$ of the parameter vector is given in parentheses.
[Legend:  \protect\dottedblackline Raw \textsc{MALA}.
Langevin--Stein kernel:  \protect\solidpurpleline \textsc{SIS-MALA}, \protect\dashedpurpleline \textsc{S$\Pi$IS-MALA}.
KGM3--Stein kernel:  \protect\solidblueline \textsc{SIS-MALA}, \protect\dashedblueline \textsc{S$\Pi$IS-MALA}.]
}{%
\figureSeriesRow{%
\figureSeriesElement{}{\includegraphics[width = 0.45\textwidth]{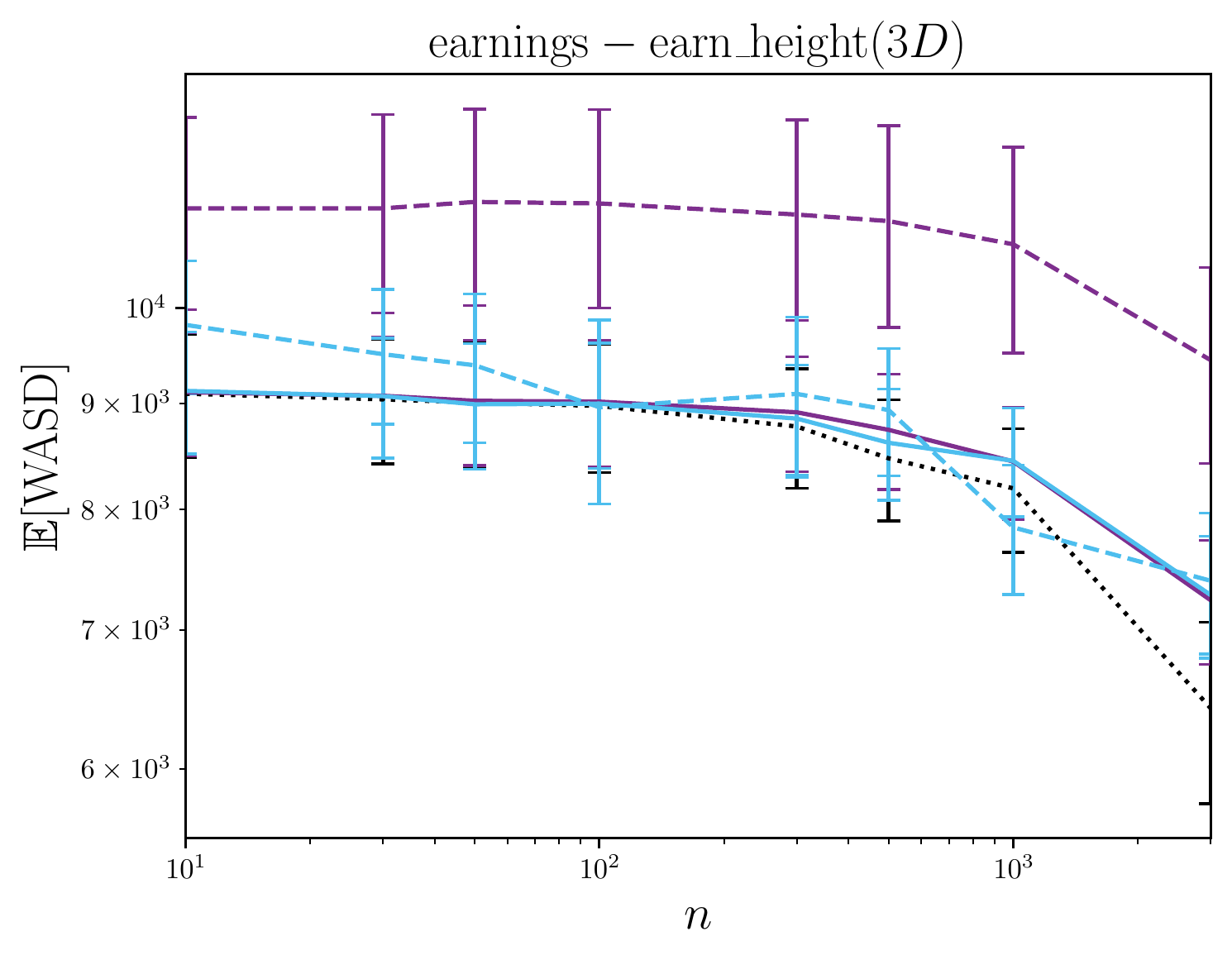}}%
\figureSeriesElement{}{\includegraphics[width = 0.45\textwidth]{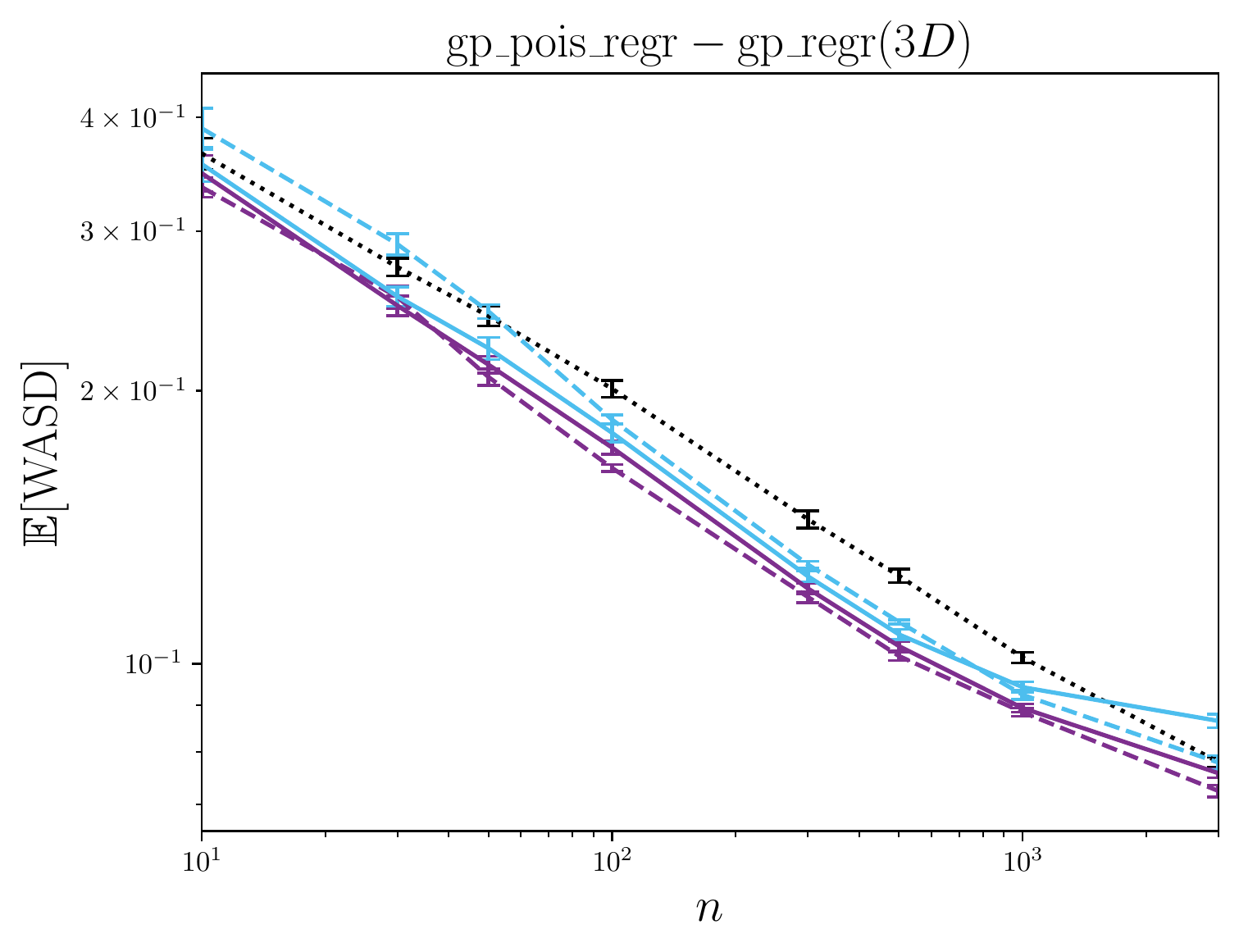}}%
}
\figureSeriesRow{%
\figureSeriesElement{}{\includegraphics[width = 0.45\textwidth]{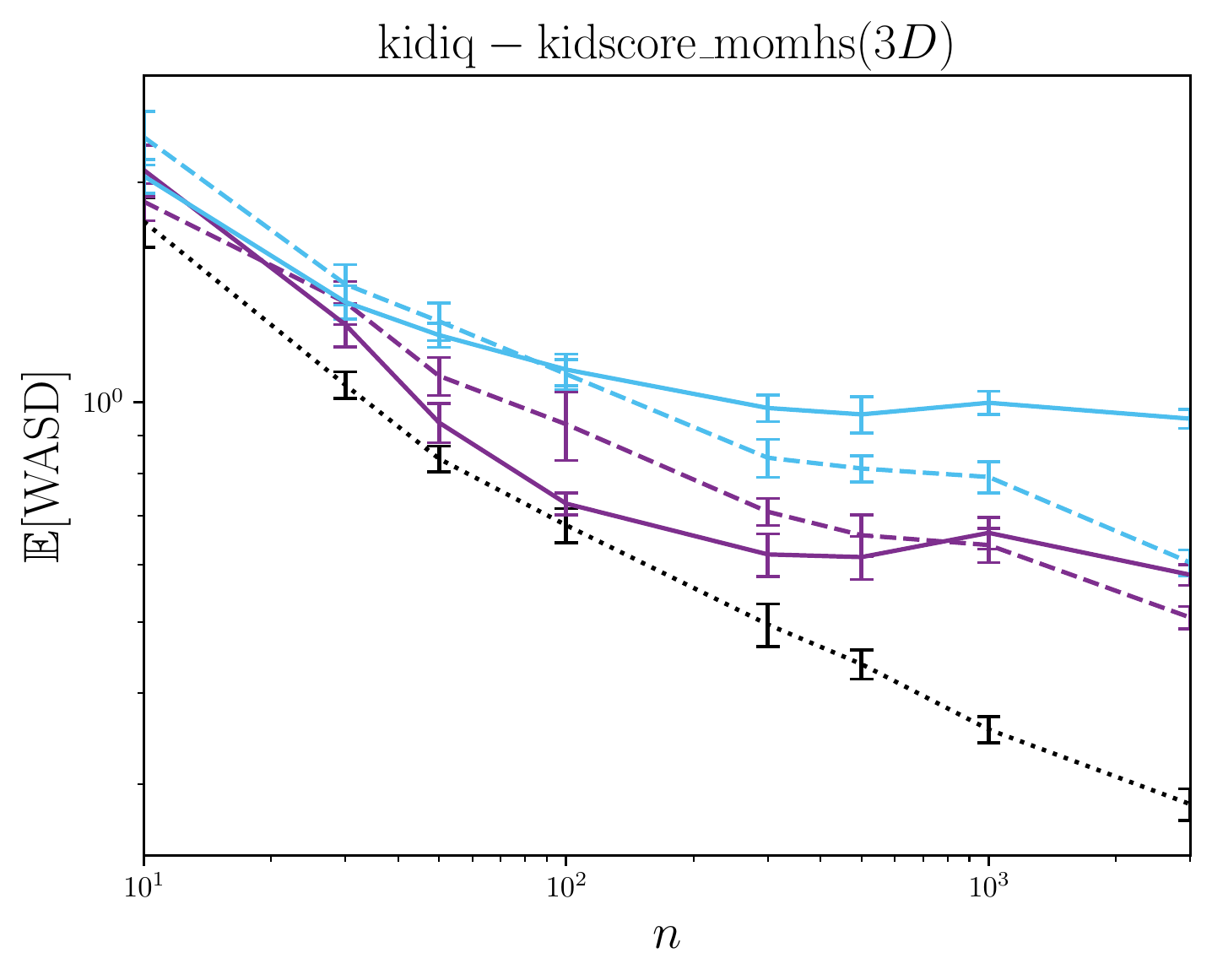}}%
\figureSeriesElement{}{\includegraphics[width = 0.45\textwidth]{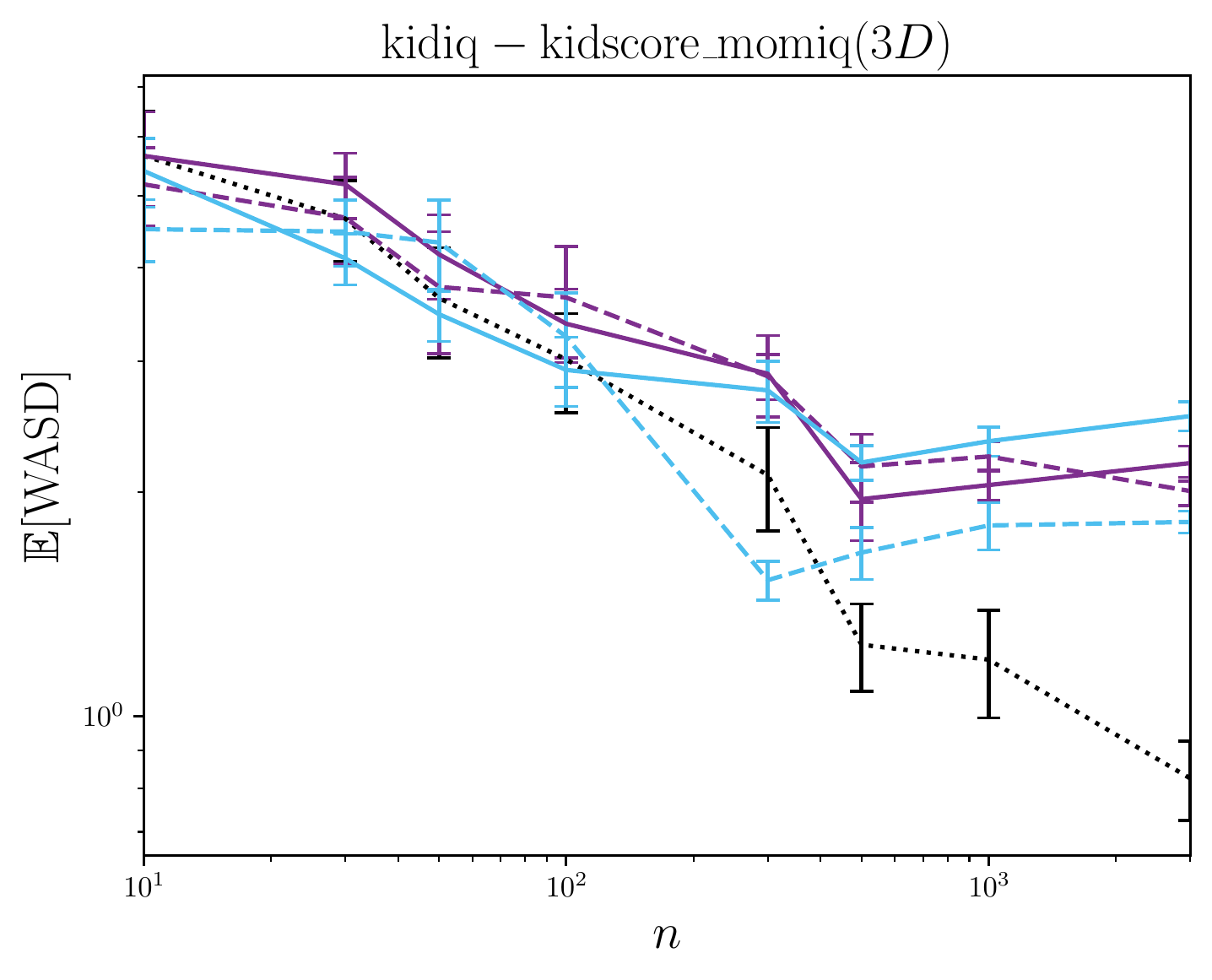}}%
}
\figureSeriesRow{%
\figureSeriesElement{}{\includegraphics[width = 0.45\textwidth]{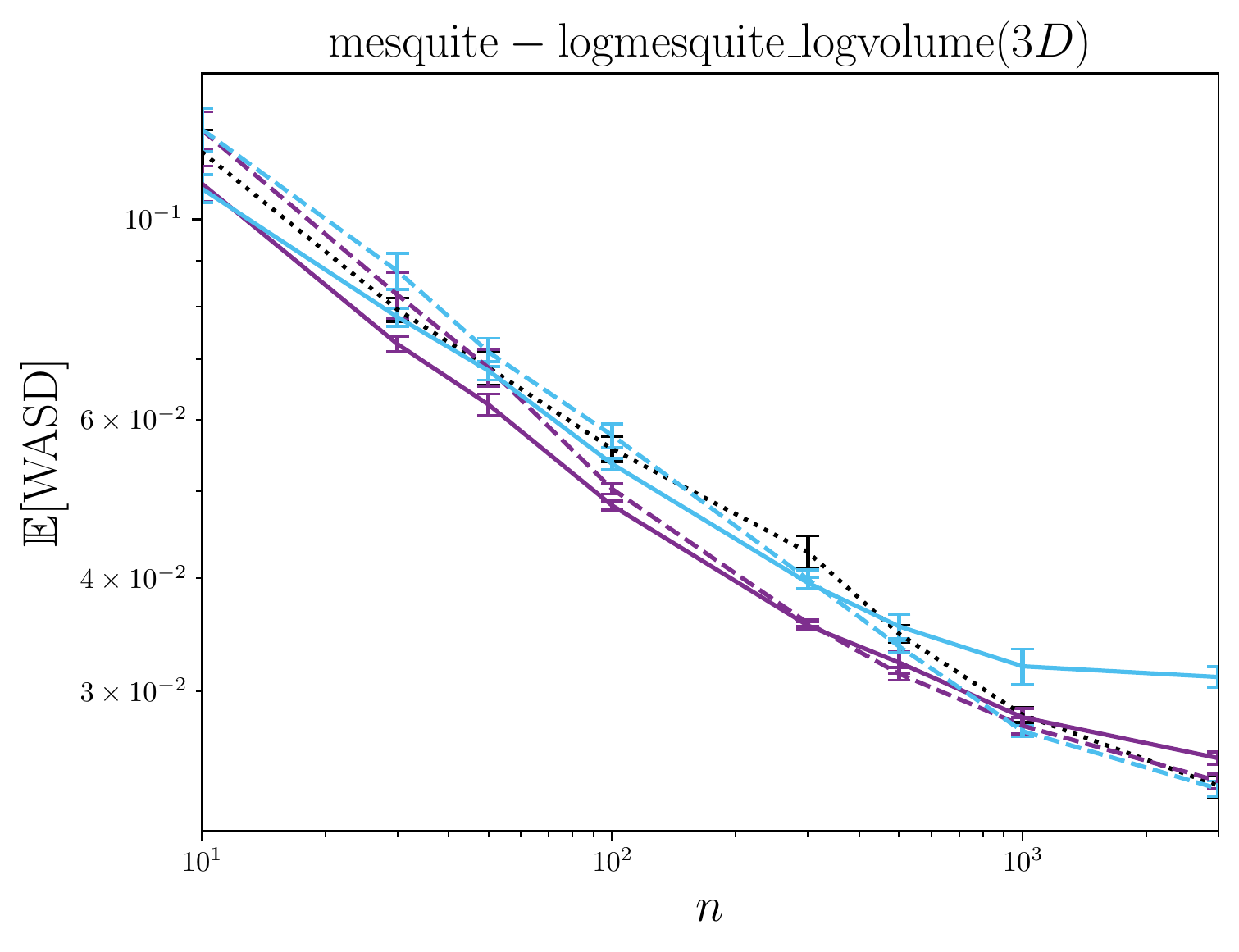}}%
\figureSeriesElement{}{\includegraphics[width = 0.45\textwidth]{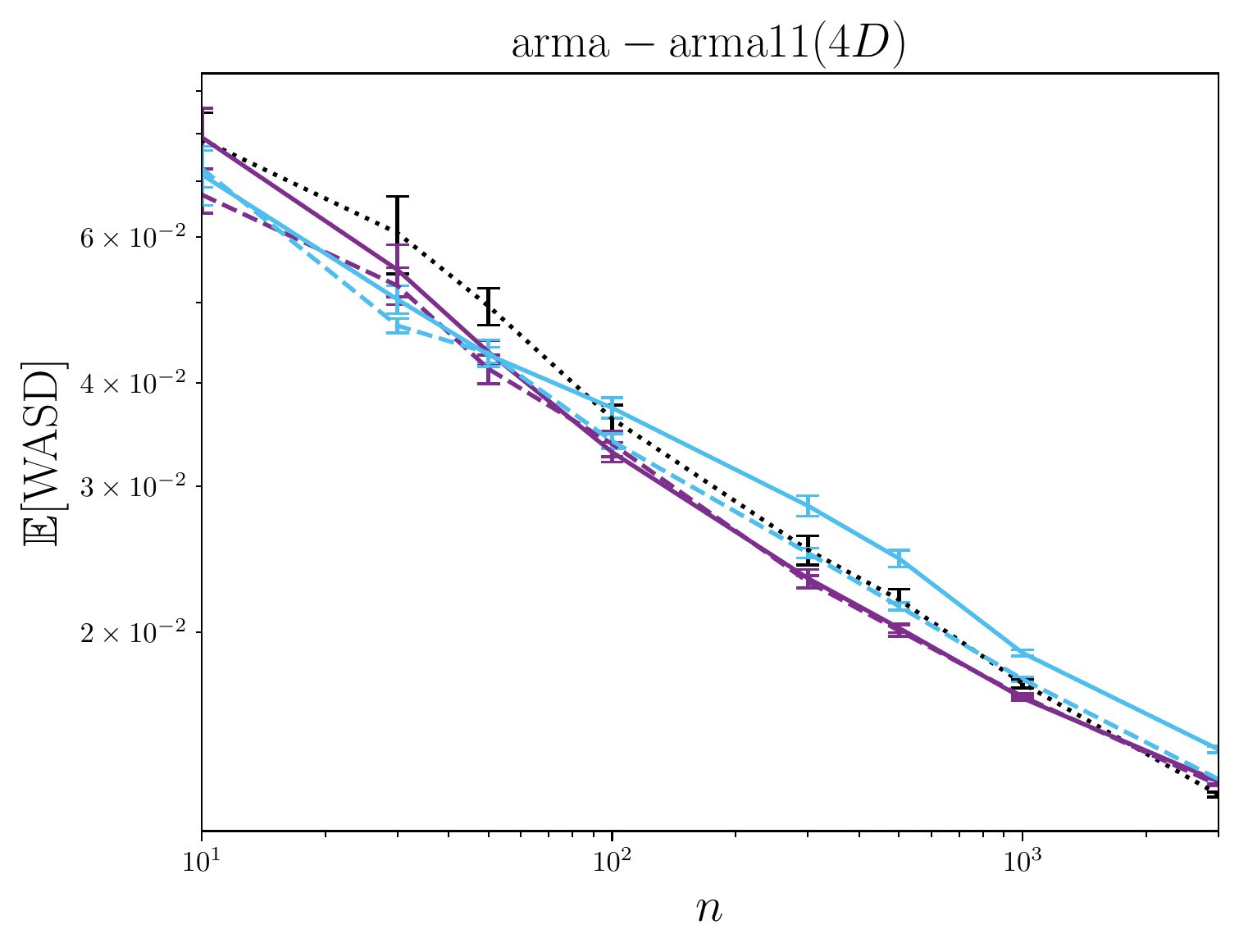}}%
}
\figureSeriesRow{%
\figureSeriesElement{}{\includegraphics[width = 0.45\textwidth]{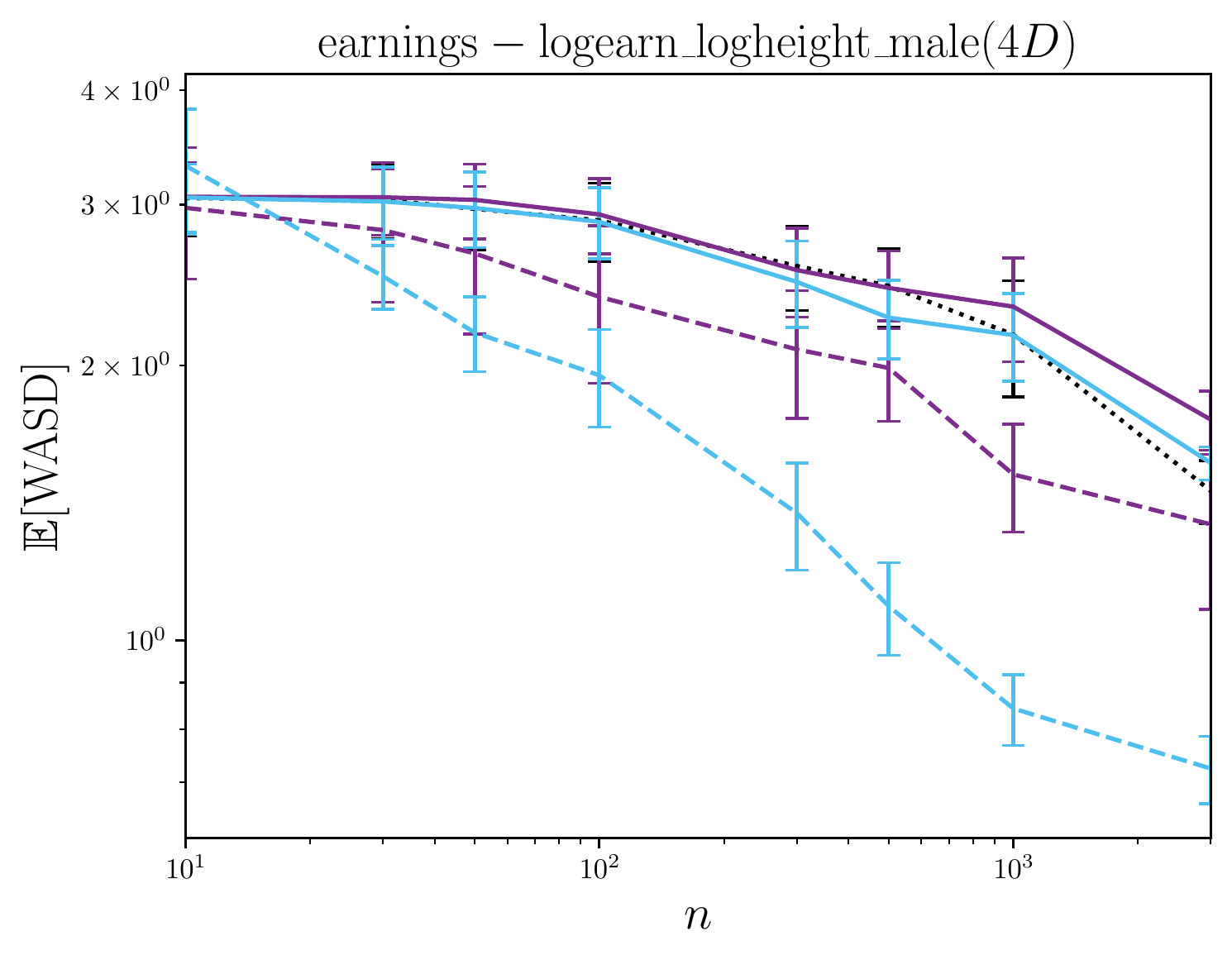}}%
\figureSeriesElement{}{\includegraphics[width = 0.45\textwidth]{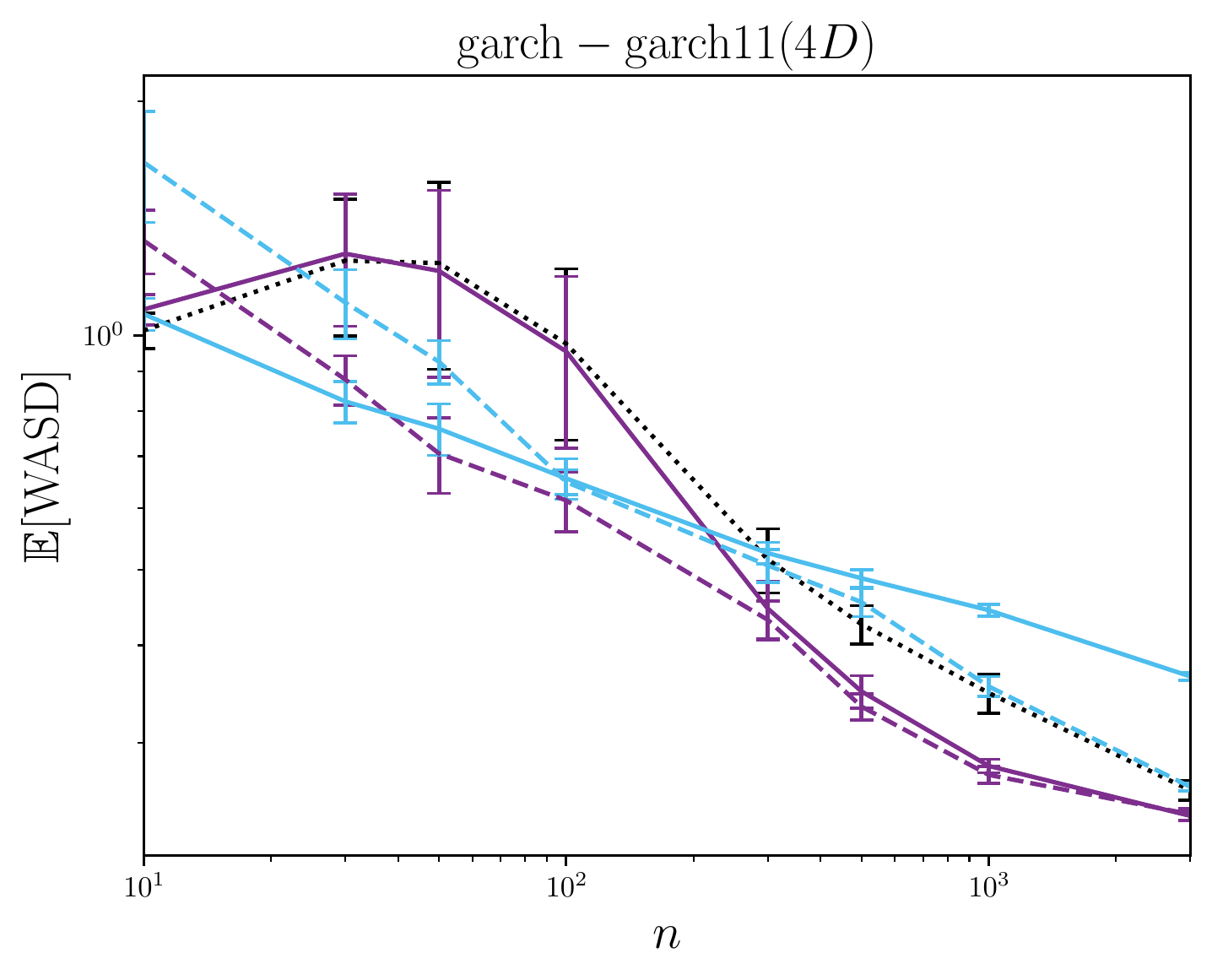}}%
}
\figureSeriesRow{%
\figureSeriesElement{}{\includegraphics[width = 0.45\textwidth]{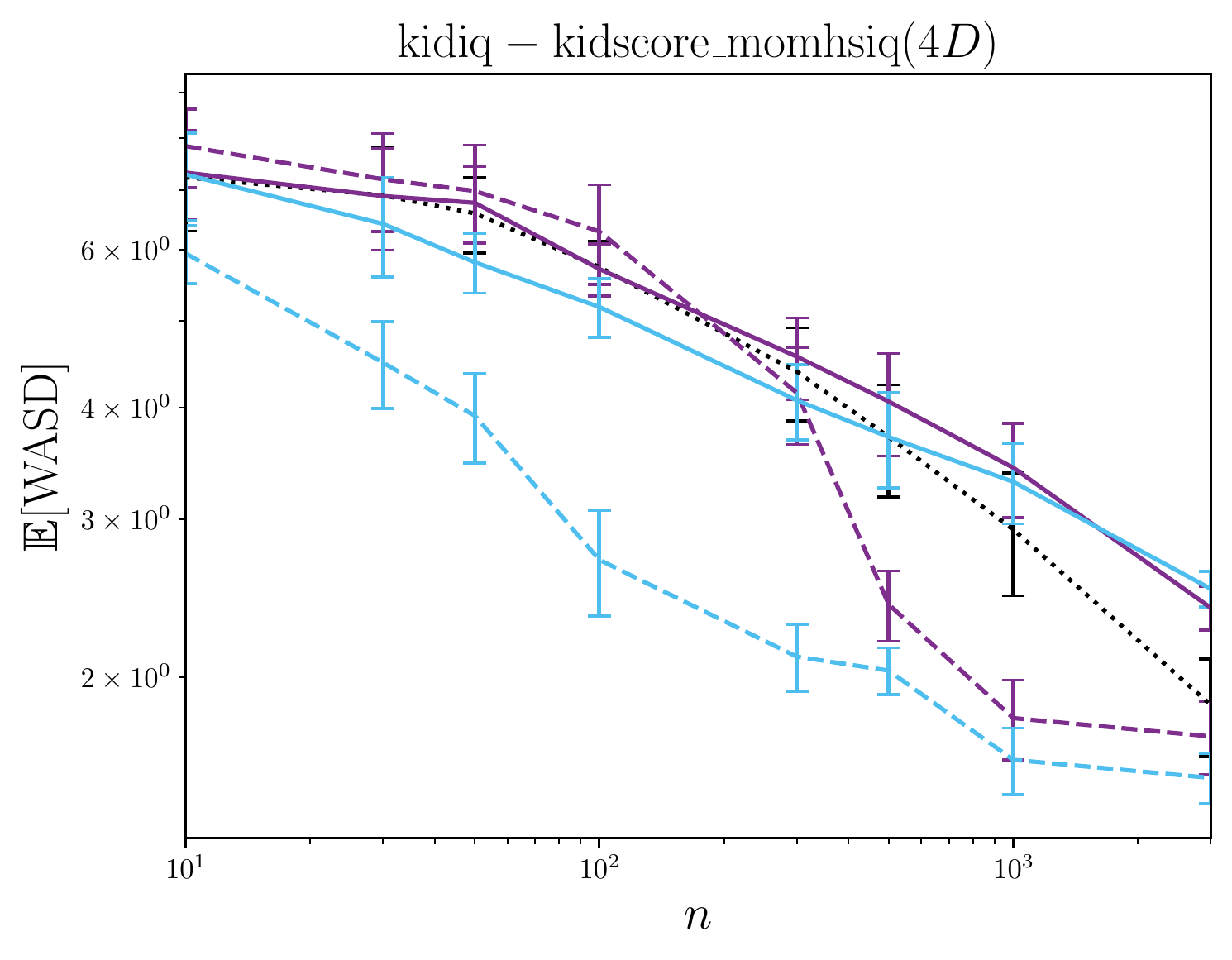}}%
\figureSeriesElement{}{\includegraphics[width = 0.45\textwidth]{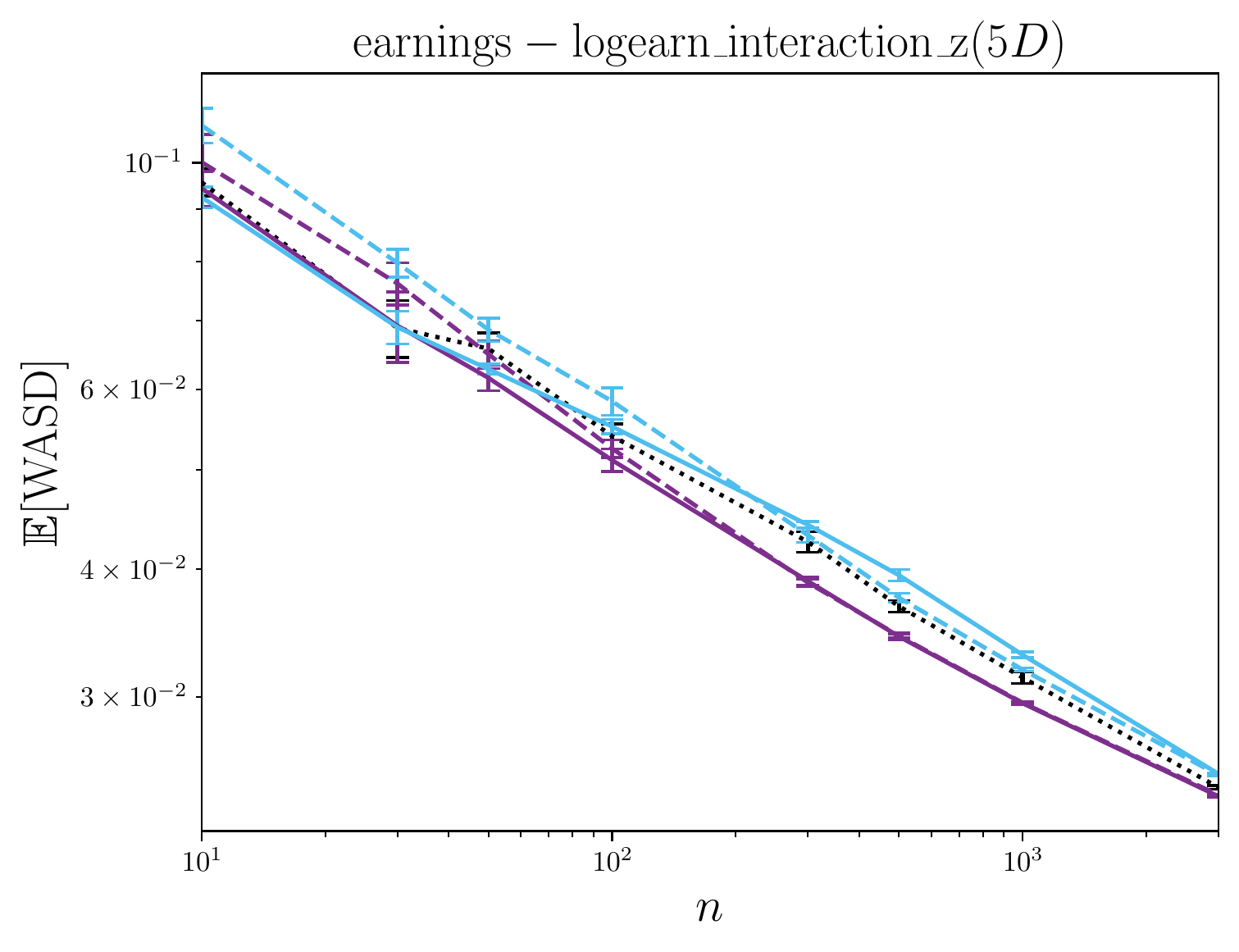}}%
}
\figureSeriesRow{%
\figureSeriesElement{}{\includegraphics[width = 0.45\textwidth]{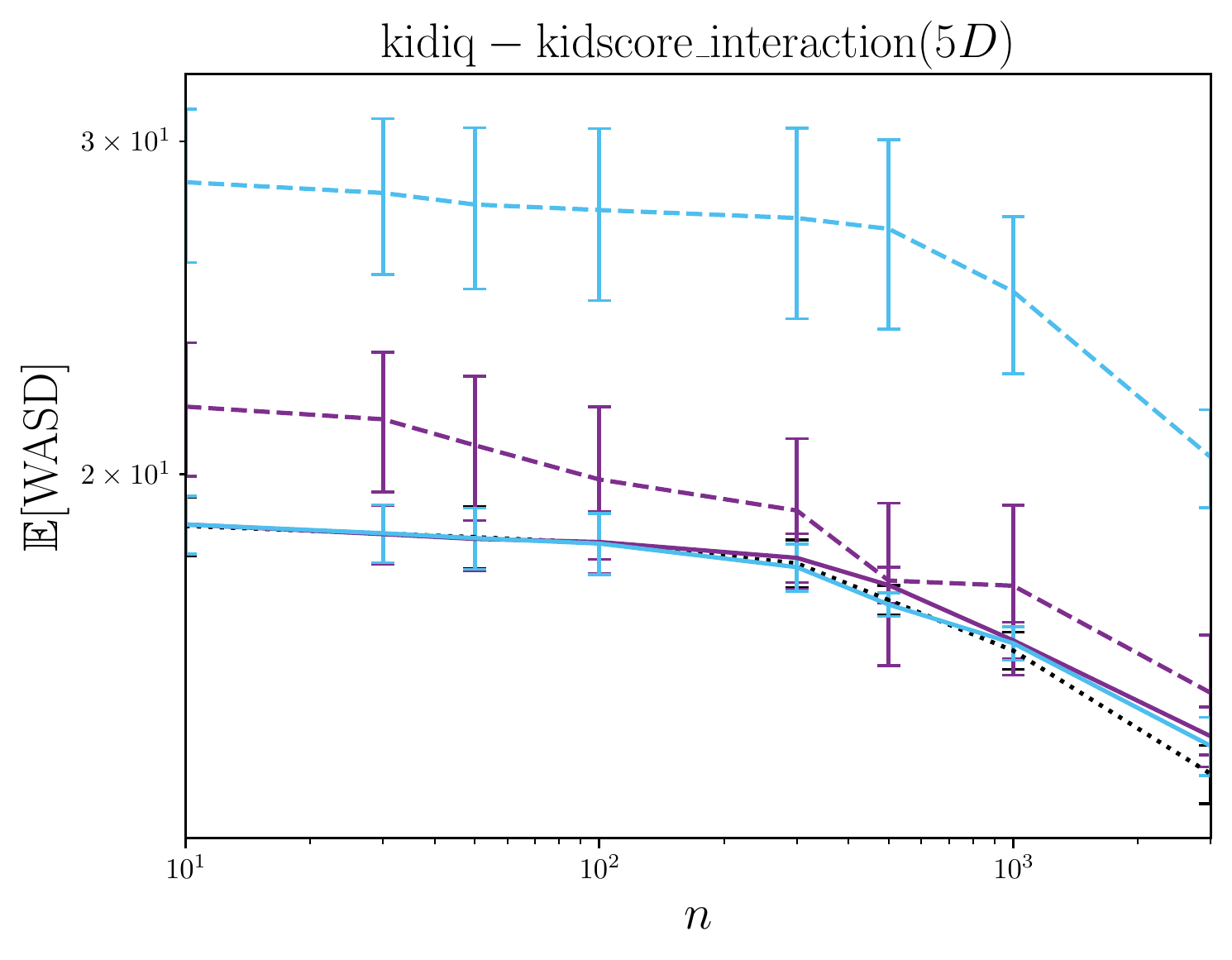}}%
\figureSeriesElement{}{\includegraphics[width = 0.45\textwidth]{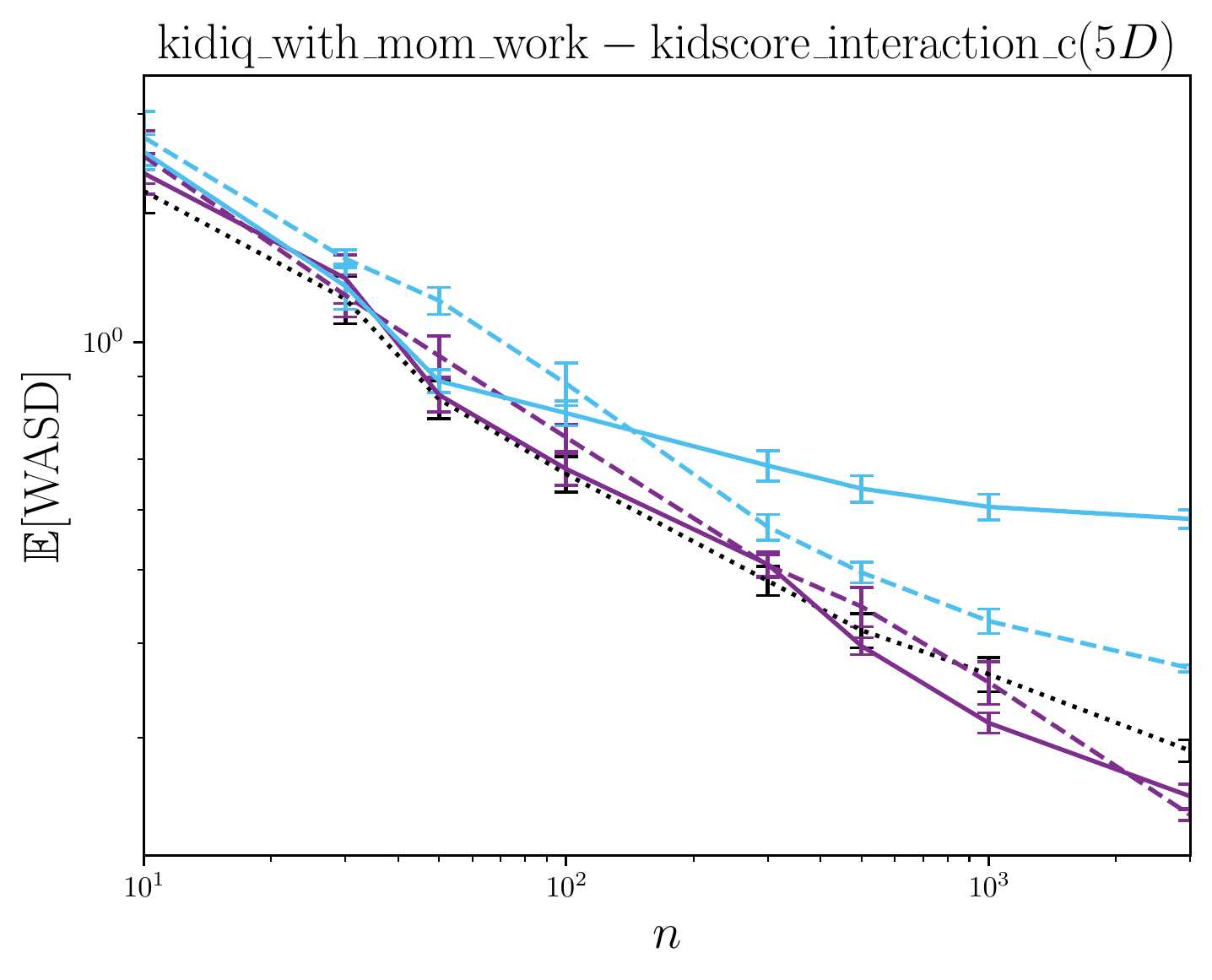}}%
}
\figureSeriesRow{%
\figureSeriesElement{}{\includegraphics[width = 0.45\textwidth]{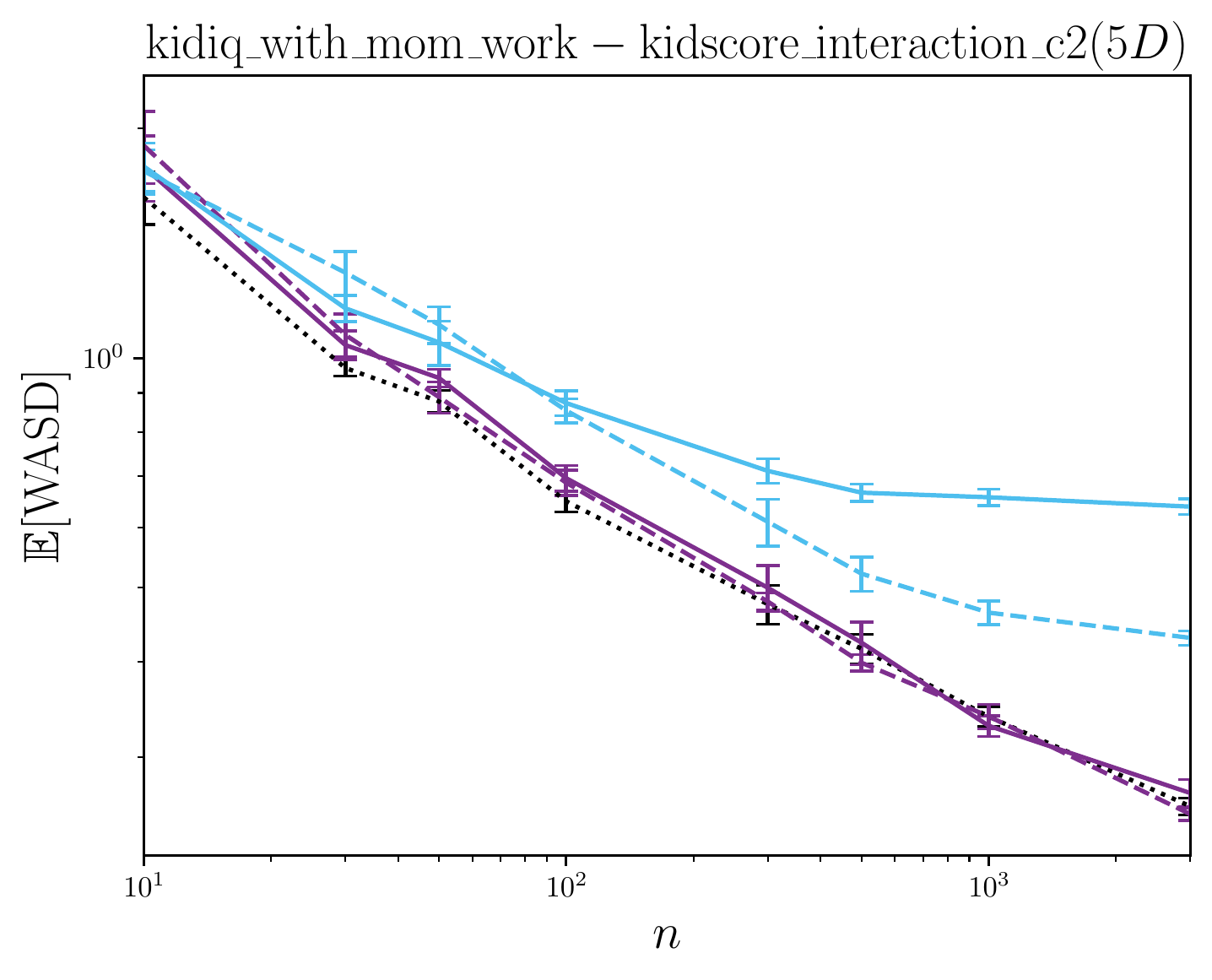}}%
\figureSeriesElement{}{\includegraphics[width = 0.45\textwidth]{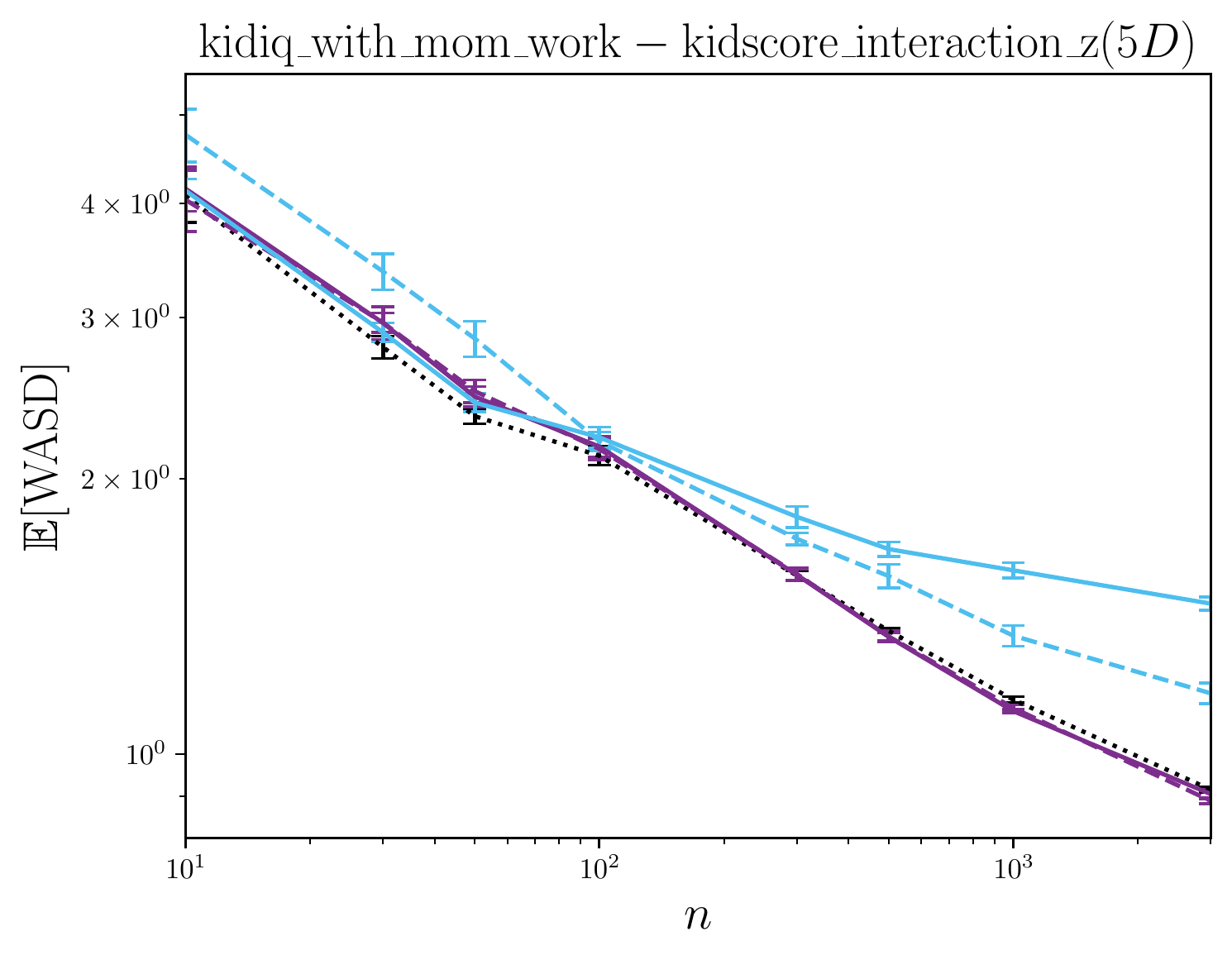}}%
}
\figureSeriesRow{%
\figureSeriesElement{}{\includegraphics[width = 0.45\textwidth]{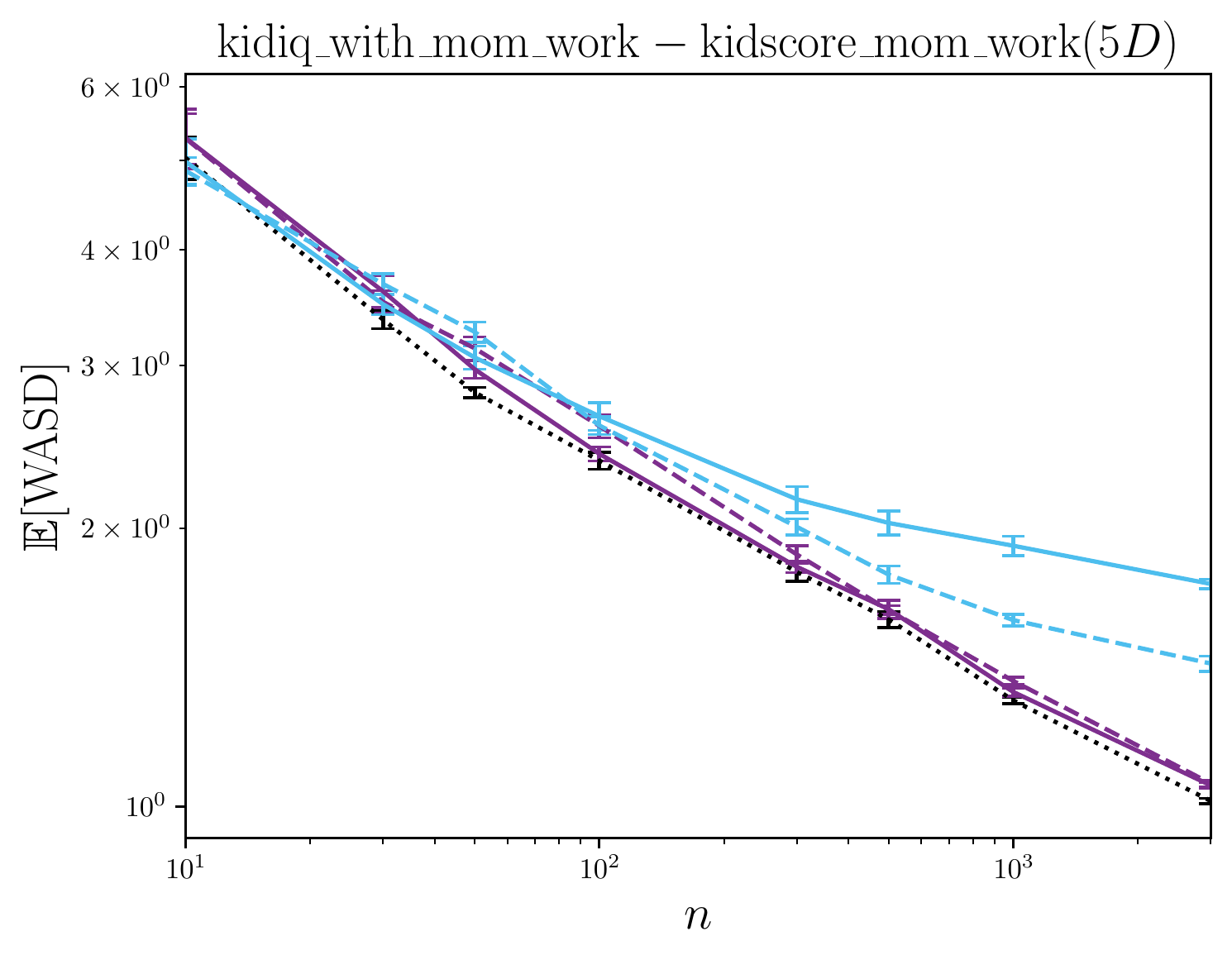}}%
\figureSeriesElement{}{\includegraphics[width = 0.45\textwidth]{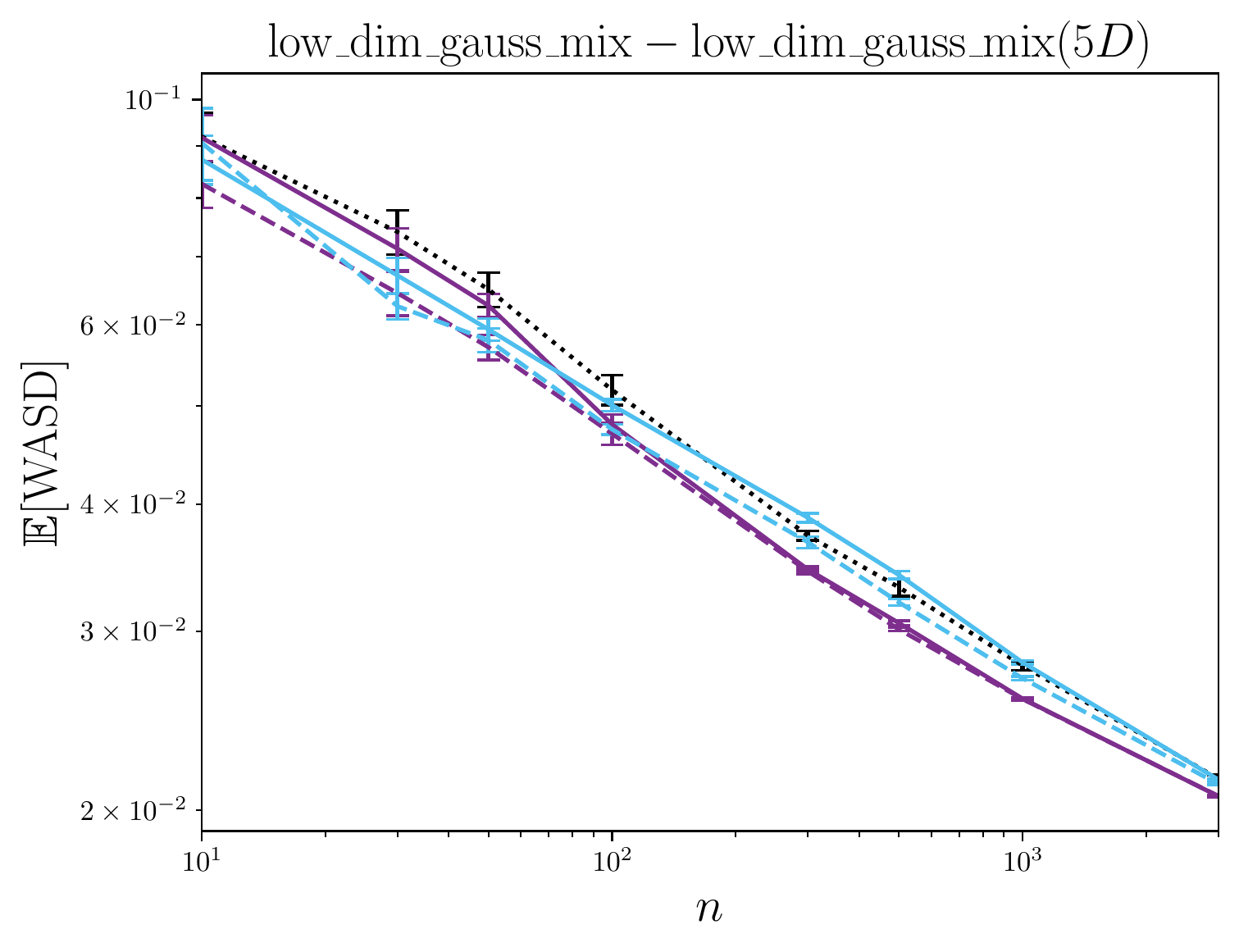}}%
}
\figureSeriesRow{%
\figureSeriesElement{}{\includegraphics[width = 0.45\textwidth]{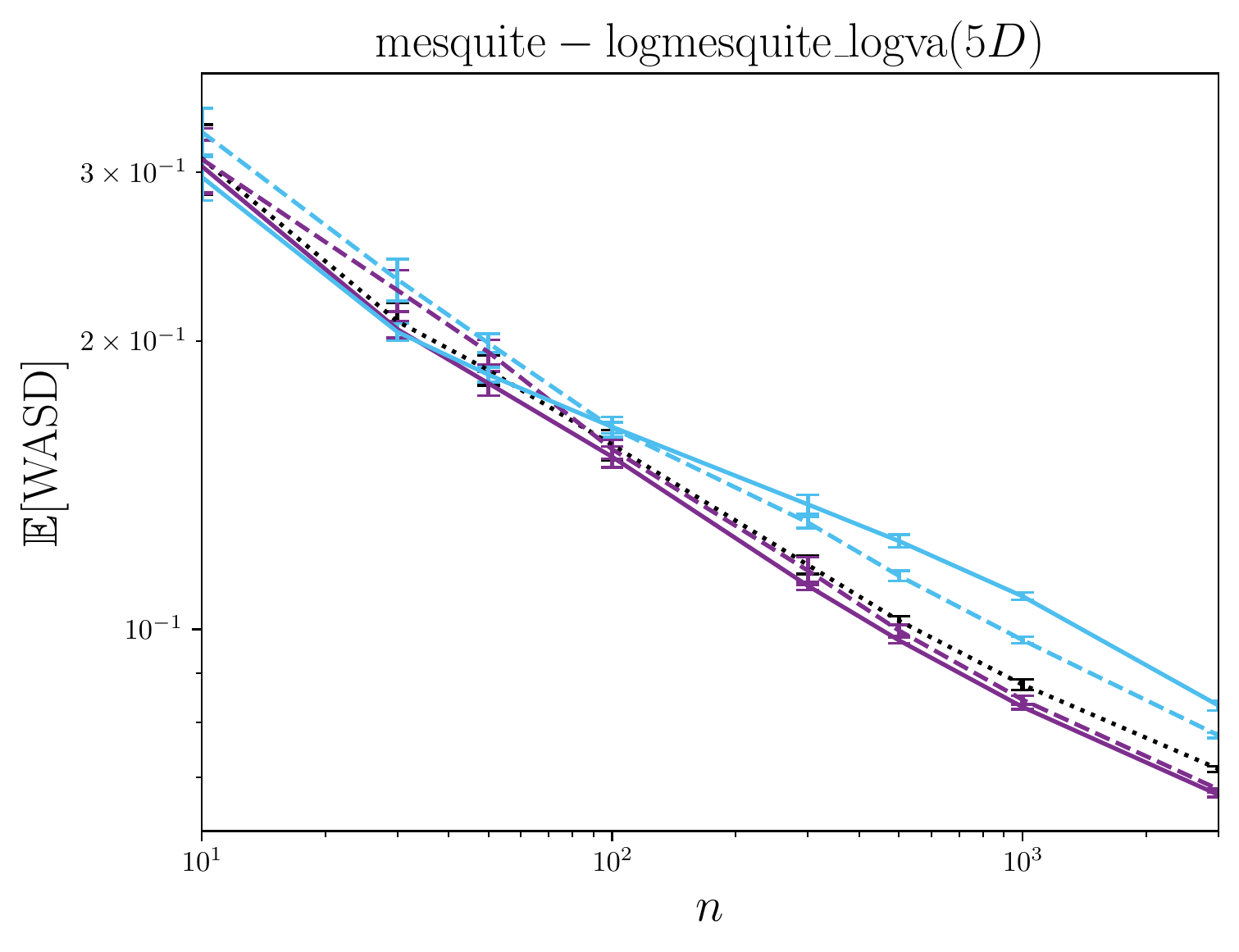}}%
\figureSeriesElement{}{\includegraphics[width = 0.45\textwidth]{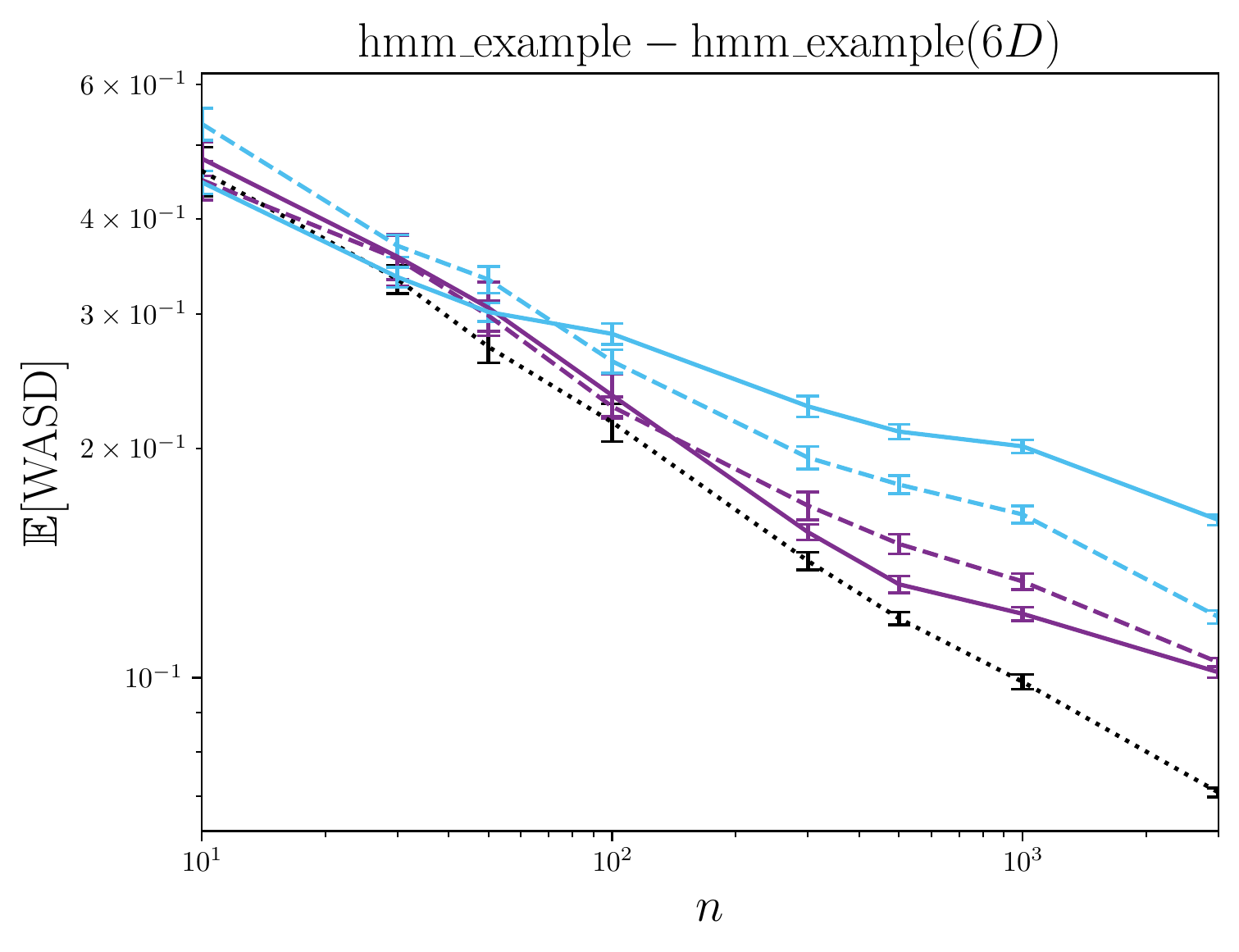}}%
}
\figureSeriesRow{%
\figureSeriesElement{}{\includegraphics[width = 0.45\textwidth]{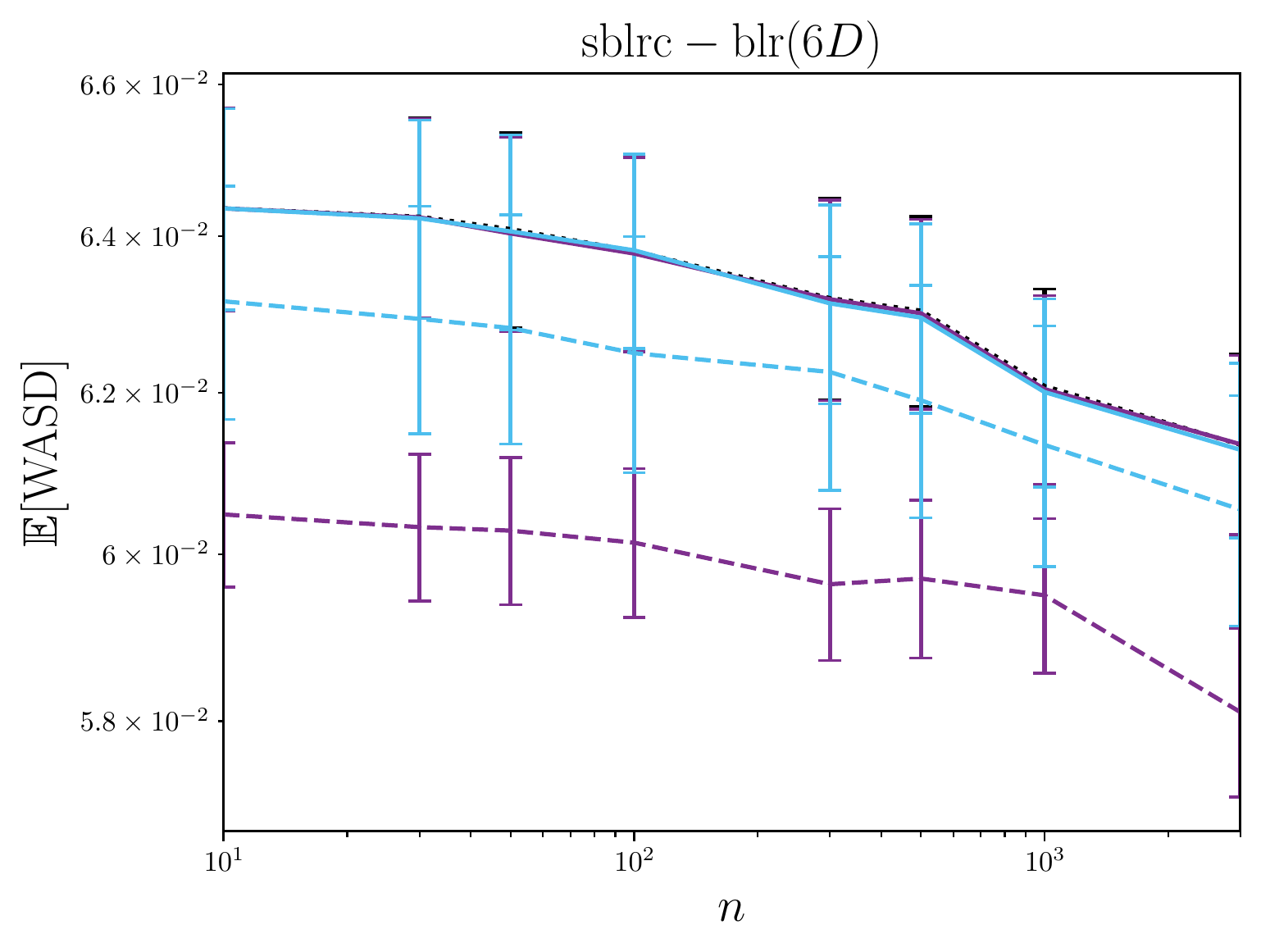}}%
\figureSeriesElement{}{\includegraphics[width = 0.45\textwidth]{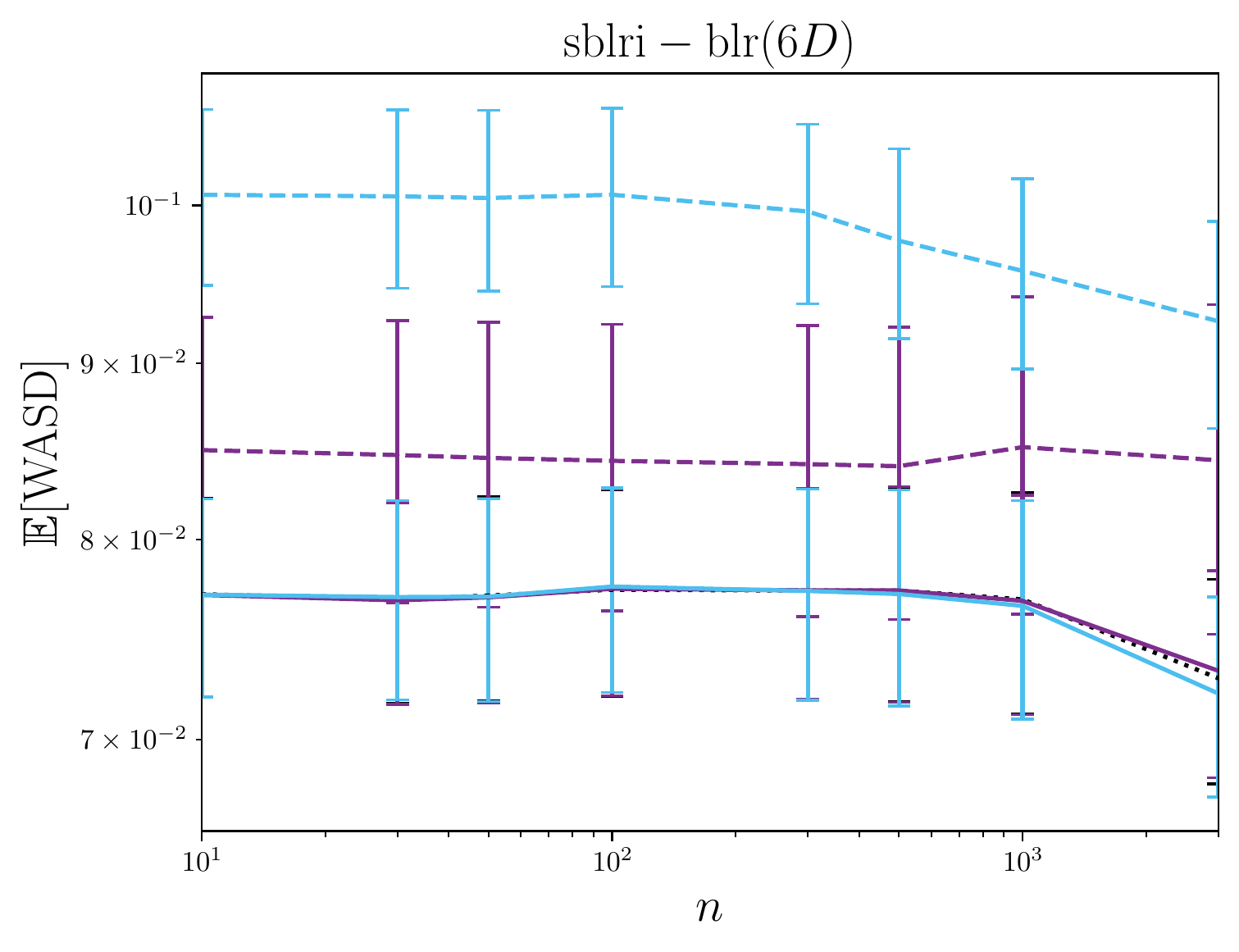}}%
}
\figureSeriesRow{%
\figureSeriesElement{}{\includegraphics[width = 0.45\textwidth]{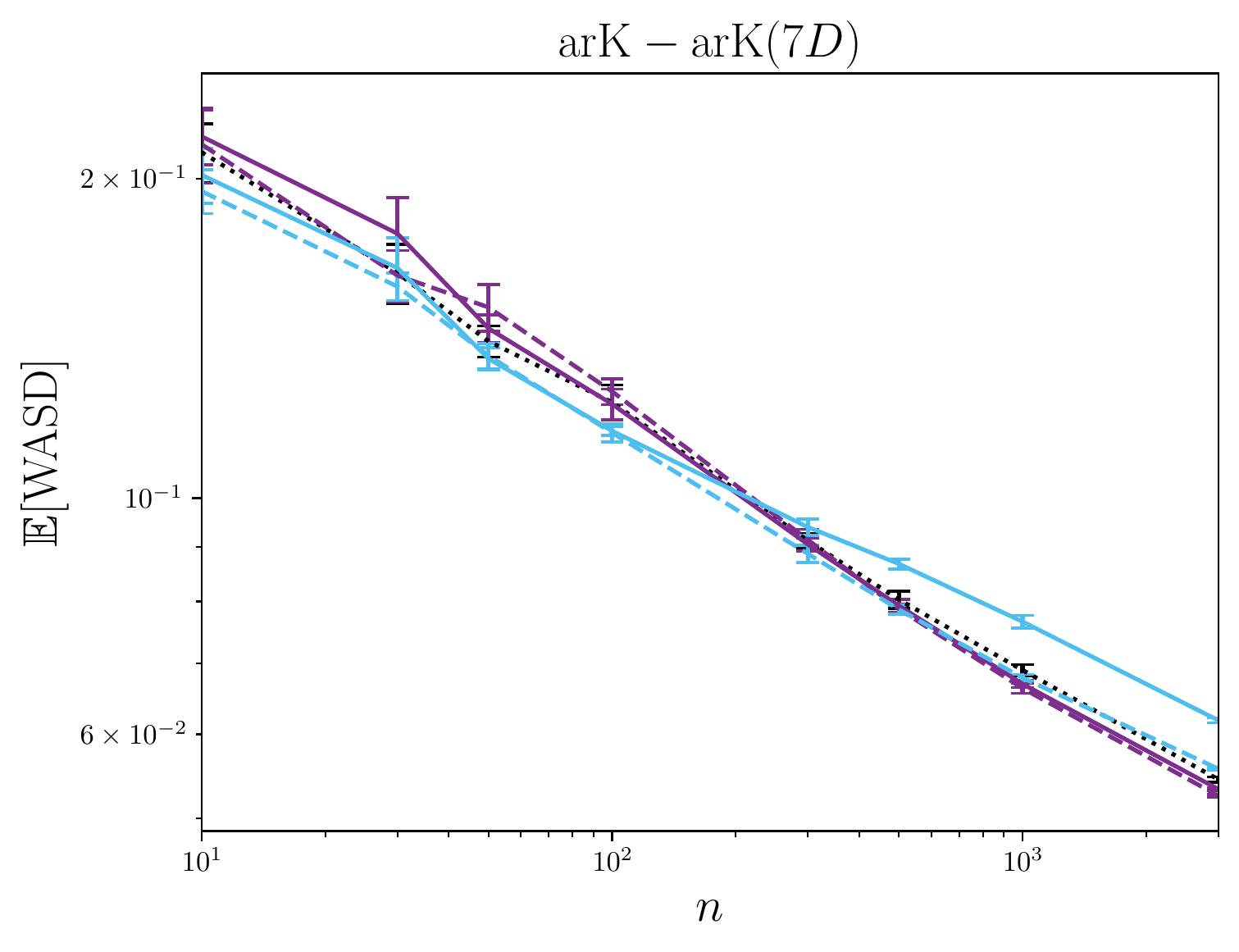}}%
\figureSeriesElement{}{\includegraphics[width = 0.45\textwidth]{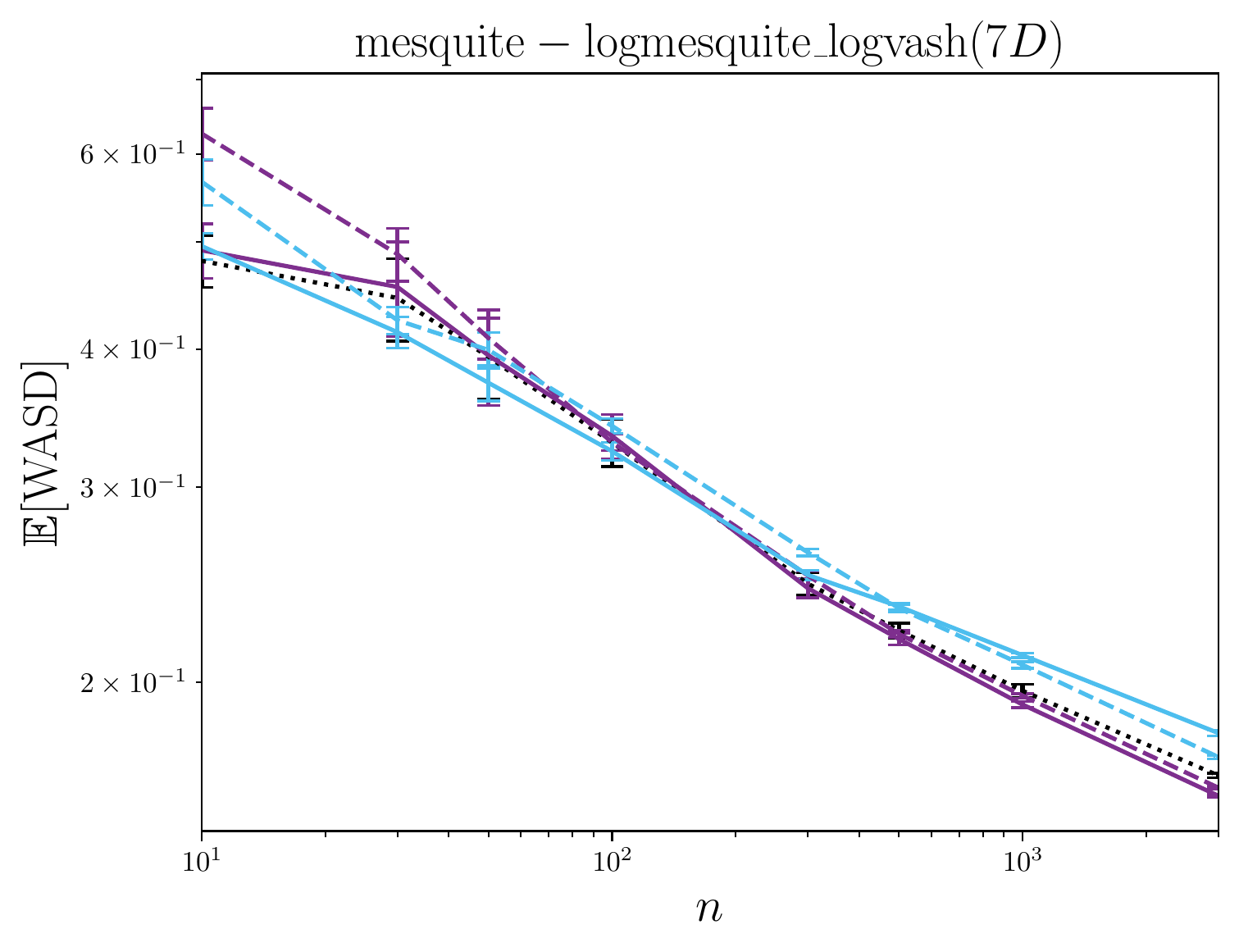}}%
}
\figureSeriesRow{%
\figureSeriesElement{}{\includegraphics[width = 0.45\textwidth]{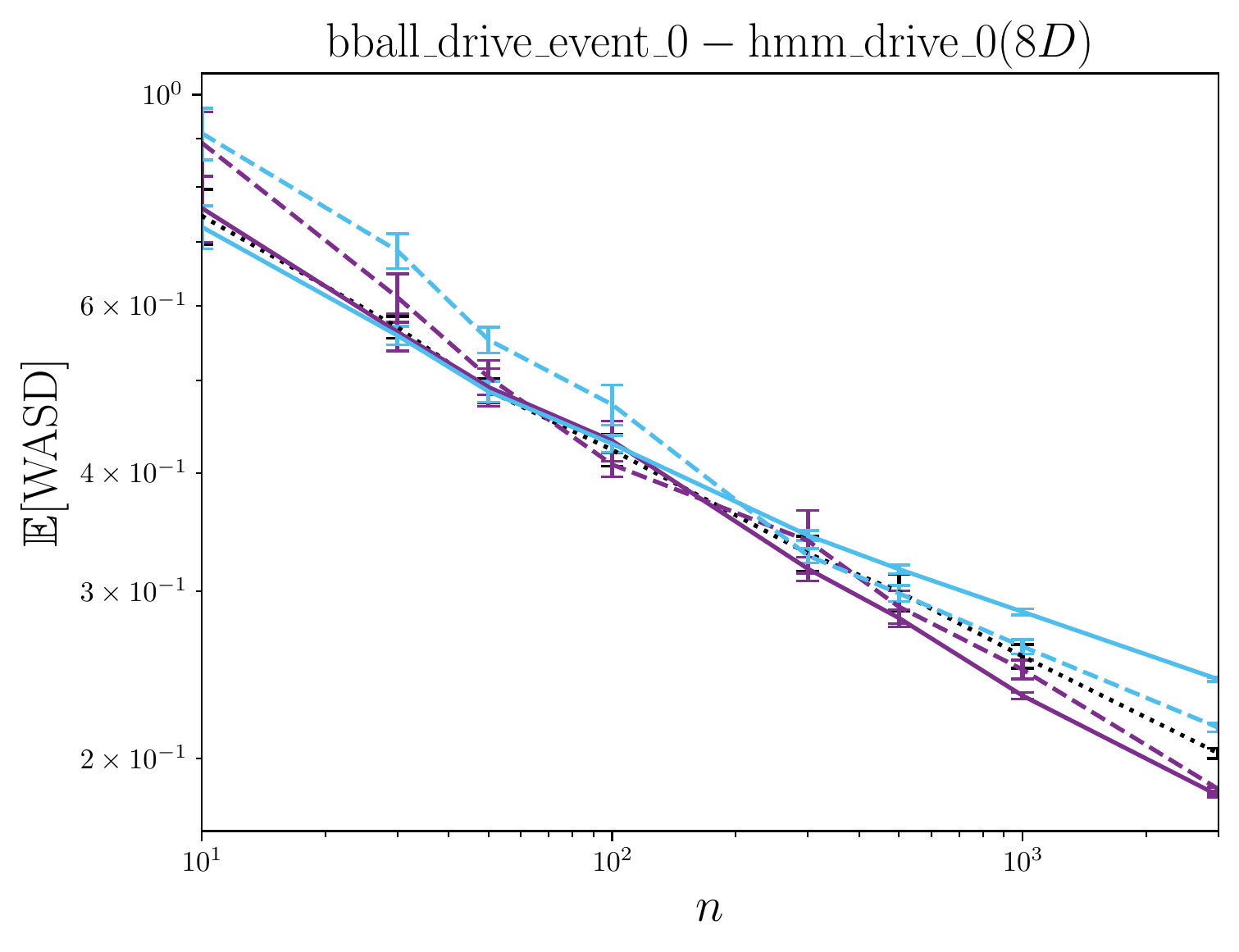}}%
\figureSeriesElement{}{\includegraphics[width = 0.45\textwidth]{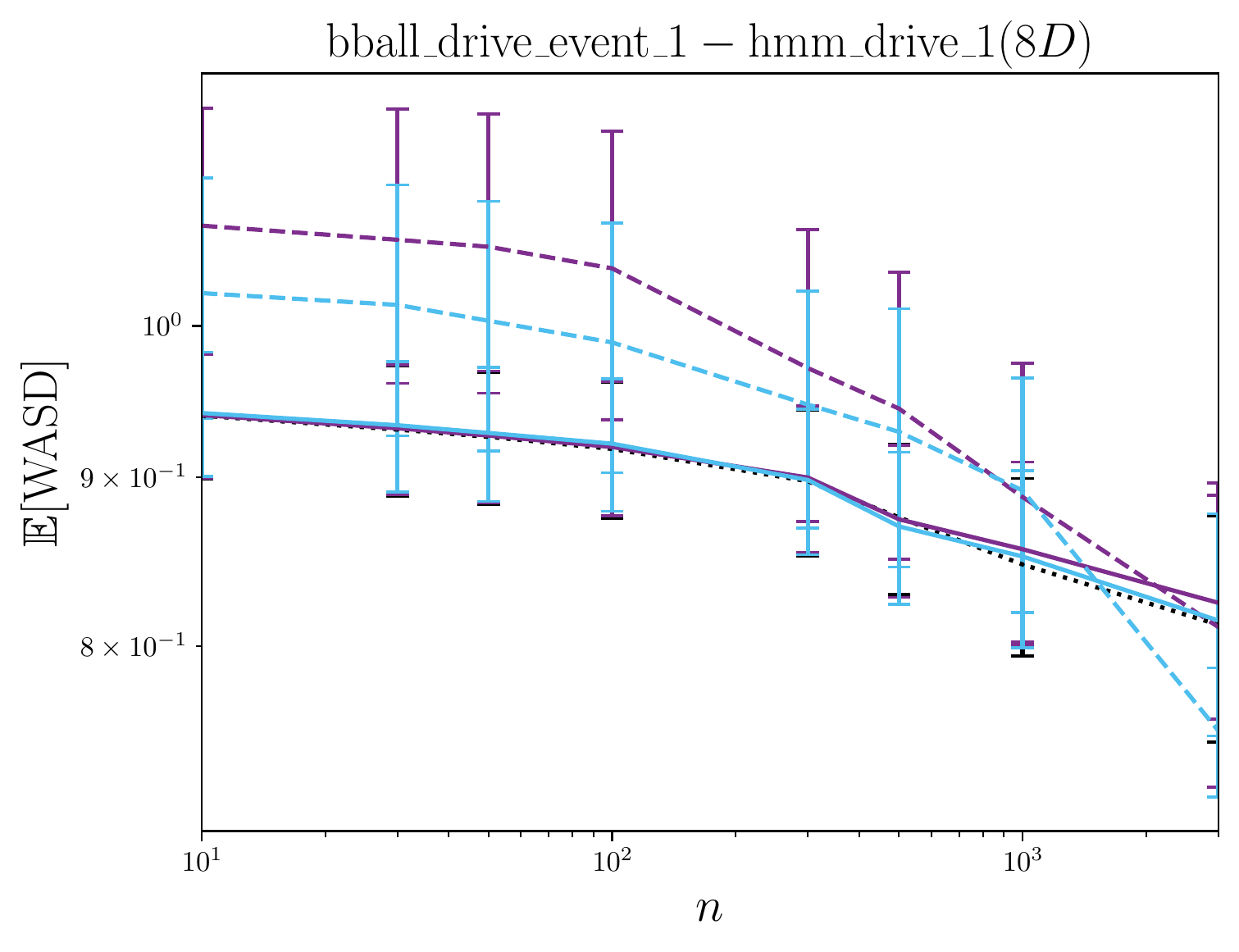}}%
}
\figureSeriesRow{%
\figureSeriesElement{}{\includegraphics[width = 0.45\textwidth]{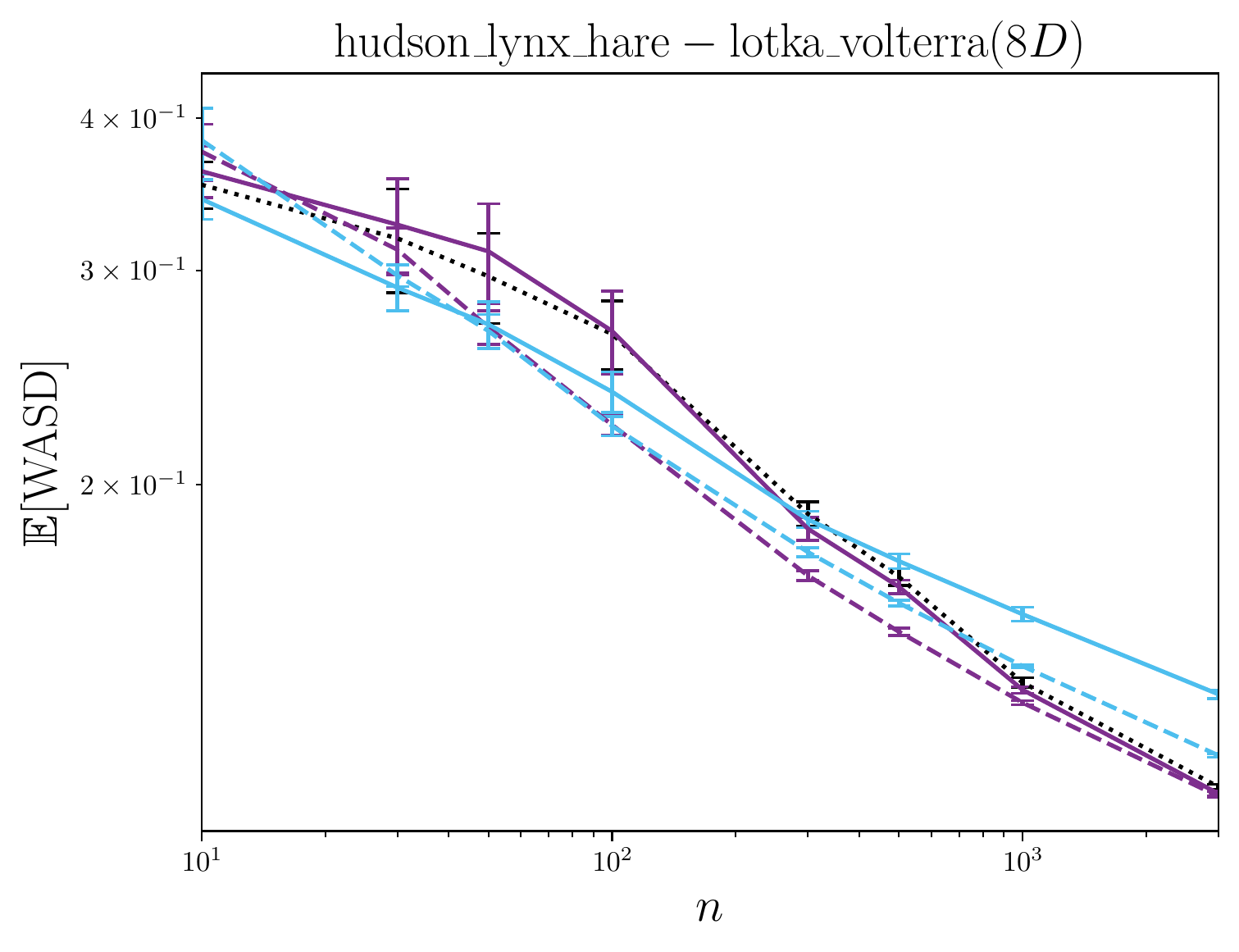}}%
\figureSeriesElement{}{\includegraphics[width = 0.45\textwidth]{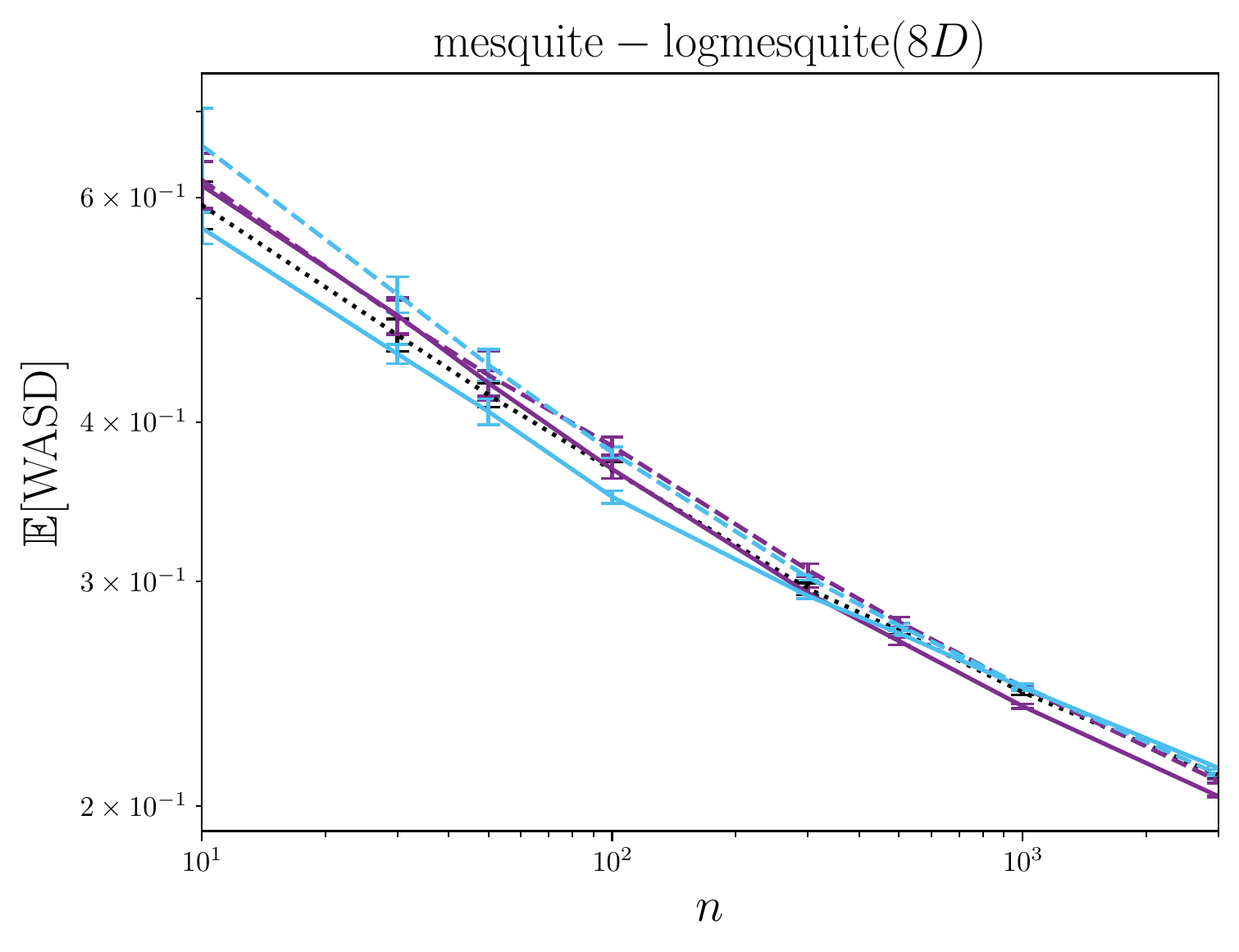}}%
}
\figureSeriesRow{%
\figureSeriesElement{}{\includegraphics[width = 0.45\textwidth]{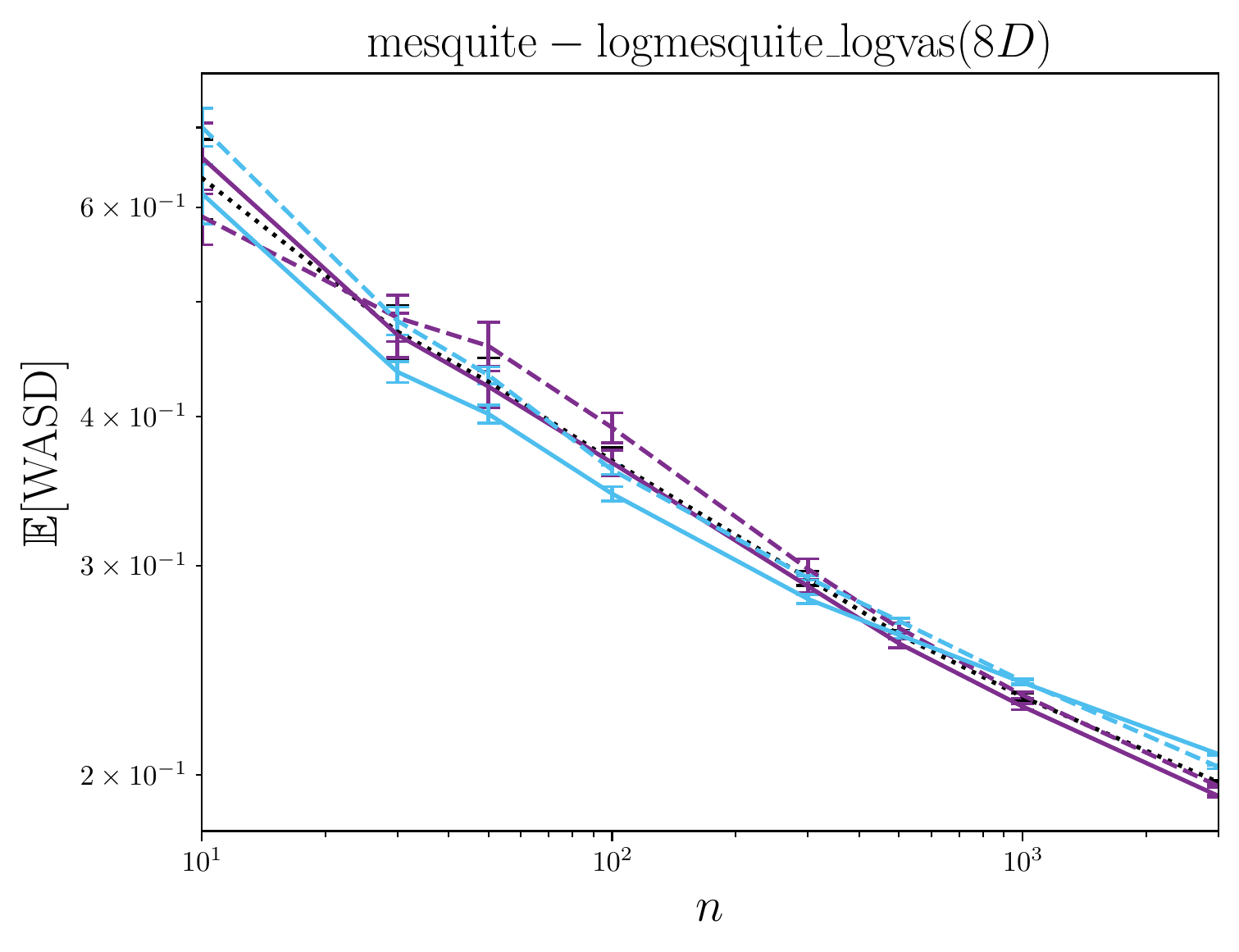}}%
\figureSeriesElement{}{\includegraphics[width = 0.45\textwidth]{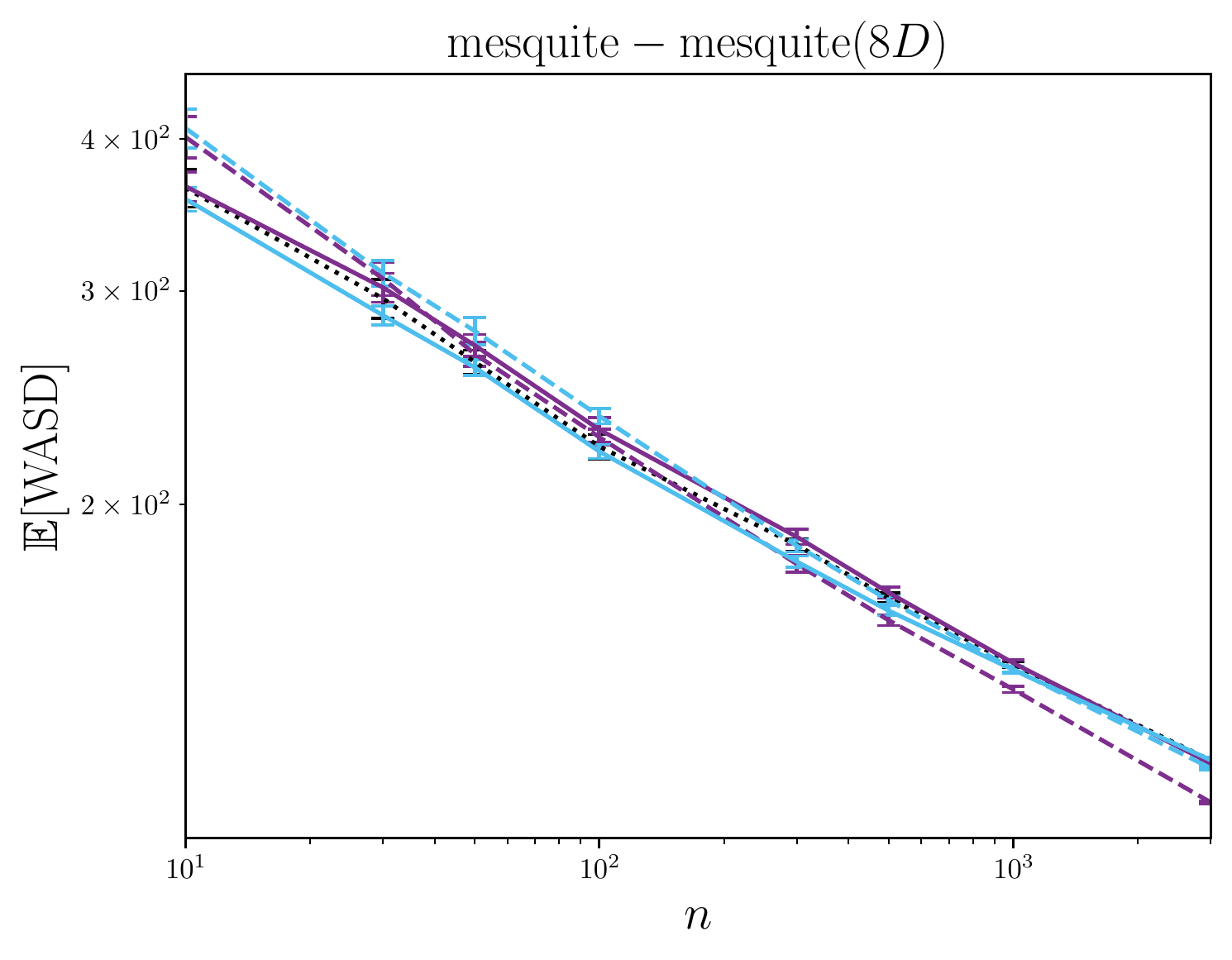}}%
}
\figureSeriesRow{%
\figureSeriesElement{}{\includegraphics[width = 0.45\textwidth]{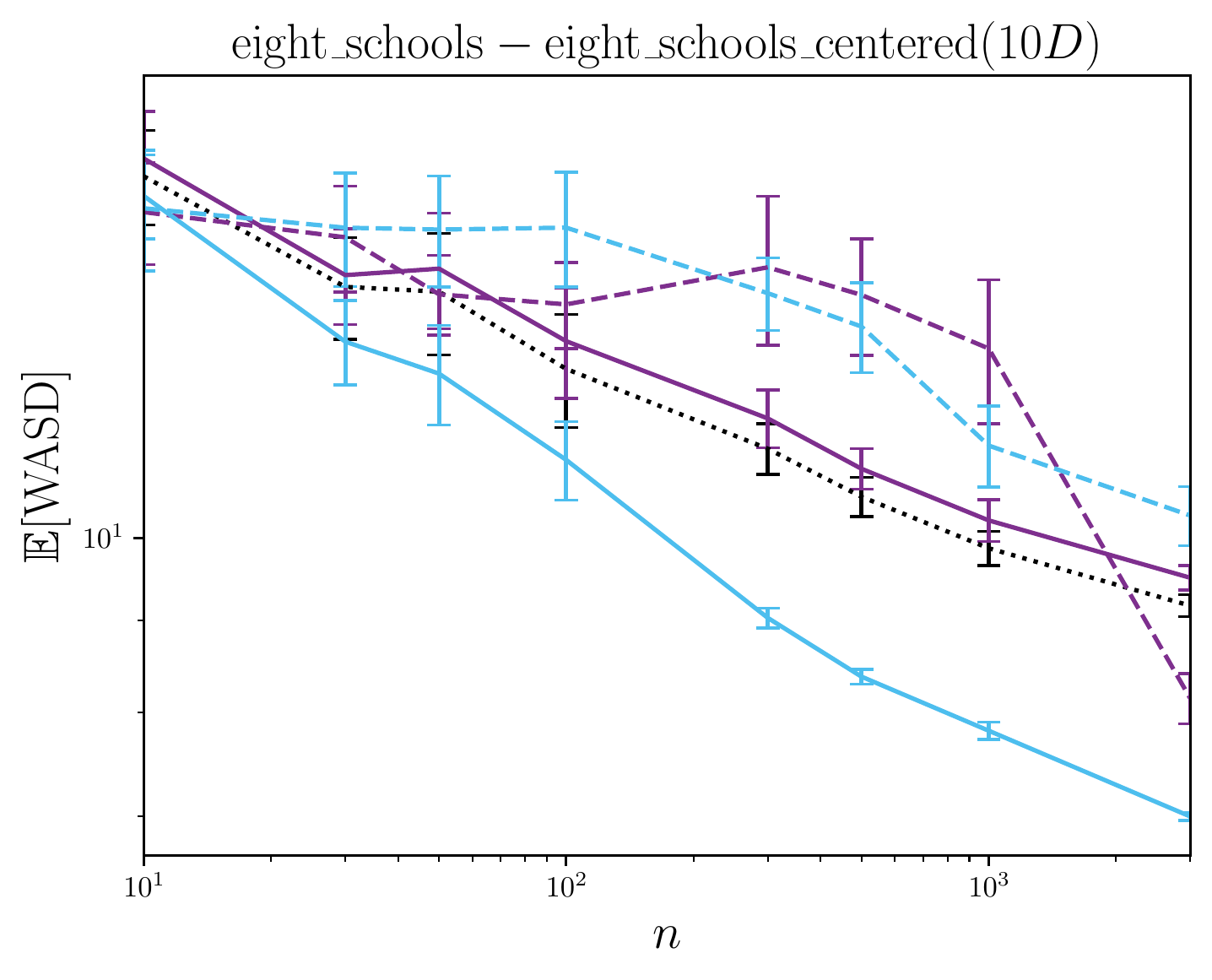}}%
\figureSeriesElement{}{\includegraphics[width = 0.45\textwidth]{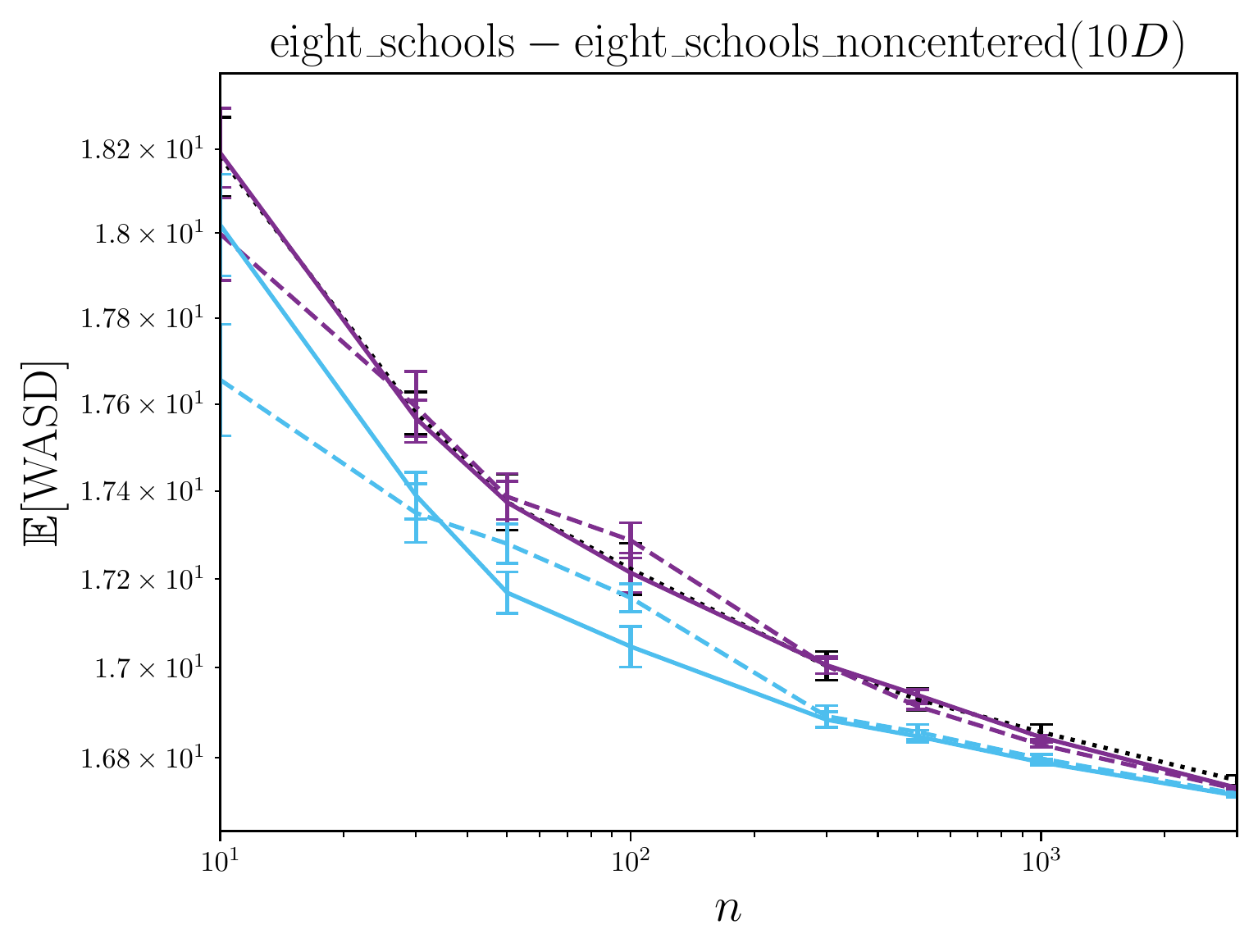}}%
}
\figureSeriesRow{%
\figureSeriesElement{}{\includegraphics[width = 0.45\textwidth]{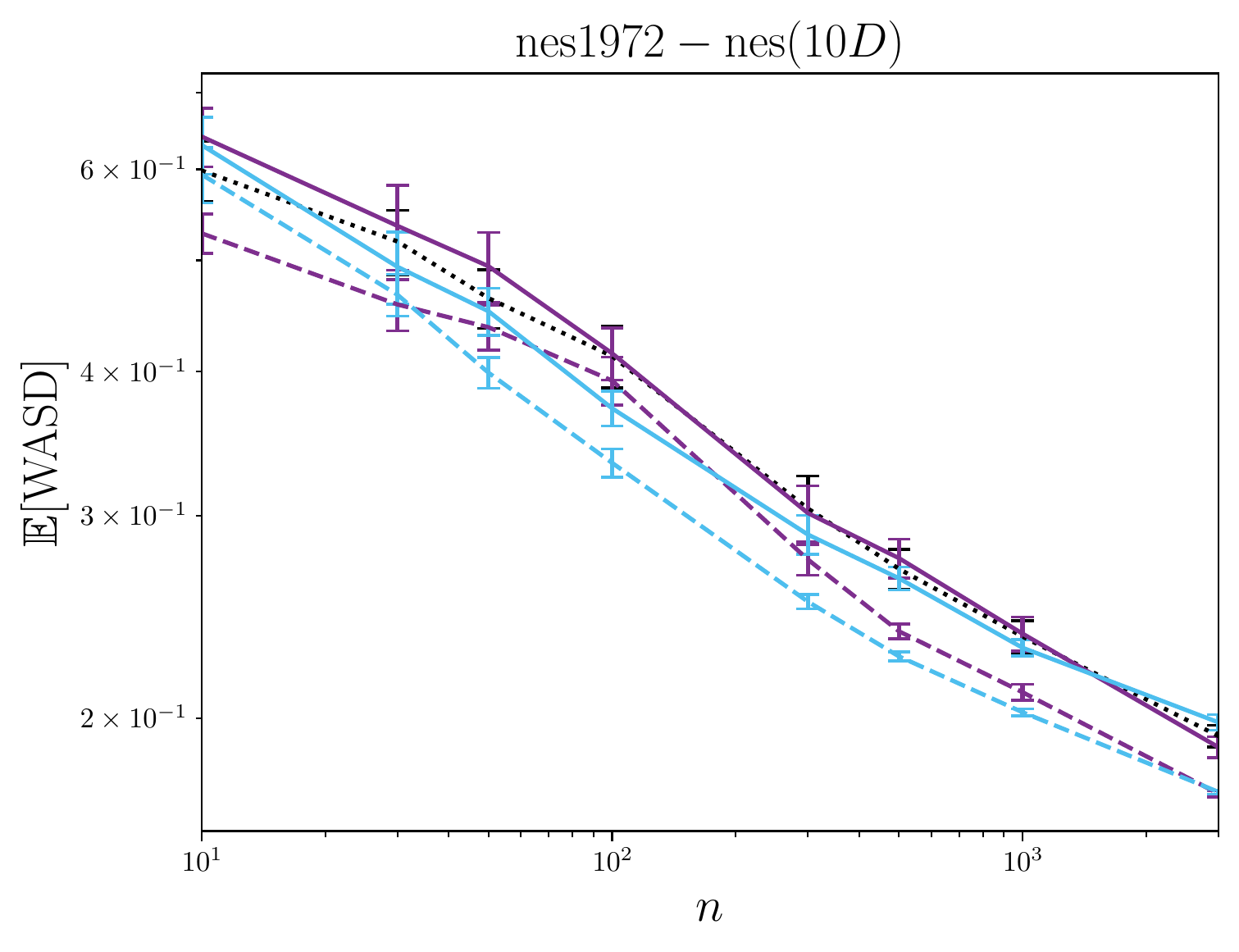}}%
\figureSeriesElement{}{\includegraphics[width = 0.45\textwidth]{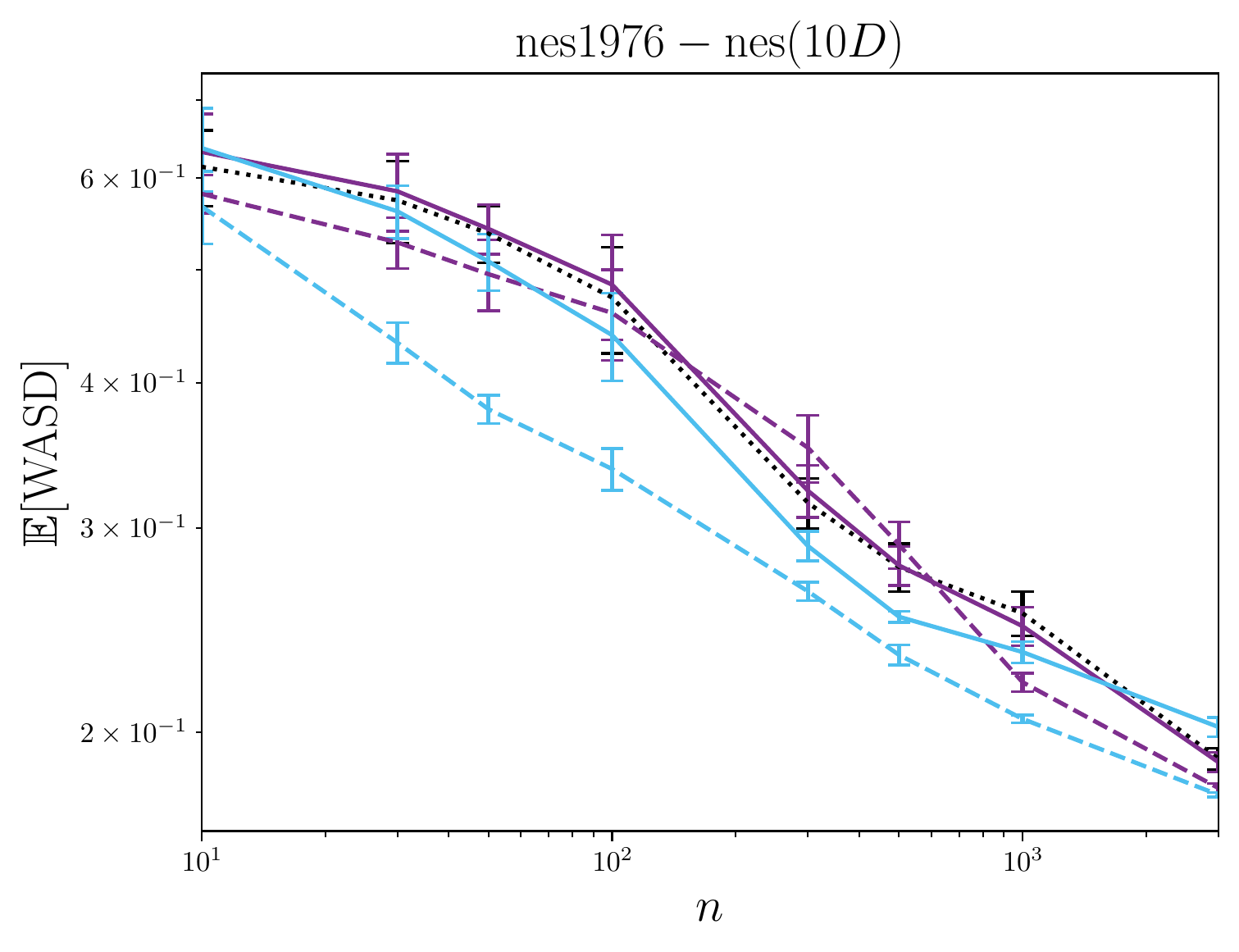}}%
}
\figureSeriesRow{%
\figureSeriesElement{}{\includegraphics[width = 0.45\textwidth]{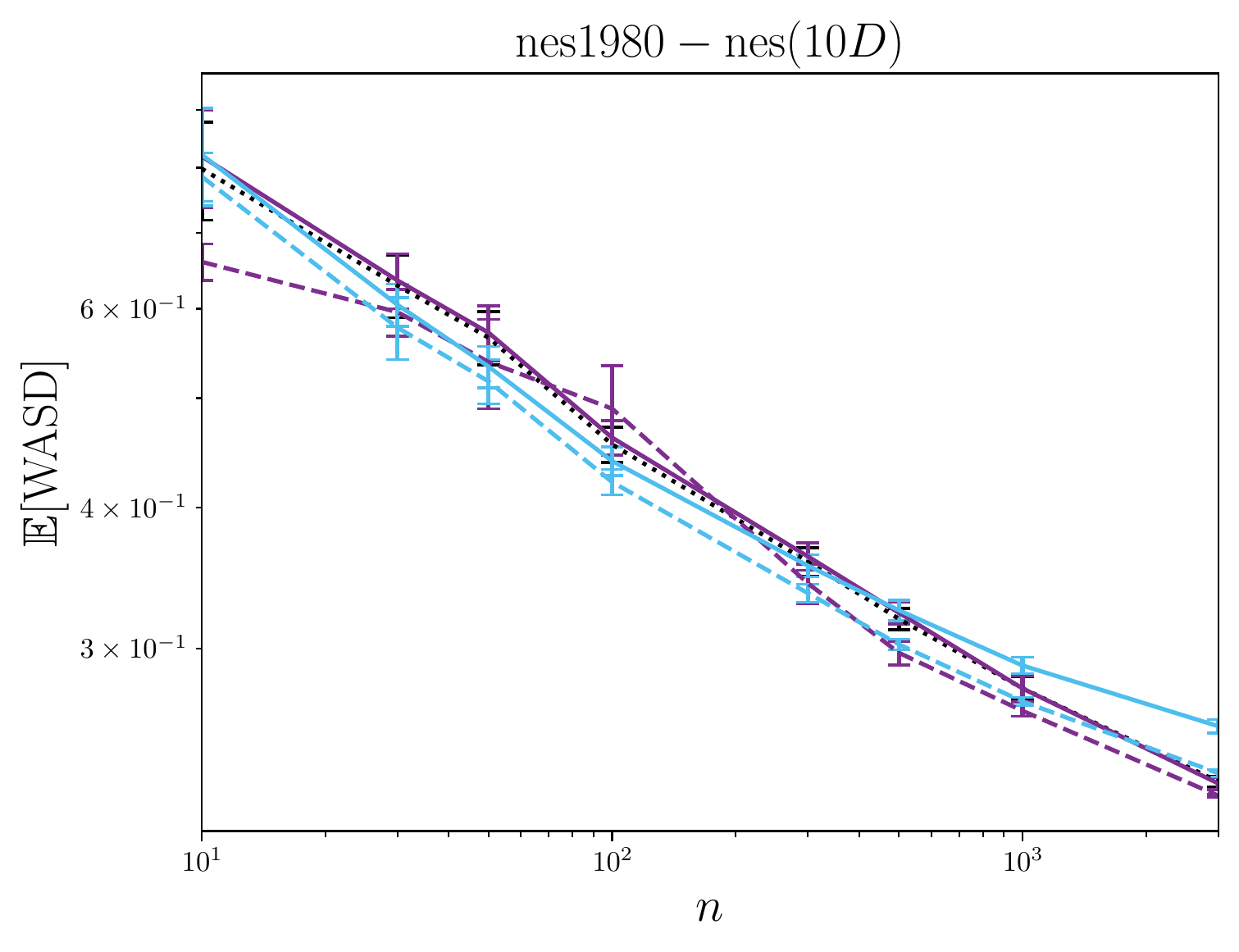}}%
\figureSeriesElement{}{\includegraphics[width = 0.45\textwidth]{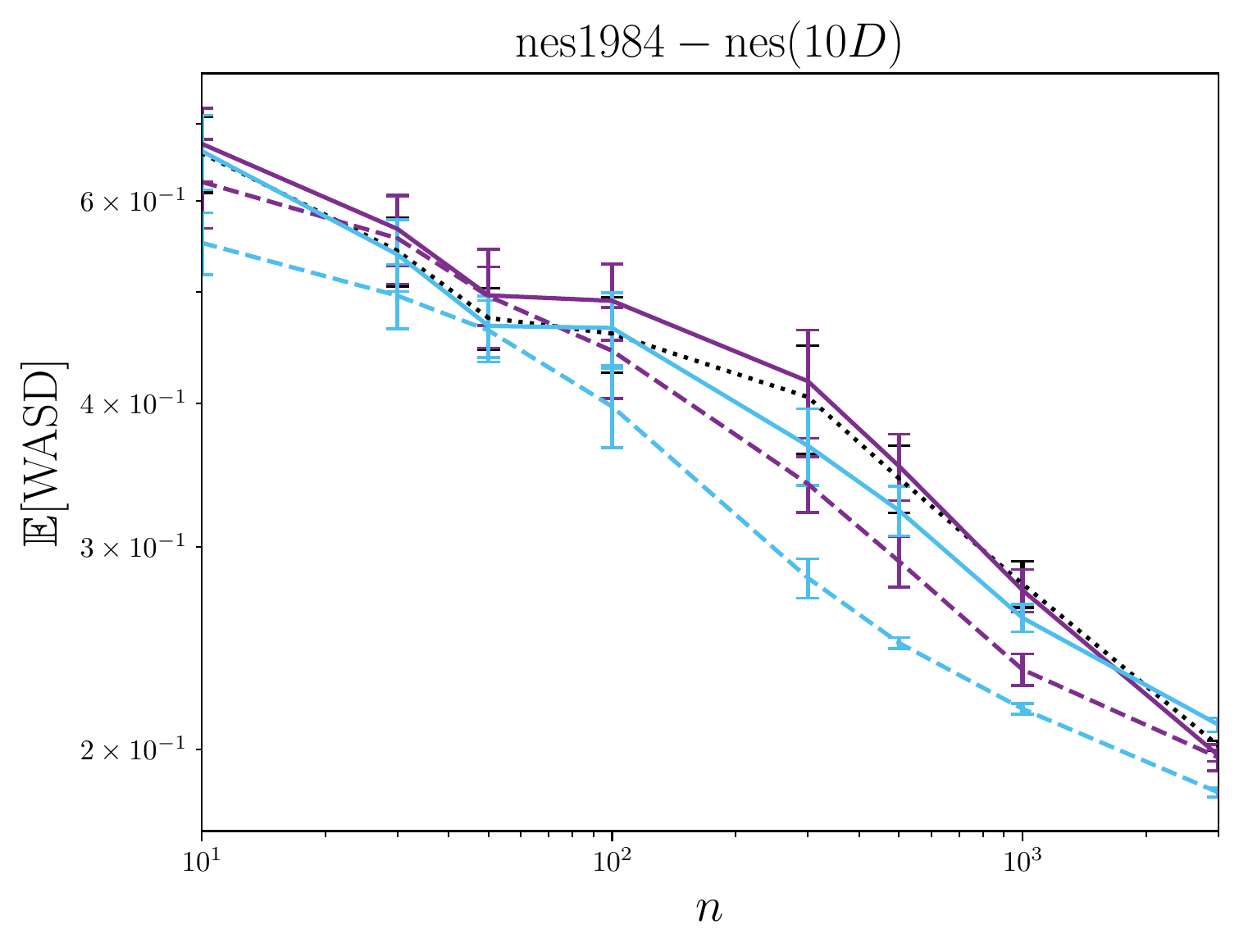}}%
}
\figureSeriesRow{%
\figureSeriesElement{}{\includegraphics[width = 0.45\textwidth]{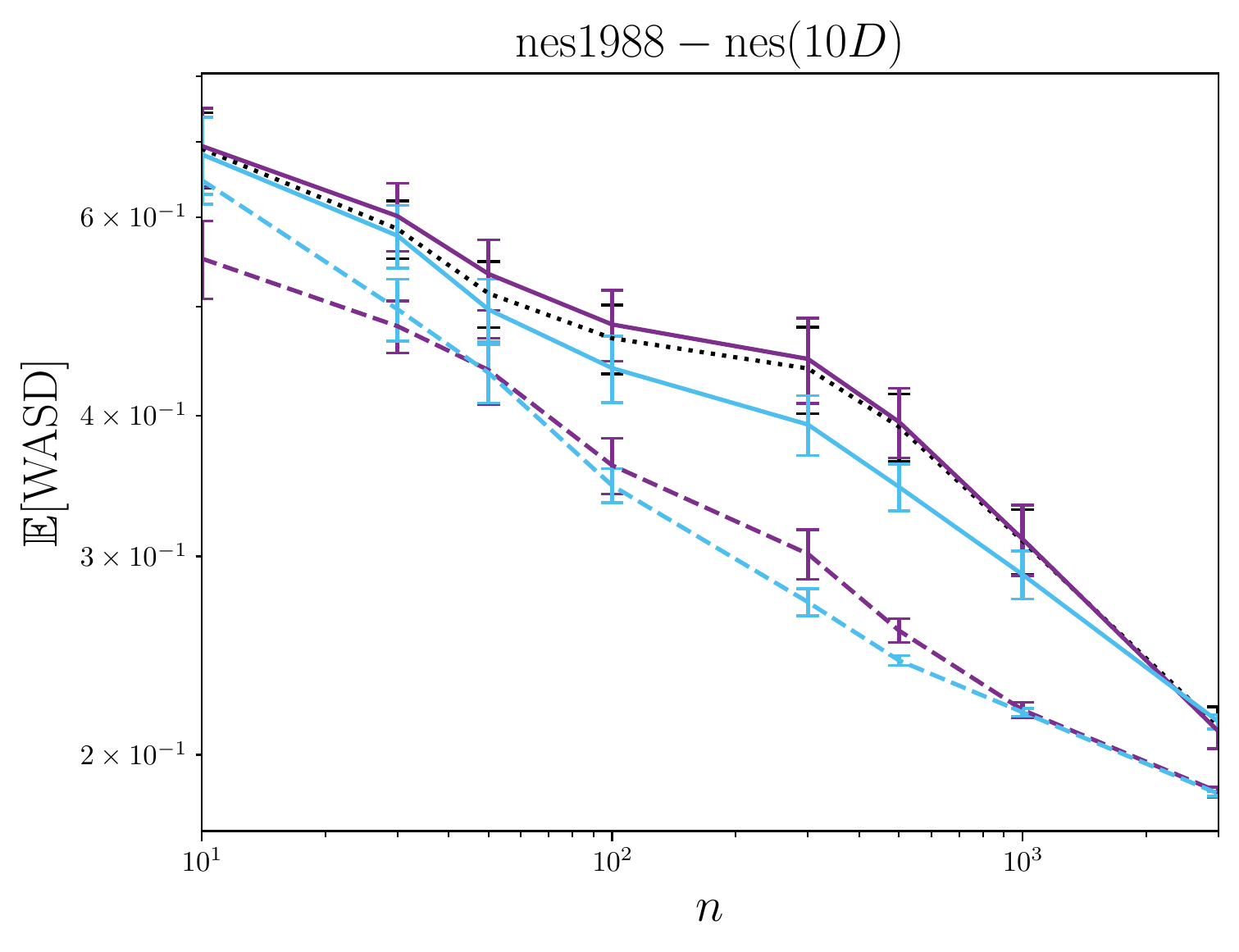}}%
\figureSeriesElement{}{\includegraphics[width = 0.45\textwidth]{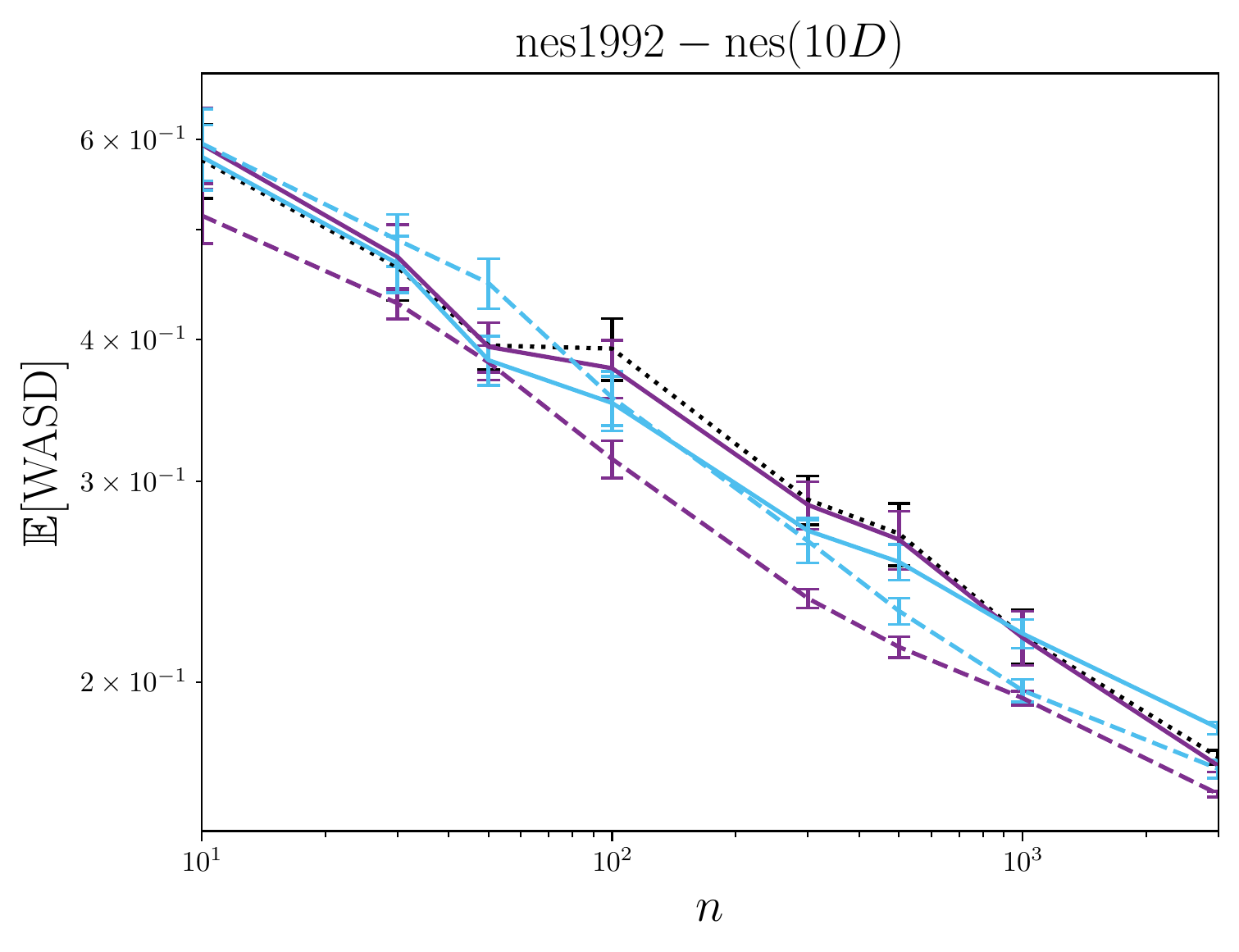}}%
}
\figureSeriesRow{%
\figureSeriesElement{}{\includegraphics[width = 0.45\textwidth]{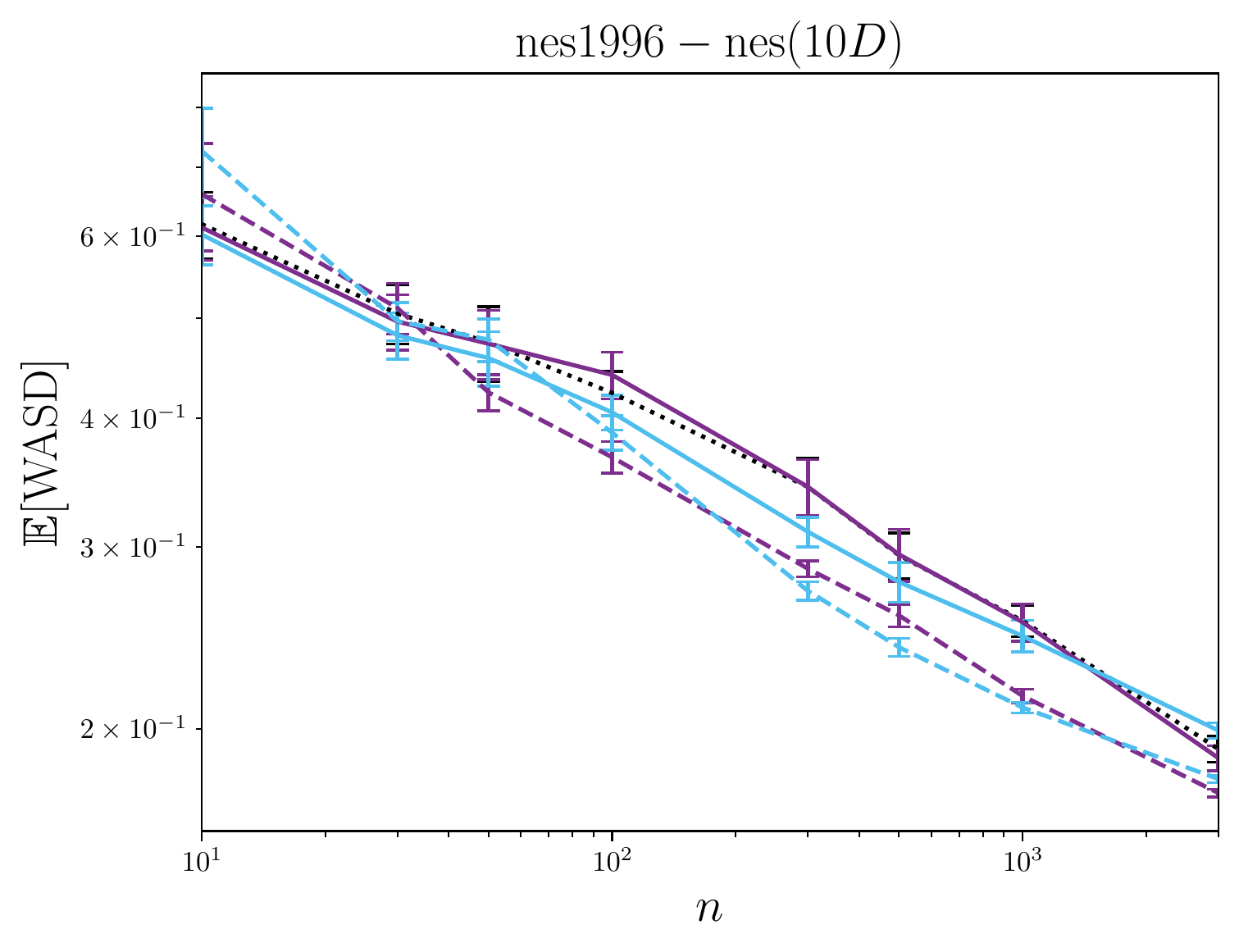}}%
\figureSeriesElement{}{\includegraphics[width = 0.45\textwidth]{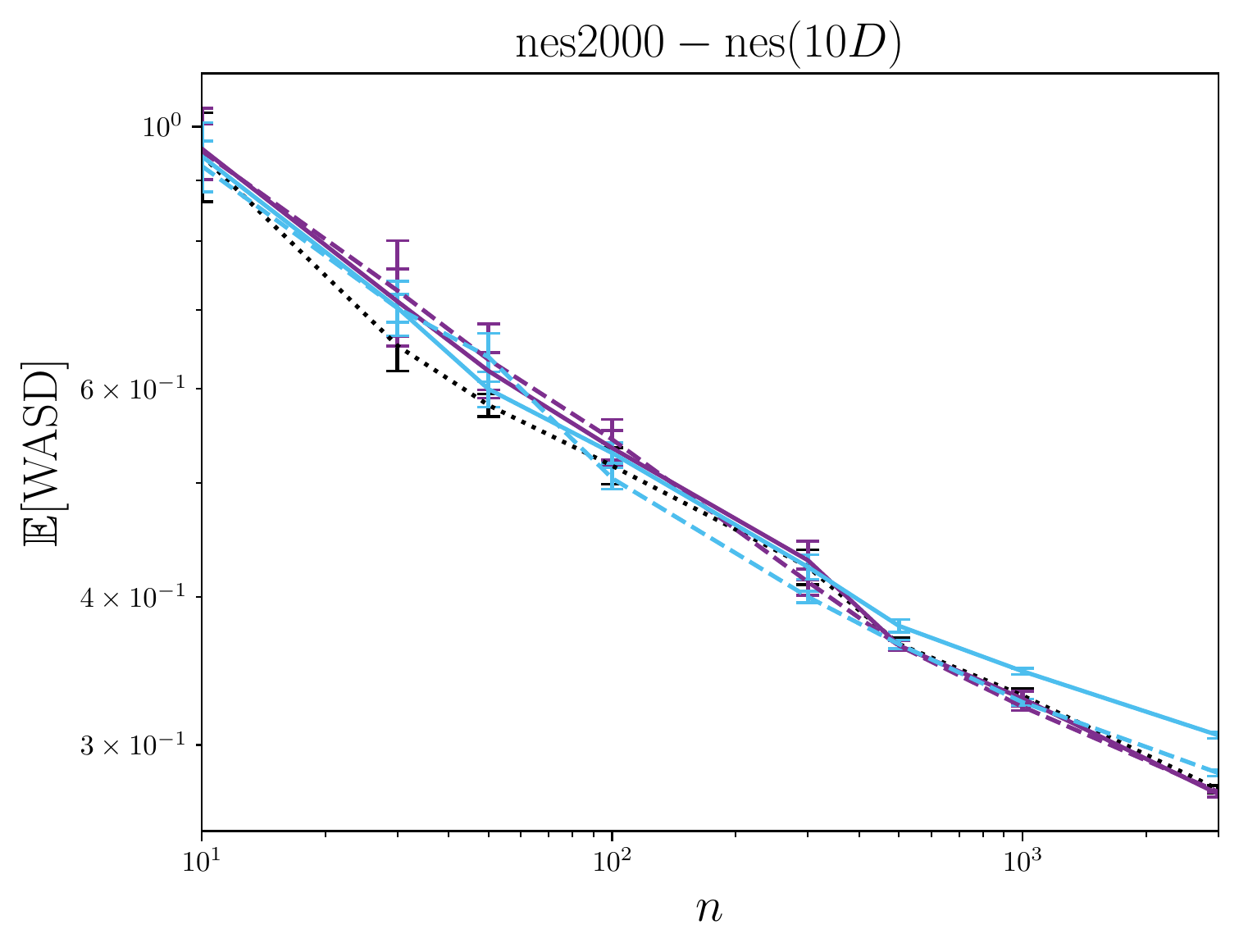}}%
}
\figureSeriesRow{%
\figureSeriesElement{}{\includegraphics[width = 0.45\textwidth]{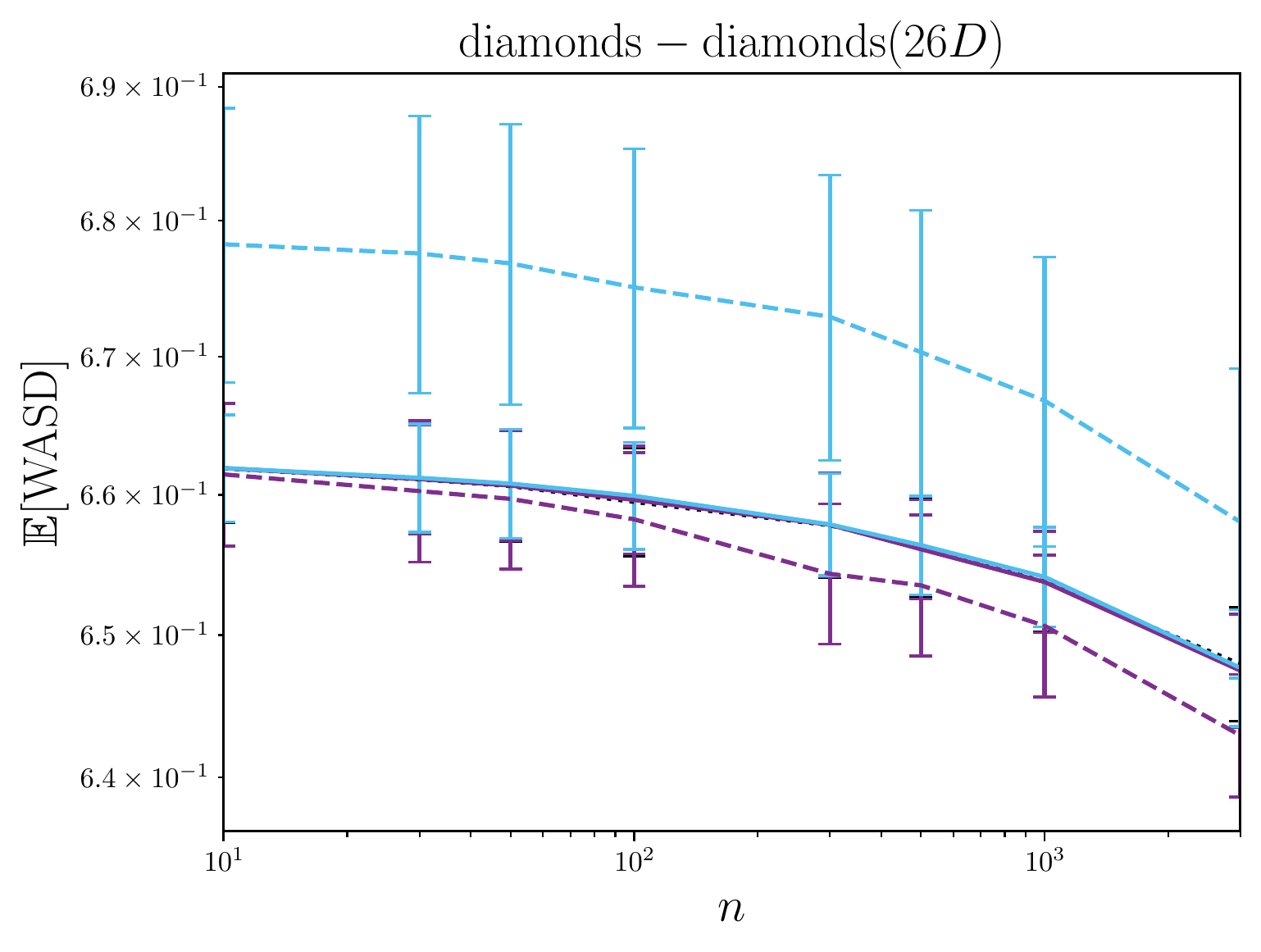}}%
\figureSeriesElement{}{\includegraphics[width = 0.45\textwidth]{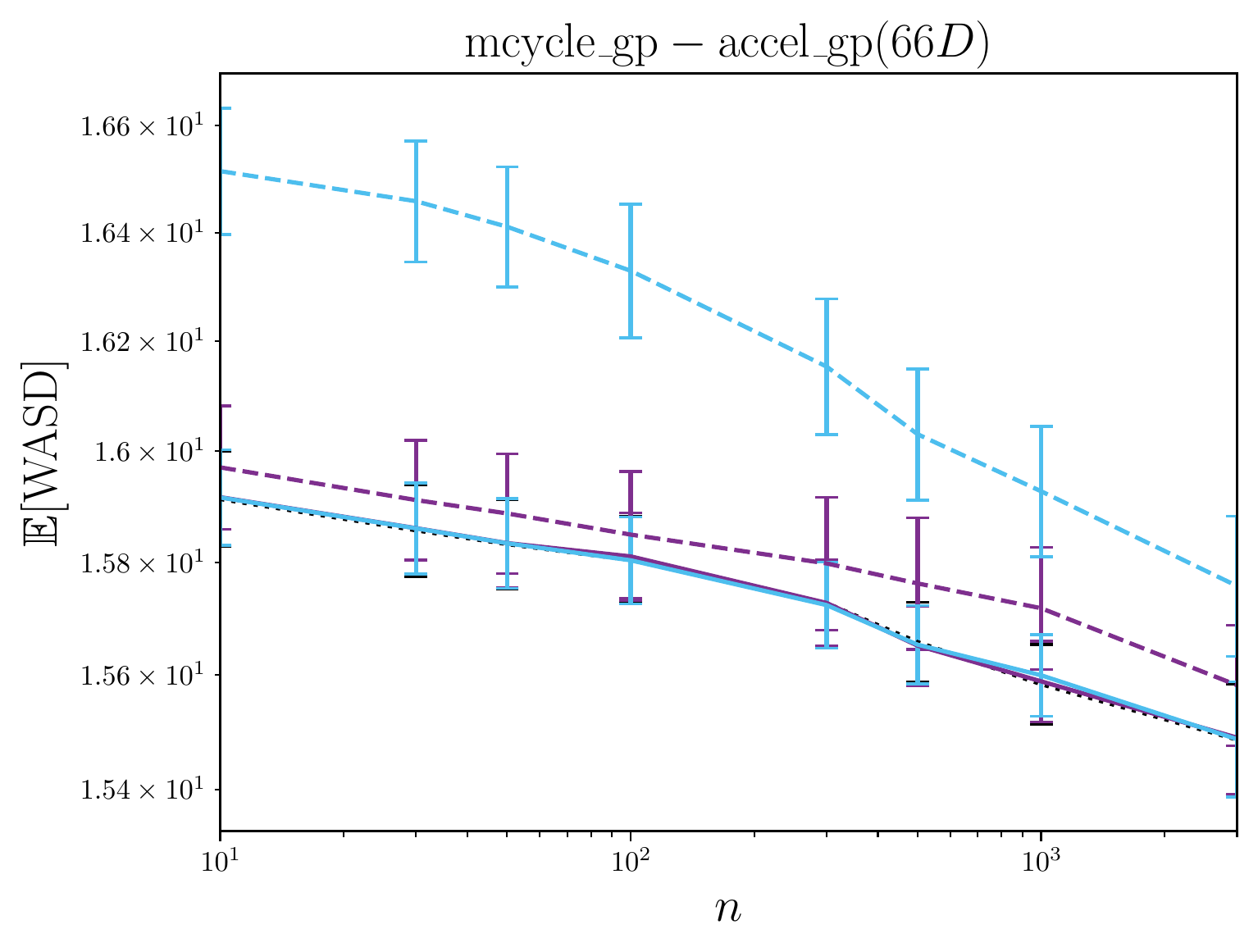}}%
}
}%

\subsection{Investigation for a Skewed Target}
\label{app: skewed-normal}

This final appendix contrasts the 1-Wasserstein optimal sampling distribution $\Pi_1$ (c.f. \Cref{subsec: classical quant}), with the choice of $\Pi$ that we recommended in \eqref{eq: optimal Q}.
In particular, we focus on the KGM3--Stein kernel under a heavily skewed $P$, for which $\Pi_1$ and $\Pi$ can be markedly different.

For this investigation a bivariate skew-normal target was constructed, where the density is given by $p(x_1, x_2) = 4 \phi(x_1)\Phi(6 x_1)\phi(x_2)\Phi(-3 x_2)$, with $\phi$ and $\Phi$ respectively denoting the density and distribution functions of a standard Gaussian.
The density $p$ of $P$, together with the marginal densities of $\Pi_1$ and $\Pi$, are plotted in \Cref{fig: SN Pi}.
It can be seen that, while both $\Pi_1$ and $\Pi$ are over-dispersed with respect to $P$, our recommended $\Pi$ assigns proportionally more mass to the tail that is positively skewed.

The performance of Stein $\Pi$-Importance Sampling based on $\Pi_1$ and $\Pi$ is compared in \Cref{fig: SN KSD}.
Though both choices lead to an improvement relative to Stein importance sampling algorithm with $\Pi = P$, the use of $\Pi$ leads to a significant further reduction (on average) in \ac{ksd} compared to $\Pi_1$.
Based on our investigations, this finding seems general; the use of $\Pi_1$ does not realise the full potential of Stein $\Pi$-Imporance sampling when the target is skewed.

\begin{figure}[t!]
    \centering
    \includegraphics[width=1\textwidth]{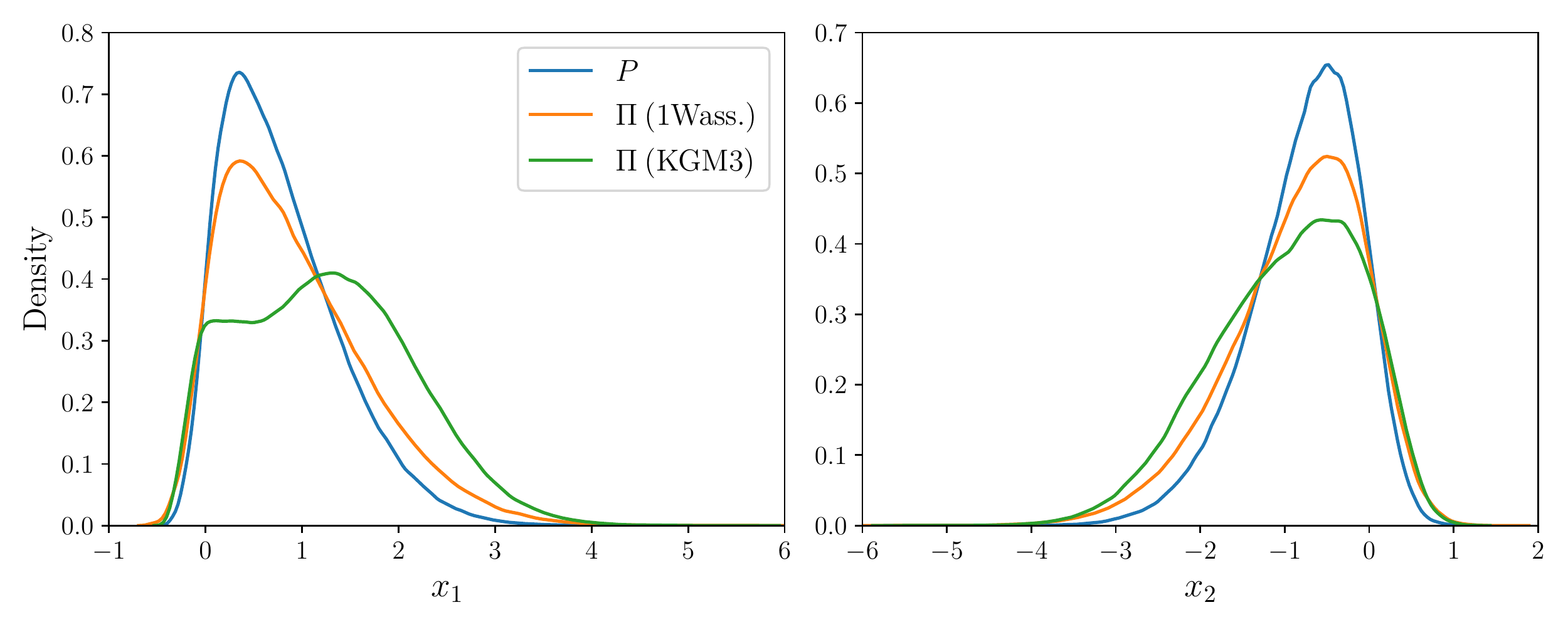}
    \caption{Comparing the proposed distribution $\Pi$ (KGM3; based on the KGM3--Stein kernel) to $\Pi_1$ (1Wass.; the optimal choice for 1-Wasserstein quantisation from \Cref{subsec: classical quant}) for a bivariate skew-normal target ($d=2$). 
    The marginal density functions of each distribution were approximated using $10^6$ samples from \ac{mcmc}.}
    \label{fig: SN Pi}
\end{figure}

\begin{figure}[t!]
    \centering
    \includegraphics[width=0.49\textwidth]{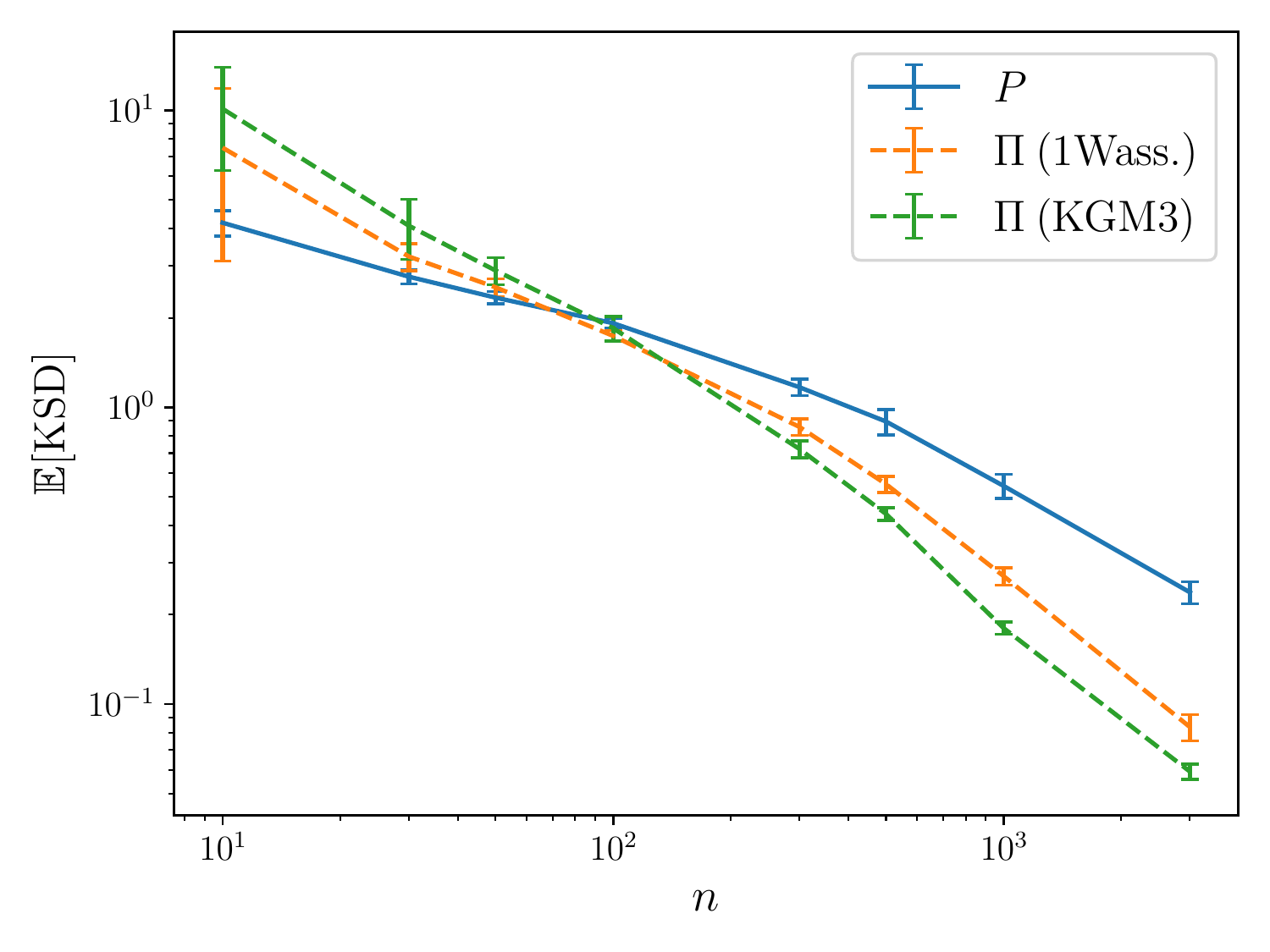}
    \caption{Comparing the performance of using the proposed distribution $\Pi$ (KGM3; based on the KGM3--Stein kernel) to $\Pi_1$ (1Wass.; the optimal choice for 1-Wasserstein quantisation from \Cref{subsec: classical quant}) for a bivariate skew-normal target ($d=2$).
    The mean kernel Stein discrepancy (KSD) for Stein $\Pi$-Importance Sampling was estimated; in each case, the KSD based on the KGM3--Stein kernel was computed.  
    Solid lines indicate the baseline case of sampling from $P$, while dashed lines indicate sampling from $\Pi$.
    (The experiment was repeated 10 times and standard error bars are plotted.)
    }
    \label{fig: SN KSD}
\end{figure}

\end{document}